\title{Fast Symbolic Algorithms for Omega-Regular Games under Strong Transition Fairness}
\tikzstyle{mystate}=[state,inner sep=3pt,minimum size=20pt,line width=0.2mm]
\tikzstyle{fstate}=[state,accepting,inner sep=2pt,minimum size=3pt]
\tikzstyle{istate}=[state,initial,inner sep=2pt,minimum size=3pt]
\tikzstyle{mysquare}=[inner sep=3pt,minimum size=15pt,line width=0.2mm]
\tikzstyle{fmysquare}=[inner sep=3pt,minimum size=15pt,line width=0.5mm,accepting]
\newcommand{\SFSAutomatEdge}[5]{\path[->](#1) edge[#4,line width=0.2mm] node[#5] {\ensuremath{#2}} (#3);}
\tikzset{every picture/.style={
   thin,
= {Straight Barb[scale=0.85,sep]},
   scale=\ThCSscalefactor
}}
\newcommand{\half}{\sfrac{1}{2}}
\newcommand{\fairsyn}{\textsf{Fairsyn}\xspace}
\newcommand{\game}{\mathcal{G}}
\newcommand{\El}{E^{\ell}}
\newcommand{\Vl}{V^{\ell}}
\newcommand{\Gl}{\game^{\ell}}
\newcommand{\twohalf}{2\sfrac{1}{2}}
\newcommand{\onehalf}{1\sfrac{1}{2}}
\newcommand{\Inf}{\mathrm{Inf}}
\newcommand{\Dr}{\mathit{Derand}}
\newcommand{\cpre}{\mathrm{Cpre}}
\newcommand{\apre}{\mathrm{Apre}}
\newcommand{\epre}{\mathrm{Pre}^{\exists}}
\newcommand{\elpre}{\mathrm{Lpre}^{\exists}}
\newcommand{\alpre}{\mathrm{Lpre}^{\forall}}
\newcommand{\eapre}{\mathrm{Pre}^{\forall}}
\newcommand{\set}[1]{\lbrace #1 \rbrace}
\newcommand{\tup}[1]{\langle #1 \rangle}
\newcommand{\dom}{\mathsf{dom}}
\newcommand\tuple[1]{{\langle #1 \rangle}}
\newcommand{\FR}{\mathcal{R}}
\newcommand{\FgR}{\widetilde{\mathcal{R}}}
\newcommand{\FP}{\mathcal{C}}
\newcommand{\Fc}{\mathcal{F}}
\newcommand{\Gc}{\mathbf{G}}
\newcommand{\vl}{\alpha}
\newcommand{\vR}{\varphi}
\newcommand{\vP}{\varphi}
\newcommand{\vGRo}{\varphi}
\newcommand{\WlR}{\mathcal{W}}
 \newcommand{\ps}[1]{\ensuremath{~{}^{#1}\!}}
\newcommand{\rank}[1]{\mathop{\mathrm{rank}\ifthenelse{\isempty{#1}}{}{(#1)}}}
\newcommand{\ranko}[1]{\mathop{\overline{\mathrm{rank}}\ifthenelse{\isempty{#1}}{}{(#1)}}}
\newcommand{\Ct}{\widetilde{\mathcal{C}}}
\newcommand{\Pt}[1]{\widetilde{P}_{\setminus #1}}
\newcommand{\Xt}{\widetilde{X}}
\newcommand{\Yt}{\widetilde{Y}}
\newcommand{\Zt}{\widetilde{Z}}
\newcommand{\Xo}{\overline{X}}
\newcommand{\Yo}{\overline{Y}}
\newcommand{\Zo}{\overline{Z}}
\newcommand{\Zc}{\check{Z}}
\newcommand{\Dt}{\widetilde{D}}
\newcommand{\delt}{\widetilde{\delta}}
\newcommand{\To}{\overline{T}}
\newcommand{\Qo}{\overline{Q}}
\newcommand{\REFlem}[1]{\text{Lemma}~\ref{#1}}
\newcommand{\REFthm}[1]{\text{Theorem}~\ref{#1}}
\newcommand{\REFdef}[1]{Definition~\ref{#1}}
\newcommand{\REFrem}[1]{Remark~\ref{#1}}
\newcommand{\REFsec}[1]{Section~\ref{#1}}
\newcommand{\REFprop}[1]{Proposition~\ref{#1}}
\newcommand{\REFfig}[1]{Figure~\ref{#1}}
\newcommand{\REFapp}[1]{Appendix~\ref{#1}}
\newcommand{\REFtab}[1]{Table~\ref{#1}}
\newif\ifFIRST
\newif\ifSECOND
\let\LISTOP\relax
\newcommand{\List}[4][\;]{#3#1%
        \FIRSTtrue
        \@for\i:=#2\do{%
        \ifFIRST\LISTOP{\i}\FIRSTfalse\else,\LISTOP{\i}\fi%
        }%
        #1#4%
        \let\LISTOP\relax
}
\newcommand{\propNeg}{\@ifstar\propNegStar\propNegNoStar}
\newcommand{\propNegStar}[1]{\ensuremath{\left(\propNegNoStar{#1}\right)}}
\newcommand{\propNegNoStar}[2][\cdot]{\ensuremath{\neg\ifthenelse{\isempty{#2}}{#1}{#2}}}
\newcommand{\propConj}{\@ifstar\propConjStar\propConjNoStar}
\newcommand{\propConjStar}[2]{\ensuremath{\left(\propConjNoStar{#1}{#2}\right)}}
\newcommand{\propConjNoStar}[3][\cdot]{\ensuremath{\ifthenelse{\isempty{#2}}{#1}{#2}\wedge\ifthenelse{\isempty{#3}}{#1}{#3}}}
\newcommand{\propDisj}{\@ifstar\propDisjStar\propDisjNoStar}
\newcommand{\propDisjStar}[2]{\ensuremath{\left(\propDisjNoStar{#1}{#2}\right)}}
\newcommand{\propDisjNoStar}[3][\cdot]{\ensuremath{\ifthenelse{\isempty{#2}}{#1}{#2}\vee\ifthenelse{\isempty{#3}}{#1}{#3}}}
\newcommand{\propImp}{\@ifstar\propImpStar\propImpNoStar}
\newcommand{\propImpStar}[2]{\ensuremath{\left(\propImpNoStar{#1}{#2}\right)}}
\newcommand{\propImpNoStar}[3][\cdot]{\ensuremath{\ifthenelse{\isempty{#2}}{#1}{#2}\Rightarrow\ifthenelse{\isempty{#3}}{#1}{#3}}}
\newcommand{\propAequ}{\@ifstar\propAequStar\propAequNoStar}
\newcommand{\propAequStar}[2]{\ensuremath{\left(\propAequNoStar{#1}{#2}\right)}}
\newcommand{\propAequNoStar}[3][\cdot]{\ensuremath{\ifthenelse{\isempty{#2}}{#1}{#2}\Leftrightarrow\ifthenelse{\isempty{#3}}{#1}{#3}}}
\newcommand{\AllQ}{\@ifstar\AllQStar\AllQNoStar}
\newcommand{\AllQStar}[3][\;]{\ensuremath{\left(\forall #2#1.#1#3\right)}}
\newcommand{\AllQNoStar}[3][\;]{\ensuremath{\forall #2#1.#1#3}}
\newcommand{\AllQu}{\@ifstar\AllQuStar\AllQuNoStar}
\newcommand{\AllQuStar}[3][\;]{\ensuremath{\left(\forall^{\infty} #2#1.#1#3\right)}}
\newcommand{\AllQuNoStar}[3][\;]{\ensuremath{\forall^{\infty} #2#1.#1#3}}
\newcommand{\ExQ}{\@ifstar\ExQStar\ExQNoStar}
\newcommand{\ExQStar}[3][\;]{\ensuremath{\left(\exists #2#1.#1#3\right)}}
\newcommand{\ExQNoStar}[3][\;]{\ensuremath{\exists #2#1.#1#3}}
\newcommand{\NExQ}{\@ifstar\NExQStar\NExQNoStar}
\newcommand{\NExQStar}[3][\;]{\ensuremath{\left(\nexists #2#1.#1#3\right)}}
\newcommand{\NExQNoStar}[3][\;]{\ensuremath{\nexists #2#1.#1#3}}
\newcommand{\UniqueQ}{\@ifstar\UniqueQStar\UniqueQNoStar}
\newcommand{\UniqueQStar}[3][\;]{\ensuremath{\left(\exists! #2#1.#1#3\right)}}
\newcommand{\UniqueQNoStar}[3][\;]{\ensuremath{\exists! #2#1.#1#3}}
  \newlength{\SFS@HEIGHT}
  \newlength{\SFS@WIDTH}
  \newcommand{\SplitX}[2]{
            \settoheight{\SFS@HEIGHT}{$#2$}
            \settowidth{\SFS@WIDTH}{$#2$}
            \mbox{\begin{tikzpicture}[baseline=(current bounding box.center)]
            \node[] (E) at (0,0) {$#1$};
            \node[inner sep=0pt] (F) at ($(E.south west)+(1ex,-1ex)+(3ex+.5\SFS@WIDTH,-\SFS@HEIGHT)$) {$#2$};
            \node[] (E) at (0,0) {\phantom{$#1$}};
            \draw[fill] ($(E.east)+(1ex,0ex)$) circle (.2ex);
            \draw[-] ($(E.east)+(1ex,0ex)$) -- ($(E.south east)+(1ex,-0.5ex)$) -- ($(E.south west)+(1ex,-0.5ex)$) -- ($(E.south west)+(1ex,-1ex)-(0,\SFS@HEIGHT)$) -- ($(E.south west)+(2.5ex,-1ex)-(0,\SFS@HEIGHT)$);
            \draw[fill] ($(E.south west)+(2.5ex,-1ex)-(0,\SFS@HEIGHT)$) circle (.2ex);
            \end{tikzpicture}}}
  \newcommand{\SplitS}[2]{
            \settoheight{\SFS@HEIGHT}{$#2$}
            \settowidth{\SFS@WIDTH}{$#2$}
            \mbox{\begin{tikzpicture}[baseline=(current bounding box.center)]
            \node[] (E) at (0,0) {$#1$};
            \node[inner sep=0pt] (F) at ($(E.south west)+(1ex,0.5ex)+(0ex+.5\SFS@WIDTH,-\SFS@HEIGHT)$) {$#2$};
            \end{tikzpicture}}}     
                     \newcommand{\SetComp}[3][]{\{#1#2#1\mid#1#3#1\}}
\newcommand{\Set}[2][]{\List[#1]{#2}{\{}{\}}}
\newcommand{\VSet}[2][]{\let\LISTOP\val\List[#1]{#2}{\{}{\}}}
\newcommand{\VTuple}[2][]{\let\LISTOP\val\List[#1]{#2}{(}{)}}
\newcommand{\UNION}{\@ifstar\UNIONStar\UNIONNoStar}
\newcommand{\UNIONStar}[2]{\ensuremath{\left(\UNIONNoStar{#1}{#2}\right)}}
\newcommand{\UNIONNoStar}[2]{\ensuremath{\ifthenelse{\isempty{#1}}{\cdot}{#1}\cup\ifthenelse{\isempty{#2}}{\cdot}{#2}}}
\newcommand{\UNIOND}{\@ifstar\UNIONDStar\UNIONDNoStar}
\newcommand{\UNIONDStar}[2]{\ensuremath{\left(\UNIONDNoStar{#1}{#2}\right)}}
\newcommand{\UNIONDNoStar}[2]{\ensuremath{\ifthenelse{\isempty{#1}}{\cdot}{#1}\uplus\ifthenelse{\isempty{#2}}{\cdot}{#2}}}
\newcommand{\SETMINUS}{\@ifstar\SETMINUSStar\SETMINUSNoStar}
\newcommand{\SETMINUSStar}[2]{\ensuremath{\left(\SETMINUSNoStar{#1}{#2}\right)}}
\newcommand{\SETMINUSNoStar}[2]{\ensuremath{\ifthenelse{\isempty{#1}}{\cdot}{#1}\setminus\ifthenelse{\isempty{#2}}{\cdot}{#2}}}
\newcommand{\INTERSECT}{\@ifstar\INTERSECTStar\INTERSECTNoStar}
\newcommand{\INTERSECTStar}[2]{\ensuremath{\left(\INTERSECTNoStar{#1}{#2}\right)}}
\newcommand{\INTERSECTNoStar}[2]{\ensuremath{\ifthenelse{\isempty{#1}}{\cdot}{#1}\cap\ifthenelse{\isempty{#2}}{\cdot}{#2}}}
\newcommand{\CARTPROD}{\@ifstar\CARTPRODStar\CARTPRODNoStar}
\newcommand{\CARTPRODStar}[2]{\ensuremath{\left(\CARTPRODNoStar{#1}{#2}\right)}}
\newcommand{\CARTPRODNoStar}[2]{\ensuremath{\ifthenelse{\isempty{#1}}{\cdot}{#1}\times\ifthenelse{\isempty{#2}}{\cdot}{#2}}}
\newcommand{\FINCOUNT}{\@ifstar\FinCountStar\FinCountNoStar}
\newcommand{\FinCountStar}[1]{\ensuremath{\#(\ifthenelse{\isempty{#1}}{\cdot}{#1})}}
\newcommand{\FinCountNoStar}[1]{\ensuremath{\#\left(\ifthenelse{\isempty{#1}}{\cdot}{#1}\right)}}
\newtheorem*{theorem*}{Theorem}
\newcommand{\p}[1]{\ensuremath{\text{Player}~#1}}
\pgfplotsset{compat=1.17}  
\begin{document}
\maketitle



 
	\begin{abstract}
%
We consider fixpoint algorithms for two-player games on graphs with $\omega$-regular winning conditions, where 
the environment is constrained by a \emph{strong transition fairness} assumption.
Strong transition fairness is a widely occurring special case of strong fairness.
It requires that any execution is strongly 
fair with respect to a specified set of live edges: whenever the source vertex of a live edge is visited infinitely often along a play, 
the edge itself is traversed infinitely often along the play as well.

We show that, surprisingly, \emph{strong transition fairness} retains the 
algorithmic characteristics of the fixpoint algorithms for $\omega$-regular games---the new algorithms have the same alternation depth as the classical algorithms but invoke a new type of predecessor operator.
For example, for Rabin games with $k$ pairs under strong transition fairness, the complexity of the new algorithm is $O(n^{k+2}k!)$ symbolic steps, which 
is independent of the number of live edges in the strong transition fairness assumption. 
In contrast, \emph{strong fairness} necessarily requires increasing the alternation depth depending on the number of fairness assumptions.

We get symbolic algorithms for (generalized) Rabin, parity, and GR(1) objectives under strong transition fairness assumptions as well as a 
direct symbolic algorithm for qualitative winning in stochastic $\omega$-regular games that runs in $O(n^{k+2}k!)$ symbolic steps,
improving the state of the art.
Previous approaches for handling fairness assumptions would either increase the alternation depth of the fixpoint algorithm 
or require an up-front automata-theoretic construction that would increase the state space, or both.

We have implemented a BDD-based synthesis engine based on our algorithm. We show on a set of synthetic and real benchmarks that
our algorithm is scalable, parallelizable, and outperforms previous algorithms by orders of magnitude.

	\end{abstract}


\section{Introduction}

Symbolic algorithms for two-player graph games are at the heart of 
many problems in the automatic synthesis of correct-by-construction hardware, software, and
cyber-physical systems from logical specifications.
The problem has a rich pedigree, going back to Church \cite{Church62} and a sequence of seminal
results \cite{BL69,Rabin69,GH82,PR89,EJ88,EJ91,Zielonka98,KVSafraless}.
A chain of reductions can be used to reduce the synthesis problem for $\omega$-regular
specifications to finding winning strategies in two-player games on graphs, for which 
(symbolic) algorithms are known (see, e.g., \cite{PR88,EJ91,Zielonka98,PitermanPnueli_RabinStreett_2006}).
These reductions and algorithms form the basis for algorithmic reactive synthesis.

In practice, it is often the case that no solution exists to a given synthesis problem, but for ``uninteresting'' reasons.
For example, consider synthesizing a mutual exclusion protocol from a specification that requires (1) that at most one of two 
processes can be in the critical section at any time and (2) that a process wishing to enter the critical section is eventually allowed to do so.
As stated, there may not be a feasible solution to the problem because a process within the critical section may decide to stay there forever.
Similarly, in a synthesis problem involving concurrent threads, no solution may exist simply because the scheduler may decide never to pick a particular thread.
Fairness assumptions rule out such uninteresting conditions by constraining the possible behaviors of the environment.
The winning condition under fairness is of the form
\begin{equation}
\label{eq:win}
\mathsf{Fairness~Assumption} \ \Rightarrow \ \omega\mathsf{-regular~Specification}.
\end{equation}
For example, a fairness constraint can state that whenever a process is in its critical section, it must eventually leave it or that, 
if a thread is enabled infinitely often, then it is picked by the scheduler infinitely often. 
Similarly, a mobile robot can assume that a narrow passage is always eventually freed by other robots if it is known that all robots have distant goals they need to reach.
These examples, and many other practical instances of fairness, fall into a particular subclass of 
fairness assumptions, called \emph{strong transition fairness} 
\cite{QueilleSifakis,Francez,baierbook}.
A strong transition fairness assumption can be modeled by a set of \emph{live} environment transitions in the underlying two-player game graph.
Whenever the source vertex of a live transition is visited infinitely often,
the transition will be taken infinitely often by the environment.
Unfortunately, despite the widespread prevalence of strong transition fairness, current symbolic algorithms for solving games
do not take advantage of their special structure in the winning condition in \eqref{eq:win} and no algorithm better than those for
general (Streett) liveness assumptions is known. 

In this paper, we consider $\omega$-regular games under \emph{strong transition fairness} assumptions, which we call \emph{fair adversarial games}.  We show a surprisingly simple syntactic transformation that modifies the well-known symbolic fixpoint algorithm for Rabin games \emph{without} fairness assumptions, 
such that the modified fixpoint algorithm solves the \emph{fair adversarial} Rabin game.
To appreciate the simplicity of our modification, let us consider the well-known fixpoint algorithms for Büchi and co-B\"uchi games---particular classes of Rabin games---given by the following $\mu$-calculus formulas:
	\begin{subequations}\label{equ:intro:Parity}
	 \begin{align}\label{equ:intro:SafeBuechi:old}
	 \begin{array}{l l}
	 	\textbf{B\"uchi:} & \qquad \nu Y.~\mu X.~ 
	 \left( G\cap \cpre(Y)\right)
	 \cup \cpre(X),\\
	 	\textbf{Co-B\"uchi:} & \qquad \mu X.~\nu Y.~ 
	 	 \left(G\cap\cpre(Y)\right)\cup \cpre(X),
	 \end{array}
	\end{align}
	where $\cpre(\cdot)$ denotes the controllable predecessor operator and $G$ denotes the set of states that should be visited always eventually (Büchi) and eventually always (co-Büchi), respectively.
	In the presence of strong transition fairness assumptions on the environment, the new algorithm becomes
	\begin{align}
	\begin{array}{l l}
		\textbf{B\"uchi:}  &\qquad \phantom{\nu W.~}\nu Y.~\mu X.~
	 \left( G\cap \cpre(Y)\right)
	 \cup \textcolor{blue}{\apre}(Y,X),\label{equ:intro:SafeBuechi:new}\\
	 	\textbf{Co-B\"uchi:}  &\qquad \textcolor{blue}{\nu W.}~\mu X.~\nu Y.~
    \left(G\cap\cpre(Y)\right)\cup\textcolor{blue}{\apre}(W,X).
	\end{array}
	\end{align}
	\end{subequations}
	The only syntactic change (highlighted in blue) we make is to substitute the controllable predecessor for the $\mu$ variable $X$ by a new 
	\emph{almost sure predecessor operator} $\apre(Y,X)$ incorporating also the previous $\nu$ variable $Y$; if the fixpoint starts with a $\mu$ variable (as for co-B\"uchi), we add one outermost $\nu$ variable. 
	For the general class of Rabin games which are solved by a deeply nested fixpoint algorithm, we perform this substitution for every $\cpre(\cdot)$ operator over a $\mu$ variable.

	We prove the correctness of the outlined syntactic fixpoint transformation for fair adversarial Rabin and generalized Rabin games. This immediately results in correct algorithms for fair adversarial Safety-, (generalized) B\"uchi-, (generalized) Co-B\"uchi-, GR(1)-, and Muller games as special cases. While all mentioned reductions result in a modified fixpoint algorithm which can be obtained by directly applying the outlined syntactic transformation to the respective well known fixpoint algorithm for normal games (as shown for B\"uchi and Co-B\"uchi in \eqref{equ:intro:Parity}), we show that for fair adversarial parity games, which are also a subclass of Rabin games, the resulting fixpoint algorithm is slightly more complex than the syntactic transformation suggests. 
However, the alternation depth of both fixpoint algorithms still coincide.
	
\smallskip
Our syntactic transformation is inspired by the work of \cite{dAHK98} on symbolic fixpoint algorithms for \emph{concurrent} two-player games on finite graphs. In concurrent games, both players \emph{simultaneously and independently} choose their actions from a given vertex, and the transition relation defines a probability distribution over the set of successor vertices, given the current state and
the chosen actions. 
It was shown by \cite{dAHK98} that for B\"uchi games the set of almost-sure winning vertices 
(i.e., vertices from which the system player wins the game with probability one) can be computed by the symbolic fixpoint algorithm in \eqref{equ:intro:SafeBuechi:new}. 
The reason why the fixpoint algorithms coincide for concurrent and fair adversarial B\"uchi games is rather subtle. 
For concurrent games, it is known that optimal winning strategies may require randomization, and it is this randomization (in winning strategies) that 
\emph{induces} strong transition fairness on plays compliant with the chosen strategies.
In contrast, in fair adversarial games the environment player is \emph{constrained} by a \emph{given} strong transition fairness assumption, and computed 
(deterministic) winning strategies condition their moves on this fair behavior. 
In both cases, the fixpoint algorithm has to take possible transition fairness into account (witnessed by the use of the same $\apre(\cdot)$ operator), 
however, the conclusion drawn for the resulting winning regions for the subsequent strategy construction are substantially different in both game types. 

This observation also explains why the fixpoint algorithms for concurrent and fair adversarial games no longer coincide for \emph{co-}B\"uchi games. Here, randomized strategies 
introduce a different type of co-fairness constraint---now certain transitions are ensured to be taken only \emph{finitely often}, 
leading to yet another pre-operator used in the symbolic fixpoint algorithm for concurrent co-B\"uchi games. 
For fair adversarial co-B\"uchi games, however, we still restrict the environment player with strong transition fairness constraints 
(which might not be as helpful for a co-B\"uchi objective as for a B\"uchi objective), and by this, the fixpoint algorithm again only has to utilize the $\apre{}$ operator. 

Our main contribution in this paper is to show that the use of the $\apre(\cdot)$ operator to incorporate strong transition fairness in symbolic algorithms extends from B\"uchi games to all other types of $\omega$-regular games while retaining the algorithmic characteristics of the respective algorithms. 
It is this generalization of strong transition fairness to the full class of omega-regular games, that
allows us to obtain direct symbolic algorithms for \emph{simple stochastic games} as a byproduct.
Simple stochastic games generalize two-player graph games with an additional category of ``random'' vertices: whenever the game reaches a random vertex, a random process picks one of the outgoing edges (uniformly at random, w.l.o.g.). 
Interestingly, one can replace random vertices in simple stochastic games by environment vertices constrained by \emph{extreme fairness} (\citet{Pnueli83}). However, extreme fairness is a special case of strong transition fairness---a run is extremely fair if it is strongly transition fair for \emph{every} outgoing edge from a vertex---showing that simple stochastic games are a special case of fair adversarial games.

\smallskip
In a nutshell, the new direct symbolic algorithms for fair adversarial games developed in this paper show that, in contrast to \emph{full strong fairness}, strong \emph{transition} fairness retains algorithmic efficiency in game solving for all $\omega$-regular objectives. 
This leads to three, conceptually rather different contributions that substantially improve the state of the art.

\textbf{(I)} 
In the context of \emph{reactive synthesis under environment assumptions}, our new fair adversarial game solver enables many expressive fairness assumptions on the environment player in combination with \emph{full LTL} objectives for the system player. This extends existing work in this context. The GR(1) fragment of LTL, for example, was introduced by \citet{piterman2006synthesis} explicitly to rule out strong fairness constraints because of the absence of suitable low-depth fixpoint algorithms. Over the years, the GR(1) fragment has been extensively used as a useful logical fragment of LTL for reactive synthesis, especially in the cyber-physical
and robotics domains \cite{kress2007s,kress2009temporal,DBLP:conf/fmcad/AlurMT13,maoz2015synthesizing,Belta17}. 
Our new fair adversarial game solver enables expressive fairness assumptions for properties that go way beyond the ones expressible in GR(1). 
On the other hand, we extend the results of \citet{thistle1998control} who showed that \emph{extreme} fairness assumptions on the environment allow efficient synthesis of supervisory controllers for non-terminating processes\footnote{Supervisory controller synthesis for non-terminating processes is conceptually similar to reactive synthesis under environment assumptions but utilizes different solution algorithms \cite{schmuck2020relation}.} under Rabin specifications. 

\textbf{(II)} 
In the context of \emph{games with randomized strategies}, we show that simple \emph{stochastic} two-player games (also known as $2\sfrac{1}{2}$-player games) can be reduced to fair adversarial games.
We show that, to solve a qualitative stochastic (generalized) Rabin game, we can equivalently solve the (generalized) Rabin game under extreme fairness which is a particular fair adversarial (generalized) Rabin game. This results in a direct symbolic algorithm for this problem. Our algorithm, which runs in $O(n^{k+2}k!)$ symbolic steps for an  $n$-vertex $k$-pair stochastic Rabin game, 
improves the best known algorithm for such games given in \citet{DBLP:conf/icalp/ChatterjeeAH05}.
Their algorithm  is based on a reduction to a  $O\left(n(k+1)\right)$-vertex $(k+1)$-pair (deterministic) Rabin game and a simple analysis
indicates that it requires $O\left((n(k+1))^{k+2}(k+1)!\right)$ symbolic steps.

\textbf{(III)} In the context of \emph{efficient solutions} of $\omega$-regular games, we obtain \emph{symbolic algorithms} which solve two-player games by finding the set of states of the underlying game graph from which the game can be won. 
The benefit of symbolic approaches is that they allow efficient implementations based on manipulations of formulas (often represented using data structures such as BDDs). 
 Indeed, these fixpoint expressions are the cornerstone of many reactive synthesis tools \cite{brenguier2014abssynthe,ehlers2016slugs,michaud2018reactive}.
 Due to the simplicity of our syntactic transformation from the fixpoint algorithm for usual games to the one for fair adversarial games, existing symbolic implementations of 
reactive synthesis can be slightly modified to incorporate strong transition fairness assumptions.

We have implemented our algorithm in a symbolic reactive synthesis tool called \fairsyn.
\fairsyn uses a multi-threaded BDD library \cite{van2015sylvan} and implements an acceleration technique for the fixpoints \cite{long1994improved}.
We show on a number of synthetic benchmarks from the very large transition systems benchmark suite \cite{vlts_benchmark} that our algorithm, with
the improvements, can scale to large Rabin games and the performance scales with the number of cores.
Additionally, we evaluate our tool on two case studies, one from software synthesis \cite{rupak_krishnendu} and the other from stochastic control
synthesis \cite{dutreix2020abstraction}.
We show that \fairsyn scales well on these case studies, and outperforms a state-of-the-art stochastic game solver by an order of magnitude.
In contrast, a solver that treats transition fairness as Streett fairness does not finish on the considered case studies.
%


\section{Preliminaries}

    \paragraph{Notation:}
	We use the notation $\mathbb{N}_0$ to denote the set of natural numbers including ``$0$.''
	Given $a,b\in \mathbb{N}_0$, we use the notation $[a;b]$ to denote the set $\set{n\in \mathbb{N}_0 \mid a\leq n\leq b}$.
	Observe that, by definition, $[a;b]$ is an empty set if $a > b$.
	For any set $A \subseteq U$ defined on the universe $U$, we use the notation $\overline{A}$ to denote the complement of $A$.
	
	Let $A$ and $B$ be two sets and $R\subseteq A\times B$ be a relation.
	We use the notation $\dom(R)$ to denote the domain of $R$, which is the set $\set{a\in A\mid \exists b\in B\;.\;(a,b)\in R}$.
	For any element $a\in A$, we use the notation $R(a)$ to denote the set $\set{b\in B\mid (a,b)\in R}$, and for any element $b\in B$, we use the notation $R^{-1}(b)$ to denote the set $\set{a\in A\mid (a,b)\in R}$.
	We generalize $R(\cdot)$ to operate on sets in the following way: 
	for any $A'\subseteq A$, we write $R(A')\coloneqq \cup_{a\in A'} R(a)$, and for any $B'\subseteq B$, we write $R^{-1}(B')\coloneqq \cup_{b\in B'} R^{-1}(b)$.
	
	Given an alphabet $A$, we use the notation $A^*$ and $ A^\omega$ to denote respectively the set of all finite words and the set of all infinite words formed using the letters of the alphabet $A$. We use~$A^\infty$ to denote the set $A^*\cup A^\omega$. Given two words $a\in A^*$ and $b\in A^\infty$, we use $a\cdot b$ to denote their concatenation.

\subsection{Two-Player Games}
 \paragraph{Game Graphs:}
	We define a \emph{two-player game graph} as a tuple $\game=\tup{V,V_0,V_1,E}$, where 
	\begin{inparaenum}[(i)]
		\item $V = V_0\uplus V_1$ is a finite set of vertices\footnote{We use the terms ``vertex'' and ``state'' interchangeably in this paper.} that is partitioned into the sets $V_0$ and $V_1$;
		\item $E\subseteq (V\times V)$ is a relation denoting the set of (directed) edges;
	\end{inparaenum}
	The two players are called $\p{0}$ and $\p{1}$, who control the vertices $V_0$ and $V_1$ respectively.
		
      \paragraph{Strategies:}
	A strategy of $\p{0}$ is a function $\rho_0\colon V^*\cdot V_0\to V$ with the constraint $\rho_0(w\cdot v)\in E(v)$ for every $w\cdot v\in V^*\times V_0$.
	Likewise, a strategy of $\p{1}$ is a function $\rho_1\colon V^*\cdot V_1 \to V$ with the constraint $\rho_1(w\cdot v)\in E(v)$ for every $w\cdot v\in V^*\times V_1$.
	Of special interest is the class of memoryless strategies: 
	a strategy $\rho_0$ of $\p{0}$ is \emph{memoryless} if for every $w_1\cdot v, w_2\cdot v\in V^*\times V_0$, we have $\rho_0(w_1\cdot v)=\rho_0(w_2\cdot v)$.
	
    \paragraph{Plays:}
	Consider an infinite sequence of vertices $\pi = v^0v^1v^2\allowbreak\ldots\in V^\omega$.
	The sequence $\pi$ is called a \emph{play} over $\game $ starting at the vertex $v^0$ if for every $i\in \mathbb{N}_0$, we have $v^i\in V$ and $(v^i,v^{i+1}) \in E$.
	In our convention for denoting vertices, superscripts (ranging over $\mathbb{N}_0$) will denote the position of a vertex within a given play, whereas subscripts, either $0$ or $1$, will denote the membership of a vertex in the sets $V_0$ or $V_1$ respectively.
	Let $\rho_0$ and $\rho_1$ be a given pair of strategies of $\p{0}$ and $\p{1}$, respectively, and let $v^0$ be a given initial vertex.
	The play \emph{compliant with  $\rho_0$ and $\rho_1$} is the unique play $\pi=v^0v^1v^2\ldots$ for which for every $i\in \mathbb{N}_0$, if $v^i\in V_0$ then $v^{i+1}= \rho_0(v^0 \ldots v^i)$, and if $v^i\in V_1$ then $v^{i+1}=\rho_1(v^0\ldots v^i)$.

    \paragraph{Winning Conditions:}	
	A \emph{winning condition} $\varphi$ is a set of infinite plays over $\game $, i.e., $\varphi\subseteq V^\omega$.	
	We adopt Linear Temporal Logic (LTL) notation for describing winning conditions.
	The atomic propositions for the LTL formulae are sets of vertices, i.e., elements of the set $2^V$.
	We use the standard symbols for the Boolean and the temporal operators: ``$\lnot$'' for \emph{negation}, ``$\wedge$'' for \emph{conjunction}, ``$\vee$'' for \emph{disjunction}, ``$\rightarrow$'' for \emph{implication}, ``$\mathcal{U}$'' for \emph{until} ($A \, \mathcal{U} \, B$ means ``the play remains inside the set $A$ until it moves to the set $B$''), ``$\bigcirc$'' for \emph{next} ($\bigcirc A$ means ``the next vertex is in the set $A$''), ``$\lozenge$'' for \emph{eventually} ($\lozenge A$ means ``the play will eventually visit a vertex from the set~$A$''), and ``$\square$'' for \emph{always} ($\square A$ means ``the play will only visit vertices from the set $A$'').
	The syntax and semantics of LTL can be found in standard textbooks \cite{baierbook}.
	By slightly abusing notation, we will use $\varphi$ interchangeably to denote both the LTL formula and the set of plays satisfying~$\varphi$. Hence, we write $\pi\in\varphi$ (instead of $\pi \models \varphi$) to denote the satisfaction of the formula $\varphi$ by the play~$\pi$.

%
\paragraph{Winning Regions:}
$\p{0}$ wins a two-player game over the game graph $\game $ for a winning condition $\varphi$ from a vertex $v^0\in V$ if there is a $\p{0}$ strategy $\rho_0$ such that
for every $\p{1}$ strategy $\rho_1$, the play $\pi$ from $v^0$ compliant with $\rho_0$ and $\rho_1$ satisfies $\varphi$,
i.e., $\pi \in \varphi$.
The \emph{winning region} $\mathcal{W}\subseteq V$ for $\p{0}$ is the set of vertices from which $\p{0}$ wins the game.

\subsection{Fair Adversarial Games}\label{subsec:FAG}

Let $\game $ be a two-player game graph and let $\El\subseteq (V_1 \times V) \cap E$ be a given set of \emph{live} edges.
Let $\Vl\coloneqq \dom(\El)$ denote the set of $\p{1}$ vertices in the domain of $\El$.
Intuitively, the edges in~$\El$ represent \emph{fairness assumptions} on $\p{1}$: for \emph{every} edge $(v,v') \in \El$, if $v$ is
visited infinitely often along a play, we expect that the edge $(v,v')$ is picked infinitely often by $\p{1}$. I.e., if a vertex $v$ is visited infinitely often, \emph{every} outgoing live edge of $v$ is expected to be taken infinitely often.

We write $\Gl = \tup{\game , \El}$ to denote a game graph with live edges, and extend notions such as plays, strategies,
winning conditions, winning region, etc., from game graphs to those with live edges. 
A play $\pi$ over $\Gl$ is \emph{strongly transition fair} if it satisfies the LTL formula:
	\begin{align}\label{equ:vl}
	 	\vl\coloneqq\textstyle\bigwedge_{(v,v')\in \El} \left(\square\lozenge v\rightarrow \square\lozenge (v\wedge \bigcirc v')\right).
	\end{align}
Given $\Gl$ and a winning condition $\varphi$, $\p{0}$ wins the \emph{fair adversarial game} over $\Gl$ for the winning
condition $\varphi$  from a vertex $v^0\in V$ if $\p{0}$ wins the game over $\Gl$ for the winning condition $\vl \rightarrow \varphi$ from $v^0$.

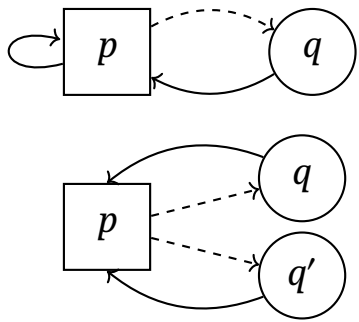
\begin{figure}
	\centering
	\begin{tikzpicture}
		\node[state,rectangle]	(a)	at	(0,0)		{$p$};
		\node[state]			(b)	[right=1.2 \ThCScm of a]		{$q$};
		\path[->]	(a)	edge[loop left]		()
						(b)	edge[bend left]			(a);
		\path[->,dashed]		(a)		edge[bend left]		(b);
		\node[state,rectangle]	(a)	at	(0,-1.8)		{$p$};
		\node[state]		(b)	at (2,-1.3)	{$q$};
		\node[state]		(c)	at (2,-2.3)	{$q'$};
		\path[->,dashed]	(a) edge[] (b) edge[] (c);
		\path[->]	(b) edge[bend right] (a.north);
		\path[->]	(c) edge[bend left] (a.south);
	\end{tikzpicture}
	\vspace{-0.1cm}
	\caption{Two fair adversarial games.}
	\label{fig:simple}
\end{figure}
We have two interesting observations about fair adversarial games.
First, live edges allow to rule out particular strategies of $\p{1}$, 
making it easier for $\p{0}$ to win in certain situations.
Consider for example a game graph (\REFfig{fig:simple} (top)) with two vertices $p$ and $q$.
Vertex $p$ (square) is a \p{1} vertex and vertex $q$ is a \p{0} vertex (circle).
The edge $(p,q)$ is a live edge (dashed).
Suppose the specification for $\p{0}$ is $\varphi=\square\lozenge q$.
If the edge $(p,q)$ were non-live,
$\p{0}$ would not win for this specification from $p$, because 
$\p{1}$ would be able to trap the game in $p$ by always choosing $p$ itself as the successor.
In contrast, $\p{0}$ wins from $p$ in the fair adversarial game, because the liveness assumption on the edge $(p,q)$ forces $\p{1}$ 
to infinitely often choose the transition to $q$.

Second, fairness assumptions modeled by live edges restrict the strategy choices of $\p{1}$ less than assuming that $\p{1}$ chooses probabilistically between these edges. 
Consider for example a fair adversarial game with one $\p{1}$ vertex $p$ (square) which has two outgoing live edges to states $q$ and $q'$; see \REFfig{fig:simple} (bottom). If $\p{1}$ chooses randomly between edges $(p,q)$ and $(p,q')$, \emph{every} finite sequence of visits to states $q$ and $q'$ will happen infinitely often with probability one. 
This is not true in the fair adversarial game. Here  $\p{1}$ is allowed to choose a particular sequence of visits to states $q$ and $q'$ (e.g., only $qq'qq'qq'qq'\hdots$), as long as both are visited infinitely often. 

\subsection{Symbolic Computations over Game Graphs}	
   \paragraph{Set Transformers:}
	Our goal is to develop symbolic fixpoint algorithms to characterize the winning region of a fair adversarial game over a game graph with live edges.
	As a first step, given~$\Gl$, we define the required symbolic transformers of sets of states.
	We define the existential, universal, and controllable predecessor operators as follows. 
	For $S\subseteq V$, we have
	\begin{subequations}\label{equ:pre}
	\begin{align}
	 \epre_0(S)&\coloneqq \SetComp{v\in V_0}{E(v)\cap S\neq \emptyset },\\
	 \eapre_1(S)&\coloneqq \SetComp{v\in V_1}{E(v)\subseteq S},~\text{and}\\
	 \cpre(S)&\coloneqq \epre_0(S)\cup\eapre_1(S)\label{equ:cpre}.
	\end{align}
\end{subequations}
Intuitively, the controllable predecessor operator $\cpre(S)$ computes the set of all states that can be controlled by $\p{0}$ to stay in $S$ after one step
regardless of the strategy of $\p{1}$.
Additionally, we define two operators which take advantage of the fairness assumption on the live edges. 
   	Given two sets $S,T\subseteq V$, we define the live-existential and almost sure predecessor operators:
	\begin{subequations}\label{equ:ell_pre}
	\begin{align}
	 \elpre(S)&\coloneqq \SetComp{v\in \Vl}{\El(v)\cap S\neq \emptyset },~\text{ and}\\
	 \apre(S,T)&\coloneqq \cpre(T)\cup\left(\elpre(T)\cap\eapre_1(S)\right).\label{equ:apre}
	\end{align}
\end{subequations}
Intuitively, the almost sure predecessor operator\footnote{We will justify the naming of this operator later in \REFrem{rem:apre}.} $\apre(S,T)$ computes the set of all states that can be controlled by 
$\p{0}$ to stay in $T$ (via $\cpre(T)$) \emph{as well as} all $\p{1}$ states in $\Vl$ that (a) will \emph{eventually} make progress towards $T$ if 
$\p{1}$ obeys its fairness-assumptions encoded in $\vl$ (through $\elpre(T)$) and (b) will never leave $S$ in the \enquote{meantime} (through $\eapre_1(S)$).
We see that all set transformers are monotonic with respect to set inclusion. Further, $\cpre(T)\subseteq \apre(S,T)$ always holds,  $\cpre(T)=\apre(S,T)$ if $\Vl=\emptyset$, and $\apre(S,T)\subseteq\cpre(S)$ if $T\subseteq S$ (see \REFlem{lem:AsubsetC} in the appendix for a proof).

\paragraph{Fixpoint Algorithms in the $\mu$-calculus:}
We use the  $\mu$-calculus \cite{Kozen83} as a convenient logical notation used to define a symbolic algorithm (i.e., an algorithm that manipulates sets of states rather then
individual states) for computing a set of states with a particular
property over a given game graph $\game $. 
The formulas of the $\mu$-calculus, interpreted over a two-player game graph~$\game $, 
are given by the grammar
\[
\varphi\; \Coloneqq \; p \mid X \mid \varphi \cup \varphi \mid \varphi \cap \varphi \mid \mathit{pre}(\varphi) \mid \mu X.\varphi \mid \nu X.\varphi
\]
where $p$ ranges over subsets of $V$, $X$ ranges over a set of formal variables,
$\mathit{pre}$ ranges over monotone set transformers in $\set{\epre_0,\eapre_1,\cpre, \elpre,\apre}$, and $\mu$ and $\nu$ denote, respectively, the least and the greatest fixed-point of the functional
defined as $X \mapsto \varphi(X)$.
Since the operations $\cup$, $\cap$, and the set transformers $\mathit{pre}$ are all monotonic, the fixed-points are guaranteed to exist.
A $\mu$-calculus formula evaluates to a set of states over $\game $, and the set can be computed
by induction over the structure of the formula, where the fixed-points are evaluated by iteration.
We omit the (standard) semantics of formulas (see \cite{Kozen83}).



\section{Fair Adversarial Rabin Games}
\label{sec:RabinGames}

This section presents the main result of this paper, which
is a symbolic fixpoint algorithm that computes the winning region of $\p{0}$ in the fair adversarial game over $\Gl$ with respect to  any $\omega$-regular property formalized as a Rabin winning condition. 

Our new fixpoint algorithm has multiple unique features. \\
\begin{inparaenum}[(I)]
\item It works directly over $\Gl$, without requiring any pre-processing step to reduce $\Gl$ to a ``normal'' two-player game. This feature allows us to obtain a direct symbolic algorithm for stochastic games as a by-product (see \REFsec{sec:stochastic}).\\
\item Conceptually, our symbolic algorithm is \emph{not} more complex than the known algorithm solving Rabin games over \enquote{normal} two-player game graphs by \citet{PitermanPnueli_RabinStreett_2006} (see \REFsec{sec:Rabin:Complexity}).\\
\item Our new fixpoint algorithm is obtained from the known algorithm of \citet{PitermanPnueli_RabinStreett_2006} by a simple syntactic change (as previewed in \eqref{equ:intro:Parity}). 
We simply replace all controllable predecessor operators over least fixpoint variables by the almost sure predecessor operator invoking the preceding maximal fixpoint variable. This makes the proof of our new fixpoint algorithm conceptually simple (see \REFsec{sec:Rabin:proofoutline}).
\end{inparaenum}

At a higher level, our syntactic change is a very simple yet efficient transformation to incorporate environment assumptions expressible by live edges into reactive synthesis while retaining computational efficiency. 
Most remarkably, this transformation also works \emph{directly} for fixpoint algorithms solving  reachability, safety,  B\"uchi, (generalized) co-B\"uchi, Rabin-chain and parity games, as these can be formalized as particular instances of a Rabin game (see \REFsec{sec:SimpleRabinGames}). 
Moreover, it also works for generalized Büchi and GR(1) games. However, as these games are particular instances of a \emph{generalized} Rabin game, 
we prove these special cases separately in \REFsec{sec:GenRabinGames} after formally introducing generalized Rabin games.

\subsection{The Symbolic Algorithm}
\label{subsec:new_algorithm}
   \noindent\textbf{Fair adversarial Rabin Games:}
   A \emph{Rabin} winning condition is defined by the set \linebreak $\FR = \set{\tuple{G_1,R_1},\hdots,\tuple{G_k,R_k}}$, where $G_i,R_i\subseteq V$ for all $i\in[1;k]$. 
   We say that $\FR$ has index set $P=[1;k]$.
   A play $\pi$ satisfies the \emph{Rabin condition} $\FR$ if $\pi$ satisfies the LTL formula
   \begin{align}\label{equ:vR}
      \vR\coloneqq\textstyle \bigvee_{i\in P}\left(\Diamond\Box\overline{R}_{i}\wedge \Box\Diamond G_{i}\right).
   \end{align}
We now present our new  symbolic fixpoint algorithm to compute the winning region of $\p{0}$ in the fair adversarial game over $\Gl$ with respect to a Rabin winning condition $\FR$. 

\begin{theorem}\label{thm:FPsoundcomplete}
Let $\Gl =\tup{\game, \El}$ be a game graph with live edges and 
$\FR$ be a Rabin condition over $\game$ with index set $P=[1;k]$. 
Further, let $Z^*$ denote the fixed-point of the following $\mu$-calculus formula:
 \begin{subequations}\allowdisplaybreaks
 \label{equ:Rabin_all}
   \begin{align}
 & 
\nu Y_{p_0}.\mu X_{p_0}. \bigcup_{p_1\in P} \nu Y_{p_1}.\mu X_{p_1}.\label{equ:Rabin_all_C}
 \bigcup_{p_2\in P\setminus\set{p_1}} \nu Y_{p_2}.\mu X_{p_2}.
\hdots
 \bigcup_{p_k\in P\setminus\set{p_1,\hdots, p_{k-1}}}\nu Y_{p_k}.\mu X_{p_k}.
\left[\bigcup_{j=0}^k \mathcal{C}_{p_j}\right],\\
&\quad\text{where}\quad
 \mathcal{C}_{p_j}\coloneqq
\left(\textstyle\bigcap_{i=0}^{j} \overline{R}_{p_i}\right)\cap
 \left[ 
 \left( G_{p_j}\cap \cpre(Y_{p_j})\right)
 \cup \left(\apre(Y_{p_j},X_{p_j})\right)
 \right],\label{equ:Rabin_all_Cpj}
\end{align}
 \end{subequations}
with\footnote{The Rabin pair $\tuple{G_{p_0},R_{p_0}}=\tuple{\emptyset,\emptyset}$ in \eqref{equ:Rabin_all} is artificially introduced to 
make the fixpoint representation more compact. 
It is not part of $\FR$.} $p_0=0$, $G_{p_0}\coloneqq\emptyset$ and $R_{p_0}\coloneqq\emptyset$. 
Then $Z^*$ is equivalent to the winning region $\WlR$ of $\p{0}$ in the fair adversarial game over~$\Gl$ for the Rabin winning condition $\FR$.  
Moreover, the fixpoint algorithm runs in $O(n^{k+2}k!)$ symbolic steps, and a memoryless winning strategy for $\p{0}$ can be extracted from it.
\end{theorem}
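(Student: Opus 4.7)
The plan is to prove $Z^* = \WlR$ by induction on the number $k$ of Rabin pairs, closely following the structure of \citet{PitermanPnueli_RabinStreett_2006}'s correctness proof for classical Rabin games but systematically turning every use of $\cpre$ over a $\mu$-variable into $\apre$ over the enclosing $\nu$-variable, and adapting the strategy-extraction step to consume the fairness assumption $\vl$. The base case $k=0$ collapses to $\nu Y_{p_0}.\mu X_{p_0}.\apre(Y_{p_0},X_{p_0})$, which I would handle by a direct rank argument: the $\mu$-iterates stratify $Z^*$ into finitely many layers, a rank-decreasing memoryless strategy for $\p{0}$ exists at every $v \in Z^*$ by construction, and $\vl$ converts the ``eventually takes a live edge'' obligation of $\p{1}$ at $\apre$-vertices into an actual rank decrease within finitely many revisits. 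For the inductive step, fixing $p_1$ isolates a ``fair B\"uchi on $G_{p_1}$ inside safety $\overline{R}_{p_1}$'' shell whose inner body is the $(k-1)$-pair formula over $P \setminus \{p_1\}$, on which the inductive hypothesis applies.

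For soundness ($Z^* \subseteq \WlR$) I would define for each $v \in Z^*$ the tuple of iteration ranks at which $v$ first enters the nested $\mu X_{p_j}$-iterates, together with the unique $p_j$ whose set $\mathcal{C}_{p_j}$ contains $v$, and use this tuple to pick a memoryless $\p{0}$ move. The extracted strategy guarantees that along any $\vl$-fair play either (a) some $\cpre(Y_{p_j}) \cap G_{p_j}$-witness is hit, visiting $G_{p_j}$, or (b) the rank of the innermost active $\mu X_{p_j}$-variable strictly decreases; in case (b) at $\p{1}$-vertices chosen through the $\apre$-branch it is precisely the strong transition fairness of $\vl$ that forces the required live edge into the lower-rank set, since $\p{1}$ is otherwise trapped in $\eapre_1(Y_{p_j})$ by the $\apre$-definition. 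A finite-descent/pigeonhole argument over the finitely many rank tuples then identifies a single index $p_j$ eventually satisfying $\Diamond\Box\overline{R}_{p_j} \wedge \Box\Diamond G_{p_j}$, so $\pi \models \vR$.

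For completeness ($\WlR \subseteq Z^*$) I would argue the contrapositive: if $v \notin Z^*$, the $\nu$-greatness of $Z^*$ yields an $\eapre_1$-closed $\p{1}$-trap contained in $\overline{Z^*}$, and the inductive failure of each $p_1$-branch provides local $\p{1}$-strategies that either deny $G_{p_1}$ forever or revisit $R_{p_1}$ while remaining in the trap. The main obstacle, and the delicate novelty compared to the classical proof, is stitching these local strategies into a single \emph{$\vl$-respecting} $\p{1}$-strategy, since $\p{1}$ is obliged by fairness to honor its live edges and cannot simply ``stall''; the argument I plan is that whenever a live edge from a trap vertex $v \in \Vl$ would lead outside the trap, $v$ would have already been captured by the $\apre$-branch of the fixpoint and hence would lie in $Z^*$, contradicting $v \in \overline{Z^*}$. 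The complexity $O(n^{k+2} k!)$ is then immediate from the standard Piterman--Pnueli accounting, since the alternation depth and the arity of the nested unions are unchanged by substituting $\apre$ for $\cpre$, and memorylessness of the winning $\p{0}$ strategy follows from the rank-based extraction in the soundness step.
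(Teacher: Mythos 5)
Your soundness half follows essentially the same route as the paper: a lexicographic ranking over permutation/iteration tuples, a rank-decreasing memoryless strategy, an inside-out induction on the fixpoint nesting with a local winning condition of the form ``reach the already-won set, or win the safe B\"uchi game for $G_{p_j}$ inside $\overline{R}_{p_j}$, or win some remaining pair,'' and the observation that strong transition fairness converts the $\apre$-obligation at $\Vl$-vertices into an actual rank decrease after finitely many revisits. That part is fine, modulo the purely LTL-algebraic step (nontrivial, but mechanical) of showing that the disjunction of the inner local conditions collapses back to $\bigvee_{p_{j+1}}\bigl(\Diamond\Box\overline{R}_{p_{j+1}}\wedge\Box\Diamond G_{p_{j+1}}\bigr)$.

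The completeness half is where you diverge from the paper and where there is a genuine gap. The paper never constructs a global spoiling strategy for $\p{1}$ against the Rabin condition; it shows $\mathcal{W}_{\delta p_j}\subseteq Y^*_{\delta p_j}$ by weakening the local winning condition to a safe-reachability condition whose target absorbs the inner recursion terms and then invoking the completeness of Theorems~\ref{thm:Reachability} and~\ref{thm:SingleRabin}, i.e., the fact that the fixpoint is the \emph{largest} solution of the corresponding equation. Your plan to exhibit an $\eapre_1$-closed trap inside $\overline{Z^*}$ and argue that ``a live edge leaving the trap would have placed $v$ in the $\apre$-branch'' fails as stated: membership in $\apre(Y,X)$ requires \emph{both} $\elpre(X)$ \emph{and} $E(v)\subseteq Y$, so a vertex $v\in\Vl$ with a live edge into the winning region but also an ordinary edge escaping $Y$ legitimately lies outside $Z^*$, yet it is not trapped --- if $\p{1}$ lets the play revisit it infinitely often, fairness forces the live edge into $Z^*$. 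The paper handles exactly these vertices (its ``$(\ell_\exists)$'' case in the completeness proof of Theorem~\ref{thm:Reachability}) by giving $\p{1}$ a rank-decreasing strategy on the complement that guarantees each such vertex is visited at most once, so the fairness obligation never fires; without this case split your $\p{1}$-strategy is not defined on all of $\overline{Z^*}$. A second, smaller issue: the complement of a Rabin condition is a Streett condition, for which $\p{1}$ in general needs memory, so ``stitching local strategies'' cannot produce a memoryless spoiler and the stitching itself would have to be made precise across recursion levels while preserving fairness; the paper's monotonicity route avoids this entirely. (Minor point on complexity: the alternation depth is \emph{not} unchanged --- the transformation adds one outermost $\nu$, which is precisely why the bound is $O(n^{k+2}k!)$ rather than Piterman--Pnueli's $O(n^{k+1}k!)$; your stated bound is right but your justification contradicts it.)
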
	

\subsection{Proof Outline}\label{sec:Rabin:proofoutline}
Given a Rabin winning condition over a \enquote{normal} two-player game, 
\citet{PitermanPnueli_RabinStreett_2006} provided a symbolic fixpoint algorithm which computes the winning region for $\p{0}$. 
The fixpoint algorithm in their paper is almost identical to our fixpoint algorithm in  \eqref{equ:Rabin_all}:
it only differs in the last term of the constructed $\mathcal{C}$-terms in \eqref{equ:Rabin_all_Cpj}. \citet{PitermanPnueli_RabinStreett_2006} 
define the term $\mathcal{C}_{p_j}$ as 
\begin{equation*}
\left(\textstyle\bigcap_{i=0}^{j} \overline{R}_{p_i}\right)\cap
 \left[ 
 \left( G_{p_j}\cap \cpre(Y_{p_j})\right)
 \cup \left(\cpre(X_{p_j})\right)
 \right].
\end{equation*}
Intuitively, a single term $\mathcal{C}_{p_j}$ computes the set of states that always remain within $Q_{p_j}:=\bigcap_{i=0}^{j} \overline{R}_{p_i}$ while always re-visiting $G_{p_j}$. 
I.e, given the simpler (local) winning condition 
\begin{equation}\label{equ:SR:psi}
 \psi:=\Box Q\wedge \Box\Diamond G
\end{equation}
for two sets $Q,G\subseteq V$, the set 
\begin{equation}\label{equ:SR:FP:orig}
 \nu Y.~\mu X.~Q\cap
 \left[ 
 \left( G\cap \cpre(Y)\right)
 \cup \left(\cpre(X)\right)
 \right]
\end{equation}
is known to define exactly the states of a \enquote{normal} two-player game $\game$ from which $\p{0}$ has a strategy to win the game with winning condition $\psi$ \cite{MPS95}. Such winning conditions are typically called \emph{Safe B\"{u}chi winning conditions}, written as $\tup{G,Q}$. 
The key insight in the proof of \REFthm{thm:FPsoundcomplete} is to show that the new definition of $\mathcal{C}$-terms in \eqref{equ:Rabin_all_Cpj} using the new \emph{almost sure predecessor operator $\apre$} actually computes the winning state sets of \emph{fair adversarial} safe B\"{u}chi games.
Subsequently, we generalize this intuition to the fixpoint for the Rabin games.

\smallskip
\noindent\textbf{Fair Adversarial Safe B\"{u}chi Games:}
Solution of a fair adversarial safe B\"{u}chi game is formalized in the following theorem.

\begin{theorem}\label{thm:SingleRabin}
 Let $\Gl =\tup{\game, \El}$ be a game graph with live edges and $\tup{G,Q}$ be a safe B\"uchi winning condition. 
 Further, let
\begin{equation}\label{equ:SR:FP}
 Z^*\coloneqq\nu Y.~\mu X.~Q\cap
 \left[ 
 \left( G\cap \cpre(Y)\right)
 \cup \left(\apre(Y,X)\right)
 \right].
\end{equation}
Then $Z^*$ is equivalent to the winning region of $\p{0}$ in the fair adversarial safe B\"uchi game over~$\Gl$ with the winning condition $\tup{G,Q}$.  
Moreover, the fixpoint algorithm runs in $O(n^2)$ symbolic steps, and a memoryless winning strategy for $\p{0}$ can be extracted from it.
\end{theorem}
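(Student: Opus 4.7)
The plan is to prove \REFthm{thm:SingleRabin} by establishing the two inclusions $Z^* \subseteq \WlR$ (soundness) and $\WlR \subseteq Z^*$ (completeness) separately. Soundness is where both the new $\apre$ operator and the fairness assumption $\vl$ enter essentially, so I would begin there.

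For soundness, I would unfold the fixpoint. Let $Y^*$ be the value of the outer $\nu Y$ and let $X_0 = \emptyset \subseteq X_1 \subseteq \cdots \subseteq X_r = Y^* = Z^*$ be the iterates of the inner $\mu X$ evaluated at $Y = Y^*$. For each $v \in Z^*$, define its rank $\rank{v}$ as the smallest index $i$ with $v \in X_i$. This stratification immediately yields a memoryless $\p{0}$ strategy $\rho_0$: at a rank-$0$ vertex in $V_0$, move to a successor witnessing $v \in G \cap \cpre(Y^*)$; at a rank-$r$ vertex in $V_0$ with $r \ge 1$, move to a successor in $X_{r-1}$ supplied by the $\cpre(X_{r-1})$ part of $\apre(Y^*,X_{r-1})$. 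I then verify that every $\vl$-fair play $\pi$ compliant with $\rho_0$ satisfies $\Box Q \wedge \Box \Diamond G$. The invariant $\pi \subseteq Q$ is immediate from $Z^* \subseteq Q$, together with the observation that every outgoing edge taken along $\pi$ stays in $Z^*$: at $\p{1}$-vertices this uses $\eapre_1(Y^*) \cap \elpre(X_{r-1}) \subseteq \eapre_1(Y^*)$, so $\p{1}$ cannot escape $Z^*$ either.

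The main obstacle is the revisit-$G$ obligation, which I would reduce to a \emph{rank-decrease lemma}: from any position along $\pi$ of positive rank $r$, some later position has rank strictly less than $r$ (unless $\pi$ has already touched a rank-$0$, hence $G$-, vertex). Iterating drives the rank to $0$ infinitely often. The $\p{0}$-case is trivial by construction of $\rho_0$. The subtle case is when $v^n \in V_1$ has rank $r$ and $v^n \in \apre(Y^*, X_{r-1})$: either $v^n \in \eapre_1(X_{r-1})$, so \emph{every} $\p{1}$-move drops the rank in one step, or $v^n \in \elpre(X_{r-1}) \cap \eapre_1(Y^*)$. In the latter case, every $\p{1}$-move keeps $\pi$ inside $Z^*$, and there exists a live edge $(v^n, v') \in \El$ with $v' \in X_{r-1}$. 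If $\p{1}$ loitered inside the stratum $Z^* \setminus X_{r-1}$ forever, then $v^n$ would be visited infinitely often along the suffix, and $\vl$ would force the edge $(v^n, v')$ to be traversed, contradicting the loitering assumption. Formalising this requires a potential-function argument on the finite set of rank-$r$ vertices together with their pending live edges; this is the piece where the most care is needed.

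For completeness, I would dualise: for any $v \notin Z^*$, build a fair memoryless $\p{1}$ strategy that foils $\Box Q \wedge \Box \Diamond G$ against every $\p{0}$ strategy. The complementary $\mu\nu$-fixpoint (obtained by the De Morgan duality of $\nu/\mu$ together with the standard duals of $\cpre$ and $\apre$) stratifies $V \setminus Z^*$ and equips each stratum with a $\p{1}$-move that either leaves $Q$, drives the play further out of $Z^*$, or keeps it away from $G$ forever. Interleaving this spoiler with a round-robin scheduler over the finitely many live edges at each $\p{1}$-vertex visited infinitely often makes the strategy $\vl$-fair without destroying its spoiling property, because the witnessing edges remain in $V \setminus Z^*$. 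Finally, $\rho_0$ being memoryless is immediate from its construction, and the $O(n^2)$ symbolic-step bound follows because the inner $\mu X$ and outer $\nu Y$ each iterate at most $n$ times with a constant number of set-transformer calls per iteration.
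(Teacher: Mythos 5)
Your soundness argument is essentially the paper's, just inlined rather than modularized. The paper first proves the safe \emph{reachability} case (\REFthm{thm:Reachability}) with exactly your ranking-plus-live-edge argument, then reduces safe B\"uchi to it: an auxiliary lemma shows one may add an extra outer $\nu\Yt$ so that the target $T=Q\cap G\cap\cpre(\Yt)$ is frozen, after which each outer iteration is literally a safe reachability game with target $T$, and re-applying the reachability theorem at every visit to $T$ gives $\Box\Diamond G$. Your direct treatment at $Y=Y^*$ is fine for soundness, and your flagged subtlety is real but repairable exactly as you suspect: one argues about the \emph{minimal-rank} vertex visited infinitely often (loitering in a stratum does not imply that your particular $v^n$ recurs, only that some stratum vertex does).

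Where you genuinely diverge is completeness, and that is also where your sketch has a gap. The paper never dualizes the B\"uchi fixpoint: it shows the true winning region $\mathcal{W}$ is contained in the winning region $\widetilde{\mathcal{W}}$ of the weaker property $Q\,\mathcal{U}\,(Q\cap G\cap\cpre(\widetilde{\mathcal{W}}))$, notes by \REFthm{thm:Reachability} that $\widetilde{\mathcal{W}}$ then satisfies the same fixed-point equation of which $Z^*$ is the \emph{largest} solution, and concludes $\widetilde{\mathcal{W}}\subseteq Z^*$. This confines the explicit fair-spoiler construction to the reachability level. Your route --- dualize the whole $\nu\mu$ formula and build a fair $\p{1}$ spoiler --- can be made to work, but the step ``interleaving a round-robin over live edges \ldots because the witnessing edges remain in $V\setminus Z^*$'' is precisely what fails if taken at face value: a vertex of $V\setminus Z^*$ may have a live edge \emph{into} $Z^*$ (or into a strictly better stratum of the complement ranking), and if that vertex recurs, strong transition fairness forces the escape. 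The correct spoiler (cf.\ the $(\ell_\exists)$-vertices in the paper's completeness proof of \REFthm{thm:Reachability}) must guarantee that every such vertex is visited \emph{at most once} along any compliant play, by always decreasing the complement rank there; only vertices all of whose live edges stay put may be visited infinitely often. Without this case split your spoiler is not sound, so you should either add it or adopt the paper's maximality argument, which sidesteps the issue entirely.
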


Intuitively, the fixpoint algorithms in \eqref{equ:SR:FP:orig} and \eqref{equ:SR:FP} consist of two parts: 
\begin{inparaenum}[(a)]
 \item a smallest fixpoint over $X$ which computes (for any fixed value of $Y$) the set of states that can \emph{reach} the \enquote{target state set} $T\coloneqq Q\cap G\cap \cpre(Y)$ while staying inside the safe set $Q$, and  
 \item a greatest fixpoint over $Y$ which ensures that the only states considered in the target $T$ are those 
that allow to re-visit a state in $T$ while staying in~$Q$.
\end{inparaenum}

By comparing \eqref{equ:SR:FP:orig} and \eqref{equ:SR:FP} we see that our syntactic transformation only changes part (a). Hence, in order to prove \REFthm{thm:SingleRabin} it essentially remains to show that this transformation works for the even simpler \emph{safe reachability games}.

\smallskip
\noindent\textbf{Fair Adversarial Safe Reachability Games:}
A safe reachability condition is a tuple $\tuple{T,Q}$ with $T,Q\subseteq V$ and a play $\pi$ satisfies the \emph{safe reachability condition } $\tuple{T,Q}$ if $\pi$ satisfies the LTL formula
  \begin{equation}\label{equ:Reach:psi}
 \psi:=Q\,\mathcal{U}\,T.
\end{equation}
A safe reachability game is often called a \emph{reach-avoid} game, where the safe sets are specified by an unsafe set $R:=\overline{Q}$ that needs to be avoided. 
The solution to fair adversarial reach-avoid games is formalized in the following theorem, and is proved in \REFapp{app:prop:Reachability}.

\begin{theorem}\label{thm:Reachability}
  Let $\Gl =\tup{\game, \El}$ be a game graph with live edges and $\tuple{T,Q}$ be a safe reachability winning condition. Further, let
\begin{equation}\label{equ:Reach:FP}
 Z^*\coloneqq\nu Y.~\mu X.~T \cup (Q\cap\apre(Y,X)).
\end{equation}
Then $Z^*$ is equivalent to the winning region of $\p{0}$ in the fair adversarial safe reachability game over $\Gl$ with the winning condition $\tup{T,Q}$. 
Moreover, the fixpoint algorithm runs in $O(n^2)$ symbolic steps, and a memoryless winning strategy for $\p{0}$ can be extracted from it.
\end{theorem}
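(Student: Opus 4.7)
The plan is to establish both inclusions $Z^*\subseteq\WlR$ (soundness) and $\WlR\subseteq Z^*$ (completeness), plus the $O(n^2)$ bound and memoryless-strategy extraction. Write $Y^*=Z^*$ for the outer greatest fixed point and, since $Z^*=\mu X.~T\cup(Q\cap\apre(Y^*,X))$ by Knaster--Tarski, let $X_0=\emptyset,\; X_{k+1}=T\cup(Q\cap\apre(Y^*,X_k))$ be the Kleene ascending chain that stabilises at $Z^*$ in at most $n$ steps. Each $v\in Z^*$ then has a well-defined \emph{rank} $r(v)$, the smallest $k$ with $v\in X_k$.

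For soundness I would extract the memoryless $\p{0}$ strategy $\rho_0$: for $v\in V_0\cap Z^*\setminus T$ pick any successor in $X_{r(v)-1}$, which exists because $v\in\epre_0(X_{r(v)-1})\subseteq\cpre(X_{r(v)-1})\subseteq\apre(Z^*,X_{r(v)-1})$; elsewhere play arbitrarily. The claim is that any play $\pi$ starting in $Z^*$, compliant with $\rho_0$ and any strongly-transition-fair $\p{1}$ strategy, satisfies $Q\,\mathcal{U}\,T$. Safety until $T$ is immediate because every visited state of rank $\geq 1$ lies in $Q$. Reachability of $T$ is proved by contradiction: assume no visit to $T$, let $R\geq 1$ be the minimum rank visited infinitely often, and fix an infinitely-often-visited $v$ of rank $R$. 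The membership $v\in\apre(Z^*,X_{R-1})$ splits into three cases: (i) $v\in V_0$, where $\rho_0$ moves to $X_{R-1}$; (ii) $v\in\eapre_1(X_{R-1})$, where every successor already lies in $X_{R-1}$; (iii) $v\in V^\ell\cap\elpre(X_{R-1})\cap\eapre_1(Z^*)$, where a live edge into $X_{R-1}$ exists and fairness $\vl$ forces it to be taken infinitely often. In each case some rank-$(R{-}1)$ state is visited infinitely often, contradicting the minimality of $R$.

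For completeness I would dualise. Write $Z^*$ as the decreasing chain $Y_0=V,\; Y_{i+1}=\mu X.~T\cup(Q\cap\apre(Y_i,X))$ converging in at most $n$ steps, and for $v\notin Z^*$ let $i(v)$ be the smallest index with $v\notin Y_{i(v)+1}$. A standard attractor-style dual rank yields a memoryless $\p{1}$ strategy $\rho_1$ which, from any $v\notin Z^*$, either (a) drives the play out of $Q\setminus T$ before visiting $T$, or (b) keeps the play in $(V\setminus Z^*)\cap(Q\setminus T)$ forever. In case (b) the design of $\apre$ is what guarantees that $\rho_1$ never needs a successor in $Z^*$: membership of $v$ in $\apre(Y_i,\cdot)$ is precisely what failed, so either $v\in V_0$ can be countered by the existential successor staying outside, or $v\in V_1$ admits a successor outside $Y_i$ that is either non-live or whose live siblings point outside $Y_i$, letting a round-robin schedule over the live edges at $v$ still satisfy $\vl$ while avoiding $T$.

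The complexity bound follows because both fixpoints converge in at most $n$ iterations on subsets of $V$ and each iteration uses a constant number of $\cpre,\elpre,\eapre_1$ applications, yielding $O(n^2)$ symbolic steps; the strategy $\rho_0$ above is manifestly memoryless. The main obstacle I anticipate is the precise construction of the fair $\p{1}$ witness in completeness case (b): one must verify that the dualised strategy is compatible with $\vl$, i.e. that avoiding $Z^*$ does not force $\p{1}$ to ignore some live edge infinitely often. This is exactly where the $\eapre_1(Y)$ conjunct inside $\apre(Y,X)$ pays off, because its failure for $v\notin Y_{i+1}$ exhibits a non-$Y_i$ successor of $v$ that $\rho_1$ can schedule without violating fairness; once this is formalised, soundness and the complexity analysis are routine.
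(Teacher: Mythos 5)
Your soundness half is essentially the paper's own argument: the same Kleene ranking on the stabilised inner fixpoint, the same rank-decreasing memoryless strategy, and the same three-way case split on membership in $\apre(Z^*,X_{R-1})$; phrasing the infinite descent via the minimal infinitely-recurring rank rather than by induction on ranks occurring finitely often is an equivalent packaging. (Minor slip: rank-$1$ states are exactly $T$ and need not lie in $Q$, but this does not affect $Q\,\mathcal{U}\,T$.) The complexity and memorylessness claims are also fine.

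The completeness half, however, has a genuine gap at exactly the point you flag as the anticipated obstacle, and the fix you sketch does not work. Recall that a winning $\p{1}$ witness must produce a play that is \emph{itself} strongly transition fair (otherwise $\vl\rightarrow\varphi$ holds vacuously and $\p{0}$ wins). The problematic vertices are those $v\in\Vl\setminus Z^*$ that possess a live edge leading into $Z^*$ (or, more precisely, into a higher level of the decreasing $Y$-chain) together with some escape successor of strictly smaller dual rank. For such a $v$, a round-robin schedule over $\El(v)$ is exactly what $\p{1}$ \emph{cannot} afford: if $v$ recurs infinitely often, fairness forces the bad live edge to be traversed infinitely often, and the play escapes toward $T$. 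Declaring the escape edge ``non-live'' does not help either, since $\vl$ quantifies over \emph{all} live edges out of any infinitely-recurring source. The missing idea is that the dual rank (the index of the outer $\mu$-iteration in the negated fixpoint) never increases along a compliant play that stays in $\overline{Z^*}\cap\overline{T}$ and strictly decreases at every visit to such a vertex; hence each of these vertices is visited \emph{at most once}, and strong transition fairness is vacuously satisfied at them. Only the remaining $\Vl$-vertices --- those all of whose live successors stay outside $Z^*$ --- may recur infinitely often, and there fairness is harmless. Without this ``visited at most once'' argument your case (b) does not go through, so the completeness direction is not yet established.
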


To gain some intuition on the correctness of \REFthm{thm:Reachability}, let us recall that the fixpoint algorithm for safe reachability games \emph{without} live edges is given by: 
\begin{equation}\label{eq:normal reach fp}
\mu X.~T \cup (Q\cap\cpre(X)).
\end{equation}
Intuitively, the fixpoint in \eqref{eq:normal reach fp} is initialized with $X^0=\emptyset$ and computes a sequence $X^0,X^1,\hdots,X^k$ of increasingly larger sets until $X^k=X^{k+1}$. We say that $v$ has rank $r$ if $v\in X^r\setminus X^{r-1}$. All states contained in $X^r$ allow \p{0} to force the play to reach $T$ in at most $r-1$ steps while staying in~$Q$. 
The corresponding \p{0} strategy $\rho_0$ is known to be winning w.r.t.\ \eqref{equ:Reach:psi}, and 
along every play $\pi$ compliant with $\rho_0$, the path $\pi$ remains in $Q$ and the rank is always decreasing. 

To see why the same strategy is also \emph{sound} in the \emph{fair adversarial} safe reachability game~$\Gl$, first recall that for vertices $v\notin \Vl$ of $\Gl$, the almost sure pre-operator $\apre(X,Y)$ simplifies to $\cpre(X)$. 
With this, we see that for every $v\notin \Vl$ a $\p{0}$ winning strategy $\widetilde{\rho}_0$ in $\Gl$ can always force plays to stay in $Q$ and to decrease their rank, similar to $\rho_0$.
With this, we see that plays $\pi$ which are compliant with such a strategy $\widetilde{\rho}_0$ and visit a vertex in $\Vl$ only finitely often satisfy~\eqref{equ:Reach:psi}.

The only interesting case for soundness of \REFthm{thm:Reachability} are therefore plays $\pi$ that visits states in $\Vl$ infinitely often. However, as the number of vertices is finite, we only have a finite number of ranks and hence a certain vertex $v\in\Vl$ with a finite rank $r$ needs to get visited by $\pi$ infinitely often. Due to the definition of $\apre$ we however know that only states $v\in\Vl$ are contained in~$X^r$ if $v$ has an outgoing \emph{live} edge reaching $X^k$ with $k<r$. With this, reaching $v$ infinitely often implies that also a state with rank $k$ s.t.\ $k<r$ will get visited infinitely often. As $X^1=T$ we can show by induction that $T$ is eventually visited along $\pi$ while $\pi$ always remains in $Q$ until then.

In order to prove \emph{completeness} of \REFthm{thm:Reachability} we need to show that all states in $V\setminus Z^*$ are loosing for Player 0. Here, again the reasoning is equivalent to the \enquote{normal} safe reachability game with $v\notin\Vl$. For vertices $v\in\Vl$, we see that $v$ is not added to $Z^*$ via $\apre$ if $v\notin T$ and either (i) all its outgoing live transitions do not make progress towards $T$, or (ii) it has some outgoing edge (not necessarily a live one) that makes it leave $Z^*$. One can therefore construct a \p{1} strategy that for (i)-vertices always chooses a live transition and thereby never makes progress towards $T$ (also if $v$ is visited infinitely often), and for (ii)-vertices ensures that they are only visited once on plays which remain in $Q$. This ensures that (ii)-vertices never make progress towards $T$ via their possibly existing rank-decreasing live edges.

A detailed soundness and completeness proof of \REFthm{thm:Reachability} along with the respective \p{0} and \p{1} strategy construction is provided in \REFapp{app:prop:Reachability}. In addition, \REFthm{thm:SingleRabin} is proven in \REFsec{app:prop:SingleRabin} by a reduction to  \REFthm{thm:Reachability} for every iteration over $Y$.

\begin{figure}
\begin{center}
 \includegraphics[width=0.5\linewidth]{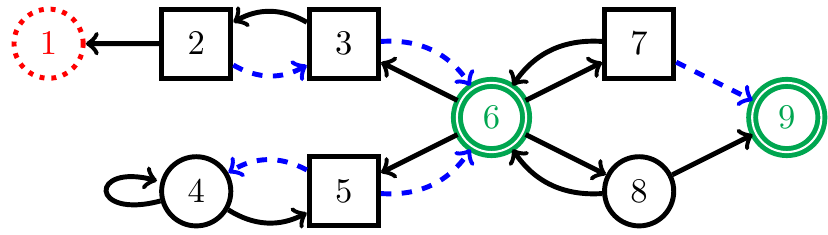}
\end{center}
\vspace{-0.4cm}
\caption{Fair adversarial game graph discussed in Examples~\ref{example_simple} and \ref{example_simple_buechi} with vertex sets $G=\{6,9\}$ (double circled, green), $\overline{Q}=\{1\}$ (red,dotted), and live edges $\El=\set{(2,3),(3,6),(5,4),(5,6),(7,9))}$ (dashed, blue). $\p{0}$ and $\p{1}$ vertices are indicated by circles and squares, respectively. }
\label{fig:example_simple}
\end{figure}

\begin{example}[Fair adversarial safe reachability game]\label{example_simple}
We consider a fair adversarial safe reachability game over the game graph depicted in \REFfig{fig:example_simple} with target vertex set $T=G=\{6,9\}$ and safe vertex set $Q=V\setminus \{1\}$. 

We denote by $Y^m$ the $m$-th iteration over the fixpoint variable $Y$ in \eqref{equ:Reach:FP}, where $Y^0=V$. Further, we denote by $X^{mi}$ the set computed in the $i$-th iteration over the fixpoint variable $X$ in \eqref{equ:Reach:FP} during the computation of  $Y^m$ where  $X^{m0}=\emptyset$.
We further have $X^{m1}=T=\{6,9\}$ as $\apre(\cdot,\emptyset)=\emptyset$. Now we compute
\begin{align}
 X^{12}&=T \cup (Q\cap\apre(Y^0,X^{11}))\notag\\
 &=\{6,9\}\cup (V\setminus \{1\}\cap[\underbrace{\cpre(X^{11})}_{\{8\}}\cup\underbrace{(\elpre(X^{11})\cap\eapre_1(V))}_{\{3,5,7\}}])=\{5,6,7,8,9\}\label{ex:aaa}
\end{align}
We observe that the only vertex added to $X$ via the $\cpre $ term is vertex $8$.  States $\{3,5,7\}$ are added due to the existing live edge leading to a target vertex. Here, we note that vertex $7$ is added due to its live edge to vertex $9$. The additional requirement  $\eapre_1(V)$ in $\apre(Y^0,X^{11})$ is trivially satisfied for all vertices at this point as $Y^0=V$ and can therefore be ignored.
Doing one more iteration over $X$ we see that now vertex $4$ gets added via the $\cpre{}$ term (as it is a $\p{0}$ vertex that allows progress towards $5$) and vertex $2$ is added via the $\apre{}$ term (as it allows progress to $3$ via a live edge). 
The iteration over $X$ terminates with $Y^1=X^{1*}=V\setminus \{1\}$. 

Re-iterating over $X$ for $Y^1$ gives $X^{22}=X^{12}=\{5,6,7,8,9\}$ as before. However, now vertex~$2$ does not get added to $X^{23}$ because vertex $2$ has an edge leading to $V\setminus Y^1=\{1\}$. Therefore the iteration over $X$ terminates with $Y^2=X^{2*}=V\setminus \{1,2\}$. 
When we now re-iterate over~$X$ for $Y^2$ we see that vertex $3$ is not added to $X^{32}$ any more, as vertex $3$ has a transition to $V\setminus Y^2=\{1,2\}$. Therefore the iteration over $X$ now terminates with $Y^3=X^{3*}=V\setminus \{1,2,3\}$.
Now re-iterating over~$X$ does not change the vertex set anymore and the fixpoint terminates with $Y^*=Y^3=V\setminus \{1,2,3\}$.

We note that the $\mu$-calculus formula \eqref{eq:normal reach fp} for \enquote{normal} safe reachability games terminates after two iterations over $X$ with $X^*=\{6,8,9\}$, as vertex $8$ is the only vertex added via the $\cpre{}$ operator in \eqref{ex:aaa}. Due to the stricter notion of $\cpre{}$ requiring that \emph{all} outgoing edges of $\p{0}$ vertices make process towards the target, \eqref{eq:normal reach fp} does not require an outer largest fixed-point over~$Y$ to \enquote{trap} the play in a set of vertices which allow progress when \enquote{waiting long enough.} This \enquote{trapping} required in \eqref{equ:Reach:FP} via the outer fixed-point over $Y$ actually fails for vertices~$2$ and~$3$ (as they are excluded form the winning set of \eqref{equ:Reach:FP}). Here, $\p{1}$ can enforce to \enquote{escape} to the unsafe vertex $1$ in two steps before $2$ and $3$ are visited infinitely often (which would imply progress towards $6$ via the existing live edges).

We see that the winning region in the \enquote{normal} game is significantly smaller than the winning region for the fair adversarial game, as adding live transitions restricts the strategy choices of $\p{1}$, making it easier for $\p{0}$ to win the game.
\end{example}

\begin{example}[Fair adversarial safe Büchi game]\label{example_simple_buechi}
 We now consider a fair adversarial safe Büchi game over the game graph depicted in \REFfig{fig:example_simple} with sets $G=\{6,9\}$ and $Q=V\setminus \{1\}$. 
 
 We first observe that we can rewrite the fixpoint in \eqref{equ:SR:FP} as
 \begin{align}
 \nu Y.~\mu X.~\left[Q\cap G\cap \cpre(Y)\right] \cup
 \left[ Q \cap \left(\apre(Y,X)\right)\right].\label{equ:SR:FP:rewrite}
 \end{align}
 Using \eqref{equ:SR:FP:rewrite} we see that for $Y^0=V$ we can define $T^0:=Q\cap G\cap \cpre(V)=G=\{6,9\}$. Therefore the first iteration over $X$ is equivalent to \eqref{ex:aaa} and terminates with $Y^1=X^{1*}=V\setminus \{1\}$. 
 
 Now, however, we need to re-compute $T$ for the next iteration over $X$ and obtain 
  $T^1=Q\cap G\cap \cpre(Y^1)=V\setminus\{1\}\cap\{6,9\}\cap V\setminus\{1,2,9\}=\{6\}$.
 This re-computation of $T^1$ checks which target vertices are re-reachable, as required by the Büchi condition. As vertex $9$ has no outgoing edge it is trivially not re-reachable.
 
With this, we see that for the next iteration over $X$ we only have one target vertex $T^1=\{6\}$. If we recall that vertex $7$ is added to $X^{22}$ due to its live edge to $9$, we see that it is now not added anymore. Intuitively, we have to exclude $7$ as $\p{1}$ can always decide to take the live edge towards $9$ from $7$ (also if $7$ only gets visited once), and therefore prevents to re-visit a target state.

Now, vertices $2$ and $3$ get eliminated for the same reason as in the safe reachability game within the second and third iteration over $Y$. The overall fixpoint computation therefore terminates with $Y^*=Y^3=\{4,5,6,8\}$. 
\end{example}

\smallskip
\begin{proof}[Proof of \REFthm{thm:FPsoundcomplete}]
  With \REFthm{thm:Reachability} and \REFthm{thm:SingleRabin} in place, the proof of\\ \REFthm{thm:FPsoundcomplete} is essentially equivalent to the proof of \citet{PitermanPnueli_RabinStreett_2006} while utilizing \REFthm{thm:Reachability} and \REFthm{thm:SingleRabin} at all suitable places. For completeness, we give the full proof of \REFthm{thm:FPsoundcomplete}, including the memoryless strategy construction, in \REFapp{appD}. 
In addition, we illustrate the steps of the fixpoint algorithm in \eqref{equ:Rabin_all} with a simple fair adversarial Rabin game (depicted in \REFfig{fig:example}) which has two acceptance pairs in \REFapp{app:ExpFP}. 
\end{proof}

\begin{remark}\label{rem:apre}
We remark that the fixpoint \eqref{equ:Reach:FP}, as well as the $\apre$ operator, are similar in structure to the solution of
almost surely winning states in concurrent reachability games \cite{dAHK98,dAH00,ChatterjeeAH11}.
In concurrent games, the fixed-point captures the largest set of states in which the game can be trapped while maintaining
a positive probability of reaching the target.
In our case, the fixed-point captures the largest set of states in which $\p{0}$ can keep the game while ensuring a visit to the target
either directly or through the live edges.
The commonality justifies our notation and terminology for $\apre$.

However, concurrent games are fundamentally different from fair adversarial games. 
In concurrent games, the two players simultaneously and independently choose their actions from a given vertex, and the next vertex
is chosen probabilistically (given the current vertex and the choice of actions).
It is known that optimal winning strategies in concurrent games may require randomization.
The randomization in strategies induces progress conditions similar to our live edges. 
In contrast, in fair adversarial games, the live edges are given as an assumption on the environment and are fixed once and for all, that is,
the set of live edges cannot be modified based on particular strategies of the players. 
To see the difference from concurrent games, consider co-B\"uchi winning conditions.
Almost sure winning regions for co-B\"uchi concurrent games can be characterized as fixpoints \cite{dAH00}; however, the
characterization requires an additional predecessor operator.
The additional operator provides a ``dual'' of live edges, whereby a player can ensure that some edges are taken finitely often in the long run.
Again, the choice of these edges is based on the strategies chosen by the players.
Thus, fixpoint algorithms for co-B\"uchi (and also Rabin) concurrent games are quite different from fair adversarial games, and both the reasons for their 
correctness and constructions of optimal strategies are more intricate. 
\end{remark}

\begin{remark}
\citet{AminofBK04} studied fair CTL and LTL model checking where the fairness condition is given by a transition fairness 
with \emph{all} edges of the transition system live. 
They show that CTL model checking under this all-live fairness condition, 
can be syntactically transformed to \emph{non-fair} CTL model checking.
A similar transformation is possible for fair model checking of B\"uchi, Rabin, and Streett formulas. 
The correctness of their transformation is based on reasoning similar to our $\apre$ operator.
For example, a state satisfies the CTL formula $\forall\Diamond p$ under fairness iff all paths starting from
the state either eventually visits $p$ or always visits states from which a visit to $p$ is possible.
\end{remark}

\subsection{Complexity}\label{sec:Rabin:Complexity}

\smallskip
\noindent\textbf{Complexity Analysis of \eqref{equ:Rabin_all}:} 
For Rabin games with $k$ Rabin pairs, \citet{PitermanPnueli_RabinStreett_2006} show a fixpoint formula with alternation depth $2k+1$ .
Using the accelerated fixpoint computation technique of \citet{long1994improved}, they deduce a bound of $O(n^{k+1}k!)$ symbolic steps. We show in \REFapp{app:acceleration} that this accelerated fixpoint computation can also be applied to \eqref{equ:Rabin_all} yielding a bound of $O(n^{k+2} k!)$ symbolic steps. 
(The additional complexity is because of an additional outermost $\nu$-fixpoint.)
Thus our algorithm is almost as efficient as the original algorithm for Rabin games without environment 
assumptions---\emph{independent} of the number of strong transition fairness assumptions!

\smallskip
\noindent\textbf{Comparison with a Na\"ive Solution:} 
We show a na\"ive reduction from fair adversarial Rabin games to usual Rabin games. 
Suppose $\Gl = \tup{\game,\El}$ is a game graph with live edges, $\FR = \set{\tuple{G_1,R_1},\hdots,\tuple{G_k,R_k}}$ is a Rabin winning condition defined over $\Gl$, and $\varphi$ is the corresponding LTL specification as defined in \eqref{equ:vR}.
Let $\widehat{\game} = \tup{\widehat{V},\widehat{V}_0,\widehat{V}_1,\widehat{E}}$ be a game graph obtained by just replacing every live edge of $\Gl$ with a gadget shown in \REFfig{fig:fair adversarial to normal rabin} and explained next.
For every live edge $(v,v')\in \El$ we introduce a new intermediate vertex named $vv'\in \widehat{V}$, and without loss of generality we assume that $vv'\in \widehat{V}_0$.
(We could have equivalently used the convention that $vv'\in \widehat{V}_1$.)
Then we replace the edge $(v,v')$ with a pair of new edges $(v,vv')\in \widehat{E}$ and $(vv',v')\in \widehat{E}$; the rest remains the same as in $\game$.
Assuming that $|\El|=l$ and $|V| = n$,
the number of vertices of~$\widehat{\game}$ is $n+l$.

Intuitively, the event of the newly introduced vertices being reached in $\widehat{\game}$ simulates the event of the corresponding live edge being taken in $\Gl$, and vice versa.
We are now ready to transfer the specification $\vl\rightarrow \varphi$ to a new Rabin winning condition  $\widehat{\FR}$ for $\widehat{\game}$.
First observe that $\vl\rightarrow \varphi$ is equivalent to $\lnot \vl \vee \varphi$, and $\lnot \vl$ can be expressed in LTL as
	$\bigvee_{(v,v')\in \El} (\square\lozenge \set{v} \wedge \lozenge\square \overline{\set{vv'}}$),
and is therefore equivalent to the Rabin winning condition $\FR^\ell \coloneqq \set{\tup{\set{v},\set{vv'}} \mid (v,v')\in \El}$.
Since Rabin winning conditions are closed under union, we obtain the new Rabin condition $\widehat{\FR}\coloneqq \FR \cup \FR^\ell$.

Once $\widehat{\game}$ and $\widehat{\FR}$ are obtained, one can use the fixpoint algorithm of \citet{PitermanPnueli_RabinStreett_2006} for ``normal'' two-player Rabin games. 
This whole process yields a symbolic algorithm for fair adversarial Rabin games with $2(k+l)+1$ alternations of fixpoint operators on a set of $(n+l)$ vertices that runs in time $O((n+l)^{k+l+1}(k+l)!)$.
In contrast, our main theorem shows that we get a symbolic fixpoint expression with $2(k+1)$ alternations that runs in $O(n^{k+2}k!)$ symbolic steps.
In many applications, we expect $l = \Theta(n)$, for which our algorithm is significantly faster. 

\begin{figure}
	\begin{tikzpicture}[xscale=0.8]
	\begin{footnotesize}
		\node[mysquare,draw]		(A)		at	(0,0)		{$v$};
		\node[mystate]		(B)		[right=of A]		{$v'$};
		\path[->]	(A)	edge		(B);
		
		\node	(Arr)		[right=0.45\ThCScm of B]		{\Huge $\Rightarrow$};
		
		\node[mysquare,draw]		(C)		[right=0.45\ThCScm of Arr]		{$v$};
		\node[mystate]		(D)		[right=of C]				{$vv'$};
		\node[mystate]		(E)		[right=of D]				{$v'$};
		\path[->]	(C)	edge		(D)
						(D)	edge		(E);
	\end{footnotesize}
	\end{tikzpicture}
	\caption{Left: A live edge $(v,v')$ in $\Gl$. Right: The gadget used to replace $(v,v')$ in $\widehat{\game}$. The vertex named $vv'$ is a newly added vertex in $\widehat{\game}$; $v$ belongs to $\widehat{V}_1$, $vv'$ belongs to $\widehat{V}_0$, but $v'$ may belong to either $\widehat{V}_0$ or $\widehat{V}_1$.}
	\label{fig:fair adversarial to normal rabin}
\end{figure}
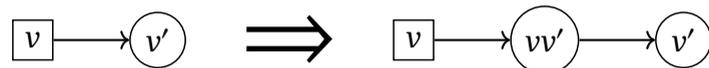 
\begin{cfigure}
\begin{center}
\begin{tikzpicture}[auto,scale=1.5]
  \begin{footnotesize}
          \node[mysquare,draw] (a) at (0,0) {$a$};
          \node[mystate] (b) at (1,0) {$b$};
          \node[mysquare,draw] (c) at (2,0) {$c$};
          \node[mystate] (d) at (3,0) {$d$};
          \SFSAutomatEdge{a}{}{b}{bend left}{}
          \SFSAutomatEdge{b}{}{a}{bend left}{}
          \SFSAutomatEdge{b}{}{c}{bend left}{}
          \SFSAutomatEdge{c}{}{b}{bend left}{}
          \SFSAutomatEdge{c}{}{d}{}{}
          \SFSAutomatEdge{d}{}{d}{loop right}{}
          
  \end{footnotesize}
\end{tikzpicture}
\end{center}
\caption{Counterexample to the equality of strong transition fairness and strong fairness (compassion).}

\label{fig:counterexample}
\end{cfigure}
\begin{remark}
As already mentioned in the introduction, not all strong fairness assumptions (Streett assumptions) can be translated into live edges (see e.g., \cite[p.264]{baierbook}). 
 As an example, consider the two-player game graph depicted in \REFfig{fig:counterexample}. $\p{0}$ and $\p{1}$ vertices are indicated by a circle and a box, respectively. Now consider the following one-pair Streett assumption
\begin{equation}\label{equ:Streett:ce}
 \varphi_A\coloneqq \Box\Diamond\set{a,b,c}\rightarrow \Box\Diamond \set{a}= 
 \Diamond\Box\set{d} \vee \Box\Diamond \set{a}.
\end{equation}
This fairness assumption states that it is not possible for a game to infinitely stay inside the set $\set{a,b,c}$ if $\p{0}$ decides to not transition from $b$ to $a$ anymore from some point onward. 
We see that we cannot model this behavior by a fair edge leaving a $\p{1}$ (square) state. 
If we mark the edge $(c,d)$ live, any fair play will transition to $d$ no matter if $a$ is visited infinitely often or not. 
Let us call this fair edge assumption $\vl_A$. Then we see that $\vl_A\rightarrow \varphi_A$ but not vice versa. 
\end{remark}



\subsection{Specialized Rabin Games}
\label{sec:SimpleRabinGames}

This section shows that the known fixpoint algorithms for Rabin chain, parity, and generalized co-B\"{u}chi winning conditions allow for the same \enquote{syntactic transfomation} as in the Rabin case to get the right algorithm for their fair adversarial version. We prove these claims by reducing the fixpoint algorithm in \eqref{equ:Rabin_all} to the special cases induced by the aforementioned winning conditions.  

We note that the fixpoint algorithm for fair adversarial Rabin games in \eqref{equ:Rabin_all} reduces to the normal fixpoint for Rabin games if $\El=\emptyset$. Therefore, our reductions of \eqref{equ:Rabin_all} to fixpoint algorithms for other winning conditions also proves these reductions in the usual case. We are not aware of such reductions proved elsewhere in the literature. 

\smallskip
\noindent\textbf{Fair Adversarial Rabin Chain Games:}
   A \emph{Rabin chain} winning condition \cite{Mostowski84} is a \emph{Rabin} condition $\FR = \set{\tuple{G_1,R_1},\hdots,\allowbreak\tuple{G_k,R_k}}$, with the additional \emph{chain condition}
 \begin{align}\label{equ:RCprop_1}
 &R_1\supseteq R_2\supseteq\hdots\supseteq R_k~&&\text{and}~
 &&G_1\supseteq G_2 \supseteq \hdots\supseteq G_k.
 \end{align}
Intuitively, the fixpoint algorithm computing $Z^*$ in \eqref{equ:Rabin_all} simplifies to a single permutation sequence, namely $p_1=k$, $p_2=k-1$, $\hdots$, $p_k=1$, if \eqref{equ:RCprop_1} holds. This is formalized in the following theorem which is proved in \REFapp{app:RabinChain}.

\begin{theorem}\label{thm:RabinC_all}
Let $\Gl =\tup{\game, \El}$ be a game graph with live edges and 
$\FR$ be a Rabin chain winning condition over $\game$ with $k$ pairs. 
Further, let 
   \begin{subequations}\label{equ:RabinC_all}
   \begin{align}
 \textstyle Z^*\coloneqq&\nu Y_{0}.~\mu X_{0}.~\nu Y_{k}.~\mu X_{k}.~\nu Y_{k-1}.~\hdots\mu X_{1}.~\textstyle\bigcup_{j=0}^k \Ct_{j},\label{equ:Rabin_all_FP}\\
\text{where }\quad\Ct_{j}\coloneqq&
\overline{R}_j\cap
 \left[ 
 \left( G_{j}\cap \cpre(Y_{j})\right)
 \cup \apre(Y_{j},X_{j})
 \right]\label{equ:RabinC_all_Cpj}
\end{align}
with $G_{p_0}\coloneqq\emptyset$ and $R_{p_0}\coloneqq\emptyset$. 
Then $Z^*$ is equivalent to the winning region $\WlR$ of $\p{0}$ in the fair adversarial Rabin chain game over $\Gl$ for the winning condition $\FR$. 
Moreover, the fixpoint algorithm runs in $O(n^{k+2})$ symbolic steps, and a memoryless winning strategy for $\p{0}$ can be extracted from it.
 \end{subequations}
 \end{theorem}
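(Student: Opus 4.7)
The plan is to deduce \REFthm{thm:RabinC_all} from \REFthm{thm:FPsoundcomplete} by showing that, under the chain property \eqref{equ:RCprop_1}, the general fixpoint $Z^*$ of \eqref{equ:Rabin_all} agrees with the simplified fixpoint \eqref{equ:RabinC_all_FP}. Once this equality is established, soundness, completeness, and the memoryless strategy construction transfer immediately from \REFthm{thm:FPsoundcomplete}, and the sharper complexity bound will follow from the absence of the outer factor of $k!$.

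First I would instantiate \eqref{equ:Rabin_all} with the canonical permutation $\sigma = (p_1,\ldots,p_k) = (k, k-1, \ldots, 1)$. Under the chain condition, since $R_1 \supseteq R_2 \supseteq \cdots \supseteq R_k$, in $\sigma$ we have $R_{p_1} \subseteq R_{p_2} \subseteq \cdots \subseteq R_{p_j}$, so together with $R_{p_0} = \emptyset$ the intersection $\bigcap_{i=0}^{j}\overline{R}_{p_i}$ collapses to $\overline{R}_{p_j}$. After relabeling the index $p_j$ as an index $j \in \set{1,\ldots,k}$ on the fixpoint variables, the term $\mathcal{C}_{p_j}$ from \eqref{equ:Rabin_all_Cpj} becomes exactly $\Ct_j$ from \eqref{equ:RabinC_all_Cpj}, and the canonical branch of \eqref{equ:Rabin_all} syntactically matches \eqref{equ:RabinC_all_FP}.

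The heart of the proof is then to show that this canonical branch already equals the full $Z^*$, i.e., that the unions over non-canonical permutations contribute no additional states. I would argue this by induction on the nesting depth $l$ of the union in \eqref{equ:Rabin_all}, showing that at each level choosing $p_l$ to be the largest index remaining in $P\setminus\set{p_1,\ldots,p_{l-1}}$ yields a fixpoint value that contains the value obtained by any other choice. Intuitively, under the chain condition the pair with the largest remaining index imposes the weakest safety constraint (largest $\overline{R}$) while its $G$ is still contained in the $G$'s of the outer choices, so it is the natural outermost $\nu$-variable of the parity-like structure induced by a Rabin chain. The formal argument uses the chain inclusions $\overline{R}_i\subseteq \overline{R}_{i+1}$ and $G_i\supseteq G_{i+1}$ together with monotonicity of $\cpre$ and $\apre$ to ``swap'' two adjacent permutation choices at level $l$ and $l{+}1$ without shrinking the computed set, after which $k\!-\!1$ such swaps bring any permutation into canonical form.

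The main technical obstacle I anticipate is handling the interaction between the $\apre(Y_{p_j},X_{p_j})$ operator and the permutation reordering, because $\apre$ couples an inner ($\mu$) and an outer ($\nu$) variable of the nested fixpoint; a naive swap of permutation choices could interfere with how values propagate through the live-edge requirement encoded in $\apre$. Care is needed to ensure that the set $\eapre_1(Y_{p_j})$ inside $\apre$ is preserved (or enlarged) under the swap, which again hinges on the chain inclusions applied to the target of the outer $\nu$. Once the fixpoint equality is in place, the complexity bound $O(n^{k+2})$ follows from the alternation depth $2(k+1)$ of \eqref{equ:RabinC_all_FP} together with the acceleration argument of \REFapp{app:acceleration}; the $k!$ factor of \REFthm{thm:FPsoundcomplete} disappears because only the single canonical permutation survives.
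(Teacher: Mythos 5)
Your proposal follows the same overall route as the paper: \REFthm{thm:RabinC_all} is obtained from \REFthm{thm:FPsoundcomplete} by proving that, under the chain inclusions \eqref{equ:RCprop_1}, the general fixpoint \eqref{equ:Rabin_all} collapses to \eqref{equ:RabinC_all}, with the descending permutation $k,k-1,\ldots,1$ as the sole survivor; the identification of the canonical branch with the chain fixpoint and the complexity accounting are exactly as in the paper (\REFprop{prop:RCfixpoint} and \REFprop{prop:RC_inductive}).

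The one substantive difference is how you propose to show that non-canonical permutations contribute nothing. You phrase this as an adjacent-transposition argument (``swap $p_l,p_{l+1}$ without shrinking the set''), but note that a bare swap of two adjacent $\nu\mu$-blocks is not monotone in either direction: \REFlem{lem:contain} shows the two orderings $Z_a$ and $Z_b$ are incomparable in general. The swap step only works because, for an ascent $a<b$ with $a$ outer, the chain condition gives $\mathcal{C}_{b}\subseteq\mathcal{C}_{a}$, so the inner block is first \emph{absorbed} (\REFlem{lem:contain}(iii), which is \REFlem{lem:RC_a}), and the collapsed expression is then contained in the swapped, longer expression by \REFlem{lem:contain}(i)--(ii). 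The paper packages these same two facts globally rather than per swap: each permutation is reduced to its maximal strictly decreasing subsequence, and every such subsequence is embedded into the full descent (\REFlem{lem:RC_d}); your bookkeeping would also need $O(k^2)$ rather than $k-1$ transpositions, though that is immaterial. Finally, the interaction with $\apre$ that you flag as the main obstacle is actually a non-issue: $\apre(Y_{p_j},X_{p_j})$ only couples the $\nu$- and $\mu$-variable of the \emph{same} Rabin pair, and that pair of variables always travels together under any reordering, so no additional care is needed there.
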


\smallskip
   \noindent\textbf{Fair Adversarial Parity Games:}
 A \emph{parity} winning condition \cite{EmersonJutla89} is defined by a set $\FP=\set{C_1,C_2,\hdots C_{2k}}$ of colors, where each $C_i\subseteq V$ is the set of vertices of $\game$ with color $i$. 
   Further, $\FP$ partitions the state space, i.e., $\bigcup_{i\in[1;2k]}C_i=V$ and $C_i\cap C_j=\emptyset$ for all $i,j\in[1;2k]$ with $i\neq j$. 
   A play $\pi$ satisfies the \emph{parity condition} $\FP$ if $\pi$ satisfies the LTL formula 
   \begin{equation}\label{equ:vP}
	\vP \coloneqq  \textstyle\bigwedge_{i\in[1;k]}\left(\square\lozenge C_{2i-1} \rightarrow \bigvee_{j\in [i;k]} \square\lozenge C_{2j}\right).
\end{equation}
That is, the maximal color visited infinitely often along $\pi$ is even.
A parity winning condition $\FP$ with $2k$ colors corresponds to the Rabin chain winning condition 
\begin{align}\label{equ:Fcond_}
 &\set{\tuple{F_{2},F_{3}},\hdots,\tuple{F_{2k},\emptyset}}\quad\quad\text{s.t.}~F_{i}:=\textstyle\bigcup_{j=i}^{2k} C_{j},
\end{align}
which has $k$ pairs. Due to $\FP$ forming a partition of the state space  one can further simplify the Rabin chain fixpoint algorithm in \eqref{equ:RabinC_all}. 
Interestingly, the resulting fixpoint looks slightly different from the one we would obtain by mechanically applying our syntactic transformation. While the usual fixpoint algorithm for parity games is given as
\begin{align}\label{equ:Parity_normal}
  \textstyle Z^*\coloneqq&\nu Y_{2k}.~\mu X_{2k-1}\ldots \nu Y_{2}.~\mu X_1 . \\
                &\quad (C_1 \cap \cpre(X_1)) \cup (C_2 \cap \cpre(Y_2)) \cup 
                (C_3\cap \cpre(X_3)) \ldots \cup  (C_{2k} \cap \cpre(Y_{2k} )),\notag
\end{align}
the fixpoint algorithm for fair adversarial parity games, formalized in the following theorem, looks slightly different. 

   \begin{theorem}\label{thm:Parity_all}
Let $\Gl =\tup{\game, \El}$ be a game graph with live edges and 
$\FP$ be a parity condition over $\game$ with $2k$ colors. 
Further, let 
\begin{align}\label{equ:Parity_live}
 \textstyle Z^*:=&\nu Y_{2k}.~\mu X_{2k-1}.\hdots\nu Y_{2}.~\mu X_{1}.\\
  &~\cup(C_{2k}\cap \cpre(Y_{2k}))\cup ((C_1\cup\hdots\cup C_{2k-1})\cap \apre(Y_{2k},X_{2k-1}))\notag\\
   &~\cup~\hdots\notag\\
   &~\cup(C_{4}\cap \cpre(Y_{4}))\cup ((C_1\cup C_2\cup C_{3})\cap \apre(Y_4,X_{3}))\notag\\
 &~\cup (C_2\cap \cpre(Y_2))\cup (C_1\cap \apre(Y_2,X_1))\notag
\end{align}
Then $Z^*$ is equivalent to the winning region $\WlR$ of $\p{0}$ in the fair adversarial parity game over~$\Gl$ with the set of colors~$\FP$. 
Moreover, the fixpoint algorithm runs in $O(n^{k+1})$ symbolic steps, and a memoryless winning strategy for $\p{0}$ can be extracted from it.
\end{theorem}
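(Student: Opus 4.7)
The plan is to derive \eqref{equ:Parity_live} from the fair adversarial Rabin chain fixpoint of \REFthm{thm:RabinC_all} by specializing it to the translation in \eqref{equ:Fcond_}, and then exploiting the partition structure of the parity colors to collapse redundant terms. First I would observe that the parity condition $\vP$ with $2k$ colors is equivalent to the $k$-pair Rabin chain condition with $G_j = F_{2j}$ and $R_j = F_{2j+1}$ (where $F_{2k+1} = \emptyset$). Applying \REFthm{thm:RabinC_all} directly then expresses the fair adversarial parity winning region as the fixpoint \eqref{equ:RabinC_all} for this data, using the fictitious pair $(G_0, R_0) = (\emptyset, \emptyset)$; this is the starting point.

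Next I would simplify each term $\Ct_j$ using the fact that the color sets partition $V$. The partition gives $\overline{R_j} = \bigcup_{i=1}^{2j} C_i$ and $G_j \cap \overline{R_j} = C_{2j}$, so for each $j \in [1;k]$ one obtains $\Ct_j = (C_{2j} \cap \cpre(Y_j)) \cup \bigl((\bigcup_{i=1}^{2j} C_i) \cap \apre(Y_j, X_j)\bigr)$, while $\Ct_0 = \apre(Y_0, X_0)$. After relabeling $Y_j \mapsto Y_{2j}$ and $X_j \mapsto X_{2j-1}$ for $j \in [1;k]$, the nested alternation already matches \eqref{equ:Parity_live} except for two discrepancies: (i) the outermost pair $\nu Y_0.\mu X_0$ together with its term $\Ct_0$, and (ii) an extra $C_{2j}$ component inside the color filter of the $\apre$-term at each level $j$.

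The main obstacle is eliminating these two discrepancies without changing the fixpoint value, and both rest on the structural identity $\apre(S, T) \subseteq \cpre(S)$ whenever $T \subseteq S$ (essentially \REFlem{lem:AsubsetC}), which holds because any live-edge $\p{1}$ vertex whose entire edge set already lies in $S$ belongs to $\cpre(S)$ by definition. At the fixpoint of each nested block $\nu Y_{2j}.\mu X_{2j-1}$, the $\mu$--$\nu$ nesting forces $X_{2j-1}^* = Y_{2j}^*$, so $C_{2j} \cap \apre(Y_{2j}, X_{2j-1}) \subseteq C_{2j} \cap \cpre(Y_{2j})$; the $C_{2j}$ component of the color filter is therefore already covered by the $\cpre$-term and may be dropped, producing the tighter filter $\bigcup_{i<2j} C_i$ of \eqref{equ:Parity_live}. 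The same identity, applied at the outer joint fixpoint where $Y_0^* = X_0^*$, forces $\Ct_0 = \cpre(Y_0^*)$; this set is already contained in the inner fixpoint (since by soundness of \REFthm{thm:RabinC_all} the fixpoint value equals the fair adversarial winning region, which is closed under $\cpre$), so the outer $\nu Y_0.\mu X_0$ contributes nothing new and can be removed. This collapses the alternation depth from $2k+2$ to $2k$, giving the $O(n^{k+1})$ bound together with the acceleration technique of \citet{long1994improved}; a memoryless $\p{0}$ strategy is then extracted exactly as in \REFthm{thm:RabinC_all}, following the unique permutation $k, k-1, \ldots, 1$ dictated by the chain property.
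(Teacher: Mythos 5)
Your overall route is the one the paper takes: translate the parity condition into the $k$-pair Rabin chain of \eqref{equ:Fcond_}, specialize \REFthm{thm:RabinC_all}, use the partition structure to get $\overline{R}_j\cap G_j=C_{2j}$ and $\overline{R}_j=\bigcup_{i\leq 2j}C_i$, absorb the $C_{2j}\cap\apre(Y_j,X_j)$ component into $C_{2j}\cap\cpre(Y_j)$, and delete the outer $\nu Y_0.\mu X_0$ block. Two of your justifications, however, are placed at the wrong level. For the absorption step, the identity you need is $\cpre(Y)\cup\apre(Y,X)=\cpre(Y)$ whenever $X\subseteq Y$ (\REFlem{lem:AsubsetC}), and it must hold at \emph{every} evaluation of the formula body, not just at the terminal fixpoint where $X^*=Y^*$; this is fine because $X_j\subseteq Y_j$ throughout the iteration, but invoking only the fixpoint equality does not by itself license rewriting the $\mu$-calculus expression.

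The genuine gap is in the removal of the outer block. You argue that at the outer fixpoint $\Ct_0=\cpre(Y_0^*)$ is contained in the winning region because the winning region is closed under $\cpre$, and conclude that the term "contributes nothing new." Containment of a term in the final fixpoint value does not justify deleting it from the union: for example $\mu X.(T\cup X)=T$ while $\mu X.X=\emptyset$ even though $T$ is contained in the first fixpoint. Concretely, dropping a constant target set from the innermost union of a $\nu\mu$-block can strictly shrink the result, since states that win only by reaching that set would be lost. What actually makes the deletion sound is the \emph{pointwise} domination $g(X,Y)\subseteq f(X,Y)$ for all $X,Y$, where $g(X_0,Y_0)=(C_1\cup\hdots\cup C_{2k})\cap\apre(Y_0,X_0)$ is the dropped term and $f(X_k,Y_k)=(C_{2k}\cap\cpre(Y_k))\cup((C_1\cup\hdots\cup C_{2k})\cap\apre(Y_k,X_k))$ is the retained outermost inner term; this is exactly the hypothesis of \REFlem{lem:contain}~(iv), whose proof in turn rests on the warm-starting result of \citet{long1994improved} to show that the nested fixpoints coincide. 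Your argument needs to be replaced by (or reduced to) that lemma; the semantic $\cpre$-closure observation is not enough. The remaining pieces of your proposal (relabeling, the $O(n^{k+1})$ count from the $2k$ remaining alternations, and strategy extraction along the unique chain permutation) match the paper.
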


The intuition why the union of all colors $C_1 \hdots C_{2k-1}$ are intersected with $\apre(Y_{2k},X_{2k-1})$ in \eqref{equ:Parity_live} (in comparison to only the matching odd color $C_{2k-1}$ being intersected with $\cpre(X_{2k-1})$ in \eqref{equ:Parity_normal}) can be illustrated via the example in \REFfig{fig:counterexample_parity}. Here, the names of the vertices coincide with their color and we see that \p{0} wins as every path visits vertex $1$ infinitely often which implies that $\p{1}$ has to take the (dashed) live edge infinitely often, resulting in the maximum color seen infinitely often to be even (i.e., 4). We see that in order to infer that color $4$ is seen infinitely often whenever color $3$ is seen infinitely often, we need to understand that a lower color vertex (i.e., vertex $1$) enforces visits to vertex $4$ via its live edge. If $C_1$ would not be intersected with the $\apre(Y_4,X_3)$ term of the fixpoint algorithm, this conclusion cannot be made. The same reasoning applies if the color of the \p{1} vertex is $2$ in \REFfig{fig:counterexample_parity}, which shows that also lower even color vertex sets need to be intersected with the respective $\apre{}$ term.
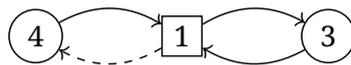
\begin{figure}
\begin{center}
\begin{tikzpicture}[auto,scale=1.5]
  \begin{footnotesize}
          \node[mystate] (a) at (0,0) {$4$};
          \node[mysquare,draw] (b) at (1,0) {$1$};
          \node[mystate] (c) at (2,0) {$3$};
          \SFSAutomatEdge{a}{}{b}{bend left}{}
          \SFSAutomatEdge{b}{}{a}{bend left,dashed}{}
          \SFSAutomatEdge{b}{}{c}{bend left}{}
          \SFSAutomatEdge{c}{}{b}{bend left}{}
  \end{footnotesize}
\end{tikzpicture}
\end{center}
\caption{Counterexample to the simple syntactic transformation for Parity games. The name of the vertex indicates its color. }
\label{fig:counterexample_parity}
\end{figure}

\smallskip
   \noindent\textbf{Fair Adversarial (Generalized) Co-B\"{u}chi Games:}
   A \emph{co-B\"{u}chi} winning condition is defined by a subset $A\subseteq V$ of vertices of $\game$. 
   A play $\pi$ satisfies the \emph{co-B\"{u}chi condition} $A$ if $\pi$ satisfies 
   \begin{equation}\label{equ:vCB}
	\textstyle\varphi\coloneqq  \lozenge\square A.
  \end{equation}
   A \emph{generalized co-B\"{u}chi} winning condition is defined by a set  $\mathcal{A}=\set{A_1,\hdots A_r}$, where each $A_i\subseteq V$ is a subset of  vertices of $\game$. 
   A play $\pi$ satisfies the \emph{generalized co-B\"{u}chi condition} $\mathcal{A}$ if $\pi$ satisfies 
   \begin{equation}\label{equ:vGCB}
	\textstyle\varphi\coloneqq \bigvee_{a\in[1;r]} \lozenge\square A_a.
  \end{equation}
     Generalized co-B\"{u}chi winning conditions correspond to a Rabin condition $\FR$ with $r$ pairs s.t.\
 \begin{align} \label{equ:Rtilde}
\AllQ{j\in[1;r]}{&R_j\coloneqq \overline{A}_{j}\quad\text{and}\quad G_j\coloneqq V}.
 \end{align}
Intuitively, the fact that $G_j\coloneqq V$ for all $j$ leads to a cancellation of all $\apre$ terms in $\mathcal{C}_j$ and all terms become ordered, i.e., we have $\mathcal{C}_{p_{j+1}}\subseteq\mathcal{C}_{p_j}$ for every permutation sequence used in \eqref{equ:Rabin_all}. 
As we take the union over all $\mathcal{C}_{p_j}$-s in \eqref{equ:Rabin_all_C}, the term $\mathcal{C}_{p_1}$ absorbs all others for every permutation sequence.  Hence, for every permutation sequence we only have two terms left, one for $j=0$ (over the artificially introduced Rabin pairs $G_{p_0}=R_{p_0}=\emptyset$) and one for the first choice $p_1$ made in this particular permutation. This is formalized in the following theorem which is proved in \REFapp{app:CoBuechi}.

\begin{theorem}\label{thm:GCB}
Let $\Gl =\tup{\game, \El}$ be a game graph with live edges and 
$\mathcal{A}$ be a generalized co-B\"{u}chi winning condition $\game$ with $r$ pairs. 
Further, let 
\begin{align}\label{equ:GCB}
  \textstyle Z^*\coloneqq&\nu Y_{0}.~\mu X_{0}.~\bigcup_{a\in [1;r]} \nu Y_{a}.~
\apre(Y_{0},X_{0})\cup (A_a\cap \cpre(Y_a)).
\end{align}
Then $Z^*$ is equivalent to the winning region $\WlR$ of $\p{0}$ in the fair adversarial generalized co-B\"uchi game over $\Gl$ for the winning condition $\mathcal{A}$. Moreover, 
the fixpoint algorithm runs in $O(rn^2)$ symbolic steps, and a memoryless winning strategy for $\p{0}$ can be extracted from it.
\end{theorem}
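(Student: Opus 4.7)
The plan is to derive \REFthm{thm:GCB} from \REFthm{thm:FPsoundcomplete} via the standard encoding \eqref{equ:Rtilde} of generalized co-Büchi as a Rabin condition, followed by an algebraic collapse of \eqref{equ:Rabin_all} down to \eqref{equ:GCB}. Setting $k := r$, $R_j := \overline{A}_j$, and $G_j := V$ for every $j \in [1;r]$, the Rabin LTL formula \eqref{equ:vR} reduces to $\bigvee_j \lozenge \square A_j$ because $\square\lozenge V$ is a tautology; this is exactly \eqref{equ:vGCB}. Hence the fair adversarial winning region $\WlR$ coincides with the one for this Rabin condition, and by \REFthm{thm:FPsoundcomplete} it equals the fixpoint \eqref{equ:Rabin_all} under this instantiation. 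All that remains is to show that this instantiated expression is semantically equivalent to \eqref{equ:GCB}.

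The collapse proceeds in two stages. First, $\mathcal{C}_{p_0}$ reduces to $\apre(Y_{p_0}, X_{p_0})$ because $G_{p_0} = R_{p_0} = \emptyset$, and for $j \geq 1$ one has
\[
\mathcal{C}_{p_j} = \textstyle\left(\bigcap_{i=1}^{j} A_{p_i}\right) \cap \left[\cpre(Y_{p_j}) \cup \apre(Y_{p_j}, X_{p_j})\right].
\]
At the joint $\nu Y_{p_j}.\mu X_{p_j}$ fixpoint the inner $\mu$-variable coincides with the outer $\nu$-variable, so in particular $X_{p_j} \subseteq Y_{p_j}$; by \REFlem{lem:AsubsetC} this forces $\apre(Y_{p_j}, X_{p_j}) \subseteq \cpre(Y_{p_j})$ and the $\apre$ summand in $\mathcal{C}_{p_j}$ is absorbed for $j \geq 1$. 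Each such $\mathcal{C}_{p_j}$ therefore no longer depends on $X_{p_j}$, and the inner $\mu X_{p_j}$ becomes vacuous. Second, the remaining simplified terms are ordered in the sense that for any fixed permutation $\mathcal{C}_{p_1}$ absorbs every $\mathcal{C}_{p_j}$ with $j \geq 2$: the additional factors $A_{p_2}, \ldots, A_{p_j}$ only shrink the intersection, and at the joint nested $\nu$-fixpoint $Y_{p_{j+1}} \subseteq Y_{p_j}$, which makes $\cpre(Y_{p_{j+1}}) \subseteq \cpre(Y_{p_j})$. Consequently the inner unions over permutations $(p_2, \ldots, p_k)$ and the associated $\nu Y_{p_j}$-fixpoints for $j \geq 2$ collapse to a trivial factor, and after renaming $Y_{p_0} \to Y_0$, $X_{p_0} \to X_0$, and $Y_{p_1} \to Y_a$ (with $a := p_1$) one obtains precisely \eqref{equ:GCB}.

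The hard part will be making the two monotonicity claims fully rigorous: that $X_{p_j} \subseteq Y_{p_j}$ at the innermost joint fixpoint, and that $Y_{p_{j+1}} \subseteq Y_{p_j}$ between the nested $\nu$-variables. Both statements are folklore for alternating $\nu\mu$-calculus formulas but must be phrased carefully here because the $Y_{p_j}$'s are re-evaluated whenever a surrounding variable changes, so a sloppy argument risks circularity. Once the equivalence with the instantiated \eqref{equ:Rabin_all} is in place, the remaining conclusions are immediate: the $O(rn^2)$ symbolic-step bound is read off from the three-level alternation $\nu Y_0.\mu X_0.\nu Y_a$ inside an $r$-ary union, and a memoryless Player-0 strategy is obtained by specializing the Rabin strategy construction of \REFthm{thm:FPsoundcomplete}---on entering $Z^*$, commit to the index $a \in [1;r]$ witnessing membership in the corresponding $\nu Y_a$-set and play the $\cpre(Y_a)$-move to remain in $A_a \cap Y_a$, falling back on the $\apre(Y_0, X_0)$-move from the outer $\mu$-fixpoint during any preparatory steps before the play is trapped inside some $A_a$.
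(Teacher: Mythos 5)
Your proposal is correct and follows essentially the same route as the paper's proof in Appendix~\ref{app:CoBuechi}: instantiate \eqref{equ:Rtilde} in \REFthm{thm:FPsoundcomplete}, absorb the $\apre$ summand in each $\mathcal{C}_{p_j}$ for $j\geq 1$ via \REFlem{lem:AsubsetC} using $X_{\delta p_j}^{i_j-1}\subseteq Y_{\delta p_j}^*$, and then collapse each permutation's union to the $j=0$ and $j=1$ terms via the containment $\mathcal{C}_{\delta' p_{j+1} i_{j+1}}\subseteq\mathcal{C}_{\delta p_j i_j}$ (the paper packages this last step as \REFlem{lem:contain}(iii), which also discharges the monotonicity concerns you flag as the "hard part").
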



\section{Generalized Rabin Games}
\label{sec:GenRabinGames}

In this section, we slightly generalize our main result, \REFthm{thm:FPsoundcomplete}, to fair adversarial \emph{generalized} Rabin games. That is, for each Rabin pair, we allow the goal set $G_i$ to be a set of goal sets $\Gc_j=\set{\ps{1}G_j,\hdots, \ps{m_j}G_j}$. Then a play fulfills the winning condition if there exists one generalized Rabin pair $\tuple{\Gc_i,R_i}$ such that the play eventually remains in $\overline{R}_i$ and visits \emph{all} sets $\ps{l}G_i$ infinitely often.

The motivation of this generalization is to show that our syntactic transformation also works for fair adversarial games with a \emph{generalized reactivity winning condition of rank $1$} (GR(1) games for short) \cite{piterman2006synthesis}. \emph{Generalized} Rabin games allow us to see a GR(1) winning condition as a particularly simple instantiation of a Rabin game as shown in \REFsec{subsec:GR1_reduction}.

\subsection{Fair Adversarial Generalized Rabin Games}\label{sec:GenRabin}

   \noindent\textbf{Generalized Rabin Conditions:}
    A \emph{generalized Rabin condition} is defined by a set \linebreak $\FgR = \set{\tuple{\Gc_1,R_1},\hdots,\tuple{\Gc_k,R_k}}$ where each $\Gc_j=\set{\ps{1}G_j,\hdots, \ps{m_j}G_j}$ is a finite set s.t.\ $\ps{l}G_j\subseteq V$ for all $j\in[1;k]$ and all $l\in[1;m_j]$. We say that $\FgR$ has global index set $P=[1;k]$. A play $\pi$ satisfies the \emph{generalized Rabin condition} $\FgR$ if $\pi$ satisfies the LTL formula
\begin{align}\label{equ:vgR}
\varphi:=\textstyle\bigvee_{j\in P}\left(\Diamond\Box\overline{R}_{j}\wedge \bigwedge_{l\in[1;m_j]}\Box\Diamond \ps{l}G_{j}\right).
\end{align}

Recalling the discussion of \REFsec{subsec:new_algorithm}, we know that the proof of \REFthm{thm:FPsoundcomplete} fundamentally relies on the correctness of our transformation for safe Büchi (\REFthm{thm:SingleRabin}) and safe reachability (\REFthm{thm:Reachability}) games. 
Similarly, one needs to prove correctness of our syntactic transformation for safe \emph{generalized}  Büchi games in the case of generalized Rabin games.

\smallskip
\noindent\textbf{Safe Generalized Büchi Games}
A \emph{safe generalized Büchi condition} is defined by a tuple $\tuple{\Fc,Q}$ where $Q\subseteq V$ is a set of safe states and $\mathcal{F}=\set{\ps{1}F,\hdots,\ps{s}F}$ is a set of goal sets. A play $\pi$ satisfies the \emph{safe generalized Büchi condition} $\tuple{\Fc,Q}$ if $\pi$ satisfies the LTL formula
\begin{align}\label{equ:sGB:psi}
\varphi:=\textstyle\Box Q \wedge \bigwedge_{l\in[1;s]}\Box\Diamond \ps{l}F.
\end{align}
Now we can apply our syntactic transformation to the usual fixpoint algorithm for solving safe generalized Büchi games 
and prove its correctness for all fair adversarial plays. This is formalized in the next theorem and proved in \REFapp{app:proof:sGB}.

\begin{theorem}\label{thm:sGB}
Let $\Gl =\tup{\game, \El}$ be a game graph with live edges, and $\tuple{\Fc,Q}$ with $\mathcal{F}=\set{\ps{1}F,\hdots,\ps{s}F}$ be a safe generalized Büchi winning condition.
Further, let 
\begin{align}\label{equ:sGB:FP}
  \textstyle Z^*\coloneqq&\nu Y.\bigcap_{b\in [1;s]} \mu \ps{b}X.~
  Q\cap 
  \left[
(\ps{b}F\cap  \cpre(Y))\cup \apre(Y,\ps{b}X)
\right].
\end{align}
Then $Z^*$ is equivalent to the winning region $\WlR$ of $\p{0}$ in the fair adversarial safe generalized B\"uchi game over $\Gl$ for the winning condition $\tuple{\Fc,Q}$. 
Moreover, the fixpoint algorithm runs in $O(sn^2)$ symbolic steps, and a finite-memory winning strategy for $\p{0}$ can be extracted from it.
\end{theorem}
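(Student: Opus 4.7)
The plan is to mirror the proof of \REFthm{thm:SingleRabin}, extending it from a single Büchi goal to $s$ goals via a round-robin, memory-$s$ strategy for \p{0}. The key structural observation is that, for any fixed $Y$, each inner least fixpoint in \eqref{equ:sGB:FP} has the form
\[
\mu \ps{b}X.\, \bigl(Q \cap \ps{b}F \cap \cpre(Y)\bigr) \cup \bigl(Q \cap \apre(Y, \ps{b}X)\bigr),
\]
which coincides with the inner $\mu$-computation of the fair-adversarial reach-avoid fixpoint from \REFthm{thm:Reachability} with target $T_b := Q \cap \ps{b}F \cap \cpre(Y)$ and safe set $Q$, the role of the outer $\nu$ being played here by the shared parameter $Y$. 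Letting $Y^* = Z^*$ denote the outer fixpoint value, the defining equation $Y^* = \bigcap_b (\text{inner }\mu)$ together with \REFthm{thm:Reachability} yields that from every vertex in $Y^*$, and for every $b \in [1;s]$, \p{0} has a memoryless fair-adversarial strategy $\sigma_b$ that reaches $T_b$ in finitely many steps along every $\vl$-compliant play, while keeping the play inside $Q \cap Y^*$.

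For soundness, I would compose the $\sigma_b$ into a finite-memory strategy $\sigma$ with $s$ modes. In mode $b$, \p{0} plays $\sigma_b$; upon the first visit to $T_b$, the mode advances to $(b \bmod s) + 1$. Since $T_b \subseteq \cpre(Y^*)$, the immediate successor lies in $Y^*$ and $\sigma_{(b \bmod s)+1}$ is again applicable. Induction over the modes shows that every play compliant with $\sigma$ and satisfying $\vl$ never leaves $Q$ and visits each $\ps{b}F$ infinitely often, hence satisfies \eqref{equ:sGB:psi}.

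For completeness, consider any $v \notin Y^*$. By the greatest-fixpoint computation, $v$ is eliminated at some smallest iteration $m$ of the outer $\nu Y$, meaning that for some index $b_v \in [1;s]$,
\[
v \notin \mu \ps{b_v}X.\, \bigl(Q \cap \ps{b_v}F \cap \cpre(Y^{m-1})\bigr) \cup \bigl(Q \cap \apre(Y^{m-1}, \ps{b_v}X)\bigr).
\]
Dualising the reach-avoid argument underlying \REFthm{thm:Reachability}, \p{1} then has a fair-adversarial sub-strategy $\tau_v$ that, from $v$, either forces the play out of $Q$ or keeps the play away from $T_{b_v}$ forever while eventually escaping $Y^{m-1}$. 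Stitching these sub-strategies together---by letting \p{1} at each visit of a vertex $u$ play the sub-strategy associated with the largest iteration $m_u$ at which $u$ was eliminated---yields a single \p{1} strategy that defeats every \p{0} strategy from $v$: either safety is violated or some $\ps{b}F$ is visited only finitely often. The complexity bound $O(sn^2)$ follows from standard counts: the outer $\nu Y$ has at most $n$ iterations, each of the $s$ inner least fixpoints has at most $n$ iterations, and the symbolic operations per iteration are constantly many.

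The main obstacle is the completeness step, in particular the stitching. Unlike the single-Büchi case, the culprit index $b_v$ depends on the vertex, so \p{1} must synthesise a strategy that uniformly defeats every round-robin schedule \p{0} could attempt. I expect this to be handled by a careful trap analysis that tracks, for each state outside $Y^*$, the minimal pair $(m_u, b_u)$ at which it was eliminated and composes the corresponding reach-avoid counter-strategies. This parallels the classical completeness proof for generalized Büchi games, with \REFthm{thm:Reachability} replacing the usual controlled-reachability lemma and the $\apre$-operator replacing the usual $\cpre$-operator in the argument.
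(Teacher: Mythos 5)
Your soundness argument follows the paper's route: a round-robin composition, with $s$ memory modes, of the reach-avoid strategies with targets $T_b \coloneqq Q\cap\ps{b}F\cap\cpre(Y^*)$. But the step you dismiss with ``the role of the outer $\nu$ being played here by the shared parameter $Y$'' is precisely what the paper identifies as the crux of the proof: the inner $\mu\,\ps{b}X$ is evaluated with the first argument of $\apre$ frozen at the \emph{shared} variable $Y$, which is the intersection over all $b$ of the inner fixpoints, so it is not literally an instance of the fixpoint of \REFthm{thm:Reachability}. The paper resolves this by showing (\REFprop{prop:ZeqZt}) that \eqref{equ:sGB:FP} is equivalent to an extended fixpoint with a separate $\nu\,\ps{b}\Yt$ per conjunct, the key ingredient being that upon termination $Y^*=\ps{b}X^*$ for \emph{every} $b$; your defining equation only gives the inclusion $Y^*\subseteq\ps{b}X^*$, and the reverse inclusion requires an argument (it uses $\apre(Y^*,\ps{b}X^*)=\cpre(\ps{b}X^*)$ via \REFlem{lem:Ared} and an unrolling of the $\cpre$ chain back into $Y^*$). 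Without this equality, or an equivalent reduction, the invocation of \REFthm{thm:Reachability} and \REFthm{thm:SingleRabin} on each conjunct is not justified.

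The larger problem is completeness. You propose to dualize the reach-avoid argument and stitch together \p{1} counter-strategies indexed by culprit pairs $(m_u,b_u)$, and you yourself flag the stitching as ``the main obstacle'' without resolving it. As written this is a genuine gap: \p{1} must defeat \emph{every} \p{0} strategy, not only round-robin ones, and selecting the sub-strategy by the elimination iteration of the current vertex does not obviously prevent \p{0} from cycling among vertices whose culprit indices differ in such a way that each $\ps{b}F$ is still visited infinitely often while the play stays in $Q$. The paper's completeness argument sidesteps all of this: once one knows that $Z^*=\ps{b}\mathcal{W}$ for every $b$, where $\ps{b}\mathcal{W}$ is the winning region of the single fair adversarial safe B\"uchi game for $\tup{\ps{b}F,Q}$ (by \REFthm{thm:SingleRabin}, whose completeness is already established), then the winning region $\mathcal{W}$ for the conjunction satisfies $\mathcal{W}\subseteq\bigcap_{b}\ps{b}\mathcal{W}=Z^*$ simply because winning the conjunction implies winning each conjunct. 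No \p{1} strategy construction is needed. I recommend establishing the equality $Y^*=\ps{b}X^*$ first and then replacing your completeness argument with this one-line inclusion.
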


Intuitively, the proof of \REFthm{thm:sGB} reduces to \REFthm{thm:SingleRabin} in a similar manner as the proof of \REFthm{thm:SingleRabin} reduces to \REFthm{thm:Reachability}. However, the challenge in proving \REFthm{thm:sGB} is to show that it is indeed sound to use the fixpoint variable $Y$ which is actually the intersection of fixpoint variables $X$ both within $\cpre$ and $\apre$. The proof of this correctness essentially requires to show that upon termination we have $Y^*=\ps{b}X^*$ for all $b\in [1;s]$ (see \REFapp{app:proof:sGB} for a formal proof).

\smallskip
   \noindent\textbf{The Symbolic Algorithm:}
By knowing that \eqref{equ:sGB:FP} allows to correctly solve safe generalized Büchi games, we can immediately generalize this observation to Rabin games. This is formalized in the following theorem which is an immediate consequence of \REFthm{thm:FPsoundcomplete} and \REFthm{thm:sGB}.
   
\begin{theorem}\label{thm:GenRabin}
Let $\Gl =\tup{\game, \El}$ be a game graph with live edges and 
$\FgR$ be a generalized Rabin condition over $\game$ with index set $P=[1;k]$. 
Further, let
 \begin{subequations}\allowdisplaybreaks
 \label{equ:GenRabin_all}
   \begin{align}
    &Z^*:=\nu Y_{0}.~\mu X_{0}.~
         \bigcup_{p_1\in P}  \nu Y_{p_1}.~\bigcap_{l_1\in[1;m_{p_1}]}\mu \ps{l_1}X_{p_1}.~\hdots\hdots
\bigcup_{p_k\in P\setminus\set{p_1,\hdots, p_{k-1}}}\hspace{-0.3cm}\nu Y_{p_k}.~\bigcap_{l_k\in[1;m_{p_k}]}\hspace{-0.3cm}\mu \ps{l_k}X_{p_k}.~
 \bigcup_{j=0}^k \ps{l_j}\mathcal{C}_{p_j},\label{equ:GenRabin_all_C}\\
&\quad\text{where}\quad\quad
 \ps{l_j}\mathcal{C}_{p_j}  := 
\left(\textstyle\bigcap_{i=0}^{j} \overline{R}_{p_i}\right)\cap
 \left[ 
 \left( \ps{l_j}G_{p_j}\cap \cpre(Y_{p_j})\right)
 \cup \apre(Y_{p_j},\ps{l_j}X_{p_j})
 \right]\label{equ:GenRabin_all_Cpj}
\end{align}
 \end{subequations}
with\footnote{Again, the generalized Rabin pair $\tuple{\Gc_{p_0},R_{p_0}}$ in \eqref{equ:Rabin_all} is artificially introduced and not part of $\FgR$.} $p_0=0$, $G_{p_0}\coloneqq\set{\emptyset}$ and $R_{p_0}\coloneqq\emptyset$. 
Then $Z^*$ is equivalent to the winning region $\WlR$ of $\p{0}$ in the fair adversarial generalized Rabin game over $\Gl$ for the winning condition $\FgR$. Moreover, 
the fixpoint algorithm runs in $O(n^{k+2}k! m_{1}\ldots m_k)$ symbolic steps, and yields a finite-memory winning strategy for $\p{0}$.
\end{theorem}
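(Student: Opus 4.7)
The plan is to follow the inductive structure used in the proof of \REFthm{thm:FPsoundcomplete}, but at each level where that proof invokes the safe B\"uchi result \REFthm{thm:SingleRabin}, I would instead invoke the safe generalized B\"uchi result \REFthm{thm:sGB}. The reduction from \REFthm{thm:FPsoundcomplete} to \REFthm{thm:SingleRabin} works permutation-by-permutation on the index set $P$: for any fixed ordering $p_1,\hdots,p_k$ in \eqref{equ:Rabin_all_C}, the innermost block at nesting depth $j$ computes the winning region of a fair adversarial safe B\"uchi game with safety set $\bigcap_{i=0}^{j}\overline{R}_{p_i}$ and goal $G_{p_j}$, restricted to the states from which visits can be combined with the solutions at lower nesting depths. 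In the generalized setting, the corresponding inner block $\nu Y_{p_j}.~\bigcap_{l_j\in[1;m_{p_j}]}\mu \ps{l_j}X_{p_j}.~(\hdots)$ has exactly the shape of the fixpoint \eqref{equ:sGB:FP} from \REFthm{thm:sGB}, now applied with safety set $\bigcap_{i=0}^{j}\overline{R}_{p_i}$ and goal family $\Gc_{p_j}=\set{\ps{1}G_{p_j},\hdots,\ps{m_{p_j}}G_{p_j}}$.

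First, I would prove the analogue of the inductive lemma in the Piterman--Pnueli style proof: for any permutation $p_1,\hdots,p_k$, the nested fixpoint characterizes states from which $\p{0}$ can win the fair adversarial game for the disjunctive winning condition
\begin{equation*}
\textstyle\bigvee_{j=1}^{k}\Bigl(\Diamond\Box\bigcap_{i=1}^{j}\overline{R}_{p_i}\wedge\bigwedge_{l\in[1;m_{p_j}]}\Box\Diamond\ps{l}G_{p_j}\Bigr).
\end{equation*}
The soundness direction is handled inductively: once the fixpoint over $Y_{p_j}$ stabilizes, \REFthm{thm:sGB} gives a finite-memory strategy that, while staying in $\bigcap_{i=0}^{j}\overline{R}_{p_i}$, either satisfies all Büchi goals $\ps{l}G_{p_j}$, or at some point steps out into the already-won region contributed by the outer $\bigcup_{j'<j}\ps{l_{j'}}\mathcal{C}_{p_{j'}}$ at a smaller nesting depth. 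The completeness direction mirrors the Piterman--Pnueli argument: any state not in $Z^*$ admits, for every permutation $p_1,\hdots,p_k$, a $\p{1}$ strategy that either violates the safety condition of every pair or falsifies one of the Büchi conjuncts of each pair, where the $\p{1}$ strategy against the $\apre{}$-terms is borrowed from the completeness proof of \REFthm{thm:Reachability}.

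Second, I would assemble the strategy for $\p{0}$. Unlike the plain Rabin case, the witness strategies produced by \REFthm{thm:sGB} are finite-memory (cycling through the $s$ goal sets), so the global strategy is obtained by composing a memoryless choice of Rabin pair $p_1$ (the outermost $\bigcup$) with the finite-memory generalized B\"uchi strategy at the selected nesting level, falling back to the strategies at deeper levels if the current safety set is violated. Accounting shows the memory is bounded by $m_1\cdots m_k$ cyclic counters, matching the stated $O(n^{k+2}k!\,m_1\cdots m_k)$ symbolic complexity once the accelerated fixpoint evaluation of \REFapp{app:acceleration} is applied: the $2(k+1)$ outer alternation depth is unchanged from \REFthm{thm:FPsoundcomplete}, and each innermost block now takes $s_{p_j}=m_{p_j}$ iterations rather than a single one, giving the extra product factor.

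The main obstacle will be handling the interaction between the conjunctive inner fixpoint $\bigcap_{l_j}\mu\ps{l_j}X_{p_j}$ and the outer union $\bigcup_{p_j\in P\setminus\set{p_1,\hdots,p_{j-1}}}$: I must argue that sharing the single $\nu Y_{p_j}$ across all $\ps{l_j}X_{p_j}$ is exactly what enables soundness, because $Y_{p_j}^*=\ps{l_j}X_{p_j}^*$ for every $l_j$ at the fixpoint (as used in \REFthm{thm:sGB}), and then transferring this equality through the outer alternations without breaking the Piterman--Pnueli rank argument. Provided this step goes through as in the proof of \REFthm{thm:sGB}, the remainder is a structural generalization of the proof of \REFthm{thm:FPsoundcomplete} deferred to \REFapp{appD}.
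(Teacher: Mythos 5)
Your proposal is correct and follows essentially the same route as the paper: the paper's proof likewise reruns the inductive Piterman--Pnueli-style argument for \REFthm{thm:FPsoundcomplete}, substituting \REFthm{thm:sGB} for \REFthm{thm:SingleRabin} at each nesting level, and obtains the finite-memory strategy by stacking the goal-cycling strategies from the safe generalized B\"uchi case. The obstacle you flag (sharing one $\nu Y_{p_j}$ across the conjunction of $\mu\,\ps{l_j}X_{p_j}$ variables) is exactly the point the paper resolves inside the proof of \REFthm{thm:sGB} via $Y^*=\ps{b}X^*$, so your plan discharges it the same way.
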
	

The proof of \REFthm{thm:GenRabin} is almost identical to the proof of \REFthm{thm:FPsoundcomplete} in \REFapp{appD}, when using \REFthm{thm:sGB} instead of \REFthm{thm:SingleRabin} in all appropriate places. This, yields a finite memory winning strategy by suitably \enquote{stacking} the individual finite-memory strategies constructed in the proof of \REFthm{thm:sGB}. (See \REFapp{app:GenRabinproof} for a complete proof of \REFthm{thm:GenRabin}.)

\subsection{Fair Adversarial Muller Games}
A Muller winning condition \cite{gradel2002automata} is defined by a set $\mathcal{F}=\{F_1,F_2,\hdots F_k\}$ and a play $\pi$ satisfies the Muller condition $\mathcal{F}$ if 
the set of vertices appearing infinitely often along $\pi$ is exactly $F_i$ for some $i\in \set{1,\ldots, k}$.
Equivalently, a play is winning if it satisfies
\begin{equation}
 \varphi\coloneqq \bigvee_{i\in[1;k]} \left(\Diamond\Box F_i~\wedge~\bigwedge_{q\in F_i} \Box\Diamond v  \right ).
\end{equation}
It is easy to see that a Muller winning condition can be written as the generalized Rabin winning condition 
$\FgR = \set{\tuple{\Gc_1,R_1},\hdots,\tuple{\Gc_k,R_k}}$ where $\Gc_i\coloneqq\SetComp{\set{v}}{v\in F_i}$ and $R_i\coloneqq\overline{F_i}$ for $i\in\set{1,\ldots, k}$. 
It therefore follows that fair adversarial Muller games can be solved via the fixpoint algorithm in \eqref{equ:GenRabin_all}.

\subsection{Fair Adversarial GR(1) Games}
\label{subsec:GR1_reduction}
Within this section, we show how fair adversarial Rabin games can be reduced to fair adversarial games with GR(1) winning conditions. 

\smallskip
   \noindent\textbf{GR(1) winning condition:}
A GR(1) winning condition is defined by two sets $\mathcal{A}=\Set{A_1,\hdots ,A_r}$ and  $\mathcal{F}=\Set{F_1,\hdots, F_s}$, where for every $i\in [1;r]$ and $j\in [1;s]$, $A_i,F_j\subseteq V$.
   A play $\pi$ satisfies the GR(1) condition $(\mathcal{A},\mathcal{F})$ if it satisfies the LTL formula
   \begin{align}\label{equ:vGRo}
    \vGRo\coloneqq &\textstyle\left(\bigwedge_{a\in[1;r]}\Box\Diamond A_{a}\right)
\rightarrow 
\left(\bigwedge_{b\in[1;s]}\Box\Diamond F_{b}\right)
=\textstyle\left(\bigvee_{a\in[1;r]}\Diamond\Box \overline{A}_{a}\right)
\vee    
\left(\bigwedge_{b\in[1;s]}\Box\Diamond F_{b}\right).
   \end{align}
By comparing $\vGRo$ in \eqref{equ:vGRo} with $\vR$ in \eqref{equ:vgR}, we see that a GR(1) condition $(\mathcal{A},\mathcal{F})$ can be transformed into a generalized Rabin condition $\FgR$ with $k=r+1$ pairs, such that
\begin{subequations}\label{equ:GR1toRabin}
 \begin{align}
\AllQ{j\in[1;r]}{&R_j\coloneqq A_{j}\quad\text{and}\quad \Gc_j\coloneqq \set{V}},\quad\text{and}\\
 &R_{k}\coloneqq \emptyset\quad\text{and}\quad  \Gc_{k}\coloneqq \mathcal{F}.\label{equ:GR1toRabin2}
\end{align}
\end{subequations}

\smallskip
   \noindent\textbf{Fixpoint Algorithm:}
We first observe that the first $r$ Rabin pairs with trivial goal sets actually correspond to a generalized co-Büchi condition (compare \eqref{equ:Rtilde}) which can be solved by the fixpoint in \REFthm{thm:GCB} (see \REFsec{sec:SimpleRabinGames}). Intuitively, the fixpoint in \REFthm{thm:GCB} only needs to consider single indices from $P=[1;r]$ rather then full permutation sequences as in \REFthm{thm:FPsoundcomplete}. By 
adding the last tuple $\tuple{\Gc_{k},R_{k}}$ to the winning condition, we essentially need to consider two indices in each conjunct of \eqref{equ:RabinC_all}, i.e., $p_j$ (with $j\in[1;r]$) and $p_k$. In principle, we would need to consider both possible orderings of these two indices (compare \eqref{equ:GenRabin_all}). However, by inspecting \eqref{equ:GR1toRabin} we see that the sets corresponding to these indices always fulfill a (generalized) chain condition (compare \eqref{equ:RCprop_1}). That is, we have $R_j\supseteq R_k$ and $V=\ps{1}G_j\supseteq \ps{b}F$ for any 
$j\in[1;r]$ and $b\in[1;s]$. Hence, we only need to consider the permutation sequence $p_kp_j$ (compare \eqref{equ:RabinC_all}). 
Using this insight, along with some additional simplifications, we indeed yield the fixpoint that we would obtain 
by simply applying our transformation to the well-known GR(1) fixpoint (compare e.g.,~\cite{piterman2006synthesis}). 
This observation is formalized in the next theorem and proved in \REFapp{app:GRoneproof}.

\begin{theorem}\label{thm:GR1}
Let $\Gl =\tup{\game, \El}$ be a game graph with live edges and $(\mathcal{A},\Fc)$ a GR(1) winning condition.
Further, let 
\begin{align}\label{equ:GR1:FP}
Z^*=&\nu Y_k.~\bigcap_{b\in[1;s]}\mu \ps{b}X_k.~\bigcup_{a\in [1;r]} \nu Y_{a}.
\quad(F_b\cap\cpre(Y_k))\cup \apre(Y_k,\ps{b}X_k)\cup (\overline{A}_a\cap \cpre(Y_a)).
\end{align}
Then $Z^*$ is equivalent to the winning region $\WlR$ of $\p{0}$ in the fair adversarial GR(1) game over~$\Gl$ for the winning condition $(\mathcal{A},\Fc)$. 
Moreover, the fixpoint algorithm runs in $O(n^2r s)$ symbolic steps, and a finite-memory winning strategy for $\p{0}$ can be extracted from it.
\end{theorem}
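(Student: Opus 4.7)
The plan is to derive \eqref{equ:GR1:FP} by specializing the generalized Rabin fixpoint \eqref{equ:GenRabin_all} of \REFthm{thm:GenRabin} to the particular Rabin translation \eqref{equ:GR1toRabin} of the GR(1) condition, and then collapsing the nested structure using the trivial goal sets and the chain ordering. First I will instantiate \REFthm{thm:GenRabin} with $k=r+1$ pairs where pair $p=k$ carries the real goals ($R_k=\emptyset$, $\Gc_k=\Fc=\{F_1,\dots,F_s\}$) and pairs $p\in[1;r]$ carry the ``assumption'' part ($R_p=A_p$, $\Gc_p=\{V\}$, hence $m_p=1$ and $\ps{1}G_p=V$). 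This immediately gives that $\WlR$ equals a deeply nested $\mu$-calculus expression over all permutations of $P=[1;k]$.

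Next I would perform two syntactic simplifications. \emph{Trivial-goal collapse:} for any index $p_j\in[1;r]$ the single $\ps{1}\mathcal{C}_{p_j}$ in \eqref{equ:GenRabin_all_Cpj} reduces, because $\ps{1}G_{p_j}=V$ and, during the $\mu \ps{1}X_{p_j}$ iteration, $\ps{1}X_{p_j}\subseteq Y_{p_j}$ implies $\apre(Y_{p_j},\ps{1}X_{p_j})\subseteq\cpre(Y_{p_j})$ by the monotonicity facts stated after \eqref{equ:ell_pre}. Hence the $\mu$-variable disappears from the body and $\ps{1}\mathcal{C}_{p_j}$ simplifies to $\bigl(\bigcap_{i=0}^{j}\overline{R}_{p_i}\bigr)\cap\cpre(Y_{p_j})$, which (after dropping the $R_{p_0}=\emptyset$ and, when $p_k$ precedes $p_j$ in the permutation, the $R_{p_k}=\emptyset$) becomes $\overline{A}_{p_j}\cap\cpre(Y_{p_j})$ modulo the conjunction with the $\overline{R}$-masks coming from earlier pairs. \emph{Chain collapse:} pair $k$ is a ``smallest'' pair in the Rabin-chain sense relative to each $p\in[1;r]$ (since $R_k=\emptyset\subseteq A_p$ and every $F_b\subseteq V$), so I would argue as in the proof of \REFthm{thm:RabinC_all} that only permutation sequences with $p_1=k$ contribute to the union; any sequence placing some $p\in[1;r]$ before $k$ yields a subset of the sequence obtained by moving $k$ to position~$1$. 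This is the step I expect to be the main obstacle: formally justifying the chain-collapse requires the same kind of per-step domination argument used by \citet{PitermanPnueli_RabinStreett_2006} and lifted in \REFapp{app:RabinChain}, adapted to the presence of $\apre$ in the innermost pair.

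After these two reductions only the permutations of the shape $p_1=k$, $p_2\in[1;r]$, $p_3,\dots,p_k\in[1;r]\setminus\{p_2\}$ remain, and by the trivial-goal collapse the variables $Y_{p_3},\dots,Y_{p_k}$ all appear only through terms of the form $\overline{A}_{p_j}\cap\cpre(Y_{p_j})$ which are mutually monotone and absorbed into a single $\bigcup_{a\in[1;r]}\nu Y_a.\,(\overline{A}_a\cap\cpre(Y_a))$ disjunct exactly as in the reduction that yields \eqref{equ:GCB} from the generalized co-B\"uchi specialization of \REFthm{thm:GCB}. The outer $\nu Y_0.\mu X_0$ in \eqref{equ:GenRabin_all_C} disappears because $G_{p_0}=\{\emptyset\}$ makes $\ps{l_0}\mathcal{C}_{p_0}$ contribute nothing, while the body for $p_1=k$ gives the three disjuncts $(F_b\cap\cpre(Y_k))\cup\apre(Y_k,\ps{b}X_k)$ (from $\ps{b}\mathcal{C}_{k}$) and $\overline{A}_a\cap\cpre(Y_a)$ (from the collapsed tail) of \eqref{equ:GR1:FP}; the intersection over $b\in[1;s]$ comes directly from $\bigcap_{l_1\in[1;m_{p_1}]}\mu\ps{l_1}X_{p_1}$ in \eqref{equ:GenRabin_all_C}.

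Finally, the symbolic complexity $O(n^2 rs)$ follows from the alternation depth two structure (one outer $\nu$, one $\mu$, one inner $\nu$) with $r$ disjuncts and $s$ intersected inner $\mu$-fixpoints, using the accelerated computation scheme of \citet{long1994improved} as in \REFsec{sec:Rabin:Complexity}. A finite-memory winning strategy for $\p{0}$ is obtained by stacking the strategies produced by the proof of \REFthm{thm:sGB} for the innermost safe generalized B\"uchi subgame and then composing them with the memoryless reachability-style strategy for the $\overline{A}_a$ disjuncts, exactly as is done in the proof of \REFthm{thm:GenRabin} in \REFapp{app:GenRabinproof}.
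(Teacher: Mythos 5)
Your overall route is the paper's own: instantiate \REFthm{thm:GenRabin} with the translation \eqref{equ:GR1toRabin}, collapse the trivial-goal assumption pairs as in the generalized co-B\"uchi case, use the chain relation between the assumption pairs and pair $k$ to restrict to permutations with $k$ in front, and finally discard the artificial pair $p_0$. However, your justification for that last step is wrong as stated. Setting $\Gc_{p_0}=\set{\emptyset}$ only kills the $\cpre$-disjunct of $\ps{l_0}\mathcal{C}_{p_0}$; what remains is $\apre(Y_{p_0},\ps{l_0}X_{p_0})$, which is not empty and does contribute. The paper removes this term, together with the quantifiers $\nu Y_0.\mu X_0$, by observing that $\apre(Y,X)\subseteq (F_b\cap\cpre(Y))\cup\apre(Y,X)$ pointwise, so the outer artificial pair is absorbed into the pair-$k$ term via \REFlem{lem:contain} --- the same device used to eliminate $Y_0,X_0$ in the parity reduction of \REFapp{app:Parity}. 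Without this absorption argument your derivation of \eqref{equ:GR1:FP} does not close.

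The step you flag as the main obstacle --- showing that only permutations starting with $k$ survive --- is exactly what \REFapp{app:GRoneproof} supplies, and it is worth noting that the paper does it in two stages because the GR(1) instance is \emph{not} literally a Rabin chain (the sets $A_i$ are not nested among themselves). First, the generalized co-B\"uchi orderedness of the assumption pairs ($G_i=V$ for $i\in[1;r]$) collapses every maximal run of indices from $[1;r]$ to its first element, reducing an arbitrary permutation $0\gamma' k\gamma''$ to $0q_1kq_2$; only then do the relations $R_{q_1}=A_{q_1}\supseteq\emptyset=R_k$ and $G_{q_1}=V\supseteq F_b$ permit a single further application of the chain-collapse machinery of \REFlem{lem:RC_a} to the two-element subsequence $q_1k$, leaving sequences of the form $0ka$. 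Your one-shot ``move $k$ to position $1$'' claim needs to be decomposed in this way (and the domination lemmas checked in the presence of the $\apre$ operator inside the pair-$k$ term); citing \REFapp{app:RabinChain} is the right instinct, but it does not apply verbatim.
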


In particular, the strategy extraction is performed in the same way as by \citet{piterman2006synthesis} for a \enquote{normal} GR(1) game.

\begin{remark}
\citet{Belta17} presented a symbolic fixpoint algorithm for stochastic games 
(which can be modeled using fair adversarial games, see \REFsec{sec:stochastic}) with respect to GR(1) winning conditions.
While one can show that the output of their algorithm coincides with the output of our newly derived fixpoint algorithm in \eqref{equ:GR1:FP}, 
their algorithm is structurally more involved.
On a conceptual level, we feel our insight about simply \enquote{swapping} predecessor operators in the right manner is insightful even if one can also use their algorithm to find a solution to this problem.  
\end{remark}

\smallskip
   \noindent\textbf{Fair Adversarial vs. Environmentally-Friendly GR(1) Games:}
The idea of the simple \enquote{predecessor operator swapping trick} shares resemblance with environmentally-friendly GR(1) synthesis, proposed by \citet{MPS19}. 
There, the authors show a direct symbolic algorithm to compute $\p{0}$ strategies which do not win a given GR(1) game vacuously, by rendering the assumptions false. More precisely, given a synthesis game for the specification $\varphi\coloneqq (\varphi_A\rightarrow \varphi_G)$ with $\varphi_A$ and $\varphi_G$ being LTL formulas modeling respectively environment assumptions and system guarantees, $\p{0}$ can win by violating $\varphi_A$ and thereby satisfying~$\varphi$ vacuously. 
Environmentally-friendly synthesis rules out such undesired strategies by only computing so called non-conflicting winning strategies. 
Interestingly, the fixpoint algorithm introduced by \citet{MPS19} also swaps $\cpre{}$ and $\apre{}$ operators, but in a slightly different way.

The GR(1) fragment considered by \citet{MPS19} corresponds to a specification $\varphi_A\rightarrow \varphi_G$ where both $\varphi_A$ and $\varphi_G$ can be realized by a deterministic generalized B\"uchi automaton. 
Hence, they provide an algorithm to compute non-conflicting winning strategies in a deterministic generalized B\"uchi game under deterministic generalized B\"uchi assumptions. If the used deterministic B\"uchi assumptions can be translated into live edges over the same game graph, the resulting fair adversarial game is a generalized B\"uchi game (not a GR(1) game), solvable by the fixpoint in \eqref{equ:sGB:FP} for $Q=V$.
 
By reducing a GR(1) game to a fair adversarial game, one transforms the given assumption into one expressed by fair edges which cannot be falsified by $\p{0}$ and therefore yields a simpler algorithm to compute non-conflicting strategies. However, the direct relationship between deterministic generalized B\"uchi assumptions and live-edge assumptions is not known, i.e., we do not know if all environmentally-friendly GR(1) games can be reduced to fair adversarial generalized B\"uchi games.

Finally, we want to point out that fair adversarial GR(1) games compute winning strategies that are only non-conflicting with respect to  the environment assumptions encoded in the live edges. $\p{0}$ can still win a fair adversarial GR(1) game vacuously by falsifying $\varphi_A$, i.e., never visiting any set $A_i$ in $\mathcal{A}$ (see \eqref{equ:vGRo}) infinitely often. 

%

\section{Stochastic Generalized Rabin Games}
\label{sec:stochastic}

We present an important application of our fixpoint algorithm in solving stochastic two-player games, commonly known as $\twohalf$-player games.
$\twohalf$-player games form an important subclass of stochastic games, 
and have been studied quite extensively in the literature \cite{Condon92,DBLP:conf/icalp/ChatterjeeAH05,Zielonka04}.
They can be seen as a generalization of two-player games by additionally capturing the environmental randomness inside the game.
In order to do so, in addition to $\p{0}$ and $\p{1}$ vertices as in a two-player game, they include a new set of vertices called the \emph{random vertices}.
Whenever the game reaches a random vertex, one of the outgoing edges is picked uniformly at random.
$\p{0}$ is said to win a $\twohalf$-player game almost surely if she wins the game with probability~$1$; the respective $\p{0}$ strategy is called an almost sure winning strategy.
We only consider 
stochastic games with a uniform probability distribution over edges which originate from a random vertex. 
This is indeed without loss of generality since it is known that stochastic games with other probability distributions 
over random edges have exactly the same almost sure winning sets as $\twohalf$-player games \cite{DBLP:conf/icalp/ChatterjeeAH05}.

We present a reduction from the computation of almost sure winning strategies in $\twohalf$-player generalized Rabin games to the computation of winning strategies in fair adversarial generalized Rabin games.
This yields a direct symbolic algorithm for solving $\twohalf$-player generalized Rabin games.

\subsection{Preliminaries: \texorpdfstring{$\twohalf$}{2.5}-player games}\label{sec:twoandhalf prelims}
We introduce the basic setup of the $\twohalf$-player games.

\smallskip\noindent\textbf{The game graph:}
We consider usual $\twohalf$-player games played between $\p{0}$, $\p{1}$, and a third player representing \emph{environmental randomness}.
Formally, a $\twohalf$-player game graph is a tuple $\game = \tup{V, V_0,V_1,V_r,E}$ where
\begin{inparaenum}[(i)]
	\item $V$ is a finite set of vertices,
	\item $V_0$, $V_1$, and $V_r$ are subsets of~$V$ which form a partition of $V$, and
	\item $E\subseteq V\times V $ is the set of directed edges.
\end{inparaenum} 
The vertices in~$V_r$ are called \emph{random vertices}, and the edges originating in a random vertex are called \emph{random edges}.
The set of all random edges is denoted by $E_r\coloneqq E(V_r)$.

\smallskip\noindent
\textbf{Strategies and plays:}
We define strategies for $\p{0}$ and $\p{1}$ in exactly the same way as the strategies in two-player games.
While in principle, we could consider randomized strategies, it is known that optimal strategies for $\omega$-regular winning
conditions are pure \cite{DBLP:conf/icalp/ChatterjeeAH05}.
The new part is when the $\twohalf$-player game reaches a random vertex, the game chooses one of the random edges uniformly at random.
A play is, as usual, an infinite sequence of vertices $(v^0,v^1,\ldots)$ that satisfies the edge relation between two consecutive vertices in the sequence.
Due to the presence of random edges, given an initial vertex $v^0\in V$ and given a pair of strategies $\rho_0$ and~$\rho_1$ of $\p{0} $ and $\p{1}$ respectively, we will obtain a \emph{probability distribution over the set of plays}. 
We denote the set of strategies of $\p{0}$ and $\p{1}$ by $\Pi_0$ and $\Pi_1$, respectively.

\smallskip\noindent
\textbf{Almost sure winning:}
Let $\varphi$ be any $\omega$-regular specification over $V$.
Let us denote the event that the runs of a $\twohalf$-player game graph $\game$ satisfies $\varphi$ using the symbol $\game\models \varphi$
For a given initial vertex $v^0\in V$ and for a given pair of strategies $\rho_0$ and $\rho_1$ of $\p{0}$ and $\p{1}$, we denote the probability of the occurrence of the event $\game\models \varphi$ by $P_{v^0}^{\rho_0,\rho_1}(\game\models \varphi)$.
We define the set of almost sure winning states of $\p{0}$ for the specification $\varphi$ as the set of vertices $\mathcal{W}^{\mathit{a.s.}}\subseteq V$ such that for every $v\in \mathcal{W}^{\mathit{a.s.}}$,
\begin{equation}
	\textstyle\sup_{\rho_0\in \Pi_0}\inf_{\rho_1\in \Pi_1} P_v^{\rho_0,\rho_1}(\game\models \varphi) = 1.
\end{equation}

\subsection{The reduction}\label{sec:twoandhalf reduction}
Suppose $\game$ is a $\twohalf$-player game graph and $\FgR$ is a generalized Rabin winning condition.
To obtain the reduced two-player game graph, we simply reinterpret the random vertices as $\p{1}$ vertices and the random edges as live edges.
Let us first formalize this notion of the reduced game graph.

\begin{definition}[Reduction to two-player game with live edges]
Let $\game=\tup{V, V_0,V_1,V_r,E}$ be a $\twohalf$-player game graph.
Define $\Dr(\game) \coloneqq \tup{\tup{\widetilde{V},\widetilde{V}_0,\widetilde{V}_1,\widetilde{E}},\El}$ as follows:
\begin{itemize}
	\item $\widetilde{V} = V$,
	$\widetilde{V}_0 = V_0$,
	$\widetilde{V}_1 = V_1\cup V_r$,
	$\widetilde{E} = E$, and
	$\El = E_r$.
\end{itemize}
\end{definition}

It remains to show that the almost sure winning set of $\p{0}$ in $\game$ for the generalized Rabin winning condition $\FgR$ is the same as the winning set of $\p{0}$ in the fair adversarial game over $\Dr(\game)$ for the winning condition $\FgR$. This is formalized in the following theorem, which is proved in \REFapp{app:stoch}. 
The proof essentially shows that the random edges of $\game$ simulate the live edges of $\Dr(\game)$, and vice versa.

\begin{theorem}\label{thm:Reduction}
 Let $\game$ be a $\twohalf$-player game graph, $\FgR$ be a generalized Rabin condition, $\varphi \subseteq V^\omega$ be the corresponding LTL specification (Eq.~\eqref{equ:vgR}) over the set of vertices $V$ of $\game$, and $\Dr(\game)$ be the reduced two-player game graph.
 Let $\WlR\subseteq \widetilde{V}$ be the set of all the vertices from where $\p{0}$ wins the fair adversarial game over $\Dr(\game)$ for the winning condition $\varphi$, and $\mathcal{W}^{\mathit{a.s.}}$ be the almost sure winning set of $\p{0}$ in the game graph $\game$ for the specification $\varphi$.
 Then, $\WlR = \mathcal{W}^{\mathit{a.s.}}$.
 Moreover, a winning strategy in $\Dr(\game)$ is also a winning strategy in $\game$, and vice versa.
\end{theorem}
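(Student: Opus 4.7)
The plan is to prove $\WlR = \mathcal{W}^{\mathit{a.s.}}$ by two containments, together with the fact that the very same strategies witness winning on both sides; the conceptual bridge between the two games is that under any fixed pair of strategies in $\game$, the induced probability measure on plays assigns probability one to plays that are strongly transition fair with respect to $E_r$. This is a standard Borel--Cantelli application: each visit to a random vertex $v\in V_r$ picks each outgoing edge with positive probability independently of the past, so if $v$ is visited infinitely often then each of its successors is taken infinitely often almost surely. In other words, the fairness formula $\vl$ of $\Dr(\game)$ (with live edges $E_r$) is satisfied with probability one along the random plays of $\game$, bridging the two semantics.

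\textbf{Forward direction} $(\WlR \subseteq \mathcal{W}^{\mathit{a.s.}})$. Fix $v \in \WlR$ and let $\rho_0$ be a pure finite-memory winning $\p{0}$ strategy in $\Dr(\game)$, which exists by \REFthm{thm:GenRabin}. I would use $\rho_0$ verbatim in $\game$ and show that it is almost-surely winning against any $\p{1}$ strategy $\sigma$. Given a play $\pi$ sampled from the resulting distribution, define a history-dependent $\p{1}$ strategy $\widetilde{\sigma}$ in $\Dr(\game)$ that follows $\sigma$ on vertices of $V_1$ and, at every $V_r$ vertex, picks the successor actually realised along $\pi$; then $\pi$ is precisely the play of $\Dr(\game)$ compliant with $(\rho_0, \widetilde{\sigma})$. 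On the probability-one event that $\pi$ is strongly fair, $\widetilde{\sigma}$ respects $\vl$, so the winning property of $\rho_0$ forces $\pi \in \varphi$. Hence $P_v^{\rho_0, \sigma}(\game \models \varphi) = 1$, so $v \in \mathcal{W}^{\mathit{a.s.}}$.

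\textbf{Reverse direction} $(\mathcal{W}^{\mathit{a.s.}} \subseteq \WlR)$. I would argue by contrapositive, using determinacy of two-player games with $\omega$-regular winning conditions to extract, from any $v \notin \WlR$, a pure finite-memory winning $\p{1}$ strategy $\rho_1^*$ in $\Dr(\game)$ from $v$. Given any $\p{0}$ strategy $\rho_0$ in $\game$, I would let $\p{1}$ play $\sigma := \rho_1^*|_{V_1}$ in $\game$ and show $P_v^{\rho_0, \sigma}(\game \models \varphi) < 1$. Let $\pi^*$ be the play in $\Dr(\game)$ compliant with $(\rho_0, \rho_1^*)$ starting at $v$, and let $S$ be its infinitely-often-visited set; because $\pi^* \in \vl$, every random edge out of any vertex of $S$ targets $S$, so $S$ is closed under all dynamics of the finite Markov chain $\mathcal{M}$ on $V \times \mathrm{Mem}(\rho_1^*)$ obtained by fixing $\rho_0$ and $\sigma$. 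The lift of $\pi^*$ passes, after finitely many steps, through a bottom strongly connected component $\widehat{S}$ of $\mathcal{M}$ projecting to $S$, and the finite prefix of $\pi^*$ leading into $\widehat{S}$ has positive probability, so $\mathcal{M}$ reaches $\widehat{S}$ with positive probability. Conditional on entering $\widehat{S}$, the ergodic theorem for finite Markov chains guarantees that almost every path visits every state of $\widehat{S}$ infinitely often, so its projection to $V$ is strongly fair with infinitely-often-visited set exactly $S$. Since the generalized Rabin condition depends only on this set and $\pi^* \notin \varphi$, all such projected plays violate $\varphi$ as well, giving $P_v^{\rho_0, \sigma}(\game \models \varphi) < 1$ and therefore $v \notin \mathcal{W}^{\mathit{a.s.}}$.

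\textbf{Main obstacle.} The forward direction is a clean coupling argument; the real difficulty is the reverse direction, where a deterministic losing play $\pi^*$ in $\Dr(\game)$---a measure-zero object in $\game$---must be transferred into a positive-probability losing event. The end-component / BSCC analysis above does exactly this, but hinges on two subtle points that must be justified carefully: (i) the lifted set $\widehat{S}$ is a BSCC of $\mathcal{M}$, which combines the closure of $S$ under random edges (from strong fairness) with the deterministic behaviour of $\rho_0$ and $\rho_1^*$ on $V_0$ and $V_1$ in the finite-memory product; and (ii) the generalized Rabin winning condition is an infinitely-often-visited-set property, so that any strongly fair play on $V$ whose infinitely-often-visited set equals $S$ inherits losing status from $\pi^*$. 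Once both are established, the strategy-level claim---that the very same strategies witness winning in $\Dr(\game)$ and almost-sure winning in $\game$---follows directly from the constructions in the two directions.
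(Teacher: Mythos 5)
Your forward direction is essentially the paper's argument: both rest on the observation that under any strategy pair in $\game$ the induced measure assigns probability one to plays that are strongly transition fair with respect to $E_r$ (the geometric estimate that the probability of avoiding a fixed random edge over $l$ visits is $(1-p)^l\to 0$), after which the sure guarantee $\vl\rightarrow\varphi$ provided by the winning strategy of $\Dr(\game)$ collapses to $\varphi$ almost surely. That half is sound.

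The reverse inclusion is where you depart from the paper, and as written it has two genuine gaps. First, to refute almost-sure winning you must defeat \emph{every} $\p{0}$ strategy, including infinite-memory and randomized ones, so ``the finite Markov chain $\mathcal{M}$ on $V\times\mathrm{Mem}(\rho_1^*)$ obtained by fixing $\rho_0$ and $\sigma$'' is not available unless you first import the nontrivial fact that finite-memory $\p{0}$ strategies suffice for almost-sure winning in stochastic generalized Rabin games. Second, and more seriously, even for finite-memory $\rho_0$ the lifted set $\widehat{S}$ is in general \emph{not} a bottom SCC of $\mathcal{M}$: strong fairness of the single play $\pi^*$ is a vertex-level property, whereas $\rho_1^*$, being a deterministic finite-memory strategy, leaves each product state $(v,m)$ with $v\in V_r$ along \emph{one} fixed edge. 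In $\mathcal{M}$ that state branches to all $(v',m')$ with $v'\in E(v)$; although $v'\in S$, the pair $(v',m')$ need not lie in $\widehat{S}$, since $\pi^*$ may never visit $v'$ with memory $m'$. From such a pair the strategies are unconstrained by $\pi^*$ and can steer the play out of $S$ altogether, so it does not follow that a positive-probability set of plays has infinitely-often-visited set exactly $S$, and the conclusion $P^{\rho_0,\sigma}_v(\game\models\varphi)<1$ is not established.

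The paper sidesteps both issues by proving this inclusion \emph{directly}: take an almost-sure winning strategy $\rho_0^*$ in $\game$, play it in $\Dr(\game)$ against an arbitrary $\p{1}$ strategy whose compliant play $\pi$ is fair, show that $\Inf(\pi)$ is an end component of $\game$ reached with positive probability under $\rho_0^*$ (by replaying $\p{1}$'s finite prefix, every random edge along it having positive probability), and then invoke the end-component characterization of almost-sure satisfaction of generalized Rabin conditions (Corollary~\ref{cor:good end component}) to conclude that $\Inf(\pi)$, and hence $\pi$, satisfies $\varphi$. This needs no determinacy, no product chain, and no bottom-SCC analysis; I recommend restructuring your reverse direction along these lines.
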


The above theorem generalizes \cite[Thmeorem 11.1]{GH19_Survey} from liveness properties to all LTL specifications on $\twohalf$-player games.
Together with our symbolic algorithm for fair adversarial Rabin games, the reduction implies a $O(n^{k+2}k!)$ algorithm
for stochastic Rabin games for a game with~$n$ vertices and $k$ Rabin pairs.
This improves the previous best algorithm from \cite{DBLP:conf/icalp/ChatterjeeAH05}, which reduces the problem to a normal two-player game by replacing every random vertex using a gadget with $O(k)$ vertices; similar gadgets are used to reduce other classes of stochastic games to their non-stochastic counterparts as well \cite{chatterjee2003simple,chatterjee2019combinations}.
The resulting two-player Rabin game has $O(n(k+1))$ vertices and $k+1$ Rabin conditions.
Plugging in the complexity of Rabin games, the resulting complexity is $O\left((n(k+1))^{k+2}(k+1)!\right)$.

\begin{remark}
The idea underlying this section is to replace random edges with live edges to compute almost sure winning states. We recall again that probabilistic choice is different from (i.e., stronger than) strong transition fairness studied in our paper. See \REFsec{subsec:FAG} for an illustrative example in \REFfig{fig:simple}.
\end{remark}


\section{Experimental Evaluation}
\label{sec:experiments}

We have developed a C++-based tool \fairsyn, which implements 
the symbolic fair adversarial Rabin fixpoint from Eq.~\eqref{equ:Rabin_all} using BDDs.
We developed two versions of \fairsyn: 
A single-threaded version using the (single-threaded) CUDD library \cite{somenzi2019cudd}, and 
a multi-threaded version using the (multi-threaded) Sylvan library \cite{van2015sylvan}. 

Our tool implements a well-known acceleration technique for fixpoint computations \cite{long1994improved}. 
It exploits certain monotonicity properties of the fixpoint variables, and ``warm-starts'' the inner fixpoint iterations by initializing 
them with earlier computed values for similar configurations of the leading fixpoint variables' iteration indices 
(see \REFapp{app:acceleration} for a formal explanation).
The acceleration procedure trades memory for time; it can avoid computations
if all the intermediate values of the fixpoint variables for all possible configurations of the fixpoint iteration indices are stored.
In practice, this creates an inordinate amount of overhead on the memory requirement:
The original algorithm would already run out of memory when solving the smallest instance of the case study reported in Table~\ref{tab:lego results} (first line) on a computer with $1.5$ TB of memory.
We have therefore adapted the acceleration technique to achieve a novel (space-)bounded acceleration algorithm that we utilize within \fairsyn.
Our new algorithm takes an \emph{acceleration parameter} $M$ as input, which bounds the extent to which intermediate values of fixpoint variables are cached (see \REFapp{app:acceleration} for details). 
Whenever no cached value is available during the computation, our algorithm falls back to the default way of initializing fixpoint variables and re-computations.

To show the effectiveness of our proposed symbolic algorithm for fair adversarial Rabin games, we performed various experiments with \fairsyn which fall into two different categories. 
First, in \REFsec{sec:vlts}, we demonstrate the merits of utilizing parallelization and acceleration within \fairsyn.
Second, in \REFsec{sec:casestudies}, we show the practical relevance of our algorithm by 
solving two large practical case-studies stemming from the areas of software engineering and control systems. 

The experiments in \REFsec{sec:vlts} and \REFsec{sec:resource management} were performed using \textsf{Sylvan}-based \fairsyn on a computer 
equipped with a $3$~GHz Intel Xeon E7 v2 processor with $48$ CPU cores and \num{1.5}\,\si{\tebi\byte} RAM.
The experiments in \REFsec{sec:stoch experiment} were performed using CUDD-based \fairsyn on a Macbook Pro (2015) laptop equipped with a $2.7$~GHz Dual-Core Intel Core i5 processor with \num{16}\,\si{\gibi\byte} RAM.

\subsection{Performance Evaluation}\label{sec:vlts}

This section discusses a benchmark suite used to  empirically evaluate the merits of the two important aspects of \fairsyn, namely the parallelization and the acceleration.
Our benchmark suite is build on transition systems taken from the Very Large Transition Systems (VLTS) benchmark suite \cite{vlts_benchmark}.
For each chosen transition system, we randomly generated benchmark instances of fair adversarial Rabin games with up to $3$ Rabin pairs. To transform a given transition systems into a fair adversarial Rabin game, we labeled (i) $50\%$ of randomly chosen vertices as system vertices, (ii) the remaining vertices as environment vertices, (iii) up to $5\%$ of randomly selected environment edges as live edges, and (iv) for every set in $\FR = \set{\tuple{G_1,R_1},\hdots,\tuple{G_k,R_k}}$ we randomly selected up to $5\%$ of all vertices to be contained. 
We have summarized the relevant details of all the randomly generated instances of the fair adversarial Rabin games in \REFtab{tab:vlts benchmark data 1} and \REFtab{tab:vlts benchmark data 2} in \REFapp{app:experiments}.
In these examples, the number of vertices were \num{289}--\num[group-separator={,}]{566639}, the number of BDD variables were \num{9}--\num{20}, the number of transitions were \num[group-separator={,}]{1224}--\num[group-separator={,}]{3984160}, and number of live edges were \num{1}--\num[group-separator={,}]{42757}.
For all benchmark instances with more than $4$ live edges, the na\"ive version of \fairsyn which treats live edges as Streett conditions and transforms them into additional Rabin pairs as discussed in \REFsec{sec:Rabin:Complexity}, did not terminate after $2$ hours.

\smallskip
\noindent\textbf{Merits of parallelization.}
We ran \fairsyn on $10$ different benchmark instances with $1$ or $2$ Rabin pairs, and varied the number of parallel worker threads used in \fairsyn between $1$--$48$, while keeping the acceleration enabled.
The left scatter plot in \REFfig{fig:BCG scatter plot} plots the computation times with $48$ threads (parallel) versus the computation times with $1$ thread (non-parallel).
Observe that in almost all the experiments, the parallelized version outperforms the non-parallelized version (points above the solid red line).
In addition, in many cases the speedup achieved due to the parallelization was more than one order of magnitude (points above the dashed red line).

A more fine-grained analysis of the benefits of parallelization is shown in \REFfig{fig:BCG trend plots}.(a). Here computation time (in logarithmic scale) is plotted over the number of worker threads used.
We observe that the saving due to parallelization is more significant for the curves lying in the top half which correspond to larger examples. This is due to the better utilization of the available pool of worker threads by the larger examples.

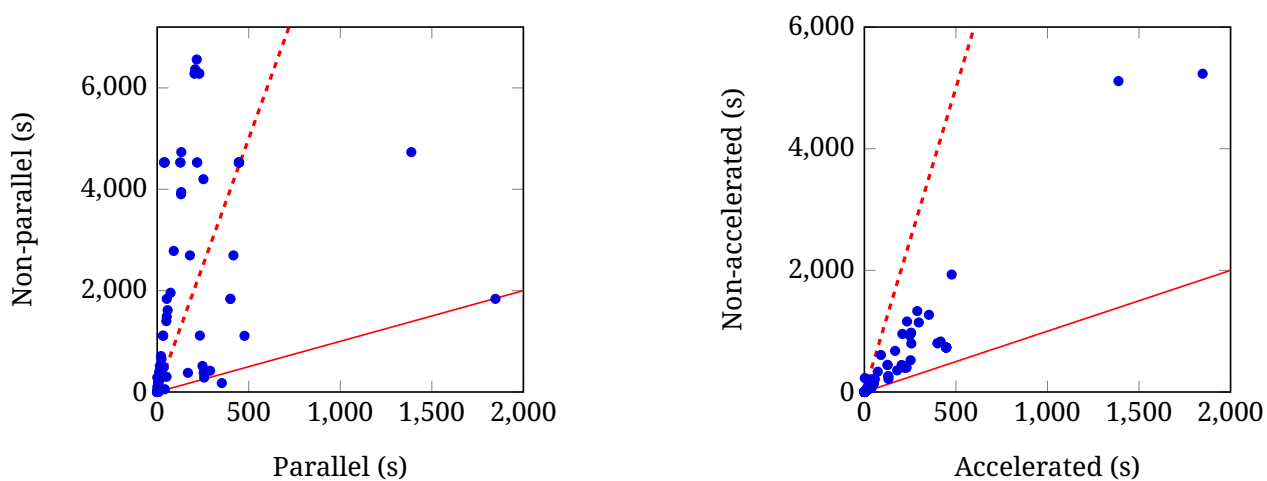
\begin{figure}
	\begin{tikzpicture}[scale=0.6]
\begin{axis}[xmin=0,ymin=0,xmax=2000,ymax=7200,xlabel={Parallel\ (\si{\second})}, ylabel={Non-parallel\ (\si{\second})},
					width=5\ThCScm,height=5\ThCScm,mark size=1.25\ThCSpt]


\draw[color=red]		(0,0)	--	(2000,2000);
\draw[color=red, dashed,  thick] 	(0,0)	--	(800,8000);

\addplot+[only marks]	table[x=acc_+_par,y=acc_+_npar] {DATA/par_vs_npar.txt};

\end{axis}

\end{tikzpicture}\qquad\qquad
	\begin{tikzpicture}[scale=0.6]
\begin{axis}[xmin=0,ymin=0,xmax=2000,ymax=6000,xlabel={Accelerated\ (\si{\second})}, ylabel={Non-accelerated\ (\si{\second})},
					width=5\ThCScm,height=5\ThCScm,mark size=1.25\ThCSpt]


\draw[color=red]		(0,0)	--	(2000,2000);
\draw[color=red, dashed,  thick] 	(0,0)	--	(700,7000);

\addplot+[only marks]	table[x=acc_+_par,y=nacc_+_par] {DATA/accl_vs_naccl.txt};

\end{axis}

\end{tikzpicture}
	\caption{
		(Left) Comparison between the computation times for the non-parallel (1 worker thread)\\ and parallel (48 worker threads) version of \fairsyn, with acceleration being enabled in both cases.
		(Right) Comparison between the computation times for the non-accelerated and the accelerated version of \fairsyn, with parallelization being enabled in both cases.
		(Both) The points on the solid red line represent the same computation time. The points on the dashed red line represent an order of magnitude improvement.
	}
	\label{fig:BCG scatter plot}
\end{figure}

\smallskip
\noindent\textbf{Merits of acceleration.}
We ran \fairsyn on $10$ different benchmark instances with $1$--$3$ Rabin pairs, and varied the acceleration parameter $M$ between $2$--$15$, while the number of worker threads was fixed to $48$.
The right scatter plot in \REFfig{fig:BCG scatter plot} plots the computation times with $M=15$ versus the computation times with no acceleration.
Observe that in almost all the experiments, the accelerated version outperformed the non-accelerated version (points above the solid red line), and in many cases the achieved speedup is close to an order of magnitude (points near the dashed red line). See \REFfig{fig:BCG scatter plot zoomed-in} in \REFapp{app:experiments} for a zoomed-in version of \REFfig{fig:BCG scatter plot}.

A more fine-grained analysis of the benefits of acceleration is shown in \REFfig{fig:BCG trend plots}.(b)--(e). Here we have plotted the total computation time (Plots~(b),(d)) and the initialization time (Plots~(c),(e)) in logarithmic scale over $M$ for benchmark instances with $2$ Rabin pairs (Plots~(b),(c)) and $3$ Rabin pairs (Plots~(d),(e)). Plots for instances with $1$ Rabin pair can be found in \REFfig{fig:BCG trend plots rp 1} in \REFapp{app:experiments}.

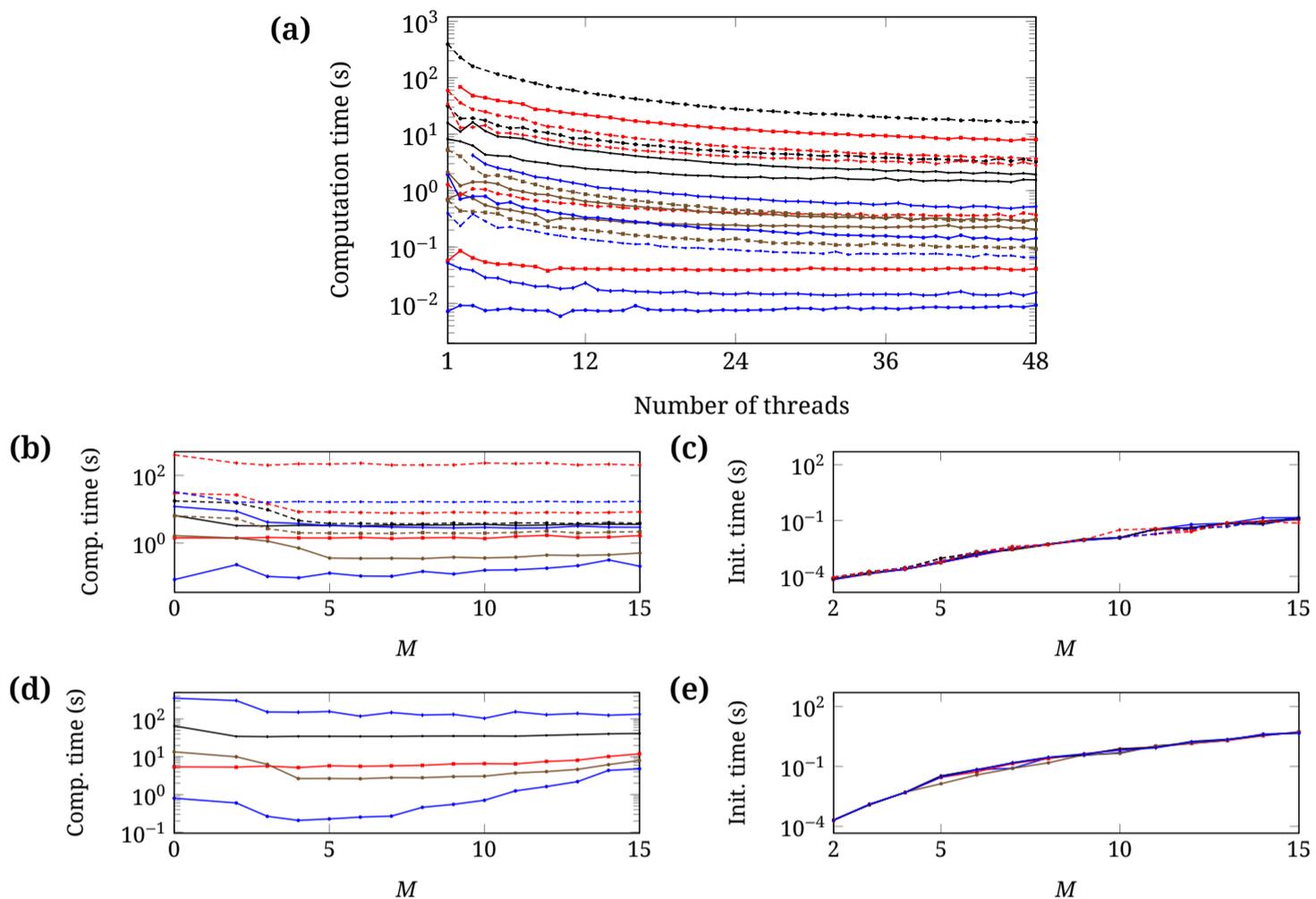
\begin{figure} 
%
\begin{tikzpicture}[scale=0.6]
\node at (-3,6) {\textbf{(a)}};
\begin{axis}[xmin=1,xmax=48,
	ymode=log,
	xtick={1,12,24,36,48},
    xlabel={Number of threads},
    ylabel={Computation time (\si{\second})},
    width=8\ThCScm,height=5\ThCScm,
    mark options={scale=0.25}
    ]
  \addplot table [x=nthreads,y=c11] {DATA/vary_nof_threads.txt};
  \addplot table [x=nthreads,y=c12] {DATA/vary_nof_threads.txt};
   \addplot table [x=nthreads,y=c21] {DATA/vary_nof_threads.txt};
  \addplot table [x=nthreads,y=c22] {DATA/vary_nof_threads.txt};
   \addplot table [x=nthreads,y=c31] {DATA/vary_nof_threads.txt};
  \addplot table [x=nthreads,y=c32] {DATA/vary_nof_threads.txt};
   \addplot table [x=nthreads,y=c41] {DATA/vary_nof_threads.txt};
  \addplot table [x=nthreads,y=c42] {DATA/vary_nof_threads.txt};
   \addplot table [x=nthreads,y=c51] {DATA/vary_nof_threads.txt};
  \addplot table [x=nthreads,y=c52] {DATA/vary_nof_threads.txt};
   \addplot table [x=nthreads,y=c61] {DATA/vary_nof_threads.txt};
  \addplot table [x=nthreads,y=c62] {DATA/vary_nof_threads.txt};
   \addplot table [x=nthreads,y=c71] {DATA/vary_nof_threads.txt};
  \addplot table [x=nthreads,y=c72] {DATA/vary_nof_threads.txt};
   \addplot table [x=nthreads,y=c81] {DATA/vary_nof_threads.txt};
  \addplot table [x=nthreads,y=c82] {DATA/vary_nof_threads.txt};
   \addplot table [x=nthreads,y=c91] {DATA/vary_nof_threads.txt};
  \addplot table [x=nthreads,y=c92] {DATA/vary_nof_threads.txt};
\end{axis}

\end{tikzpicture}
\begin{tikzpicture}[scale=0.55]
\node at  (-3,3) {\textbf{(b)}};
\begin{axis}[
	xmin=0,xmax=15,ymax=500,
	xtick={0,5,10,15},
	ymode=log,
    xlabel={$M$},
    ylabel={Comp.\ time (\si{\second})},
    width=0.45\textwidth,height=3\ThCScm,
    mark options={scale=0.25}
    ]
 	 \addplot table [x=M,y=c1] {DATA/vary_mem_rp_2.txt};
	\addplot table [x=M,y=c2] {DATA/vary_mem_rp_2.txt};
	\addplot table [x=M,y=c3] {DATA/vary_mem_rp_2.txt};
	\addplot table [x=M,y=c4] {DATA/vary_mem_rp_2.txt};
	\addplot table [x=M,y=c5] {DATA/vary_mem_rp_2.txt};
	\addplot table [x=M,y=c6] {DATA/vary_mem_rp_2.txt};
	\addplot table [x=M,y=c7] {DATA/vary_mem_rp_2.txt};
	\addplot table [x=M,y=c8] {DATA/vary_mem_rp_2.txt};
	\addplot table [x=M,y=c9] {DATA/vary_mem_rp_2.txt};
	\addplot table [x=M,y=c10] {DATA/vary_mem_rp_2.txt};
\end{axis}

\end{tikzpicture}
%
\begin{tikzpicture}[scale=0.55]
\node at  (-3,3) {\textbf{(c)}};
\begin{axis}[
	xmin=2,xmax=15,ymax=500,
	xtick={2,5,10,15},
	ymode=log,
	xlabel={$M$},
    ylabel={Init.\ time (\si{\second})},
    width=0.45\textwidth,height=3\ThCScm,
    mark options={scale=0.25}
    ]
 	 \addplot table [x=M,y=c1] {DATA/vary_mem_init_rp_2.txt};
	\addplot table [x=M,y=c2] {DATA/vary_mem_init_rp_2.txt};
	\addplot table [x=M,y=c3] {DATA/vary_mem_init_rp_2.txt};
	\addplot table [x=M,y=c4] {DATA/vary_mem_init_rp_2.txt};
	\addplot table [x=M,y=c5] {DATA/vary_mem_init_rp_2.txt};
	\addplot table [x=M,y=c6] {DATA/vary_mem_init_rp_2.txt};
	\addplot table [x=M,y=c7] {DATA/vary_mem_init_rp_2.txt};
	\addplot table [x=M,y=c8] {DATA/vary_mem_init_rp_2.txt};
	\addplot table [x=M,y=c9] {DATA/vary_mem_init_rp_2.txt};
	\addplot table [x=M,y=c10] {DATA/vary_mem_init_rp_2.txt};
\end{axis}

\end{tikzpicture}

\begin{tikzpicture}[scale=0.55]
\node at  (-3,3) {\textbf{(d)}};
\begin{axis}[
	xmin=0,xmax=15,ymax=500,
	xtick={0,5,10,15},
	ymode=log,
	xlabel={$M$},
    ylabel={Comp.\ time (\si{\second})},
  	width=0.45\textwidth,height=3\ThCScm,
    mark options={scale=0.25}
    ]
 	 \addplot table [x=M,y=c1] {DATA/vary_mem_rp_3.txt};
	\addplot table [x=M,y=c2] {DATA/vary_mem_rp_3.txt};
	\addplot table [x=M,y=c3] {DATA/vary_mem_rp_3.txt};
	\addplot table [x=M,y=c4] {DATA/vary_mem_rp_3.txt};
	\addplot table [x=M,y=c5] {DATA/vary_mem_rp_3.txt};
\end{axis}

\end{tikzpicture}
%
\begin{tikzpicture}[scale=0.55]
\node at  (-3,3) {\textbf{(e)}};
\begin{axis}[
	xmin=2,xmax=15,ymax=500,
	xtick={2,5,10,15},
	ymode=log,
	xlabel={$M$},
    ylabel={Init.\ time (\si{\second})},
    width=0.45\textwidth,height=3\ThCScm,
    mark options={scale=0.25}
    ]
 	 \addplot table [x=M,y=c1] {DATA/vary_mem_init_rp_3.txt};
	\addplot table [x=M,y=c2] {DATA/vary_mem_init_rp_3.txt};
	\addplot table [x=M,y=c3] {DATA/vary_mem_init_rp_3.txt};
	\addplot table [x=M,y=c4] {DATA/vary_mem_init_rp_3.txt};
	\addplot table [x=M,y=c5] {DATA/vary_mem_init_rp_3.txt};
<	
\end{axis}

\end{tikzpicture}
	\caption{
		(a) Effect of parallelization on computation time, with the acceleration enabled.
		(b,d) Effect of variation of the acceleration parameter $M$ on the total computation time (parallelization being enabled) for $2$ and $3$ Rabin pairs respectively.
		(c,e) Effect of variation of the acceleration parameter $M$ on the initialization time for $2$ and $3$ Rabin pairs respectively.
		The computation time (Y-axis) is always shown in the logarithmic scale.
	}
	\label{fig:BCG trend plots}
\end{figure}

The plotted initialization time is needed by the accelerated algorithm for allocating memory to store intermediate fixpoint values.
We observe that this initialization time grows exponentially with $M$, which is due to the $\mathcal{O}(M^{k+1}k!)$ space complexity of the acceleration algorithm.
As a result, the computational savings due to the use of acceleration get undermined by the high initialization cost for large $M$. 
We note that, due to their random generation, the considered benchmark instances are not well structured. This results in low iteration numbers over involved fixpoint variables. Due to this, the allocated memory gets underutilized for large values of $M$. In the practically relevant examples discussed in \REFsec{sec:casestudies} the game graph is naturally structured, resulting in a large number of fixpoint iterations and thereby showing superior performance for larger values of $M$.

\subsection{Practical Benchmarks}\label{sec:casestudies}

This section shows that  \fairsyn is able to efficiently solve two practical case studies stemming from the areas of software engineering (\REFsec{sec:resource management}) and control systems (\REFsec{sec:stoch experiment}). 

\subsubsection{Code-Aware Resource Management}
\label{sec:resource management}
We consider a case study introduced by \citet{rupak_krishnendu}. It considers the problem of synthesizing a \emph{code-aware resource manager} for a network protocol, i.e., multi-threaded program running on a single CPU.
The task of the resource manager is to grant different threads access to different shared synchronization resources (mutexes and counting semaphores).
The specification is deadlock freedom across all threads at all time while assuming a \emph{fair scheduler} (scheduling every thread always eventually) and \emph{fair progress} in every thread (i.e., taking every existing execution branch always eventually).
By making the resource manager \emph{code-aware}, it can avoid deadlocks by utilizing its knowledge about the require and release characteristics of all treads for different resources.

\citet{rupak_krishnendu} showed that the problem of synthesizing a code-aware resource manager can be approximated using a $1\half$-player game\footnote{A $1\half$-player game is a $2\half$-player game without any $\p{1}$ vertices.} generated from the known require and release characteristics of all threads. 
We used \fairsyn to synthesize a code-aware resource manager for this problem, where the live edges model the aforementioned fairness conditions imposed on the scheduler and the threads.

%
\begin{sidefigure}
\begin{center}
 		\begin{tikzpicture}[scale=0.6] 
 		\begin{footnotesize}
			\node[draw, ellipse]	(a)	at	(0,0)	{generator};
			\node[draw, ellipse]	(b)	[below=of a, node distance=\ThCS0.2cm]	{sender};
			\node[draw, rectangle]	(c)	[right=of a]	{broadcast};
			\node[draw, rectangle]	(d)	at	(b -| c)	{output};
			\node[draw, ellipse]	(e)	at	($(c)!0.5!(d) + (3,0)$)	{delay};
			
			\path[->]	(a)	edge	(c)	
							edge	(d)
						(b)	edge	(c)
						(c)	edge	(e)
						(d)	edge	(b)
						(e)	edge	(d);
						
			\draw[->]	(b.south)	--	($(b)+(0,-1)$)	node[below]	{to network};
                    \end{footnotesize}
		\end{tikzpicture}
		\caption{Structure of network protocol.}
		\label{fig:network protocol}
\end{center}
\end{sidefigure}
Motivated by the case study conducted by \citet{rupak_krishnendu}, we consider a network protocol consisting of $3$ threads and $2$ queues of bounded capacity, as depicted in \REFfig{fig:network protocol}.
The threads (shown as oval-shaped nodes) are called \textit{generator}, \textit{sender}, and \textit{delay}, and the queues (shown as rectangular nodes) are called \textit{broadcast} and \textit{output}.
The generator generates data packets and dispatches them to either the broadcast queue or the output queue.
Packets from the broadcast queue are added to the output queue after a random delay, introduced by the delay thread.
The purpose of this delay is to avoid packet collisions during broadcasting.
The packets in the output queue are in transit and get processed by the sender process.
The sender process attempts to transmit packets from the output queue via the network, and when the transmission fails, it adds the respective data packet back to the broadcast queue, so that another transmission attempt can be made after a delay.
Access to all queues is protected by mutexes and semaphores.
Each queue has one mutex and two semaphores, one for counting the number of empty places and another for counting the number of packets present.

As discussed by \citet{rupak_krishnendu}, the outlined network protocol may deadlock when both queues are full, a transmission via sender fails, and the sender tries to insert the packet back to the broadcast queue. In this case, due to the output queue being full, the broadcast queue will not be able to make space for the incoming packet, leading to a deadlock situation.
The correct strategy for the resource manager to prevent this deadlock is to ensure that  the generator never adds packets to the broadcast queue if the output queue is full.

We used the parallel and accelerated version of \fairsyn with $M=15$ to automatically synthesize the resource manager for the outlined network protocol case study. Indeed, \fairsyn was successful in discovering the outlined managing strategy. To showcase \fairsyn's performance on this case study,
we report the number of vertices of the problem instance and \fairsyn's computation time to solve it for different queue capacities in \REFtab{tab:lego results}; an extended version of the table with more number of cases has been included in \REFtab{tab:resource management extended} in \REFapp{app:experiments}.
In all cases, \fairsyn was able to provide expected strategies within a reasonable amount of time.
Note that treating the live edges as Streett conditions would result in a game with several million Rabin pairs, making all these examples go far beyond the scope of any synthesis tool for Rabin games.

\begin{table}
	\begin{tabular}{
		|>{\centering\arraybackslash}m{1.6\ThCScm}|
		>{\centering\arraybackslash}m{1.4\ThCScm}|
		>{\centering\arraybackslash}m{2.2\ThCScm}|
		>{\centering\arraybackslash}m{2.2\ThCScm}|
		>{\centering\arraybackslash}m{2.2\ThCScm}|
		>{\centering\arraybackslash}m{1.5\ThCScm}|
		>{\centering\arraybackslash}m{1.5\ThCScm}|}
		 \hline
          Broadcast Queue Capacity	&	Output Queue Capacity	&	Number of Vertices	&	Number of Transitions	 &	Number of Live edges	&	Number of BDD variables	&	Time (seconds)	\\
		\hline
		1	&	1	&	\tablenum[group-separator={,},table-format=9.0]{5307840}	&	\tablenum[group-separator={,},table-format=9.0]{10135300}	&	\tablenum[group-separator={,},table-format=9.0]{5124100}	&	25	&	\tablenum[table-format=3.2]{7.38} \\
		2	&	1	&	\tablenum[group-separator={,},table-format=9.0]{21231400}	&	\tablenum[group-separator={,},table-format=9.0]{40541200}	&	\tablenum[group-separator={,},table-format=9.0]{20496400}	&	27	&	\tablenum[table-format=3.2]{24.90}	\\
		3	&	1	&	\tablenum[group-separator={,},table-format=9.0]{21414100}	&	\tablenum[group-separator={,},table-format=9.0]{42080300}	&	\tablenum[group-separator={,},table-format=9.0]{21265900}	&	27	&	\tablenum[table-format=3.2]{28.98}	\\
		1	&	2	&	\tablenum[group-separator={,},table-format=9.0]{21340800}	&	\tablenum[group-separator={,},table-format=9.0]{40879100}	&	\tablenum[group-separator={,},table-format=9.0]{20834300}	&	27	&	\tablenum[table-format=3.2]{38.26}	\\
		1	&	3	&	\tablenum[group-separator={,},table-format=9.0]{21559400}	&	\tablenum[group-separator={,},table-format=9.0]{42756100}	&	\tablenum[group-separator={,},table-format=9.0]{21772800}	&	27	&	\tablenum[table-format=3.2]{51.56}	\\
		2	&	2	&	\tablenum[group-separator={,},table-format=9.0]{85363200}	&	\tablenum[group-separator={,},table-format=9.0]{163516000}	&	\tablenum[group-separator={,},table-format=9.0]{83337200}	&	29	&	\tablenum[table-format=3.2]{133.20}	\\
		3	&	2	&	\tablenum[group-separator={,},table-format=9.0]{86061400}	&	\tablenum[group-separator={,},table-format=9.0]{169673000}	&	\tablenum[group-separator={,},table-format=9.0]{86415400}	&	29	&	\tablenum[table-format=3.2]{144.28}	\\
		2	&	3	&	\tablenum[group-separator={,},table-format=9.0]{86237400}	&	\tablenum[group-separator={,},table-format=9.0]{171024000}	&	\tablenum[group-separator={,},table-format=9.0]{87091200}	&	29	&	\tablenum[table-format=3.2]{163.62}	\\
		3	&	3	&	\tablenum[group-separator={,},table-format=9.0]{86870100}	&	\tablenum[group-separator={,},table-format=9.0]{177181000}	&	\tablenum[group-separator={,},table-format=9.0]{90169300}	&	29	&	\tablenum[table-format=3.2]{203.15}	\\
		\hline
	\end{tabular}
	\caption{Performance of \fairsyn on the code-aware resource management benchmark experiment.}
	\label{tab:lego results}
\end{table}

\subsubsection{Controller Synthesis for Stochastically Perturbed Dynamical Systems}
\label{sec:stoch experiment}

Synthesizing verified symbolic controllers for continuous dynamical systems is an active area in cyber-physical systems research \cite{tabuada2009verification}.
Recently, it was shown by \citet{Majumdar2021}, that the symbolic controller synthesis problem for stochastic continuous dynamical systems can be approximated using a strategy synthesis problem over a (finite) $\twohalf$-player game graph.
This result, together with our reduction in \REFsec{sec:stochastic}, enables us to use \fairsyn to synthesize a symbolic controller for stochastic continuous dynamical systems.
We show in this section, that on different instances of an established case study for this synthesis problem, \fairsyn outperforms state-of-the art synthesis techniques by margins varying between $1$ order of magnitude to up to $2.5$ orders of magnitude.

In the following, we first formalize the case study, which was proposed by \citet{dutreix2020abstraction}.
Consider the dynamic model of a bistable switch which is a tuple $\Sigma=(X,U,W,f)$ with a two-dimensional compact state space $X = [0,4]\times [0,4]\in \mathbb{R}^2$, 
a finite input space $U = \set{-0.5,0,0.5} \times \set{-0.5,0,0.5}$, a two-dimensional bounded disturbance space $W = [-0.4,-0.2]\times [-0.4,-0.2]\in \mathbb{R}^2$, and a transition function $f\colon X\times U\to X$.
Suppose $x\colon \mathbb{N}\to X$, $u\colon \mathbb{N}\to U$, and $w\colon \mathbb{N} \to W$ denote the system's state, input, and disturbance trajectories, given as functions of (discrete) time.
Note that the functions $x$, $u$, $w$, and $f$ are vector-valued, and we will denote each element of vectors using the element index in the suffix.
For instance, $x_1, x_2$ are the first and the second element of the state trajectory $x$ respectively, and $f_1(x,u), f_2(x,u)$ are the first and the second element of the valuation of the transition function $f(x,u)$ respectively.
At each time step $k$, we assume that $w(k)\in W$ is drawn from a probability distribution with the support $W$; for our purpose, the shape of the distribution is irrelevant.
The state evolution of the system is modeled using a set of difference equations of the following form:
\begin{align}
	x_1(k+1) &= f_1(x(k),u(k)) + w_1(k) = x_1(k) + 0.05\left(-1.3x_1(k) + x_2(k)\right) + u_1(k) + w_1(k),\label{equ:switch dynamics} \\
	x_2(k+1) &= f_2(x(k),u(k)) + w_2(k) =  x_2(k) + 0.05\left( \frac{(x_1(k))^2}{(x_1(k))^2+1} - 0.25x_2(k) \right) + u_2(k) + w_2(k). \nonumber
\end{align}

A controller for a dynamical system $\Sigma$ is a function $C\colon X\to U$ that determines the control inputs $u_1(k):=C_1(x(k))$ and $u_2(k):=C_2(x(k))$ in \eqref{equ:switch dynamics} for all time steps $k$. Recalling that $w(k)\in W$ is drawn from a probability distribution with the support $W$ in every time step, we see that, for a given initial state $x(0)=\mathsf{init}\in X$, a fixed controller $C$ induces a probability measure $P_{\mathsf{init}}^C$ over all state trajectories 
starting at $x(0)=\mathsf{init}$ and evolving in accordance to \eqref{equ:switch dynamics}. 

\begin{sidefigure}
\begin{center}
 	\begin{tikzpicture}[scale=0.8] 
		\draw [draw=black!50!white] (0,0) grid  (4,4) rectangle (0,0);
		\draw [fill=red,opacity=0.3]	(1,1)	--	(3,1)	--	(3,3)	--	(2,3)	--	(2,2)	--	(1,2)	--	cycle;
		\draw [fill=green,opacity=0.3]	(0,0)	rectangle	(1,1);
		\draw [fill=blue,opacity=0.3]	(1,1)	rectangle	(2,2);
		\draw [fill=blue,opacity=0.3]	(3,1)	rectangle	(4,3);
		\draw [fill=blue,opacity=0.3]	(0,3)	rectangle	(1,4);
		\draw [fill=red!50!yellow,opacity=0.3]	(2,3)	rectangle	(3,4);
		
		\node at		(1.5,1.5)	{$A,C$};
		\node at		(2.5,1.5)	{$A$};
		\node at		(2.5,2.5)	{$A$};
		\node at		(0.5,0.5)	{$B$};
		\node at		(3.5,1.5)	{$C$};
		\node at		(3.5,2.5)	{$C$};
		\node at		(0.5,3.5)	{$C$};
		\node at		(2.5,3.5)	{$D$};
	\end{tikzpicture}
	\end{center}
	\caption{Predicates over $X$.}
	\label{fig:switch predicates}
\end{sidefigure}
\leavevmode
In order to formalize a control specification for $\Sigma$ in \eqref{equ:switch dynamics}, the state subsets $A,B,C,D\subseteq X$ whose shape is illustrated in \REFfig{fig:switch predicates} are considered. Given the LTL formulas over these predicates 
 \begin{align*}
	\varphi_1 &\coloneqq \square\left( \left( \lnot A\wedge \bigcirc A\right) \rightarrow \left( \bigcirc\bigcirc A\wedge \bigcirc\bigcirc\bigcirc A\right) \right),~\text{and} \phantom{XXXXXXXXXXXXXXXXXXX}\\
	\varphi_2 &\coloneqq \left( \square\lozenge B\rightarrow \lozenge C\right) \wedge \left( \lozenge D\rightarrow \square\lnot C\right), 
\end{align*}
the set $\mathcal{L}(\varphi_i)\subseteq 2^{\mathbb{N}\to X}$ collects all state trajectories of $\Sigma$ that fulfill $\varphi_i$. With this, we define the \emph{almost sure winning region} of $\Sigma$ for the specification $\varphi$ as the largest (in term of set inclusion) set of states $W_{\mathsf{in}}$ for which there exists a controller $C$ s.t.\ $P_{\alpha}^C(\mathcal{L}(\varphi_i))=1$ for every state $\alpha\in W_{\mathsf{in}}$.
The synthesis task for this case study then amounts to computing controllers $C_1$ and $C_2$ which have the \emph{almost sure winning region} of $\Sigma$ w.r.t.\ $\varphi_i$ and $W_{\mathsf{in}}$ as their initial domain.

It was shown by \citet{Majumdar2021} that this synthesis problem can be approximately solved by lifting the system $\Sigma$ to a finite $\twohalf$-player game.
The almost sure winning region of the resulting controller obtained by solving the abstract $\twohalf$-player game under-approximates the almost sure winning region of $\Sigma$.
We employ our fixpoint algorithm for solving this abstract $\twohalf$-player game, which can be reduced to a fair adversarial game by following the procedure in \REFsec{sec:stochastic}.
In \REFtab{table:performance summary}, we compare both the accelerated and the non-accelerated versions of our fixpoint algorithm against the state-of-the-art algorithm for solving this problem, which is implemented in the tool called StochasticSynthesis (SS) \cite{dutreix2020abstraction}.

\begin{table*}
	\centering
	\begin{tabular}{
		|>{\centering\arraybackslash}m{1.2\ThCScm}|
		>{\centering\arraybackslash}m{2\ThCScm}|
		>{\centering\arraybackslash}m{1.6\ThCScm}|
		>{\centering\arraybackslash}m{1.8\ThCScm}|
		>{\centering\arraybackslash}m{1.6\ThCScm}|
		>{\centering\arraybackslash}m{1.4\ThCScm}|
		>{\centering\arraybackslash}m{1.4\ThCScm}|
		>{\centering\arraybackslash}m{1.4\ThCScm}|}
		\hline
		\multirow{3}{*}{\parbox{0.8\ThCScm}{\centering Spec.}} & \multirow{3}{*}{\parbox{2\ThCScm}{\centering \# vertices in\\ $\twohalf$-game abstraction}} & \multicolumn{3}{c|}{Total synthesis time} & \multicolumn{3}{c|}{Peak memory footprint}\\
		\cline{3-8}
		  & & \multirow{2}{*}{\parbox{1.5\ThCScm}{\centering \textsf{\small\fairsyn}}} & \multirow{2}{*}{\parbox{1.5\ThCScm}{\centering \textsf{\small\fairsyn \\ w/o accl.}}} & \multirow{2}{*}{\parbox{1.5\ThCScm}{\centering \textsf{\small SS}}} & \multirow{2}{*}{\parbox{1.4\ThCScm}{\centering \textsf{\small\fairsyn}}} & \multirow{2}{*}{\parbox{1.4\ThCScm}{\centering \textsf{\small\fairsyn \\ w/o accl.}}} & \multirow{2}{*}{\parbox{1.5\ThCScm}{\centering \textsf{\small SS} }} \\
		  & &  & & & & & \\
		\hline 
		\multirow{5}{*}{\parbox{0.8\ThCScm}{\centering $\varphi_1$ \\ ($1$ \\ Rabin \\ pair)}} & \num{3.8e3} & \num{0.02}\,\si{\second} & \num{0.02}\,\si{\second} & \num{8}\,\si{\second} & \num{65}\,\si{\mebi\byte} & \num{65}\,\si{\mebi\byte} & \num{125}\,\si{\mebi\byte} \\
		& \num{2.2e4} & \num{0.2}\,\si{\second} & \num{0.4}\,\si{\second} & \num{18}\,\si{\second} & \num{68}\,\si{\mebi\byte} & \num{68}\,\si{\mebi\byte} & \num{1}\,\si{\gibi\byte}  \\
		& \num{1.1e5} &  \num{1.3}\,\si{\second} & \num{3.7}\,\si{\second} & \num{9}\,\si{\minute}\,\num{18}\,\si{\second} & \num{79}\,\si{\mebi\byte} & \num{81}\,\si{\mebi\byte} & \num{80}\,\si{\gibi\byte} \\
		& \num{6.6e5}  &  \num{5.4}\,\si{\second}  & \num{16.8}\,\si{\second} & OoM & \num{128}\,\si{\mebi\byte} & \num{126}\,\si{\mebi\byte} & \num{127}\,\si{\gibi\byte} \\
		& \num{4.3e6} &  \num{35}\,\si{\second} & \num{1}\,\si{\minute}\,\num{32}\,\si{\second} & OoM & \num{479}\,\si{\mebi\byte} & \num{478}\,\si{\mebi\byte} & \num{127}\,\si{\gibi\byte} \\
		 \hline
		 \multirow{5}{*}{\parbox{0.8\ThCScm}{\centering $\varphi_2$ \\ ($2$ \\ Rabin \\ pairs)}} & \num{3.8e3} & \num{0.4}\,\si{\second} & \num{1}\,\si{\second} & \num{30}\,\si{\second} & \num{66}\,\si{\mebi\byte} & \num{66}\,\si{\mebi\byte} & \num{156}\,\si{\mebi\byte} \\
		& \num{2.2e4} &  \num{8.2}\,\si{\second} & \num{41}\,\si{\second}  & \num{55}\,\si{\second} & \num{72}\,\si{\mebi\byte} & \num{69}\,\si{\mebi\byte} & \num{1}\,\si{\gibi\byte} \\
		& \num{1.1e5} &  \num{1}\,\si{\minute}\,\num{23}\,\si{\second} & \num{12}\,\si{\minute}\,\num{38}\,\si{\second} & \num{16}\,\si{\minute}\,\num{1}\,\si{\second} & \num{108}\,\si{\mebi\byte} & \si{102}\,\si{\mebi\byte} & \num{81}\,\si{\gibi\byte} \\
		& \num{6.6e5} &  \num{5}\,\si{\minute}\,\num{27}\,\si{\second} & \num{1}\,\si{\hour}\,\num{1}\,\si{\minute} & OoM & \num{166}\,\si{\mebi\byte} & \num{237}\,\si{\mebi\byte} & \num{126}\,\si{\gibi\byte} \\
		& \num{4.3e6} & \num{41}\,\si{\minute}\,\num{7}\,\si{\second} & \num{6}\,\si{\hour}\,\num{5}\,\si{\minute} & OoM & \SI{517}{\mebi\byte} & \SI{509}{\mebi\byte} & \SI{127}{\gibi\byte} \\
		\hline
	\end{tabular}
	\caption{Performance comparison between \fairsyn and \textsf{StochasticSynthesis} (abbreviated as \textsf{SS}) \cite{dutreix2020abstraction} on a comparable implementation of the abstract fair adversarial game (uniform grid-based abstraction).
	Col.~$1$ shows the specifications considered and the respective numbers of Rabin pairs, 
	Col.~$2$ shows the size of the resulting $\twohalf$-player game graph (computed using the algorithm given in \cite{Majumdar2021}, 
	Col.~$3$, $4$, and $5$ compare the total synthesis times and Col.~$6$, $7$, and $8$ compare the peak memory footprint (as measured using the ``time'' command) for $\fairsyn$, $\fairsyn$ w/o acceleration, and \textsf{SS} respectively.
	``OoM'' stands for out-of-memory.}
	\label{table:performance summary}
\end{table*}


	\section{Conclusion}

Many practical problems in reactive synthesis give rise to two-player games on graphs with a winning
condition of the form
\[
\mathsf{Fairness~Assumption}\ \Rightarrow\ \omega\mathsf{-regular~Specification}
\]
The prevalent way to solve games with fairness assumptions is to either ``compile'' to a new $\omega$-regular
specification for the implication or to identify selected fragments for which a ``direct'' symbolic algorithm
has been devised.
The former can handle arbitrary fairness assumptions (e.g., general Streett conditions) but yields an algorithm
of high complexity (e.g., adding the number of Streett conditions in the exponent).
The latter, exemplified by the GR(1) fragment, can only handle weak fairness (conjunctions of B\"uchi conditions).
Our observation is that many practical fairness assumptions fall into the category of strong transition liveness, and for this class, one can construct a symbolic algorithm with a slight additional
penalty that is independent of the size (number of live edges) of the liveness assumption.
As a byproduct, our algorithm improves a previous symbolic algorithm for stochastic Rabin games.
We experimentally demonstrate that a symbolic implementation of our algorithm based on BDDs can scale to large instances
derived from deterministic and stochastic synthesis problems.

        \paragraph{Acknowledgements.}
 		We thank Daniel Hausmann and Nir Piterman for valuable comments on an earlier version of this manuscript, in particular for the observation that the 
 	parity fixpoint does not allow for a ``direct transformation''. 
 	We also thank the anonymous reviewers for their constructive comments.


 \printbibliography

@article{rupak_krishnendu,
  title={Code aware resource management},
  author={Chatterjee, Krishnendu and De Alfaro, Luca and Faella, Marco and Majumdar, Rupak and Raman, Vishwanath},
  journal={Formal Methods in System Design},
  volume={42},
  number={2},
  pages={146--174},
  year={2013},
  publisher={Springer}
}

@book{gradel2002automata,
  title={Automata, logics, and infinite games: a guide to current research},
  author={Gradel, Erich and Thomas, Wolfgang},
  volume={2500},
  year={2002},
  publisher={Springer Science \& Business Media}
}

@article{somenzi2019cudd,
  title={CUDD 3.0. 0},
  author={Somenzi, Fabio},
  journal={URL http://vlsi. colorado. edu/\~{} fabio/CUDD/html/. Also available at https://github. com/ivmai/cudd},
  year={2019}
}

@inproceedings{van2015sylvan,
  title={Sylvan: Multi-core decision diagrams},
  author={van Dijk, Tom and van de Pol, Jaco},
  booktitle={International Conference on Tools and Algorithms for the Construction and Analysis of Systems, {TACAS} 2015},
  pages={677--691},
  year={2015},
  organization={Springer}
}

@article{vlts_benchmark,
  title={CADP 2011: a toolbox for the construction and analysis of distributed processes},
  author={Garavel, Hubert and Lang, Fr{\'e}d{\'e}ric and Mateescu, Radu and Serwe, Wendelin},
  journal={International Journal on Software Tools for Technology Transfer},
  volume={15},
  number={2},
  pages={89--107},
  year={2013},
  publisher={Springer}
}

@article{ChatterjeeAH11,
  author    = {Krishnendu Chatterjee and
               Luca de Alfaro and
               Thomas A. Henzinger},
  title     = {Qualitative concurrent parity games},
  journal   = {{ACM} Trans. Comput. Log.},
  volume    = {12},
  number    = {4},
  pages     = {28:1--28:51},
  year      = {2011},
  url       = {https://doi.org/10.1145/1970398.1970404},
  doi       = {10.1145/1970398.1970404},
  timestamp = {Tue, 06 Nov 2018 12:51:53 +0100},
  biburl    = {https://dblp.org/rec/journals/tocl/ChatterjeeAH11.bib},
  bibsource = {dblp computer science bibliography, https://dblp.org}
}

@inproceedings{dAH00,
  author    = {Luca de Alfaro and
               Thomas A. Henzinger},
  title     = {Concurrent Omega-Regular Games},
  booktitle = {15th Annual {IEEE} Symposium on Logic in Computer Science, {LICS} 2000, Santa Barbara,
               California, USA},
  pages     = {141--154},
  publisher = {{IEEE} Computer Society},
  year      = {2000},
  url       = {https://doi.org/10.1109/LICS.2000.855763},
  doi       = {10.1109/LICS.2000.855763},
  timestamp = {Wed, 16 Oct 2019 14:14:54 +0200},
  biburl    = {https://dblp.org/rec/conf/lics/AlfaroH00.bib},
  bibsource = {dblp computer science bibliography, https://dblp.org}
}

@inproceedings{dAHK98,
  author    = {Luca de Alfaro and
               Thomas A. Henzinger and
               Orna Kupferman},
  title     = {Concurrent Reachability Games},
  booktitle = {39th Annual Symposium on Foundations of Computer Science, {FOCS}},
  pages     = {564--575},
  publisher = {{IEEE} Computer Society},
  year      = {1998},
}

@inproceedings{DBLP:conf/fmcad/AlurMT13,
  author    = {Rajeev Alur and
               Salar Moarref and
               Ufuk Topcu},
  title     = {Counter-strategy guided refinement of {GR(1)} temporal logic specifications},
  booktitle = {Formal Methods in Computer-Aided Design, {FMCAD} 2013, Portland, OR,
               USA, October 20-23, 2013},
  pages     = {26--33},
  publisher = {{IEEE}},
  year      = {2013},
  url       = {http://ieeexplore.ieee.org/document/6679387/},
  timestamp = {Wed, 16 Oct 2019 14:14:56 +0200},
  biburl    = {https://dblp.org/rec/conf/fmcad/AlurMT13.bib},
  bibsource = {dblp computer science bibliography, https://dblp.org}
}

@article{GH19_Survey,
author = {Glabbeek, Rob Van and H\"{o}fner, Peter},
title = {Progress, Justness, and Fairness},
year = {2019},
publisher = {Association for Computing Machinery},
address = {New York, NY, USA},
volume = {52},
number = {4},
journal = {ACM Comput. Surv.},
 articleno = {69}, 
 numpages = {38}, 
 }

@phdthesis{de1997formal,
  author    = {Luca de Alfaro},
  title     = {Formal verification of probabilistic systems},
  school    = {Stanford University, {USA}},
  year      = {1997},
  url       = {https://searchworks.stanford.edu/view/3910936},
  timestamp = {Wed, 31 Aug 2022 08:34:20 +0200},
  biburl    = {https://dblp.org/rec/phd/us/Alfaro97.bib},
  bibsource = {dblp computer science bibliography, https://dblp.org}
}

@inproceedings{chatterjee2003simple,
  title={Simple stochastic parity games},
  author={Chatterjee, Krishnendu and Jurdzi{\'n}ski, Marcin and Henzinger, Thomas A},
  booktitle={Computer Science Logic: 17th International Workshop CSL 2003, 12th Annual Conference of the EACSL, 8th Kurt G{\"o}del Colloquium, KGC 2003, Vienna, Austria, August 25-30, 2003. Proceedings 17},
  pages={100--113},
  year={2003},
  organization={Springer}
}

@inproceedings{chatterjee2019combinations,
  author    = {Krishnendu Chatterjee and
               Nir Piterman},
  editor    = {Wan J. Fokkink and
               Rob van Glabbeek},
  title     = {Combinations of Qualitative Winning for Stochastic Parity Games},
  booktitle = {30th International Conference on Concurrency Theory, {CONCUR} 2019,
               August 27-30, 2019, Amsterdam, the Netherlands},
  series    = {LIPIcs},
  volume    = {140},
  pages     = {6:1--6:17},
  publisher = {Schloss Dagstuhl - Leibniz-Zentrum f{\"{u}}r Informatik},
  year      = {2019},
  url       = {https://doi.org/10.4230/LIPIcs.CONCUR.2019.6},
  doi       = {10.4230/LIPIcs.CONCUR.2019.6},
  timestamp = {Mon, 18 May 2020 10:45:19 +0200},
  biburl    = {https://dblp.org/rec/conf/concur/ChatterjeeP19.bib},
  bibsource = {dblp computer science bibliography, https://dblp.org}
}

@inproceedings{DBLP:conf/icalp/ChatterjeeAH05,
  author    = {Krishnendu Chatterjee and
               Luca de Alfaro and
               Thomas A. Henzinger},
    title     = {The Complexity of Stochastic {R}abin and {S}treett Games},
  booktitle = {Proceedings of the 32nd International Colloquium on Automata, Languages and Programming (ICALP)},
  series    = {Lecture Notes in Computer Science},
  volume    = {3580},
  pages     = {878--890},
  publisher = {Springer},
  year      = {2005},
}

@article{Condon92,
title = "The complexity of stochastic games",
journal = "Information and Computation",
volume = "96",
number = "2",
pages = "203-224",
year = "1992",
author = "Anne Condon",
}

@article{Church62,
author={Alonzo Church},
title={Logic, arithmetic, and automata},
journal={Proceedings of the International Congress of Mathematicians, 1962},
publisher={Institut Mittag-Leffler, Djursholm, Sweden, 1963},
year={1963},
pages={23-35}
}

@article{dutreix2020abstraction,
  title={Abstraction-based synthesis for stochastic systems with omega-regular objectives},
  author={Dutreix, Maxence and Huh, Jeongmin and Coogan, Samuel},
  journal={arXiv preprint arXiv:2001.09236},
  year={2020}
}

@inproceedings{KVSafraless,
  title={Safraless decision procedures},
  author={Kupferman, Orna and Vardi, Moshe Y},
  booktitle={46th Annual IEEE Symposium on Foundations of Computer Science, {FOCS} 2005},
  pages={531--540},
  year={2005},
  organization={IEEE}
}

@article{QueilleSifakis,
  title={Fairness and related properties in transition systems--a temporal logic to deal with fairness},
  author={Queille, Jean-Pierre and Sifakis, Joseph},
  journal={Acta Informatica},
  volume={19},
  number={3},
  pages={195--220},
  year={1983},
  publisher={Springer}
}

@book{Francez,
  title={Fairness},
  author={Francez, Nissim},
  year={1986},
  publisher={Springer, Berlin}
}

@inproceedings{maoz2015synthesizing,
  author    = {Shahar Maoz and
               Jan Oliver Ringert},
  editor    = {Pavol Cern{\'{y}} and
               Viktor Kuncak and
               Parthasarathy Madhusudan},
  title     = {Synthesizing a Lego Forklift Controller in {GR(1):} {A} Case Study},
  booktitle = {Proceedings Fourth Workshop on Synthesis, {SYNT} 2015, San Francisco,
               CA, USA, 18th July 2015},
  series    = {{EPTCS}},
  volume    = {202},
  pages     = {58--72},
  year      = {2015},
  url       = {https://doi.org/10.4204/EPTCS.202.5},
  doi       = {10.4204/EPTCS.202.5},
  timestamp = {Sat, 19 Oct 2019 19:22:56 +0200},
  biburl    = {https://dblp.org/rec/journals/corr/MaozR16.bib},
  bibsource = {dblp computer science bibliography, https://dblp.org}
}

@inproceedings{PR88,
  author    = {Amir Pnueli and
               Roni Rosner},
  editor    = {Friedrich H. Vogt},
  title     = {A Framework for the Synthesis of Reactive Modules},
  booktitle = {International Conference on Concurrency, {Concurrency} 1988},
  series    = {LNCS},
  volume    = {335},
  pages     = {4--17},
  publisher = {Springer},
  year      = {1988},
}

@inproceedings{PR89,
  author    = {Amir Pnueli and
               Roni Rosner},
  title     = {On the Synthesis of a Reactive Module},
  booktitle = {Annual {ACM} Symposium on Principles
               of Programming Languages, {POPL} 1989},
  pages     = {179--190},
  publisher = {{ACM} Press},
  year      = {1989},
}

@article{Rabin69,
  title={Decidability of second-order theories and automata on infinite trees},
  author={Rabin, Michael O},
  journal={Transactions of the American Mathematical Society},
  volume={141},
  pages={1--35},
  year={1969},
  publisher={JSTOR}
}

@inproceedings{Pnueli83,
  title={On the extremely fair treatment of probabilistic algorithms},
  author={Pnueli, Amir},
  booktitle={Proceedings of the fifteenth annual ACM symposium on Theory of computing, {STOC} 1983},
  pages={278--290},
  year={1983}
}

@article{Zielonka98,
  author    = {Wieslaw Zielonka},
  title     = {Infinite Games on Finitely Coloured Graphs with Applications to Automata
               on Infinite Trees},
  journal   = {Theor. Comput. Sci.},
  volume    = {200},
  number    = {1-2},
  pages     = {135--183},
  year      = {1998}
}

@inproceedings{Zielonka04,
  author    = {Wieslaw Zielonka},
  title     = {Perfect-Information Stochastic Parity Games},
  booktitle = {International Conference on Foundations of Software Science and Computation Structures, {FOSSACS} 2004},
  series    = {LNCS},
  volume    = {2987},
  pages     = {499--513},
  publisher = {Springer},
  year      = {2004}
  }

@inproceedings{EJ88,
  author    = {E. Allen Emerson and
               Charanjit S. Jutla},
  title     = {The Complexity of Tree Automata and Logics of Programs (Extended Abstract)},
  booktitle = {29th Annual Symposium on Foundations of Computer Science, {FOCS} 1988, White Plains,
               New York, USA},
  pages     = {328--337},
  publisher = {{IEEE} Computer Society},
  year      = {1988},
  url       = {https://doi.org/10.1109/SFCS.1988.21949},
  doi       = {10.1109/SFCS.1988.21949},
  timestamp = {Wed, 16 Oct 2019 14:14:54 +0200},
  biburl    = {https://dblp.org/rec/conf/focs/EmersonJ88.bib},
  bibsource = {dblp computer science bibliography, https://dblp.org}
}

@inproceedings{EmersonJutla89,
  author    = {E. Allen Emerson and
               Charanjit S. Jutla},
  title     = {On Simultaneously Determinizing and Complementing omega-Automata (Extended
               Abstract)},
  booktitle = {Proceedings of the Fourth Annual Symposium on Logic in Computer Science, {LICS} 1989},
  pages     = {333--342},
  publisher = {{IEEE} Computer Society},
  year      = {1989},
}

@inproceedings{EJ91,
  author    = {E. Allen Emerson and
               Charanjit S. Jutla},
  title     = {Tree Automata, Mu-Calculus and Determinacy (Extended Abstract)},
  booktitle = {32nd Annual Symposium on Foundations of Computer Science, {FOCS} 1991, San Juan,
               Puerto Rico},
  pages     = {368--377},
  publisher = {{IEEE} Computer Society},
  year      = {1991},
  url       = {https://doi.org/10.1109/SFCS.1991.185392},
  doi       = {10.1109/SFCS.1991.185392},
  timestamp = {Wed, 16 Oct 2019 14:14:54 +0200},
  biburl    = {https://dblp.org/rec/conf/focs/EmersonJ91.bib},
  bibsource = {dblp computer science bibliography, https://dblp.org}
}

@inproceedings{kress2007s,
  title={Where's waldo? sensor-based temporal logic motion planning},
  author={Kress-Gazit, Hadas and Fainekos, Georgios E and Pappas, George J},
  booktitle={Proceedings 2007 IEEE International Conference on Robotics and Automation, {ICRA} 2007},
  pages={3116--3121},
  year={2007},
  organization={IEEE}
}

@inproceedings{GH82,
  title={Trees, automata, and games},
  author={Gurevich, Yuri and Harrington, Leo},
  booktitle={Proceedings of the fourteenth annual ACM symposium on Theory of computing, {STOC} 1982},
  pages={60--65},
  year={1982}
}

@InProceedings{MPS95,
author="Maler, Oded
and Pnueli, Amir
and Sifakis, Joseph",
title="On the synthesis of discrete controllers for timed systems",
booktitle="Annual Symposium on Theoretical Aspects of Computer Science, {STACS} 1995",
year="1995",
publisher="Springer Berlin Heidelberg",
pages="229--242",
}

@article{BL69,
 author = {J. Richard Buchi and Lawrence H. Landweber},
 journal = {Transactions of the American Mathematical Society},
 pages = {295--311},
 publisher = {American Mathematical Society},
 title = {Solving Sequential Conditions by Finite-State Strategies},
 volume = {138},
 year = {1969}
}

@article{Kozen83,
title = "Results on the propositional $\mu$-calculus",
journal = "Theoretical Computer Science",
volume = "27",
number = "3",
pages = "333 - 354",
year = "1983",
note = "International Colloquium on Automata, Languages and Programming, {ICALP} 1983",
author = "Dexter Kozen",
}

@inproceedings{Mostowski84,
  title={Regular expressions for infinite trees and a standard form of automata},
  author={Mostowski, Andrzej Wlodzimierz},
  booktitle={Symposium on computation theory},
  pages={157--168},
  year={1984},
  organization={Springer}
}

@inproceedings{piterman2006synthesis,
  title={Synthesis of reactive (1) designs},
  author={Piterman, Nir and Pnueli, Amir and Sa{'a}r, Yaniv},
  booktitle={International Workshop on Verification, Model Checking, and Abstract Interpretation, {VMCAI} 2006},
  pages={364--380},
  year={2006},
  publisher={Springer}
}

@article{kress2009temporal,
  title={Temporal-logic-based reactive mission and motion planning},
  author={Kress-Gazit, Hadas and Fainekos, Georgios E and Pappas, George J},
  journal={IEEE transactions on robotics},
  volume={25},
  number={6},
  pages={1370--1381},
  year={2009},
  publisher={IEEE}
}

@INPROCEEDINGS{PitermanPnueli_RabinStreett_2006,
  author={N. {Piterman} and A. {Pnueli}},
  booktitle={21st Annual IEEE Symposium on Logic in Computer Science, {LICS} 2006}, 
  title={Faster Solutions of {R}abin and {S}treett Games}, 
  year={2006},
  volume={},
  number={},
  pages={275-284}
  }

@article{Belta17,
  title={Temporal logic control for stochastic linear systems using abstraction refinement of probabilistic games},
  author={Svore{\v{n}}ov{\'a}, M{\'a}ria and K{\v{r}}et{\'i}nsk{\'y}, Jan and Chmel{\'i}k, Martin and Chatterjee, Krishnendu and {\v{C}}ern{\'a}, Ivana and Belta, Calin},
  journal={Nonlinear Analysis: Hybrid Systems},
  volume={23},
  pages={230--253},
  year={2017},
  publisher={Elsevier}
}

@inproceedings{Majumdar2021,
  author    = {Rupak Majumdar and
               Kaushik Mallik and
               Anne{-}Kathrin Schmuck and
               Sadegh Soudjani},
  editor    = {Rapha{\"{e}}l M. Jungers and
               Necmiye Ozay and
               Alessandro Abate},
  title     = {Symbolic Qualitative Control for Stochastic Systems via Finite Parity
               Games},
  booktitle = {7th {IFAC} Conference on Analysis and Design of Hybrid Systems, {ADHS}
               2021, Brussels, Belgium, July 7-9, 2021},
  series    = {IFAC-PapersOnLine},
  volume    = {54},
  number    = {5},
  pages     = {127--132},
  publisher = {Elsevier},
  year      = {2021},
  url       = {https://doi.org/10.1016/j.ifacol.2021.08.486},
  doi       = {10.1016/j.ifacol.2021.08.486},
  timestamp = {Tue, 14 Sep 2021 14:29:07 +0200},
  biburl    = {https://dblp.org/rec/conf/adhs/MajumdarMSS21.bib},
  bibsource = {dblp computer science bibliography, https://dblp.org}
}

@InProceedings{MPS19,
author="Majumdar, Rupak
and Piterman, Nir
and Schmuck, Anne-Kathrin",
title="Environmentally-Friendly {GR(1)} Synthesis",
booktitle="Tools and Algorithms for the Construction and Analysis of Systems, {TACAS} 2019",
year="2019",
publisher="Springer International Publishing",
address="Cham",
pages="229--246",
}

@book{baierbook,
  title={Principles of model checking},
  author={Baier, Christel and Katoen, Joost-Pieter},
  year={2008},
  publisher={MIT press}
}

@inproceedings{long1994improved,
  title={An improved algorithm for the evaluation of fixpoint expressions},
  author={Long, David E and Browne, Anca and Clarke, Edmund M and Jha, Somesh and Marrero, Wilfredo R},
  booktitle={International Conference on Computer Aided Verification, {CAV} 1994},
  pages={338--350},
  year={1994},
  organization={Springer}
}

@inproceedings{ehlers2016slugs,
  title={Slugs: Extensible {GR(1)} synthesis},
  author={Ehlers, R{\"u}diger and Raman, Vasumathi},
  booktitle={International Conference on Computer Aided Verification, {CAV} 2016},
  pages={333--339},
  year={2016},
  organization={Springer}
}

@article{brenguier2014abssynthe,
  title={AbsSynthe: abstract synthesis from succinct safety specifications},
  author={Brenguier, Romain and P{\'e}rez, Guillermo A and Raskin, Jean-Fran{\c{c}}ois and Sankur, Ocan},
  journal={arXiv preprint arXiv:1407.5961},
  year={2014}
}

@inproceedings{michaud2018reactive,
  title={Reactive synthesis from {LTL} specification with {S}pot},
  author={Michaud, Thibaud and Colange, Maximilien},
  booktitle={Proceedings of the 7th Workshop on Synthesis, SYNT@ CAV},
  year={2018}
}

@book{tabuada2009verification,
  title={Verification and control of hybrid systems: a symbolic approach},
  author={Tabuada, Paulo},
  year={2009},
  publisher={Springer Science \& Business Media}
}

@article{thistle1998control,
  title={Control of $\omega$-automata under state fairness assumptions},
  author={Thistle, John G and Malham{\'e}, RP},
  journal={Systems \& control letters},
  volume={33},
  number={4},
  pages={265--274},
  year={1998},
  publisher={Elsevier}
}

@inproceedings{AminofBK04,
  author    = {Benjamin Aminof and
               Thomas Ball and
               Orna Kupferman},
  editor    = {Franz Baader and
               Andrei Voronkov},
  title     = {Reasoning About Systems with Transition Fairness},
  booktitle = {Logic for Programming, Artificial Intelligence, and Reasoning, 11th
               International Conference, {LPAR} 2004, Montevideo, Uruguay, March
               14-18, 2005, Proceedings},
  series    = {Lecture Notes in Computer Science},
  volume    = {3452},
  pages     = {194--208},
  publisher = {Springer},
  year      = {2004},
  url       = {https://doi.org/10.1007/978-3-540-32275-7\_14},
  doi       = {10.1007/978-3-540-32275-7\_14},
  timestamp = {Tue, 14 May 2019 10:00:55 +0200},
  biburl    = {https://dblp.org/rec/conf/lpar/AminofBK04.bib},
  bibsource = {dblp computer science bibliography, https://dblp.org}
}

@article{schmuck2020relation,
  title={On the relation between reactive synthesis and supervisory control of non-terminating processes},
  author={Schmuck, Anne-Kathrin and Moor, Thomas and Majumdar, Rupak},
  journal={Discrete Event Dynamic Systems},
  volume={30},
  number={1},
  pages={81--124},
  year={2020},
  publisher={Springer}
}
 \newpage
\appendix
\section{Example-Computation of the Rabin Fixpoint}\label{app:ExpFP}
\begin{figure}[b]
\begin{center}
 \includegraphics[width=0.5\linewidth]{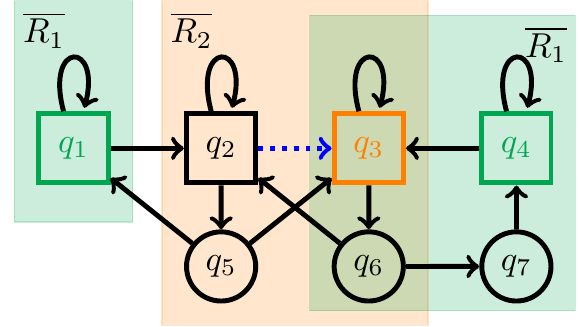}
\end{center}
\caption{Example of a fair adversarial Rabin game with two pairs $\tuple{G_1,R_1}=\tuple{\set{q_1,q_4},\set{q_2,q_5}}$
($G_1$ and $\overline{R_{1}}$ are indicated in green)
and $\tuple{G_2,R_2}=\tuple{\set{q_3},\set{q_1,q_4,q_7}}$ ($G_2$ and $\overline{R_{2}}$ are indicated in orange), 
and one live edge $\El=\set{(q_2,q_3)}$ (dashed blue).}
\label{fig:example}
\end{figure}

Consider the game graph depicted in \REFfig{fig:example}, where circles and squares denote \p{0} and \p{1} vertices, respectively. We are given a Rabin condition with two pairs $\FR = \set{\tuple{G_1,R_1},\allowbreak\tuple{G_2,R_2}}$ s.t.\
 \begin{align*}
\overline{R_{1}}=\set{q_1,q_3,q_4,q_6,q_7}   &&G_{1}=\set{q_1,q_4} &&
\overline{R_{2}}=\set{q_2,q_3,q_5,q_6}    &&G_{2}=\set{q_3}
\end{align*}
which are indicated in green and orange, respectively, in \REFfig{fig:example}. The only live edge in the game graph is indicated in dashed blue from $q_2$ to $q_3$. 
We assert that \p{0} wins from every vertex.
However, in the absence of the live edge, she wins only from $\set{q_3, q_4, q_5, q_6, q_7}$.
(This is because \p{1} can force the game to stay forever in $q_2$ from the remaining states.)

We first flatten the algorithm in \eqref{equ:Rabin_all} for two Rabin pairs. This yields the following algorithm: 
\begin{subequations}\label{equ:example}
 \begin{align}
 \nu Y_0.~\mu X_0&.~\label{equ:example:out}\\
 \big\{\nu Y_1.&\mu X_1.~ \nu Y_2.\mu X_2.~\label{equ:example:up}\\
&\apre(Y_{0},X_{0})\notag\\
\qquad\cup~
&\left(\overline{R}_{1}\cap
 \left[ 
 \left( G_{1}\cap \cpre(Y_{1})\right)
 \cup \left(\apre(Y_{1},X_{1})\right)
 \right]\right)\notag\\
 \qquad\cup~
 &\left(\overline{R}_{1}\cap  \overline{R}_{2}\cap
 \left[ 
 \left( G_{2}\cap \cpre(Y_{2})\right)
 \cup \left(\apre(Y_{2},X_{2})\right)
 \right]\right)\notag\\
 \cup~ \nu Y'_2.&\mu X'_2.~ \nu Y'_1.\mu X'_1.~\label{equ:example:down}\\
&\apre(Y_{0},X_{0})\notag\\
\qquad\cup~
&\left(\overline{R}_{2}\cap
 \left[ 
 \left( G_{2}\cap \cpre(Y'_{2})\right)
 \cup \left(\apre(Y'_{2},X'_{2})\right)
 \right]\right)\notag\\
 \qquad\cup~
 &\left(\overline{R}_{1}\cap  \overline{R}_{2}\cap
 \left[ 
 \left( G_{1}\cap \cpre(Y'_{1})\right)
 \cup \left(\apre(Y'_{1},X'_{1})\right)
 \right]\right)\big\}\notag
\end{align}
\end{subequations}

We first consider the upper part of \eqref{equ:example}, i.e., the permutation sequence $\delta=012$ (labeled by \eqref{equ:example:up}). We first recall that the computation is initialized with $Y_i^0=V$ and $X_i^0=\emptyset$ and we see from the structure of the game graph that $\cpre(V)=V$. Further, we see from the definition of $\apre$ that $\apre(\cdot,\emptyset)=\emptyset$. So, we have 
\begin{align*}
 X_2^1&=(\overline{R}_{1}\cap G_1)\cup (\overline{R}_{1}\cap  \overline{R}_{2}\cap G_2)
 =\set{q_1,q_4}\cup\set{q_3}=\set{q_1,q_3,q_4}.
\end{align*}
As $q_6$ is the only other state in $\overline{R}_{1}\cap  \overline{R}_{2}$ and $q_6$ does not have an edge to $\set{q_1,q_3,q_4}$ the iteration over $X_2$ terminates and we get $Y_2^1=\set{q_1,q_3,q_4}$. As $q_3\not\in\cpre(Y_2^1)$ the last line of the upper part of \eqref{equ:example} becomes the empty set and we terminate with $Y_2^*=X_2^*=(\overline{R}_{1}\cap G_1)=\set{q_1,q_4}$. This gives $X_1^1=\set{q_1,q_4}$ and resets $Y_2$ and $X_2$ to $V$ and $\emptyset$, respectively. Therefore, we now get
\begin{align*}
 X_2^1&=(\overline{R}_{1}\cap G_1)\cup (\overline{R}_{1}\cap \apre(V,X_1^1))\cup (\overline{R}_{1}\cap  \overline{R}_{2}\cap G_2)
 =\set{q_1,q_4}\cup\set{q_7}\cup\set{q_3}.
\end{align*}
Now, as $q_7\in X_2^1$, also $q_6$ is added before $X_2$ terminates. This now gives $Y_2^1=\set{q_1,q_3,q_4,q_6,q_7}$ and hence $q_3\in\cpre(Y_2^1)$. As there are no other states in $\overline{R}_{1}\cap  \overline{R}_{2}\cap G_2$ that can be added to this set, the iteration over $X_2$ terminates and we get $Y_2^2=\set{q_1,q_3,q_4,q_6,q_7}$, which also terminates the iteration over $Y_2$, resulting in $X_1^2=\set{q_1,q_3,q_4,q_6,q_7}$.
As there are again no other states inside $\overline{R}_{1}$ that could be added, this iteration over $X_1$ terminates, giving $Y_1^1=\set{q_1,q_3,q_4,q_6,q_7}$. Now we see that $\cpre(Y_1^1)=\set{q_3,q_4,q_6,q_7}$. As the exclusion of $q_1$ from $Y_1$ does not influence the reasoning about $\set{q_3,q_4,q_6,q_7}$ the iteration terminates with $Y_1^*=\set{q_3,q_4,q_6,q_7}$.

Now we consider the lower part of \eqref{equ:example}, i.e., the permutation sequence $\delta=021$ (labeled by \eqref{equ:example:down}). Here, we get 
\begin{align*}
 {X'_1}^1=(\overline{R}_{2}\cap G_2)\cup (\overline{R}_{1}\cap  \overline{R}_{2}\cap G_1)=\set{q_3}\cup\emptyset=\set{q_3}.
\end{align*}
For the same reason as before we see again that the last line of the lower part of \eqref{equ:example} becomes the empty set and we terminate with ${Y'_1}^*={X'_1}^*=(\overline{R}_{2}\cap G_2)=\set{q_3}$. This gives ${X'_2}^1=\set{q_3}$ and resets $Y'_1$ and $X'_1$ to $V$ and $\emptyset$, respectively. With this, we now get 
\begin{align*}
 {X'_1}^2&=(\overline{R}_{2}\cap G_2)\cup \apre(Q,{X'_2}^1)\cup (\overline{R}_{1}\cap  \overline{R}_{2}\cap G_1)
 =\set{q_3}\cup\set{q_2,q_5}\cup\emptyset.
\end{align*}
Here, for the first time, the live edge from $q_2$ to $q_3$ comes into play. If this would not be a live edge, $q_2$ would not be added to $X'_1$, as in this case the environment could trap the game in $q_2$, and thereby prevent the second Rabin pair to hold. However, due to the edge from  $q_2$ to $q_3$ being live, we know that the environment will always eventually transition from $q_2$ to $q_3$. With this, now also $q_6$ is added to $X'_1$, finally leading to a termination of the iteration over $X'_2$ with $\set{q_2,q_3,q_5,q_6}$ and hence ${Y'_2}^1=\set{q_2,q_3,q_5,q_6}$. As $q_3\in\cpre({Y'_2}^1)$ the iteration over $Y'_2$ terminates with ${Y'_2}^*=\set{q_2,q_3,q_5,q_6}$. 

With both the upper and the lower part of \eqref{equ:example} terminated, we can now take the union of $Y_1^*=\set{q_3,q_4,q_6,q_7}$ and ${Y'_2}^*=\set{q_2,q_3,q_5,q_6}$ to get $X_0^1=\set{q_2\hdots q_7}$ (reaching the part of the formula labeled with \eqref{equ:example:out}). After this update of $X_0$ all inner fixpoint variables (in \eqref{equ:example:up} and \eqref{equ:example:down}) are reset, and the upper and lower expressions in \eqref{equ:example} are re-evaluated. As $\apre(Q,X_0^1)=\set{q_2\hdots q_7}$, we see that every iteration over $X_i$ in \eqref{equ:example:up} and \eqref{equ:example:down} is essentially initialized with a set containing $\set{q_2\hdots q_7}$. This implies that $q_1$ will actually remain within $Y_1$, leading to $Y_1^*=V$, and with this $X_0^2=V$. As this implies $Y_0^1=V=Y_0^0$, the computation terminates with $Z^*=V$.

Despite all states being winning, we see that \p{0} has to play appropriately to enforce winning. 
Intuitively, from state $q_5$ she must go to $q_3$ and from $q_6$ she has to consistently either 
(i) always go to $q_2$ or 
(ii) always go to $q_7$. 
If she picks option (i), the play is won by satisfying the second Rabin pair, i.e., always eventually visiting $q_3$ while remaining within $\overline{R_2}$. 
If she picks option (ii), it is up to the environment whether the game is won by satisfying the first or the second Rabin pair. 
Intuitively, if the environment plays such that either (a) the game eventually remains in $q_4$ or (b) the edges $(q_4,q_3)$ and $(q_3,q_6)$ are taken infinitely often, 
the game fulfills the first Rabin condition. 
If, however, (c), the environment decides to trap the game in $q_3$, the game is won by satisfying the second Rabin pair. 
This influence of the environment on the selection of the satisfied Rabin pair intuitively requires the evaluation of all possible permutation sequences 
in the evaluation of the fixpoint algorithm. 
We will see later that for Rabin pairs which are ordered by inclusion (corresponding to the special case of a Rabin-chain condition), no permutation is required.

We comment that the strategy construction outlined in \REFthm{thm:strategy} provided in \REFapp{appD} chooses to enforce a transition from $q_6$ to $q_7$ (see Example~\ref{example:appendix:strategy} in \REFapp{appD} for a detailed discussion).

\section{Detailed Proofs}

\subsection{General Lemmas}
We first introduce some useful general lemmas.

\begin{lemma}\label{lem:AsubsetC}
 If $Y\supseteq X$ then
 $\cpre(Y)\cup\apre(Y,X)=\cpre(Y)$.
\end{lemma}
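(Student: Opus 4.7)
The plan is to unfold the definitions of $\apre$ and $\cpre$ and show the inclusion $\apre(Y,X) \subseteq \cpre(Y)$; the claimed equality then follows immediately since the right-hand side already contains $\cpre(Y)$.

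First I would expand $\apre(Y,X) = \cpre(X) \cup \bigl(\elpre(X) \cap \eapre_1(Y)\bigr)$ from \eqref{equ:apre}, and then handle the two summands separately. For the first summand, monotonicity of $\cpre$ in its argument (which is immediate from the definitions of $\epre_0$ and $\eapre_1$) together with the assumption $X \subseteq Y$ yields $\cpre(X) \subseteq \cpre(Y)$. For the second summand, observe that $\elpre(X) \cap \eapre_1(Y) \subseteq \eapre_1(Y) \subseteq \cpre(Y)$, where the last inclusion is by the definition of $\cpre$ in \eqref{equ:cpre}. Combining, $\apre(Y,X) \subseteq \cpre(Y)$, so $\cpre(Y) \cup \apre(Y,X) = \cpre(Y)$.

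There is no real obstacle here: the statement is a routine consequence of the definitions and the fact that the only ``live'' contribution of $\apre(Y,X)$ beyond $\cpre(X)$ is already subsumed by the universal predecessor $\eapre_1(Y)$, which in turn sits inside $\cpre(Y)$. The only thing to be careful about is not to confuse the roles of the two arguments of $\apre$: the first argument $Y$ appears under $\eapre_1$ (the ``safety'' part), while the second argument $X$ appears under $\elpre$ and $\cpre$ (the ``progress'' part); the hypothesis $Y \supseteq X$ is exactly what makes the progress part dominated by the safety part relative to $\cpre(Y)$.
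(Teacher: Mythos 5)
Your proof is correct and uses the same two key facts as the paper's own derivation, namely that $\cpre(X)\subseteq\cpre(Y)$ by monotonicity under $X\subseteq Y$, and that $\eapre_1(Y)\subseteq\cpre(Y)$ by the definition of $\cpre$. The paper packages these into a chain of equalities using distributivity, whereas you argue the inclusion $\apre(Y,X)\subseteq\cpre(Y)$ directly, but the argument is essentially identical.
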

\begin{proof}
The claim follows from the following derivation
 \begin{align*}
 \cpre(Y)\cup\apre(Y,X)
 &=\cpre(Y)\cup\cpre(X)\cup\left(\elpre(X)\cap\eapre_1(Y)\right)\\
 &=\cpre(Y)\cup\left(\elpre(X)\cap\eapre_1(Y)\right)\\
 &=\left(\cpre(Y)\cup\elpre(X)\right)\cap\left(\cpre(Y)\cup\eapre_1(Y)\right)\\
 &=\left(\cpre(Y)\cup\elpre(X)\right)\cap\cpre(Y)\\
 &=\cpre(Y)
\end{align*}
where the second line follows from $\cpre(X)\subseteq\cpre(Y)$ (as $X\subseteq Y$) and the fourth line follows as 
$\cpre(Y)=\epre_0(Y)\cup\eapre_1(Y)\supseteq\eapre_1(Y)$.
\end{proof}

\begin{lemma}\label{lem:Ared}
 If $Y\subseteq X$ then
 $\apre(Y,X)=\cpre(X)$.
\end{lemma}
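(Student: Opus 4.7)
The plan is to prove the two set inclusions $\apre(Y,X) \supseteq \cpre(X)$ and $\apre(Y,X) \subseteq \cpre(X)$ separately, working directly from the definitions in \eqref{equ:ell_pre} and \eqref{equ:pre}.

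The inclusion $\apre(Y,X) \supseteq \cpre(X)$ is immediate from the definition $\apre(Y,X) = \cpre(X) \cup (\elpre(X) \cap \eapre_1(Y))$ in \eqref{equ:apre}, since the first term on the right-hand side is already $\cpre(X)$. This direction requires no hypothesis on $Y$ and $X$.

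For the reverse inclusion, I would show that $\elpre(X) \cap \eapre_1(Y) \subseteq \cpre(X)$ whenever $Y \subseteq X$. Pick any $v \in \elpre(X) \cap \eapre_1(Y)$. By definition of $\eapre_1$, we have $v \in V_1$ and $E(v) \subseteq Y$. Using the hypothesis $Y \subseteq X$, it follows that $E(v) \subseteq X$, so $v \in \eapre_1(X)$. Since $\cpre(X) = \epre_0(X) \cup \eapre_1(X)$, we conclude $v \in \cpre(X)$, which gives the inclusion. Combining both directions yields $\apre(Y,X) = \cpre(X)$.

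There is no real obstacle here: the lemma is essentially a definitional unfolding, complementary to \REFlem{lem:AsubsetC}, and the monotonicity of $\eapre_1$ does all the work. The only subtlety is remembering that the condition $v \in \Vl$ coming from $\elpre(X)$ plays no role in this direction, because the $\eapre_1(Y)$ factor alone, together with $Y \subseteq X$, already forces $v$ into $\cpre(X)$.
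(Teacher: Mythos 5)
Your proof is correct and relies on the same key observation as the paper, namely that $Y\subseteq X$ implies $\eapre_1(Y)\subseteq\eapre_1(X)\subseteq\cpre(X)$; the paper just phrases the absorption algebraically via distributivity of $\cup$ over $\cap$ rather than element-wise. No substantive difference.
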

\begin{proof}
The claim follows from the following derivation
 \begin{align*}
 \apre(Y,X)
 &=\cpre(X)\cup\left(\elpre(X)\cap\eapre_1(Y)\right)\\
 &=\left(\cpre(X)\cup\elpre(X)\right)\cap\left(\cpre(X)\cup\eapre_1(Y)\right)\\
 &=\left(\cpre(X)\cup\elpre(X)\right)\cap\cpre(X)\\
 &=\cpre(X)
\end{align*}
where  the fourth line follows as 
$\cpre(X)=\epre_0(X)\cup\eapre_1(X)\supseteq\eapre_1(Y)$ as $Y\subseteq X$.
\end{proof}

\begin{lemma}\label{lem:contain}
Let $f(X,Y)$ and $g(X,Y)$ be two functions which are monotone in both $X\subseteq V$ and $Y\subseteq V$. Further, let
  \begin{align*}
  Z_a:=&\nu Y_a.~\mu X_a.~ \nu Y_b.~\mu X_b.~f(X_a,Y_a)\cup g(X_b,Y_b)\\
  Z_b:=&\nu Y_a.~\mu X_a.~ \nu Y_b.~\mu X_b.~g(X_a,Y_a)\cup f(X_b,Y_b)\\
  Z_c:=&\nu Y_c.~\mu X_c.~f(X_c,Y_c)
 \end{align*}
Then it holds that 
\begin{compactenum}[(i)]
 \item $Z_c\subseteq Z_a$ and
 \item $Z_c\subseteq Z_b$.
\end{compactenum}
If, in addition, $g(X,Y)\subseteq f(X,Y)$ for all $X,Y\subseteq V$, then it holds that
\begin{compactenum}[(i)]
 \item[(iii)] $Z_a=Z_c$ and
 \item[(iv)] $Z_b=Z_c$.
 \end{compactenum}
\end{lemma}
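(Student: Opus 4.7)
The plan is to dispatch parts (i) and (ii) via straightforward monotonicity, and then derive (iii) and (iv) by combining the Knaster--Tarski characterization of the outer $\nu$ with a pre-fixpoint argument that exploits the hypothesis $g\subseteq f$.

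First, I would set $\Phi(X_a, Y_a) \coloneqq \nu Y_b.\mu X_b.[f(X_a, Y_a)\cup g(X_b,Y_b)]$ and observe that $\Phi(X_a, Y_a)\supseteq f(X_a,Y_a)$ for all $X_a, Y_a$. This holds because for any $Y_b$ we have $\mu X_b.[f(X_a,Y_a)\cup g(X_b,Y_b)]\supseteq f(X_a,Y_a)$ (the set $f(X_a,Y_a)$ is contained in every iterate of the inner $\mu X_b$), so $f(X_a,Y_a)$ is a post-fixpoint of the functional $Y_b\mapsto \mu X_b.[\cdots]$, and Knaster--Tarski yields $f(X_a,Y_a)\subseteq \Phi(X_a,Y_a)$. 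Monotonicity of the outer $\nu Y_a.\mu X_a$ operator then gives $Z_a\supseteq \nu Y_a.\mu X_a.f(X_a,Y_a)=Z_c$, proving (i). Part (ii) is even more direct: the inner expression in $Z_b$ contains $\nu Y_b.\mu X_b.f(X_b,Y_b)=Z_c$ as a lower bound by monotonicity of the fixpoints in the disjunct, so $Z_b\supseteq \nu Y_a.\mu X_a.Z_c=Z_c$ (the constant fixpoint).

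For (iii) and (iv), in view of (i) and (ii) it suffices to show the reverse inclusions under the additional hypothesis $g(X,Y)\subseteq f(X,Y)$. I would use the Knaster--Tarski characterization $Z_c=\bigcup\{Y\colon Y\subseteq \mu X.f(X,Y)\}$, so for (iii) it suffices to show $Z_a\subseteq \mu X.f(X,Z_a)$. Set $L\coloneqq \mu X.f(X,Z_a)$. Since $Z_a$ is the outer $\nu$ fixpoint, we have the identity $Z_a=\mu X_a.\Phi(X_a,Z_a)$, and the desired inclusion $Z_a\subseteq L$ follows once we show that $L$ is a pre-fixpoint of $X_a\mapsto \Phi(X_a,Z_a)$, i.e.\ $\Phi(L,Z_a)\subseteq L$. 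Using $f(L,Z_a)=L$ this reduces to proving
\[
\nu Y_b.\mu X_b.\bigl[L\cup g(X_b,Y_b)\bigr]\;\subseteq\; L.
\]

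The main obstacle will be this inner containment. The plan is to leverage $g\subseteq f$ together with the fixpoint identity $L=f(L,Z_a)$: any post-fixpoint $Y_b^\ast$ of the inner functional satisfies $Y_b^\ast\subseteq \mu X_b.[L\cup g(X_b,Y_b^\ast)]\subseteq \mu X_b.[L\cup f(X_b,Y_b^\ast)]$, and I would show by a bootstrap argument—first bounding $Y_b^\ast\subseteq Z_a$ using the outer fixpoint identity, then exploiting that $L$ is closed under $f(\cdot,Z_a)$ so that every iterate of $\mu X_b.[L\cup f(X_b,Y_b^\ast)]$ starting from $L$ stays inside $L$—that this last fixpoint collapses to $L$, yielding $Y_b^\ast\subseteq L$ as required. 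The argument for (iv) is entirely symmetric, obtained by swapping the roles of the two nested fixpoints in the pre-fixpoint reduction above.
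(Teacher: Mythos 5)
Your proposal is correct in substance but follows a genuinely different route from the paper. The paper argues operationally: it tracks the Kleene iterates $X_a^{mn}$, $Y_b^{mn\ast}$ of the nested fixpoints explicitly and, for parts (iii)--(iv), invokes the warm-starting result of Long et al.\ to argue that the first outer iterate already stabilises at $Z_c$. You instead work order-theoretically with Knaster--Tarski throughout: (i)--(ii) follow from the pointwise domination $f(X_a,Y_a)\subseteq\Phi(X_a,Y_a)$ (your post-fixpoint argument for the inner $\nu Y_b$) plus monotonicity of the fixpoint operators in the functional, and (iii)--(iv) are reduced to exhibiting $L=\mu X.f(X,Z_a)$ as a pre-fixpoint of $X_a\mapsto\Phi(X_a,Z_a)$. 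Your bootstrap does close: $L\subseteq Z_a$ follows from $f\subseteq\Phi$ and monotonicity of $\mu$, hence $Y_b^\ast\subseteq\Phi(L,Z_a)\subseteq\Phi(Z_a,Z_a)=Z_a$, hence $f(L,Y_b^\ast)\subseteq f(L,Z_a)=L$, so every $\mu X_b$-iterate of $L\cup f(\cdot,Y_b^\ast)$ stays inside $L$, giving $\Phi(L,Z_a)\subseteq L$ and then $Z_a\subseteq Z_c$ via the post-fixpoint characterisation of $Z_c$. This is arguably cleaner than the paper's argument; in particular the paper's proof of (iv) asserts the intermediate identity $Y_b^{00\ast}=f(X_b^{00\ast\ast},Y_b^{00\ast})=Z_c$, which is not valid in general (the seed $g(\emptyset,V)$ can push the first inner evaluation strictly above $Z_c$; only the subsequent outer $\nu Y_a$ iterations shrink it back), whereas your argument avoids this step entirely.

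One caveat: part (iv) is not \emph{entirely} symmetric to part (iii), and your plan leans on that symmetry. In $Z_b$ the outer variables feed $g$ and the inner ones feed $f$, so the analogue of your first bootstrap step fails: with $\Psi(X_a,Y_a)=\nu Y_b.\mu X_b.[g(X_a,Y_a)\cup f(X_b,Y_b)]$ one does \emph{not} have $f(X,Y)\subseteq\Psi(X,Y)$ pointwise, so you cannot deduce $L'=\mu X.f(X,Z_b)\subseteq Z_b$ the way you deduced $L\subseteq Z_a$. The bound is still true, but for a different reason: since $Z_b=\Psi(Z_b,Z_b)=\mu X_b.[g(Z_b,Z_b)\cup f(X_b,Z_b)]\supseteq\mu X_b.f(X_b,Z_b)=L'$, the inclusion comes from the \emph{inner} $f$-disjunct of $Z_b$'s own defining formula. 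With $L'\subseteq Z_b$ in hand, the rest of your argument ($g(L',Z_b)\subseteq f(L',Z_b)=L'$, then $\Psi(L',Z_b)\subseteq\Psi(Z_b,Z_b)=Z_b$, then the inner induction keeping the iterates inside $L'$) goes through verbatim. So the gap is only in the phrase ``entirely symmetric''; the toolkit you set up suffices once this one step is re-justified.
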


\begin{proof}
 We prove all claims separately:\\
\begin{inparaitem}[$\blacktriangleright$]
 \item \textit{(i)} \enquote{$Z_c\subseteq Z_a$} :
  First, consider a stage of the fixpoint evaluation where $Y_a$ and $X_a$ have their initialization value $Y_a^0:=V$ and $X_a^{00}:=\emptyset$ (here, the notation $X_a^{lk}$ refers to the value of $X_a$ computed in the $k$'th iteration over $X_a$ using the value for $Y_a$ computed in the $l$'th iteration over~$Y_a$). Then we see that $X_a^{01}=Y_b^{00*}$ where $Y_b^{00*}=f(\emptyset,V)\cup g(Y_b^{00*},Y_b^{00*})$. We therefore see that $X_a^{01}\supseteq X_c^{01}=f(\emptyset,V)$. With this, it follows from the monotonicity of $f$ and $g$ that $Y_a^{01}=X_a^{0*}\supseteq X_c^{0*}=Y_c^{1}$. 
  With this, we see that $X_a^{m1}\supseteq X_c^{m1}$ for all $m>0$ and therefore $Z_a=Y_a^{*}\supseteq Y_c^{*}=Z_c$. \\
\item \textit{(ii)} \enquote{$Z_c\subseteq Z_b$} :
Consider arbitrary values $Y_a^m$ and $X_a^{mn}$ and assume that $Y_b$ and $X_b$ have their initialization value, i.e., $Y_b^{mn0}:=V$ and $X_b^{mn00}:=\emptyset$. Then we have 
\begin{equation*}
 X_b^{mn01}=g(X_a^{mn},Y_a^m)\cup f(\emptyset,V)\supseteq X_c^{01}.
\end{equation*}
Using the same reasoning as in the previous part, we see that this implies $Y_b^{mn*}\supseteq Y_c^{*}=Z_c$. As this holds for any $m$ and $n$ it also holds when the fixed-point over $Y_a$ and $X_a$ is obtained, i.e., when we have $Z_a=Y_a^*=Y_b^{***}$, which proves the statement.\\
\item \textit{(iv)} \enquote{$Z_c\supseteq Z_b$} :
 First, observe that for the initialization values $Y_a^0=Y_b^{000}=V$ and $X_a^{00}=X_b^{0000}=\emptyset$ we have $g(\emptyset,V)\subseteq f(\emptyset,V)$. We therefore have 
 \begin{equation*}
 Y_b^{00*}=X_b^{00**}=f(X_b^{00**},Y_b^{00*})= Z_c
\end{equation*}
Now it remains to show, that the outer fixpoint cannot add any additional states. First, observe that $X_a^{01}= Y_b^{00*}$ and 
 \begin{equation*}
 X_b^{0100}=g(X_a^{01},V)\cup f(\emptyset,V)\subseteq f(X_a^{01},V)\cup f(\emptyset,V)=f(X_a^{01},V)
\end{equation*}
Now it follows from the famous acceleration result of \citet{long1994improved} that warm-starting the inner fixpoint computation with $X_a^{01}$ yields the same inner fixpoint. With this, we see that $X_a^{0n}=Z_c$ for all $n$, implying $Y_a^0=X_a^{0*}=Z_c$. As $Z_b=Y_a^*\subseteq Y_a^0$, this proves the claim.\\
\item \textit{(iii)} \enquote{$Z_c\supseteq Z_a$} :
As $g(X,Y)\subseteq f(X,Y)$ for all $X,Y\subseteq V$ it follows from the monotonicity of $g$ and~$f$ that 
\begin{equation*}
 Z_a\subseteq\nu Y_a.~\mu X_a.~ \nu Y_b.~\mu X_b.~f(X_a,Y_a)\cup f(X_b,Y_b)
\end{equation*}
with this, it follows from (iv) that $Z_a\subseteq Z_c$, what proves the claim. 
\end{inparaitem}
\end{proof}


\subsection{Additional Proofs for \REFsec{sec:RabinGames}}

\subsubsection{Proof of \REFthm{thm:Reachability}}\label{app:prop:Reachability}
\begin{theorem*}[\REFthm{thm:Reachability} restated for convenience]
  Let $\Gl =\tup{\game, \El}$ be a game graph with live edges and $\tuple{T,Q}$ be a safe reachability winning condition. Further, let
\begin{equation}\label{p:equ:Reach:FP}
 Z^*\coloneqq\nu Y.~\mu X.~T \cup (Q\cap\apre(Y,X)).
\end{equation}
Then $Z^*$ is equivalent to the winning region of $\p{0}$ in the fair adversarial game over $\Gl$ for the winning condition $\psi$ in~\eqref{equ:Reach:psi}. Moreover, the fixpoint algorithm runs in $O(n^2)$ symbolic steps, and a memoryless winning strategy for $\p{0}$ can be extracted from it.
\end{theorem*}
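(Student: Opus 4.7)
My plan is to prove \REFthm{thm:Reachability} by establishing soundness ($Z^*\subseteq\WlR$) and completeness ($\WlR\subseteq Z^*$) through explicit strategy constructions, exploiting the rank structure of the inner $\mu$-iteration and the outer $\nu$-iteration respectively. Because $\apre(Y,X)$ collapses to $\cpre(X)$ at non-live vertices, the argument specialises to the classical reach-avoid fixpoint when $\Vl=\emptyset$; I would view the fair adversarial version as an augmentation tracking the additional edges offered to~\p{0} by live transitions together with the constraints that strong transition fairness imposes on~\p{1}.

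For soundness, I would fix $Y=Z^*$ and run the inner iteration $X^0=\emptyset,\ X^{i+1}=T\cup(Q\cap\apre(Z^*,X^i))$, which stabilises at $X^*=Z^*$, and define $\mathrm{rk}(v)$ as the least $i$ with $v\in X^i$. Unfolding $\apre$, every $v\in Z^*\setminus T$ of rank $r\geq 2$ falls in exactly one of three cases: (a)~$v\in V_0\cap\cpre(X^{r-1})$ has a successor of strictly lower rank; (b)~$v\in V_1\setminus\Vl$ has all successors in $X^{r-1}$; or (c)~$v\in\Vl$ has all successors in $Z^*$ together with at least one live edge landing in $X^{r-1}$. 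The memoryless \p{0}-strategy $\rho_0$ selects the rank-decreasing successor in case~(a). Any play compliant with $\rho_0$ stays within $Z^*\subseteq T\cup Q$, so $Q$ is maintained until $T$ is first visited. To argue that $T$ is reached, I would let $M$ be the smallest rank visited infinitely often along a compliant play: $M=1$ directly yields visits to $T=X^1$, and if $M\geq 2$ then some rank-$M$ vertex $v^*$ is visited infinitely often, yet cases~(a) and~(b) at $v^*$ each force a visit to rank $<M$ immediately after every occurrence (contradicting the minimality of~$M$), while in case~(c) the strong-transition-fairness axiom~\eqref{equ:vl} forces the rank-decreasing live edge from $v^*$ to fire infinitely often, again producing visits to rank $<M$.

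For completeness, I would work with the outer iteration $Y^0=V,\ Y^{i+1}=F(Y^i)$ where $F(Y)\coloneqq\mu X.\,T\cup(Q\cap\apre(Y,X))$ and $Z^*=\bigcap_iY^i$, and for $v\in V\setminus Z^*$ define $\mathrm{corank}(v)$ to be the least $i$ with $v\in Y^i\setminus Y^{i+1}$. A dual unfolding of $v\notin F(Y^i)$ shows that every such~$v$ satisfies one of: (a')~$v\notin Q\cup T$, so~\p{0} has already failed $Q\,\mathcal{U}\,T$; (b')~$v\in V_0\cap Q\setminus T$ with every successor in $\overline{Y^{i+1}}\subseteq\overline{Z^*}$; or (c')~$v\in V_1\cap Q\setminus T$ with some successor in $\overline{Y^{i+1}}$ and, when $v\in\Vl$, either all live edges of~$v$ lie in $\overline{Y^{i+1}}$ (sub-case~A) or some edge of~$v$ exits $Y^i$ entirely (sub-case~B). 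The \p{1}-strategy $\rho_1$ cycles through live edges in sub-case~A — satisfying fairness while keeping the play outside $Z^*$ — and picks an $\overline{Y^i}$-exit in sub-case~B, strictly decreasing the corank. Under $\rho_1$ the corank is non-increasing along every compliant play, since both players' moves from a corank-$i$ vertex land in $\overline{Y^{i+1}}$, whose vertices all have corank at most~$i$. Consequently each sub-case-B vertex is visited at most finitely often and its unfired live edges pose no fairness obligation, while sub-case-A vertices are handled by explicit cycling. The play therefore remains in $\overline{Z^*}\subseteq V\setminus T$ and either leaves $Q$ immediately or never reaches $T$; memorylessness of~$\rho_0$ and the $O(n^2)$ bound on symbolic steps are immediate from the rank-based construction and the $\nu\mu$ alternation depth.

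The subtlest point I expect lies in the completeness step — specifically, proving that $\rho_1$ is genuinely strongly transition fair across all sub-case-B vertices of all coranks without circular dependence between corank levels. The corank-monotonicity argument sketched above is exactly what rules out such circularity, since once the corank drops it cannot climb back and each sub-case-B vertex is thus visited only finitely often; but making this compositional across the outer iteration while also yielding a single coherent memoryful~\p{1}-strategy is the real technical work. The soundness proof and the complexity bound are by comparison relatively direct consequences of the rank-based construction.
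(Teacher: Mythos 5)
Your proposal is correct and follows essentially the same route as the paper's proof in Appendix~\ref{app:prop:Reachability}: a rank read off the inner $\mu$-iteration drives a memoryless min-rank \p{0}-strategy together with a ``fairness forces progress at live vertices'' argument for soundness, while a co-rank read off the outer iteration drives a spoiling \p{1}-strategy that escapes at rank-decreasing live vertices (which are therefore visited only finitely often and incur no fairness obligation) and stays trapped at the others for completeness. The only cosmetic differences are that the paper obtains the co-rank by syntactically negating the fixpoint and uses a memoryless min-co-rank \p{1}-strategy, whereas you derive the co-rank directly from the outer $\nu$-iteration and let \p{1} cycle through live edges at fully-trapped vertices, which makes the fairness of the spoiling play slightly more explicit.
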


We denote by $Y^m$ the $m$-th iteration over the fixpoint variable $Y$ in \eqref{p:equ:Reach:FP}, where $Y^0=V$. Further, we denote by $X^{mi}$ the set computed in the $i$-th iteration over the fixpoint variable $X$ in \eqref{p:equ:Reach:FP} during the computation of  $Y^m$ where  $X^{m0}=\emptyset$.
Then it follows form \eqref{p:equ:Reach:FP} that 
\begin{align*}
X^{m1}&=X^{m0}\cup T\cup (Q\cap \apre(Y^{m-1},X^{m0}))
=\emptyset\cup T\cup (Q\cap \apre(Y^m,\emptyset))=T,\\
X^{m2}&=X^{m1}\cup T\cup (Q\cap \apre(Y^{m-1},X^{m1}))
=T\cup (Q\cap \apre(Y^{m-1},X^{m1}))\supseteq X^{m1},
\end{align*}
and therefore, in general,
\begin{align*}
X^{mi+1}=T\cup (Q\cap \apre(Y^{m-1},X^{mi}))\supseteq X^{mi}.
\end{align*}
With this, the fixed-point over $X$ corresponds to the set $X^{m*}=\bigcup_{i>0} X^{mi}=X^{mi^\uparrow}$, where $i^\uparrow$ is the iteration where the fixed-point over $X^{mi}$ is attained. 

Now consider the computation of $Y$. Here we have $Y^0=V$ and $Y^m=Y^{m-1}\cap X^{m*}\subseteq Y^{m-1}$ where equality holds when a fixed-point is reached. Hence, in particular we have $Y^*=X^{**}=Z^*$. For simplicity we denote $X^{*i}$ by $X^i$.

\smallskip
\noindent\textbf{Strategy construction.}
In order to construct a winning strategy for $\p{0}$ from \eqref{p:equ:Reach:FP}, we construct a ranking over $V$ by choosing 
\begin{align}
\label{equ:Reach:ranking}
 \rank{v}=i~\Leftrightarrow~v\in X^i\setminus X^{i-1}\quad\text{and}\quad
 \rank{v}=\infty~\Leftrightarrow v\notin Z^*.
\end{align}
As $X^0=\emptyset$, $X^1=T$ (from above) and $Z^*=\bigcup_{i>0} X^i$, it follows that $\rank{v}=1$ iff $v\in T$ and $1<\rank{v}<\infty$ iff $v\in Z^*\setminus T$. Using this ranking we define a $\p{0}$ strategy $\rho_0:V_0\rightarrow V$ s.t.\
\begin{equation}\label{equ:Reach:rho}
\rho_0(v)=\min_{(v,w)\in E} \rank{w}.
\end{equation}
We next show that this player $0$ strategy is actually winning w.r.t.\ $\psi$ (in \eqref{equ:Reach:psi}) in every fair adversarial play over $\Gl$.

\smallskip
\noindent\textbf{Soundness.}
To prove soundness, we need to show $Z^*\subseteq \mathcal{W}$. That is, we need to show that for all $v\in Z^*$ there exists a strategy for player $0$ s.t.\ the goal set $T$ is eventually reached along all live compliant plays $\pi$ starting at $v$ while staying in $Q$. We choose $\rho_0$ in \eqref{equ:Reach:rho} and show that the claim holds. 

First, it follows from the definition of $\apre$ that for a vertex $v\in Z^*$ exactly one of the following cases holds:\\
\begin{inparaenum}[(a)]
\item $v\in T$ and hence $\rank{v}=1$,\\
\item $v\in (V_0\cap Z^*)\setminus T$, i.e., $1<\rank{v}<\infty$ and $v\in Q$ and there exists a $v'\in E(v)$ with $\rank{v'}<\rank{v}$,\\ 
\item $v\in ((V_1\setminus \Vl) \cap Z^*))\setminus T$, i.e., $1<\rank{v}<\infty$ and $v\in Q$ and for all $v'\in E(v)$ it holds that $\rank{v'}<\rank{v}$, or\\
\item[($\ell$)] $v\in (\Vl\cap Z^*)\setminus T$, i.e., $1<\rank{v}<\infty$ and $v\in Q$ and there exists a $v'\in \El(v)$  with $\rank{v'}<\rank{v}$ \emph{and} $E(v)\subseteq Z^*$.\\
\end{inparaenum}
We see that $\rho_0(v)$ chooses one existentially quantified edge in (b) vertices. In all other cases player $1$ chooses the successor. 

Further, we see that any play $\pi$ which starts in $\pi(0)=v\in Z^*$ and obeys $\rho_0$ has the property that $\pi(k)\in Z^*\setminus T$ implies $\pi(k)\in Q$ and $\pi(k+1)\in Z^*$ for all $k\geq 0$. This, in turn, means that for any such state $v=\pi(k)\in Z^*\setminus T$ as well as for its successor $\pi(k+1)$ a rank is defined, i.e., $\pi(k)\in X^{i}$ for some $0<i<\infty$ and exactly one of the cases (b)-($\ell$) applies. 
 We call a vertex for which case ($\alpha$) applies, an ($\alpha$) vertex.

Now observe that the above reasoning implies that whenever an (a) vertex is hit along a play $\pi$ the claim holds. We therefore need to show that any play starting in $v\in Z^*$ eventually reaches an (a) vertex. First, consider a play in which no ($\ell$) vertex occurs. Then constantly hitting (b) and (c) vertices always reduces the rank of visited states (as we assume that $\pi$ obeys~$\rho_0$ in \eqref{equ:Reach:rho}). As the maximal rank is finite, we see that we must eventually hit a state with rank $1$, which is an (a) state. 

Note that the same argument holds when only a finite number of ($\ell$) vertices is visited along $\pi$. In this case we know that from some time onward no more ($\ell$) vertex occurs. As the last~($\ell$) vertex has a finite rank, there can only be a finite sequence of (b) and (c) vertices afterwards until finally an (a) vertex is reached.

We are therefore left with showing that on every path with an infinite number of ($\ell$) vertices, eventually an (a) vertex will be reached. We prove this claim by contradiction. I.e., we show that there cannot exist a path with infinitely many ($\ell$) vertices and no (a) vertex.

We first show that infinitely many ($\ell$) vertices and no (a) vertices in $\pi$ imply that vertices with rank $2$ can only occur finitely often along~$\pi$.\\
\begin{inparaitem}[$\blacktriangleright$]
 \item Recall that the construction of $\rho_0$ ensures that whenever we visit a state $v\in V_0\cap Z^*$ with $\rank{v}=2$ we will surely visit a state with rank~$1$ afterwards, implying the occurrence of a vertex labeled (a). As no (a) labeled vertices are assumed to occur along $\pi$, no (b) vertices with $\rank{v}=2$ occur along $\pi$.\\
\item Now assume that $v\in V_1\cap Z^*$ with $\rank{v}=2$. If $v$ is a (c) vertex all successor states will have rank $1$. With the same reasoning as before, this cannot occur.\\
\item Now assume that $v\in V_1\cap Z^*$ with $\rank{v}=2$ is labeled with ($\ell$). In this case there surely exists a successor $v'$ of $v$ s.t. $(v,v')\in E^\ell$ and $\rank{v'}=1$. But there \emph{might} also exist another successor $v''$ of $v$ (i.e., ($v''\in E(v)$) s.t. $\rank{v''}>1$. If there does not exists such a successor $v''$, all successors have rank $1$ and we again cannot visit $v$.\\
\item Now assume that $v\in V_1\cap Z^*$ with $\rank{v}=2$, labeled with ($\ell$) and there exists a successor $v''\in E(v)$ s.t. $\rank{v''}>1$. Now let us assume that such a state $v$ is visited infinitely often along $\pi$. As $\pi$ is a fair adversarial play over $G$ we know that visiting $v$ infinitely often along $\pi$ implies that $v'$ with $(v,v')\in E^\ell$ and $\rank{v'}=1$ (which surely exists by the definition of $\apre$) will also be visited infinitely often along~$\pi$. This is again a contradiction to the above hypothesis and implies that such $v$'s can only be visited finitely often.\\
\item As $V$ is a finite set, the set of states with rank $2$ is finite. Hence, the occurrence of infinitely many states with rank $2$ along $\pi$ implies that one of the above cases must occur infinitely often, which gives a contradiction to the above hypothesis.
\end{inparaitem}
Using the same arguments, we can inductively show that states with any fixed rank can only occur finitely often if states with rank~$1$ (i.e., (a)-labeled vertices) never occur. As the maximal rank is finite (due to the finiteness of $V$) this contradicts the assumption that $\pi$ is an infinite play. 

We therefore conclude that along any \emph{infinite} fair adversarial play $\pi$ with infinitely many vertices labeled by ($\ell$) we will eventually see a vertex labeled by (a). 

\smallskip
\noindent\textbf{Completeness.}
We now show that the fixpoint in \eqref{p:equ:Reach:FP} is complete, i.e., that every state in $\Zo^*:=V\setminus Z^*$ is loosing for $\p{0}$. In particular, we show that from every vertex $v\in\Zo^*$ \p{1} has a memoryless strategy $\rho_1$ s.t.\ all fair adversarial plays compliant with $\rho_1$ satisfy
\begin{align}\label{equ:Reach:psio}
 \overline{\psi}:=\lnot\psi=\lnot(Q\mathcal{U}T)=\Box\lnot T \vee \lnot T\mathcal{U}\lnot Q
\end{align}
and are hence loosing for $\p{0}$.

In order to prove the latter claim we fist compute $\Zo^*:=V\setminus Z^*$ by negating the fixpoint formula in \eqref{p:equ:Reach:FP}. For this, we define $\Xo^*:=V\setminus X$, $\Yo^*:=V\setminus Y$ and
use the negation rule of the $\mu$-calculus, i.e., $\lnot(\mu X.f(X))=\nu\Xo.V\setminus f(X)$ along with common De-Morgan laws. This results in the following derivation.
\begin{align*}
 \Zo^*&=\mu\Yo.~\nu\Xo.~\To \cap (\Qo\cup V\setminus\apre(Y,X))
\end{align*}
where
\begin{align*}
 &V\setminus\apre(Y,X)\\
 &=V\setminus \left[\cpre(X)\cup\left(\elpre(X)\cap\eapre_1(Y)\right)\right]\\
 &=\left[V\setminus\cpre(X)\right]\cap \left[V\setminus\left(\elpre(X)\cap\eapre_1(Y)\right)\right]\\
 &= \left[\epre_1(\Xo)\cup\eapre_0(\Xo)\right]\cap \left[V_0\cup (V_1\setminus\Vl)\cup\left(\Vl\setminus\left(\elpre(X)\cap\eapre_\ell(Y)\right)\right)\right]\\
 &= \left[\epre_1(\Xo)\cup\eapre_0(\Xo)\right]\cap \left[V_0\cup (V_1\setminus\Vl)\cup\left(\alpre(\Xo)\cup\epre_\ell(\Yo)\right)\right]\\
  &=\eapre_0(\Xo)\cup\epre_{1\setminus \ell}(\Xo)\cup\left[\epre_1(\Xo)\cap\left(\alpre(\Xo)\cup\epre_\ell(\Yo)\right)\right]\\
 &=\eapre_0(\Xo)\cup\epre_{1\setminus \ell}(\Xo)\cup\left[\epre_\ell(\Xo)\cap\left(\alpre(\Xo)\cup\epre_\ell(\Yo)\right)\right]\\
  &=\eapre_0(\Xo)\cup\epre_{1\setminus \ell}(\Xo)\cup\alpre(\Xo)\cup\epre_{\ell}(\Yo).
\end{align*}
The last line in the above derivation follows from the observation that $\alpre(\Xo)\subseteq\epre_l(\Xo)$ and $\Yo\subseteq\Xo$ for all iterations of the fixpoint computation.
The additionally introduced pre-operators are defined in close analogy to \eqref{equ:pre} and \eqref{equ:ell_pre} as follows:
\begin{align*}
\epre_1(S)&\coloneqq \SetComp{v\in V_1}{E(v)\cap S\neq \emptyset },\\
\eapre_0(S)&\coloneqq \SetComp{v\in V_0}{E(v)\subseteq S},\\
 \epre_{1\setminus \ell}(S)&\coloneqq \SetComp{v\in V_1\setminus \Vl}{E(v)\cap S\neq \emptyset },\\
\epre_{\ell}(S)&\coloneqq\SetComp{v\in \Vl}{E(v)\cap S\neq \emptyset },\\
\eapre_{\ell}(S)&\coloneqq \SetComp{v\in \Vl}{E(v)\subseteq S},\\
\alpre(S)&\coloneqq \SetComp{v\in \Vl}{\El(v)\subseteq S}.
\end{align*}

With this, we can conclude that 
\begin{align}\label{p:equ:Reach:FPinv}
 \Zo^*&=\mu\Yo.~\nu\Xo.~\To \cap \left(\Qo\cup\eapre_0(\Xo)\cup\epre_{1\setminus \ell}(\Xo)\cup\alpre(\Xo)\cup\epre_l(\Yo)\right).
\end{align}
where $\To=V\setminus T$ and $\Qo=V\setminus Q$.

Now denote by $\Yo^m$ the $m$-th iteration over the fixpoint variable $\Yo$ in \eqref{p:equ:Reach:FPinv}, where $\Yo^0=\emptyset$. Further, we denote by $\Xo^{mi}$ the set computed in the $i$-th iteration over the fixpoint variable $\Xo$ in~\eqref{p:equ:Reach:FPinv} during the computation of  $\Yo^m$ where  $\Xo^{m0}=V$. 
After termination of the inner fixed-point over $\Xo^{mi}$ we have by construction that $\Yo^m=\Xo^{m*}$ and therefore
\begin{align}\label{p:equ:Reach:FPinv:Ym}
 \Yo^m=\To \cap \left(\Qo\cup\eapre_0(\Yo^m)\cup\epre_{1\setminus \ell}(\Yo^m)\cup\alpre(\Yo^m)\cup\elpre(\Yo^{m-1})\right).
\end{align}

Similar to the soundness proof, we define a ranking over $V$ induced by the iterations of the smallest fixed-point, which now is $\Yo$:
\begin{align*}
 \ranko{v}=m\leftrightarrow v\in\Yo^m\setminus\Yo^{m-1}\quad\text{and}\quad \ranko{v}=\infty~\leftrightarrow~v\notin\Zo^*.
\end{align*}
This ranking can now be used to define a memoryless \p{1} strategy $\rho_1:V_1\rightarrow V$ s.t.\
\begin{equation}\label{equ:Reach:rho:inv}
\rho_1(v)=\min_{(v,w)\in E} \ranko{w}.
\end{equation}

Towards proving that $\rho_1$ is winning for $\overline{\psi}$ in \eqref{equ:Reach:psio} we first observe that for every vertex $v\in \Zo^*$ exactly one of the following cases holds:\\
\begin{inparaenum}[(a)]
\item $v\in (V_0\cap \Zo^*\cap \To)$, i.e., $\ranko{v}<\infty$ and $v\in \Qo$ or for all $v'\in E(v)$ it holds that $\ranko{v'}\leq\ranko{v}$,\\ 
\item $v\in ((V_1\setminus \Vl) \cap \Zo^*\cap \To))$, i.e., $\ranko{v}<\infty$ and $v\in \Qo$ or there exists $v'\in E(v)$ s.t.\ $\ranko{v'}\leq\ranko{v}$, or\\
\item[($\ell_\forall$)] $v\in (\Vl\cap Z^*\cap\To)$ and $\ranko{v}<\infty$ and $v\in \Qo$ or for all $v'\in \El(v)$ holds that $\ranko{v'}\leq\ranko{v}$\\
\item[($\ell_\exists$)] $v\in (\Vl\cap Z^*\cap\To)$ and $\ranko{v}>1$ (and  $\ranko{v}<\infty$), and ($\ell_\forall$) does not hold, but there exists a $v'\in E(v)$ s.t.\ $\ranko{v'}<\ranko{v}$.
\end{inparaenum}

Using this observation, we now show that every fair adversarial play $\pi$ compliant with $\rho_1$ satisfies $\overline{\psi}$ in \eqref{equ:Reach:psio}, that is, either stays in $\To$ forever, or eventually visits $\Qo$ before visiting $T$.

First, observe that for every node $v\in\Zo^*$ one of the cases (a),(b),($\ell_\forall$), or ($\ell_\exists$) holds. If $v$ is an~(a) vertex, we see that either $v
\in \Qo$ or for all choices of $\p{0}$ (i.e., for any \p{0} strategy), the play remains in $\Zo^*\subseteq\To$. 
Further, it is obvious that $\rho_1$ ensures, that whenever a (b) vertex is seen, the play remains in $\Zo^*\subseteq\To$ if we do not already have $v
\in \Qo$. The same is true for ($\ell_\forall$) vertices.

Now consider a fair adversarial play $\pi$ that is compliant with $\rho_1$ and $\pi(0)\in\Zo^*\subseteq\To$. Then it follows from the above intuition that for all visits to (a),(b),($\ell_\forall$) we have two cases: 
(i) Either $\overline{\psi}$ is immediately true on $\pi$ by visiting $\Qo$ (and having been in $\Zo^*\subseteq\To$ in all previous time steps). In this case the suffix of $\pi$ is irrelevant, because $\p{0}$ has already lost (by visiting $\Qo$ without seeing $T$). Or
(ii) the play remains in $\Zo^*\subseteq\To$.
Now observe that this is also true for infinite visits to (a),(b),($\ell_\forall$) vertices. As $\pi$ is fair adversarial, visiting a ($\ell_\forall$) vertex infinitely often, implies that all live edges are taking infinitely often, which all ensure that the play remains in $\Zo^*\subseteq\To$ or is immediately lost by visiting $\Qo$. 
Therefore, the only interesting case occurs if $\pi$ visits ($\ell_\exists$) vertices. If such a vertex is visited finitely often, $\rho_1$ ensures that the play stays in $\Zo^*\subseteq\To$. However, if they are visited infinitely often, a live edge that leaves $\Zo^*$ will also be taken infinitely often. Hence, in order to ensure that $\pi$ is loosing for $\p{0}$, we need to show that $\rho_1$ enforces that ($\ell_\exists$) vertices are only visited finitely often. 

To see this, let $v$ be an ($\ell_\exists$) vertex and observe that $\ranko{v}$ is finite and larger than $1$. At the first visit of $\pi$ to $v$, $\rho_1$ decreases the rank as it chooses by definition one of the existentially quantified successors $v'\in \El(v)$ with $\ranko{v'}<\ranko{v}$. Now observe that for all other cases (a),(b),($\ell_\forall$) either $\Qo$ is visited and the play is immediately loosing for $\p{0}$ or the play is kept in $\Zo^*\subseteq\To$ and the strategy $\rho_1$ never increases the rank. As every vertex has a unique rank, $\rho_1$ ensures that every ($\ell_\exists$) vertex is visited at most once along every compliant fair adversarial play that remains in $\Zo^*\subseteq\To$. This proves the claim.

 \subsubsection{Proof of \REFthm{thm:SingleRabin}}\label{app:prop:SingleRabin}
\begin{theorem*}[\REFthm{thm:SingleRabin} restated for convenience]
 Let $\Gl =\tup{\game, \El}$ be a game graph with live edges and $Q,G\subseteq V$ be two state sets over $\game$. Further, let
\begin{equation}\label{p:equ:SR:FP}
 Z^*\coloneqq\nu Y.~\mu X.~Q\cap
 \left[ 
 \left( G\cap \cpre(Y)\right)
 \cup \left(\apre(Y,X)\right)
 \right].
\end{equation}
Then $Z^*$ is equivalent to the winning region of $\p{0}$ in the fair adversarial game over $\Gl$ for the winning condition $\psi$ in \eqref{equ:SR:psi}. 
Moreover, the fixpoint algorithm runs in $O(n^2)$ symbolic steps, and a memoryless winning strategy for $\p{0}$ can be extracted from it.
\end{theorem*}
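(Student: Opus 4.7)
The plan is to lift the safe reachability argument of \REFthm{thm:Reachability} from a single target set to a recurrent target, by invoking that theorem once per iteration of the outer $\nu Y$. First I would rewrite the body of the inner fixpoint in \eqref{p:equ:SR:FP}: since $Q \cap G \cap \cpre(Y) \subseteq Q$, the body can be rewritten as $T(Y) \cup (Q \cap \apre(Y, X))$, where $T(Y) := Q \cap G \cap \cpre(Y)$ plays the role of a target parameterised by the current outer value of $Y$. For any fixed $Y = Y^m$, the inner least fixpoint then has exactly the shape of the inner body of \eqref{equ:Reach:FP}: a fair adversarial safe reachability fixpoint with target $T(Y^m)$ and safe set $Q$.

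Next I would establish the key observation that at termination $Y^* = Z^*$ of the outer $\nu Y$ one has
\begin{equation*}
Y^* \;=\; \mu X.\, T(Y^*) \cup \bigl(Q \cap \apre(Y^*, X)\bigr),
\end{equation*}
so $Y^*$ is simultaneously the least fixpoint of the inner $\mu X$ and the value playing the role of the outer $\nu$-variable inside $\apre$. This matches precisely the pattern witnessed by the $\nu Y$-fixpoint of \eqref{equ:Reach:FP} for target $T(Y^*)$ and safe set $Q$. Applying \REFthm{thm:Reachability} to this $(T,Q)$-pair would yield a memoryless $\p{0}$ strategy $\rho_{\mathrm{reach}}$ that, from every vertex in $Y^*$, forces every fair adversarial play to reach $T(Y^*)$ while remaining in $Q$.

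For soundness I would combine $\rho_{\mathrm{reach}}$ with a one-step ``restart'' move at target vertices: for $v \in T(Y^*) \cap V_0$ pick any $\cpre$-witness $\rho_0(v) \in Y^*$; at a $\p{1}$-vertex in $T(Y^*)$ the inclusion $T(Y^*) \subseteq \cpre(Y^*)$ guarantees that every successor already lies in $Y^*$. The resulting memoryless strategy keeps every fair adversarial play inside $Y^* \subseteq Q$ and forces infinitely many visits to $T(Y^*) \subseteq G$, which is exactly $\Box Q \wedge \Box\Diamond G$. For completeness I would argue layerwise along the outer iteration: if $v \notin Y^*$ then $v \in Y^{m-1} \setminus Y^m$ for some $m$, so $v$ lies outside the safe reachability winning region for target $T(Y^{m-1})$ and safe set $Q$. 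The completeness direction of \REFthm{thm:Reachability} then supplies a memoryless $\p{1}$ strategy from $v$ that either drives the play out of $Q$ or keeps it in $Y^{m-1} \setminus T(Y^{m-1})$ forever; stitching these strategies across layers produces a single memoryless $\p{1}$ strategy that falsifies $\Box Q \wedge \Box\Diamond G$ from every $v \notin Y^*$.

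The hard part will be the clean formalisation of the completeness step, because the outer variable $Y$ occurs in two different roles inside the body of the fixpoint: inside $\cpre(Y)$ it shapes the target $T(Y)$ for the inner reachability game, while inside $\apre(Y,X)$ it defines the allowable waiting region in which $\p{1}$ must take its live edges. The $\p{1}$ strategies supplied by \REFthm{thm:Reachability} at different layers $Y^{m-1}$ rely on different waiting regions, and care is needed so that, when a compliant play crosses from one layer into a lower one, the composed $\p{1}$ strategy retains its $\apre$-based trap invariant. The standard remedy is to apply the reachability-completeness strategy at the outermost layer at which $v$ leaves the $\nu$-sequence; only finitely many visits to $G$ can then occur before the play is either driven out of $Q$ or trapped away from $T(Y^{m-1})$ forever. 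Memorylessness of the extracted $\p{0}$ strategy and the $O(n^2)$ symbolic-step bound (at most $n$ outer and $n$ inner iterations) follow directly from the corresponding properties of \REFthm{thm:Reachability}.
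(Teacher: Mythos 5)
Your overall architecture and your soundness argument essentially coincide with the paper's proof: the paper also recasts the inner $\mu X$ as a fair adversarial safe reachability fixpoint with target $T=Q\cap G\cap\cpre(Y^*)$ and safe set $Q$, invokes \REFthm{thm:Reachability}, and uses $T\subseteq\cpre(Y^*)$ to restart the reachability strategy after each visit to $T$, yielding $\Box Q\wedge\Box\Diamond G$. (The paper first proves a small technical lemma, \REFlem{lem:FPsimpleSingleRabin}, splitting the single $\nu Y$ into $\nu\Yt.\nu Y$ so that the $Y$ inside $\cpre$ and the $Y$ inside $\apre$ can be decoupled; this is exactly the two-roles-of-$Y$ issue you flag, resolved once and for all rather than per layer.)

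Where you genuinely diverge is completeness, and there your sketch has a real hole. The spoiling strategy supplied by the completeness direction of \REFthm{thm:Reachability} for the pair $\tuple{T(Y^{m-1}),Q}$ only guarantees $\neg\bigl(Q\,\mathcal{U}\,T(Y^{m-1})\bigr)$, i.e.\ it keeps the play away from $Q\cap G\cap\cpre(Y^{m-1})$ or drives it out of $Q$. It does \emph{not} keep the play away from $G$: a play can visit $Q\cap G\setminus\cpre(Y^{m-1})$ infinitely often and thereby satisfy $\Box Q\wedge\Box\Diamond G$, so "trapped away from $T(Y^{m-1})$ forever" does not by itself falsify $\psi$. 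To close this you must add the observation that at any vertex of $Q\cap G$ not in $\cpre(Y^{m-1})$, either all successors (if it is a $\p{0}$ vertex) or some $\p{1}$-chosen successor leave $Y^{m-1}$, forcing a strict decrease of the layer index, which can happen only finitely often; you must also justify that $v\notin Y^m$ really places $v$ outside the reachability winning region for $T(Y^{m-1})$, since $Y^m$ is one application of the functional with the \emph{fixed} waiting region $Y^{m-1}$ rather than the $\nu$-iterated one (this holds, by monotonicity and induction, but needs saying). The paper avoids all of this: its completeness argument never constructs a $\p{1}$ strategy for the Büchi game. It instead shows that the true winning region $\mathcal{W}$ is contained in the winning region $\widetilde{\mathcal{W}}$ of the reachability objective $Q\,\mathcal{U}\,(Q\cap G\cap\cpre(\widetilde{\mathcal{W}}))$, notes by \REFthm{thm:Reachability} that $\widetilde{\mathcal{W}}$ is therefore a fixed point of the map $S\mapsto Z^*(\tuple{Q\cap G\cap\cpre(S),Q})$, and concludes $\widetilde{\mathcal{W}}\subseteq\Zt^*$ because $\Zt^*$ is the \emph{largest} such fixed point. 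That maximality argument is the shortcut you are missing; your layerwise stitching can be made to work, but it is strictly more labor for the same conclusion.
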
 
 
 In order to simplify the proof of \REFprop{app:prop:SingleRabin}, we first prove the following lemma.
 
 \begin{lemma}\label{lem:FPsimpleSingleRabin}
  Let $Q,G\subseteq V$ and   
  \begin{subequations}\label{equ:Bcomp}
   \begin{align}
  \textstyle Z^*\coloneqq&\nu Y.\mu X.
Q\cap
 \left[ 
 \left( G\cap \cpre(Y)\right)
 \cup \apre(Y,X)
 \right]\label{equ:Bcomp:a}\\
  \textstyle \Zt^*\coloneqq&\nu \Yt.\nu Y.\mu X.Q\cap
 \left[ 
 \left( G\cap \cpre(\Yt)\right)
 \cup \apre(Y,X)
 \right].\label{equ:Bcomp:b}
\end{align}
\end{subequations}
Then $Z^*=\Zt^*$.
 \end{lemma}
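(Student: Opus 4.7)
The plan is to recognize \eqref{equ:Bcomp:a} and \eqref{equ:Bcomp:b} as two instances of the classical \emph{diagonal identity} for nested greatest fixpoints: for any monotone operator $h\colon 2^V\times 2^V\to 2^V$ one has
\[
\nu Y.\,h(Y,Y)\;=\;\nu \Yt.\,\nu Y.\,h(\Yt,Y).
\]
To apply this, I would introduce the auxiliary operator
\[
h(\Yt,Y)\coloneqq \mu X.\;Q\cap\bigl[(G\cap\cpre(\Yt))\cup \apre(Y,X)\bigr]
\]
and check that $h$ is monotone in each of its two arguments. Monotonicity in $\Yt$ is immediate from monotonicity of $\cpre$. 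Monotonicity in $Y$ follows from the explicit expression $\apre(Y,X)=\cpre(X)\cup(\elpre(X)\cap\eapre_1(Y))$ in \eqref{equ:apre}, since $\eapre_1$ is monotone. Taking the inner least fixpoint over $X$ preserves monotonicity in both remaining parameters, so $h$ is indeed monotone in each slot. Rewriting the two fixpoints in these terms, $Z^{*}=\nu Y.\,h(Y,Y)$ and $\Zt^{*}=\nu \Yt.\,\nu Y.\,h(\Yt,Y)$, so the lemma reduces exactly to the displayed identity.

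For $\Zt^{*}\subseteq Z^{*}$, I would unfold the outer $\nu$ to get $\Zt^{*}=\nu Y.\,h(\Zt^{*},Y)$, and one further unfolding yields $\Zt^{*}=h(\Zt^{*},\Zt^{*})$. Thus $\Zt^{*}$ is a fixpoint of the operator $Y\mapsto h(Y,Y)$, and the Knaster--Tarski characterization of the greatest fixpoint gives $\Zt^{*}\subseteq Z^{*}$. For the reverse inclusion $Z^{*}\subseteq \Zt^{*}$, set $g(\Yt)\coloneqq \nu Y.\,h(\Yt,Y)$. Then $Z^{*}=h(Z^{*},Z^{*})$ exhibits $Z^{*}$ as a fixpoint of $Y\mapsto h(Z^{*},Y)$, so $Z^{*}\subseteq g(Z^{*})$. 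Monotonicity of $h$ in its first argument makes $g$ monotone as well (parametric greatest fixpoints of monotone operators are monotone in their parameters), so the inclusion $Z^{*}\subseteq g(Z^{*})$ presents $Z^{*}$ as a post-fixpoint of $g$; co-induction then gives $Z^{*}\subseteq \nu\Yt.\,g(\Yt)=\Zt^{*}$.

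The argument contains no game-theoretic content beyond the fact that $\cpre$, $\elpre$, and $\eapre_1$, and hence also $\apre(Y,X)$, are monotone in their set arguments. Accordingly, the only bookkeeping step is the monotonicity check for $\apre(\cdot,\cdot)$, which is immediate from \eqref{equ:apre}, and I do not foresee any genuine obstacle in the proof.
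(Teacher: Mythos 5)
Your proof is correct, and it takes a genuinely different and cleaner route than the paper's. You reduce the lemma to the purely lattice-theoretic diagonal identity $\nu Y.\,h(Y,Y)=\nu \Yt.\,\nu Y.\,h(\Yt,Y)$ for an operator $h$ that is monotone in both arguments, check monotonicity of $h(\Yt,Y)=\mu X.\,Q\cap\left[(G\cap\cpre(\Yt))\cup\apre(Y,X)\right]$ in each slot (which is immediate from monotonicity of $\cpre$, $\elpre$, $\eapre_1$ and the fact that parametric least fixpoints preserve monotonicity), and obtain both inclusions by one unfolding plus Knaster--Tarski coinduction. The paper instead introduces a third fixpoint $\Zc^*$ whose body carries both a $G\cap\cpre(\Yt)$ and a $G\cap\cpre(Y)$ disjunct, gets $\Zt^*\subseteq \Zc^*$ from monotonicity and $Z^*=\Zc^*$ from \REFlem{lem:contain}, and closes the remaining inclusion $\Zc^*\subseteq\Zt^*$ by tracking the concrete iteration sequences $\Yt^m$, $Y^{mn}$, $X^{mn*}$ of the two computations and invoking the warm-starting (acceleration) result of \citet{long1994improved} to argue that the iterates coincide. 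Your argument buys generality and brevity: it needs nothing beyond monotonicity and would apply verbatim to any pair of fixpoints of this shape, whereas the paper's computational argument is tied to the specific iteration schedule and leans on an external acceleration theorem. What the paper's route makes visible and yours does not is the operational fact that the two computations produce the same intermediate sets, which is in the spirit of arguments reused elsewhere in the appendix; but for establishing the set equality itself, your proof is complete and self-contained.
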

 
 \begin{proof}
 To prove the claim we consider a third version of the fixpoint algorithm, namely
 \begin{equation*}
    \textstyle \Zc^*\coloneqq\nu \Yt.\nu Y.\mu X.Q\cap
 \left[ 
 \left( G\cap \cpre(\Yt)\right)
 \cup
  \left( G\cap \cpre(Y)\right)
 \cup \apre(Y,X)
 \right].
 \end{equation*}

 Then it immediately follows from the monotonicity of all involved functions that $\Zt^*\subseteq \Zc^*$. It further follows from \REFlem{lem:contain} (iv) that $Z^*=\Zc^*$. It therefore remains to show that $\Zc^*\subseteq \Zt^*$ to prove the claim. We actually show $\Zc^*\subseteq \Zt^*$.

 Let $\Yt^0=Y^{00}=V$. Then it immediately follows that the computation of $X^{00*}$ returns the same set for both fixed-points. It further follows that $Y^{0n}\subseteq \Yt^0$, which implies $( G\cap \cpre(Y^{0n}))\subseteq  ( G\cap \cpre(\Yt^0))$ and therefore the set $\Yt^1$ coincides for both fixed-points. Now recall from \cite{long1994improved} that warm-starting the inner fixpoint computation with the largest fixed-point retained from previous values of outer fixpoint variables, does not change the resulting fixed-point. With this, we can use $Y^{10}=\Yt^{1}$ and observe that this implies that the computation of $\Yt^2$ becomes again identical for both fixed-points. Re-applying this argument until termination shows, that indeed $\Zc^*\subseteq \Zt^*$.
\end{proof}

With \REFlem{lem:FPsimpleSingleRabin} in place, we can use \eqref{equ:Bcomp:b} instead of \eqref{p:equ:SR:FP} to prove \REFthm{thm:SingleRabin}. 
Further, let us define $Z^*(\tuple{T,Q})$ to be the set of states computed by the fixpoint algorithm in \eqref{equ:Reach:FP}. Then we know that upon termination we have
 \begin{equation}\label{equ:proof:SR:Ytstar}
  \Zt^*=\Yt^*=Z^*(\tuple{Q\cap G\cap \cpre(\Yt^*),Q}).
 \end{equation}
Now we will use \eqref{equ:proof:SR:Ytstar} to prove soundness and completeness of \REFthm{thm:SingleRabin}.

\smallskip
\noindent\textbf{Soundness}
Let us now define $T:=Q\cap G\cap \cpre(\Yt^*))$.
Pick any state $v\in\Zt^*$ and the strategy~$\rho_0$ defined as in \eqref{equ:Reach:rho} over the sets $X^i$ computed in the last iteration over $X$ when computing $Z^*(\tuple{T,Q})$. Further, let $\pi$ be an arbitrary fair adversarial play starting in $v$ and being compliant with $\rho_0$. Then we need to show that $\pi$ fulfills $\psi$ in \eqref{equ:SR:psi}. 

Using \eqref{equ:proof:SR:Ytstar} and the fact that $v\in\Zt^*$ we know from \REFthm{thm:Reachability} that $\pi$ fulfills $Q\mathcal{U}T$. That is, there exists a $k\in\mathbb{N}$ s.t.\ $\pi(i)\in Q$ for all $i<k$ and $\pi(k)\in T=Q\cap G\cap \cpre(\Yt^*))$. With this we know that (a) $\pi(k)\in Q$, (b) $\pi(k)\in G$ and (c) $v\in \cpre(\Yt^*)$. Now we have two cases: (c.1) If $\pi(k)\in V^1$, then it follows from the definition of $\cpre$ that $E(\pi(k))\subseteq \Yt^*$. As $\Yt^*=\Zt^*$, we know $\pi(k+1)\in \Zt^*$. (c.2) If $\pi(k)\in V^0$ we know that $\rank{\pi(k)}=min_{v'\in E(\pi(k))}\rank{v'}$. Now recall that $\Zt^*=\Yt^*=Y^*=\bigcup_{i>0} X^i$. Hence, any state with rank $0<n<\infty$ is contained in $\Zt^*$ and hence, we have $\pi(k+1)\in \Zt^*$. With this, we can successively re-apply \REFthm{thm:Reachability} to $\pi(k+1)$. This shows that $G$ is visited infinitely often along $\pi$ while $\pi$ always remains within $Q$.

\smallskip
\noindent\textbf{Completeness}
Let $\mathcal{W}\subseteq V$ be the set of states from which $\p{0}$ has a winning strategy w.r.t.~$\psi$ in \eqref{equ:SR:psi}. In order to prove completeness, we need to show that $\mathcal{W}\subseteq Z^*$.

Recall, that for all states $v\in \mathcal{W}$ there exists a strategy $\rho_0$ s.t.\ all compliant fair adversarial plays $\pi$ fulfill $\psi$. Now consider the weaker LTL formula $\widetilde{\psi}:=Q\mathcal{U}(Q\cap G)$ and let $\widetilde{\mathcal{W}}$ be the winning state set for $\widetilde{\psi}$. 
Then we know by construction that $\widetilde{\psi}$ holds for $\pi(0)$ and for every $\pi(k)\subseteq Q\cap G$ while $\pi$ always remains in $Q$. We can therefore strengthen $\widetilde{\psi}$ to $\widetilde{\psi}:=Q\mathcal{U} (Q\cap G\cap \cpre(\widetilde{\mathcal{W}}))$ and see that still $\psi\rightarrow\widetilde{\psi}$ and therefore $\mathcal{W}\subseteq\widetilde{\mathcal{W}}$. 

Now observe that it follows from \REFthm{thm:Reachability} that $\widetilde{\mathcal{W}}= Z^*(\tuple{Q\cap G\cap \cpre(\widetilde{\mathcal{W}}),Q})$. It further follows from the monotonicity of the $\mu$-calculus formula that $\Zt^*$ is the \emph{largest} set of states s.t.\ equality holds in \eqref{equ:proof:SR:Ytstar}. We therefore have to conclude that $\widetilde{\mathcal{W}}\subseteq \Zt^*$. As we have shown that $\mathcal{W}\subseteq\widetilde{\mathcal{W}}$, the claim is proved.


\subsection{Proof of Theorem~\ref{thm:FPsoundcomplete}}\label{appD}

\begin{theorem*}[\REFthm{thm:FPsoundcomplete} restated for convenience]
Let $\Gl =\tup{\game, \El}$ be a game graph with live edges and 
$\FR$ be a Rabin condition over $\game$ with index set $P=[1;k]$. 
Further, let
 \begin{subequations}\allowdisplaybreaks
   \begin{align*}
 \textstyle Z^*\coloneqq\nu Y_{p_0}.\mu X_{p_0}.&\textstyle \bigcup_{p_1\in P} \nu Y_{p_1}.\mu X_{p_1}.\\
 &\textstyle \bigcup_{p_2\in P\setminus\set{p_1}} \nu Y_{p_2}.\mu X_{p_2}.\notag\\
&\qquad\vdots\notag\\
 &\textstyle \bigcup_{p_k\in P\setminus\set{p_1,\hdots, p_{k-1}}}\nu Y_{p_k}.\mu X_{p_k}.
\left[\bigcup_{j=0}^k \mathcal{C}_{p_j}\right],\notag
\end{align*}
where
 \begin{align*}
 \mathcal{C}_{p_j}\coloneqq
\bigcap_{i=0}^{j} \overline{R}_{p_i}\cap
 \left[ 
 \left( G_{p_j}\cap \cpre(Y_{p_j})\right)
 \cup \left(\apre(Y_{p_j},X_{p_j})\right)
 \right],
\end{align*}
 \end{subequations}
with $p_0=0$, $G_{p_0}\coloneqq\emptyset$ and $R_{p_0}\coloneqq\emptyset$. 
Then $Z^*$ is equivalent to the winning region $\WlR$ of $\p{0}$ in the fair adversarial game over $\Gl$ for the winning condition $\vR$ in \eqref{equ:vR}. Moreover, the fixpoint algorithm runs in $O(n^{k+2}k!)$ symbolic steps, and a memoryless winning strategy for $\p{0}$ can be extracted from it.
\end{theorem*}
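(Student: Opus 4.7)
The plan is to adapt the classical proof of \citet{PitermanPnueli_RabinStreett_2006} for Rabin games to the fair adversarial setting, substituting \REFthm{thm:Reachability} and \REFthm{thm:SingleRabin} wherever the original argument invokes the corresponding results for normal reachability and safe B\"uchi games. I would organize the proof around three goals: soundness ($Z^* \subseteq \WlR$), completeness ($\WlR \subseteq Z^*$), and the memoryless strategy construction. The alternation depth coincides with the Piterman--Pnueli algorithm modulo the extra outer $\nu Y_{p_0}.\mu X_{p_0}$ pair, which gives the $O(n^{k+2} k!)$ bound after applying the acceleration of \citet{long1994improved} as discussed in \REFsec{sec:Rabin:Complexity}.

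For soundness, I would associate to each vertex $v \in Z^*$ a \emph{winning permutation} $(p_1,\dots,p_k)$, namely one whose disjunct adds $v$ to $Z^*$. Peeling the nested fixpoint along that permutation, the innermost $\nu Y_{p_k}.\mu X_{p_k}$ computes the winning region of a fair adversarial safe B\"uchi game with goal $G_{p_k}$ and safe set $\bigcap_{i=0}^{k} \overline{R}_{p_i}$, so by \REFthm{thm:SingleRabin} Player~0 has a memoryless strategy from $v$ that either certifies the $p_k$-th Rabin pair directly, or forces the play into an outer iterate where a lower-index $\mathcal{C}_{p_j}$ term takes over. Combining the per-level rankings of the $\mu X_{p_j}$ variables along the chosen permutation gives a lexicographic rank that strictly decreases until some $\mathcal{C}_{p_j}$ is reached whose B\"uchi goal $G_{p_j}$ is visited infinitely often while the play remains in $\bigcap_{i=0}^j \overline{R}_{p_i}$, certifying $\varphi$. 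Memorylessness follows because each vertex's winning permutation is fixed globally and the per-permutation strategy from \REFthm{thm:SingleRabin} is itself memoryless.

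For completeness, I would dualize the fixpoint along the lines of the dualization used in the proof of \REFthm{thm:Reachability} to obtain a Streett-style characterization of $V\setminus Z^*$, and use it to synthesize a Player~1 strategy witnessing $\neg(\vl \rightarrow \vR)$ from any $v \notin Z^*$. The key combinatorial fact is that for every permutation of Rabin pairs, Player~1 can either force a visit to some $R_{p_i}$ after its associated B\"uchi target $G_{p_i}$ has been visited only finitely often, or keep the play out of the inner safe B\"uchi winning region for that permutation; the outer $\nu Y_{p_0}.\mu X_{p_0}$ exactly captures this residual trap set. Live edges contribute to the dual through the $\elpre$-term inside $\apre$, which becomes an existential obligation on Player~1 to decrement the dual rank---the same style of ranking used in the completeness part of \REFthm{thm:Reachability}.

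The main obstacle will be reconciling the permutation structure with the $\apre$ operator in the memoryless strategy construction. In the Piterman--Pnueli proof, every rank-decrement step is an immediate $\cpre$-move; in our setting, progress at a vertex in $\Vl$ may only materialize after a bounded-but-unspecified waiting phase during which Player~1 is expected to eventually take a live edge. Care is needed to ensure that, while waiting, the play does not drift between winning permutations, which would break memorylessness. This is handled exactly as in the soundness proof of \REFthm{thm:Reachability}: because $\apre(Y_{p_j},X_{p_j})$ only selects a $\Vl$-vertex when \emph{all} its outgoing edges stay in $Y_{p_j}$, the waiting phase is trapped inside the current $\nu$-iterate, so the chosen permutation remains valid throughout. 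The full proof, including the explicit strategy construction, is deferred to \REFapp{appD}, while an illustrative run of the algorithm on the two-pair example of \REFfig{fig:example} is given in \REFapp{app:ExpFP}.
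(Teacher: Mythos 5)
Your soundness argument and strategy construction follow essentially the same route as the paper's proof in \REFapp{appD}: an inside-out induction over the nested fixpoint, a lexicographic rank built from the per-level $\mu$-iteration counters along a permutation prefix, the base case discharged by \REFthm{thm:Reachability} and \REFthm{thm:SingleRabin}, and the observation that excursions into the inner $\mathcal{A}$-terms either return with the rank unchanged or certify the objective via a deeper Rabin pair. Your handling of the $\apre$ waiting phase also matches the paper's Proposition on the rank case split, which shows the relevant counter stays constant while the play sits inside the current $\nu$-iterate.

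Your completeness argument, however, diverges from the paper and contains a genuine gap. You propose to dualize the entire $(2k+2)$-alternation fixpoint and synthesize a Player~1 spoiling strategy from every $v\notin Z^*$. Negating the formula turns the union over permutations $\bigcup_{p_1\in P}\cdots$ into an intersection, so Player~1 must defeat \emph{every} permutation simultaneously; this is exactly the regime where the Streett player needs memory and where combining per-permutation spoiling strategies into one is the whole difficulty. Your ``key combinatorial fact'' --- that for every permutation Player~1 can either revisit some $R_{p_i}$ after $G_{p_i}$ stops occurring or trap the play outside the inner safe-B\"uchi region --- is precisely the statement of completeness restated, not a step toward proving it, and you give no construction for the combined Player~1 strategy or its memory. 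The paper avoids all of this: its completeness proof never dualizes beyond the innermost reachability level. Instead it runs the same inside-out induction as soundness, observing that any play winning for the local condition $\psi_{\delta p_j}$ also wins the weaker fair adversarial reachability game with the enlarged target $S_\delta\cup(Q_{\delta p_j}\cap G_{p_j}\cap\cpre(Y^*_{\delta p_j}))\cup\mathcal{A}_{\delta p_j i_j}$, so the completeness direction of \REFthm{thm:Reachability} and \REFthm{thm:SingleRabin} directly yields $\mathcal{W}_{\delta p_j}\subseteq Y^*_{\delta p_j}$. You should either adopt that reduction or supply the missing Player~1 strategy construction; as written, the completeness half does not go through.
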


This section contains the proof of \REFthm{thm:FPsoundcomplete} which is inspired by the proof of \citet{PitermanPnueli_RabinStreett_2006} for \enquote{normal} Rabin games. We first give a construction of a ranking induced by the fixpoint algorithm in \eqref{equ:Rabin_all} in \REFsec{sec:theory:strategy}, and use this ranking to define a memoryless $\p{0}$ strategy. As part of the soundness proof for \REFthm{thm:FPsoundcomplete} in \REFsec{sec:theory:sound},  we then show that this extracted strategy is indeed a winning strategy of $\p{0}$ in the \emph{fair adversarial game} over~$\Gl$ w.r.t.\ $\vR$. Further, we show in \REFsec{sec:theory:complete} that the fixpoint algorithm in \eqref{equ:Rabin_all} is also complete,  that is $\WlR\subseteq Z^*$. Intuitively, completeness shows that if $Z^*$ is empty, there indeed exists no live-sufficient winning strategy (with arbitrary memory) for the given fair adversarial Rabin game. Additional lemmas and proofs can be found in \REFapp{appD:additional}. 
The time complexity of the algorithm is proven separately in \REFapp{app:acceleration}.

\subsubsection{Strategy Extraction}\label{sec:theory:strategy}
Our strategy extraction is adapted from the ranking of \citet[Section~3.1]{PitermanPnueli_RabinStreett_2006}.
Recall, that we consider the set of Rabin pairs $\FR=\set{\tuple{G_1,R_1},\hdots,\tuple{G_k,R_k}}$ with index set $P=\set{1,\hdots,k}$ and the artificial Rabin pair $\tuple{G_0,R_0}$ s.t.\ $G_0=R_0=\emptyset$. 
A \emph{permutation} of the index set $P$ is an one-to-one and onto function from $P$ to $P$; as usual, we write $p_1\ldots p_k$ to denote the permutation
mapping $i$ to $p_i$, for $i =1,\ldots, k$.  We define $\Pi(P)$ to be the set of all permutations over $P$. 
The \emph{configuration domain} of the Rabin condition $\FR$ is defined as 
\begin{align}
\label{equ:RabinConfDomain}
D(\FR):=&\left\{p_0 i_0 p_1 i_1 \hdots p_k i_k~\mid~i_j\in[0;n], ~p_0=0,~ p_1\hdots p_k\in\Pi(P)\right\}\cup \set{\infty}
\end{align}
where $n<\infty$ is a natural number which is larger then the maximal number of iterations needed in any instance of the fixpoint computation in \eqref{equ:Rabin_all} which is known to be finite. 
If $\FR$ is clear from the context, we write $D$ instead of $D(\FR)$. 

\smallskip
\noindent\textbf{Intuition:}
We first explain the intuition behind the chosen ranking. For this we consider the definition of ranks for states $v\in Z^*$ in an iterative fashion. First, consider the last iteration over~$X_{p_0}$ converging to the fixed-point $Z^*=Y_{p_0}^*=\bigcup_{i_0>0} X_{p_0}^{i_0}$ where $X_{p_0}^{0}:=\emptyset$. By flattening \eqref{equ:Rabin_all} we see that for all $i_0>0$ we have 
\begin{subequations}\label{equ:Xdeltaip}
 \begin{align}\label{equ:Xdeltaip:0}
 X_{p_0}^{i_0}&=\apre(Y_{p_0}^*,X_{p_0}^{i_0-1})\cup\mathcal{A}_{p_0 i_0}
\end{align}
where $\mathcal{A}_{p_0 i_0}$ collects all remaining terms of the fixpoint algorithm in \eqref{equ:Rabin_all} and will be specified later. For now, we want to assign a \enquote{minimal rank} to all states added to $Z^*$ via the first term in \eqref{equ:Xdeltaip:0}. Let us assume that the right \enquote{minimal rank} for these states is 
\begin{equation*}
 d=p_0 i_0 p_1 0\hdots p_k 0\quad\text{with}\quad p_1<p_2<\hdots<p_k~\text{and}~i_0>0.
\end{equation*}
We assign this rank to $v$ iff  $v\in \apre(Y_{p_0}^*,X_{p_0}^{i_0-1})\setminus X_{p_0}^{i_0-1}$, i.e., if $v$ is not already added to the fixed-point in a previous iteration. The intuition behind this rank choice is that we want to remember  that we have added $v$ to $Z^*$ in the $i_0$'s computation over $X_{p_0}$, which sets the counter for $p_0$ in $d$ to $i_0$. We keep all other counters at $0$ because there is no actual contribution of terms involving variables $X_{p_i}$ for $p_i\in P$ for the \enquote{adding} of $v$. 

Now recall that 
\begin{equation*}
   X_{p_0}^{i_0}= \bigcup_{p_1\in P} Y^{*}_{p_1}
  =\bigcup_{p_1\in P} \bigcup_{i_1>0} X^{i_1}_{p_1}.
\end{equation*}
Further, we know that 
\begin{equation}\label{equ:rank_guess:0}
 \apre(Y_{p_0}^*,X_{p_0}^{i_0-1})\subseteq X^{i_1}_{p_1} \quad\text{for all}\quad p_1\in P~\text{and}~i_1>0.
\end{equation}
Hence, any state added to the fixed-point via $X_{p_0}^{i_0}$ (which is not contained in $X_{p_0}^{i_0-1}$) is either added via $\apre(Y_{p_0}^*,X_{p_0}^{i_0})$ or via any other remaining term within $X^{i_1}_{p_1}$ for at least one $p_1$ and $i_1>0$. So let us explore the ranking in the latter case.

For this, let us proceed by going over all $X^{i_1}_{p_1}$ in increasing order over $P$, i.e, we start with selecting $p_1=1$. Further, we remember that we compute the next iteration over $X_{p_1}$ (i.e., $X^{i_1}_{p_1}$ given $X^{i_1-1}_{p_1}$) as part of computing the set $X_{p_0}^{i_0}$. I.e., we remember the \emph{computation-prefix} $\delta=p_0 i_0$ in the computation of $X^{i_1}_{p_1}$. To make $\delta$ explicit, we denote $X^{i_1}_{p_1}$ by $X^{i_1}_{\delta p_1}$. Now, we again consider the last iteration over $X_{\delta p_1}$ converging to the fixed-point $Y^*_{\delta p_1}$ (for the currently considered computation-prefix $\delta$). Then we have 
\begin{align*}
 X^{i_1}_{\delta p_1}=
 &\underbrace{\apre(Y_{p_0}^*,X_{p_0}^{i_0-1})}_{=:S_\delta}
 \cup
 \underbrace{
\overline{R}_{p_1}\cap \left[\left(G_{p_1}\cap \cpre(Y_{\delta p_1}^*)\right)\cup\apre(Y_{\delta p_1}^*,X_{\delta p_1}^{i_1-1})\right]
}_{=:\mathcal{C}_{\delta p_1 i_1}}
\cup \mathcal{A}_{\delta p_1 i_1}.
\end{align*}

We now want to assign the \enquote{minimal rank} to all states that are added to the fixed-point via $\mathcal{C}_{\delta p_1 i_1}$. The immediate choice of this rank is 
\begin{align}
\label{equ:rank_guess:1}
 &d=p_0 i_0 p_1 i_1 p_2 0 \hdots p_k 0 =\delta p_1 i_1 p_2 0 \hdots p_k 0
 \quad\text{with}~ p_2<\hdots<p_k\quad\text{and}~i_0,i_1>0.
\end{align}
(Note that we do not necessarily have $p_1<p_2$!) 

We only want to assign this rank to states that are actually added to the fixed-point via~$\mathcal{C}_{\delta p_1 i_1}$, i.e., do not already have a rank assigned. First, all states $v\in S_\delta$ already have an assigned rank (as discussed before). Second, for $i_1>1$ all states in $\mathcal{C}_{\delta p_1 i_1-1}$ have already an assigned rank.  But, third, also all states that have been added by considering a different $X_{\tilde{p}_1}$ with $\tilde{p}_1\in P$ being smaller then the currently considered $p_1$ also have an already assigned rank. 

Now consider the ranking choices suggested in \eqref{equ:rank_guess:0} and \eqref{equ:rank_guess:1}. Then we see that all already assigned ranks are \emph{smaller} (in terms of the lexicographic order over $D$) than the one in \eqref{equ:rank_guess:1}. To see this, first consider a state $v\in S_\delta$. Either, $v\in X_{p_0}^{i_0-1}$ in which case its $0$'th counter is smaller then $i_0$ (i.e., $i_0-1<i_0$) or $v$ has been added via $S_\delta$, in which case the $0$'th counter is equivalent but the first counter is $0$ and therefore smaller then $i_1$ in \eqref{equ:rank_guess:1} (as, $i_1>0$). 
Now consider a state $v\in X_{\tilde{p}_1}$ with $\tilde{p}_1<p_1$. In this case we see that $0$'th counter is equivalent but the first permutation index is smaller (as 
$\tilde{p}_1<p_1$). 

We can therefore avoid specifying exactly in which set $v$ should \emph{not} be contained to be a newly added state. We can simply collect all possible rank assignments for every state and then, post-process this set to select the smallest rank in this set. Let us now generalize this idea to all possible configuration prefixes.

\end{subequations}

\begin{proposition}\label{prop:flattening}
 Let $\delta=p_0i_0\hdots p_{j-1} i_{j-1}$ be a \emph{configuration prefix}, $p_j\in P\setminus\set{p_1,\hdots,p_{j-1}}$ the next permutation index and $i_{j}>0$ a counter for $p_j$. Then the flattening of \eqref{equ:Rabin_all} for this configuration prefix is given by
\begin{subequations}\label{equ:It_pj}
\begin{align}\label{equ:Xdeltai}
 X_{\delta p_j}^{i_j}=
 &
 \underbrace{
 S_{\delta}\cup
\mathcal{C}_{\delta p_j i_j}
 }_{S_{\delta p_j i_j}}
 \cup\mathcal{A}_{\delta p_j i_j}
\end{align}
where 
 \begin{align}
&Q_{p_0\hdots p_a}:=
\bigcap_{b=0}^{a} \overline{R}_{p_b},\label{equ:Qdelta}\\ 
&\mathcal{C}_{\delta p_a i_a}:=\left(Q_{\delta p_a}
 \cap G_{p_a}\cap \cpre(Y_{\delta p_a}^*)\right)\cup \left(Q_{\delta p_a}
 \cap \apre(Y_{\delta p_a}^*,X_{\delta p_a}^{i_a-1})\right),\\
 &S_{p_0i_0\hdots p_{a}i_{a}}:=\bigcup_{b=0}^{a} \mathcal{C}_{p_0i_0\hdots p_b i_b},\label{equ:Sdelta}\\
&\mathcal{A}_{\delta p_j i_j}:=\bigcup_{p_{j+1}\in P\setminus\set{p_1,\hdots,p_{j}}} \bigcup_{i_{j+1}>0} 
\left(X^{i_{j+1}}_{\delta p_j i_j p_{j+1}}
\setminus S_{\delta p_j i_j}\label{equ:Adeltai}
\right)
\end{align}
\end{subequations}
\end{proposition}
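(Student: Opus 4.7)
The proposition is a notational bookkeeping statement that re-expresses one iteration of the $\mu X_{p_j}$ fixpoint sitting at depth $j$ inside the nested formula \eqref{equ:Rabin_all}. My plan is to prove it by a direct unfolding argument, using backwards induction on the nesting depth (from $j = k$ down to $j = 0$) together with monotonicity of all the involved set-transformers.

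First, I would observe that a single iteration $X_{\delta p_j}^{i_j}$ of the $\mu X_{p_j}$ fixpoint, performed after the surrounding $\nu Y_{p_j}$ has converged to $Y_{\delta p_j}^*$ and conditioned on the outer configuration prefix $\delta = p_0 i_0 \ldots p_{j-1} i_{j-1}$, is by definition the union of the body of the innermost bracket in \eqref{equ:Rabin_all} evaluated at the current values of all fixpoint variables, together with the nested inner expression $\bigcup_{p_{j+1}} \nu Y_{p_{j+1}}.\mu X_{p_{j+1}}.\cdots$ which is still to be unfolded. The body contributes $\bigcup_{b=0}^{j} \mathcal{C}_{p_b}$, where each $\mathcal{C}_{p_b}$ is evaluated with the then-current values of $Y_{p_b}$ and $X_{p_b}$. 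For $b < j$ these variables are frozen at the values recorded in the outer configuration $\delta$, so $\mathcal{C}_{p_b}$ reduces to $\mathcal{C}_{p_0 i_0 \ldots p_b i_b}$; unioning these over $b = 0, \ldots, j-1$ gives exactly $S_\delta$ by definition~\eqref{equ:Sdelta}. For $b = j$, the value of $Y_{p_j}$ is $Y_{\delta p_j}^*$ while $X_{p_j}$ is still the previous iterate $X_{\delta p_j}^{i_j - 1}$, so the summand reduces to $\mathcal{C}_{\delta p_j i_j}$ as given in~\eqref{equ:It_pj}. Taking the union with $S_\delta$ yields the set $S_{\delta p_j i_j}$ claimed in~\eqref{equ:Sdelta}.

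Second, I would handle the nested contribution $\bigcup_{p_{j+1} \in P \setminus \{p_1, \ldots, p_j\}} \nu Y_{p_{j+1}}.\mu X_{p_{j+1}}.\cdots$. Since each $\nu Y_{p_{j+1}}$ equals the union of its (monotone) ascending $\mu X_{p_{j+1}}$-approximants taken at the fixed-point iteration of $Y_{p_{j+1}}$, it coincides with $\bigcup_{i_{j+1} > 0} X_{\delta p_j i_j p_{j+1}}^{i_{j+1}}$. Subtracting the part already captured by $S_{\delta p_j i_j}$ does not change the resulting set, because $S_{\delta p_j i_j}$ is contained in every $X^{i_{j+1}}_{\delta p_j i_j p_{j+1}}$ (these lower-level terms are also present in the innermost union at depth $j+1$ and therefore propagate upwards by monotonicity of the inner fixpoint). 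This yields exactly the set $\mathcal{A}_{\delta p_j i_j}$ of~\eqref{equ:Adeltai}. The base case $j = k$ is immediate because the innermost union over $p_{k+1}$ is empty and $\mathcal{A}_{\delta p_k i_k} = \emptyset$.

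The main obstacle is notational rather than mathematical: one must keep careful track of which variables are held at their fixed-point values (the $Y_{\delta p_a}^*$ in $\mathcal{C}_{\delta p_a i_a}$), which are still iterating, and which are frozen by the outer configuration $\delta$. In particular, it must be verified that $Y_{p_a}$ has converged by the time we write $Y_{\delta p_a}^*$ inside $\mathcal{C}_{\delta p_a i_a}$; this is fine because each $\nu Y_{p_a}$ is the immediate outer binder of the corresponding $\mu X_{p_a}$ in \eqref{equ:Rabin_all}, and we are examining the $\mu X_{p_a}$ iteration that occurs within the final (fixed-point) iteration of $Y_{p_a}$ for the surrounding configuration. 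Once this bookkeeping is made precise, the equality $X_{\delta p_j}^{i_j} = S_{\delta p_j i_j} \cup \mathcal{A}_{\delta p_j i_j}$ reduces to definition-chasing.
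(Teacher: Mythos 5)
Your proposal is correct and is exactly the definition-chasing the paper has in mind: the paper omits the proof of Proposition~\ref{prop:flattening} with the remark that it ``follows directly from the structure of the fixpoint algorithm in \eqref{equ:Rabin_all} and the definition of $\mathcal{C}_{p_j}$,'' and your backwards induction from $j=k$ (where $\mathcal{A}_{\delta p_k i_k}=\emptyset$) together with the observation that $S_{\delta p_j i_j}$ is contained in every deeper iterate $X^{i_{j+1}}_{\delta p_j i_j p_{j+1}}$ makes that unfolding precise. The bookkeeping caveats you flag (all $Y$-variables taken at their converged values for the given configuration prefix, $X$-variables at the recorded counters) match the conventions the paper uses in its soundness proof in Appendix~\ref{appD}.
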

As this flattening follows directly from the structure of the fixpoint algorithm in \eqref{equ:Rabin_all} and the definition of $\mathcal{C}_{p_j}$ in \eqref{equ:Rabin_all_Cpj}, the proof is omitted.

Using the flattening of \eqref{equ:Rabin_all} in \eqref{equ:It_pj} we can define a ranking function induced by \eqref{equ:Rabin_all} as follows.

\begin{definition}\label{def:ranking}
Given the premises of \REFprop{prop:flattening}, we define $\underline{\gamma}:=p_{j+1}0p_{j+2}0\hdots p_k 0$ with $p_{j+1}<p_{j+2}<\hdots <p_k$ to be the minimal configuration post-fix. Then we define the rank-set $R:V\rightarrow 2^D$ s.t.\
 \begin{inparaenum}[(i)]
  \item $\infty \in R(v)$ for all $v\in V$, and
  \item $\delta p_j i_j \underline{\gamma}\in R(v)$ iff $v\in S_{\delta p_j i_j}$.
 \end{inparaenum}
 The ranking function $\rank{}:V\rightarrow D$ is defined s.t. $\rank{}: v \mapsto \min\set{R(v)}$.
\end{definition}

Based on the ranking in \REFdef{def:ranking} we define a memory-less player $0$ strategy $\rho_0$, s.t.\ $\rho_0(v)$ forces progress to a state reachable from $v$ which has minimal rank compared to all other successors of $v$. We prove \REFthm{thm:strategy} in \REFsec{sec:theory:sound}.

\begin{theorem}\label{thm:strategy}
Given the premises of \REFprop{prop:flattening}, the memoryless player $0$ strategy $\rho_0:V^0\cap Z^*\rightarrow V^1$ s.t.\ 
 \begin{align}\label{equ:strategy}
 \rho_0(v):=\min_{(v,w)\in E} (\rank{w}),
\end{align}
is a winning strategy for player $0$ in the \emph{fair adversarial game} over $\Gl$ w.r.t.\ $\vR$.
\end{theorem}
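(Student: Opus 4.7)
The plan is to lift the rank-based reachability argument from \REFthm{thm:Reachability} to the nested fixpoint in \eqref{equ:Rabin_all}, using the lexicographic order on $D(\FR)$, and to identify the Rabin pair witnessing the win from the stable prefix of the ranks observed along a compliant play.

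First, I will establish the key monotonicity properties of $\rank{}$ along any play $\pi$ compliant with $\rho_0$ starting in $v \in Z^*$. From \eqref{equ:It_pj} and \REFdef{def:ranking}, for every vertex $v \in Z^*$ with $\rank{v} = \delta p_j i_j \underline{\gamma}$ (and $i_j > 0$) there is a \enquote{witness} membership $v \in S_{\delta p_j i_j}$, which in turn places $v$ either in $\cpre(Y_{\delta p_j}^*)$ (via $G_{p_j}$) or in $\apre(Y_{\delta p_j}^*, X_{\delta p_j}^{i_j - 1})$. Using this, I will show that for each successor $v'$ of $v$ one of the following holds: (a) $\rank{v'}$ is lexicographically strictly smaller than $\rank{v}$; (b) $\rank{v'} \leq_{\mathrm{lex}} \rank{v}$ with equality possible only through the $\cpre(Y_{\delta p_j}^*)$ term at a state where $v \in G_{p_j}$, in which case the counter $i_j$ may reset but the prefix $\delta$ stays the same and $Y_{\delta p_j}^*$ is re-entered; or (c) $v \in \Vl \cap V_1$ and the live successor witnesses a strictly smaller rank while all non-live successors stay inside $Y_{\delta p_j}^*$, so the rank can only drop to some element $\leq_{\mathrm{lex}} \rank{v}$ with equal prefix up to position $2j$.

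The heart of the argument is then a well-foundedness plus fairness step, mirroring the soundness proof of \REFthm{thm:Reachability} but applied blockwise to the counters $(i_0, i_1, \ldots, i_k)$. Fix any fair adversarial play $\pi$ compliant with $\rho_0$. By construction $\pi$ stays in $Z^*$. Since $D(\FR)$ is well-ordered, the lexicographically smallest rank visited infinitely often along $\pi$ is attained, say at value $d^* = p_0 i_0^* \ldots p_k i_k^*$, and from some step on the prefix $p_0 i_0^* \ldots p_{j-1} i_{j-1}^*$ of $d^*$ up to position $2j$ becomes an invariant of the tail of~$\pi$, while below position $2j$ the ranks fluctuate but never drop below $d^*$. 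Here $j$ denotes the largest index at which the counter $i_j^*$ strictly decreases infinitely often along the suffix. I will lift the argument from \REFthm{thm:Reachability} to this block to show that from the point of stabilization onwards $\pi$ remains in $Q_{\delta p_j}$ (hence in $\overline{R}_{p_j}$, and in particular $\overline{R}_{p_j}$ holds eventually always) and infinitely often hits the $\left(G_{p_j} \cap \cpre(Y_{\delta p_j}^*)\right)$-summand in~$\mathcal{C}_{\delta p_j i_j^*}$. The first part follows because every lexicographic decrease of $i_j$ uses the reachability argument of \REFthm{thm:Reachability} inside $X_{\delta p_j}^{i_j^*}$, where $Q_{\delta p_j}$ plays the role of the safe set. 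The second part follows because the $\mu X_{p_j}$ fixpoint is bounded: after at most $n$ strict rank drops we must re-reset $i_j$, and such a reset is possible only through the $G_{p_j} \cap \cpre(Y_{\delta p_j}^*)$ disjunct of $\mathcal{C}_{\delta p_j i_j^*}$. Together these two facts give $\pi \models \Diamond\Box\overline{R}_{p_j} \wedge \Box\Diamond G_{p_j}$, so $\pi \models \vR$.

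The technically delicate point, and the one I expect to require most care, is exactly the handling of the $\apre$ disjunct at $\p{1}$-vertices in~$\Vl$: there $\rho_0$ cannot force any specific move, so the rank-decrease must be \emph{inferred} from the strong transition fairness assumption $\vl$ in \eqref{equ:vl}, using the same \enquote{bounded number of non-decreasing visits before a live edge fires} counting argument as in \REFapp{app:prop:Reachability}, but now threaded through every level of the nested fixpoint and with the added subtlety that a $\cpre$-induced reset at level~$j$ may restart the $\apre$-reasoning at deeper levels. I will handle this by induction on the depth $k-j$ of the stabilized level: the base case is essentially \REFthm{thm:Reachability} applied to the innermost block, and the inductive step glues the safe-B\"uchi argument of \REFthm{thm:SingleRabin} at level $j$ to the hypothesis that all deeper levels only produce finitely many events before the next $G_{p_j}$-visit forced by~$\rho_0$.
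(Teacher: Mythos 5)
Your overall plan---rank the vertices via the flattening of \eqref{equ:Rabin_all}, reduce the innermost level to \REFthm{thm:Reachability}, glue levels together with the safe-B\"uchi argument of \REFthm{thm:SingleRabin}, and use strong transition fairness to force rank decreases at $\Vl$-vertices---is exactly the paper's strategy, and your final paragraph (induction on the depth $k-j$) is essentially \REFprop{prop:soundness_j}.

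The one step that does not hold as stated is the stabilization claim in your second paragraph: that the prefix of $\rank{\cdot}$ up to position $2j$ \enquote{becomes an invariant of the tail of $\pi$.} The rank of a vertex is the \emph{minimum} of its rank-set over all permutation branches (\REFdef{def:ranking}), so two vertices both visited infinitely often can realize their minimal ranks through incomparable permutation prefixes; nothing in the definition of $d^*$ forces the observed prefixes to eventually agree. (The example in \REFapp{app:ExpFP} is instructive: $\rank{q_2}=002012$ and $\rank{q_7}=001121$ come from the two different permutation orders, and which branch is \enquote{active} can depend on \p{1}'s choices.) Relatedly, your case (b) is internally inconsistent as written: a reset at a $G_{p_j}\cap\cpre(Y^*_{\delta p_j})$ vertex sends the play anywhere in $Y^*_{\delta p_j}$, so the rank can strictly \emph{increase} there, contradicting $\rank{v'}\le_{\mathrm{lex}}\rank{v}$. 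The paper avoids both issues by never arguing about global stabilization: it introduces, for each configuration prefix, a local winning condition $\psi_{\delta p_j}$ in \eqref{equ:psi_delta} (a three-way disjunction: reach $S_\delta$, or safe B\"uchi on $G_{p_j}$ inside $Q_{\delta p_j}$, or win via a remaining pair) and shows by the inside-out induction that every compliant play from $X^{i_j}_{\delta p_j}$ satisfies it; the case \enquote{eventually trapped in $\mathcal{A}_{\delta p_j i_j}$} is discharged by the inductive hypothesis together with the LTL equivalence \eqref{equ:Psi:outline} (proved in \REFapp{appD:proofPsi}), which shows that the nested local conditions collapse to $\bigvee_i\left(\Diamond\Box\overline{R}_i\wedge\Box\Diamond G_i\right)$. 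That equivalence is the composition step your sketch silently assumes; if you keep the \enquote{minimal infinitely-often rank} framing you would need to prove a substitute for it, so I recommend adopting the $\psi_{\delta p_j}$ formulation directly.
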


\begin{example}\label{example:appendix:strategy}
 Consider the Rabin game depicted in \REFfig{fig:example} and discussed in \REFapp{app:ExpFP}. 
 Here, the strategy construction outlined in \REFthm{thm:strategy} enforces a transition from $q_6$ to $q_7$ and a transition from $q_5$ to $q_3$. This is observed by noting that $\rank{q_2}=002012$ and $\rank{q_7}=001121$ where $\rank{q_7}<\rank{q_2}$. In addition, $\rank{q_1}=011021$ and $\rank{q_3}=001121$, where $\rank{q_3}<\rank{q_1}$. 
\end{example}

\subsubsection{Soundness}\label{sec:theory:sound}
We now show why the fixpoint algorithm in \eqref{equ:Rabin_all} is \emph{sound}, i.e., why $Z^*\subseteq \WlR$ in 
\REFthm{thm:FPsoundcomplete} holds. In addition, we also show that \REFthm{thm:strategy} holds.  

We prove soundness by an induction over the nesting of fixed-points in \eqref{equ:Rabin_all} from inside to outside. In particular, we iteratively consider instances of the flattening in \eqref{equ:It_pj}, starting with $j=k$ as the base case, and doing an induction from \enquote{$j+1$} to \enquote{$j$}.
To this end, we consider a local winning condition which refers to the current configuration-prefix $\delta=p_0i_0\hdots p_{j-1}i_{j-1}$ in~\eqref{equ:It_pj}, namely
\begin{align}\label{equ:psi_delta}
\psi_{\delta p_j}:=
\left(
\begin{array}{rl}
&Q_{\delta p_j}\mathcal{U} S_{\delta}\\
\vee &\Box Q_{\delta p_j}\wedge \Box\Diamond G_{p_j}\\
\vee &\Box Q_{\delta p_j} \wedge \left(\bigvee_{i\in P\setminus\set{p_0,\hdots,p_{j}}}\left(\Diamond\Box\overline{R}_{i}\wedge \Box\Diamond G_{i}\right)\right) 
\end{array}
\right).
\end{align}
Further, we denote by $\mathcal{W}_{\delta p_j}$ the set of states for which player $0$ wins the \emph{fair adversarial game} over $\Gl$ w.r.t.\ $\psi_{\delta p_j}$ in \eqref{equ:psi_delta}. 

By recalling that for $p_j=p_0=0$ we have $Q_{p_0}=V$,  $S_{\varepsilon}=\emptyset$ and $G_{p_0}=\emptyset$, we see that for $j=0$ the condition in \eqref{equ:psi_delta} simplifies to 
\begin{align*}
\psi_{p_0}=\bigvee_{i\in P}\left(\Diamond\Box\overline{R}_{i}\wedge \Box\Diamond G_{i}\right).
\end{align*}
This implies that $\psi_{p_0}$ is equivalent to $\vR$ in \eqref{equ:vR}. Given this observation, the proof of soundness in \REFthm{thm:FPsoundcomplete} proceeds by inductively showing that 
\begin{equation}\label{equ:claim_pj}
X_{\delta p_j}^{i_j}\subseteq  \mathcal{W}_{\delta p_j}
\end{equation}
for any configuration prefix $\delta$, next permutation index $p_j$ and counter $i_j>0$. Thereby, we ultimately also prove this claim for $p_j=p_0=0$ where $\delta$ is the empty string and $Y_{p_0}^*=\bigcup_{i_0>0}X_{p_0}^{i_0}$ coincides with $Z^*$ in \eqref{equ:Rabin_all}, which proves the statement. 
	
With this insight the proof of \REFthm{thm:strategy} as well as the soundness part of \REFthm{thm:FPsoundcomplete} reduce to the following proposition.

\begin{proposition}\label{prop:soundness_j}
 For all $j\in [0,k]$, computation-prefixes $\delta=p_0i_0\hdots p_{j-1}i_{j-1}$, next permutation index $p_j\in P\setminus \set{p_0,\hdots,p_{j-1}}$, counter $i_j>0$ and state $v\in X^{i_j}_{\delta p_j}$ the strategy $\rho_0$ in \eqref{equ:strategy} wins the fair adversarial game over $\Gl$ w.r.t.\ $\psi_{\delta p_j}$ in \eqref{equ:psi_delta}.
\end{proposition}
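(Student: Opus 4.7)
I would prove the claim by reverse induction on $j$, starting from the innermost case $j=k$ and proceeding outwards to $j=0$. Since for $j=0$ we have $Q_{p_0}=V$, $S_\varepsilon=\emptyset$ and $G_{p_0}=\emptyset$, the formula $\psi_{p_0}$ collapses to exactly $\vR$, so establishing the claim for $j=0$ finishes both \REFthm{thm:FPsoundcomplete} (soundness) and \REFthm{thm:strategy}. Throughout, I would exploit the fact that the ranking in \REFdef{def:ranking} is lexicographic over $D(\FR)$ and that $\rho_0$ is the uniform rank-minimizing strategy, so rank-progress reasoning composes cleanly across nested configurations.

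\textbf{Base case} ($j=k$). Since $P\setminus\{p_1,\dots,p_k\}=\emptyset$, both $\mathcal{A}_{\delta p_k i_k}=\emptyset$ and the third disjunct of $\psi_{\delta p_k}$ are vacuous, so $X^{i_k}_{\delta p_k}=S_\delta\cup\mathcal{C}_{\delta p_k i_k}$ and the target reduces to $(Q_{\delta p_k}\mathcal U S_\delta)\vee(\Box Q_{\delta p_k}\wedge\Box\Diamond G_{p_k})$. I would observe that the formula for $Y_{\delta p_k}$ is precisely the fair adversarial safe B\"uchi fixpoint of \eqref{p:equ:SR:FP} with goal $G_{p_k}$ and safe set $Q_{\delta p_k}$, augmented by the absorbing escape set $S_\delta$ appearing in every $\mathcal C_{\delta p_a i_a}$ for $a<k$. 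By \REFthm{thm:SingleRabin} (adapted with $S_\delta$ treated as winning), the rank-minimizing strategy on $Y^*_{\delta p_k}$ forces every fair-adversarial play either to reach $S_\delta$ in $Q_{\delta p_k}$ (satisfying the first disjunct) or to visit $G_{p_k}$ infinitely often while staying in $Q_{\delta p_k}$ (satisfying the second disjunct).

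\textbf{Induction step.} Assume the claim for $j+1$. For $v\in X^{i_j}_{\delta p_j}$ I would split into three cases along the decomposition $X^{i_j}_{\delta p_j}=S_\delta\cup\mathcal{C}_{\delta p_j i_j}\cup\mathcal{A}_{\delta p_j i_j}$.
\emph{Case 1:} $v\in S_\delta$; then the first disjunct $Q_{\delta p_j}\mathcal U S_\delta$ holds immediately.
\emph{Case 2:} $v\in\mathcal{C}_{\delta p_j i_j}\setminus S_\delta$; here I apply the same argument as in the base case to the local safe B\"uchi game defined by $Q_{\delta p_j}$, $G_{p_j}$, with escape $S_\delta$. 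The rank-minimizing strategy either forces an exit into $S_\delta$ (reducing to Case~1) or into $\mathcal{A}_{\delta p_j i_j}$ (reducing to Case~3), or produces infinitely many visits to $G_{p_j}$ within $Q_{\delta p_j}$ (second disjunct of $\psi_{\delta p_j}$).
\emph{Case 3:} $v\in\mathcal{A}_{\delta p_j i_j}\setminus S_{\delta p_j i_j}$; by \eqref{equ:Adeltai} there exist $p_{j+1}$ and $i_{j+1}$ with $v\in X^{i_{j+1}}_{\delta p_j i_j p_{j+1}}$, and by the induction hypothesis $\rho_0$ wins $\psi_{\delta p_j i_j p_{j+1}}$ from $v$. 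I then match the three disjuncts of $\psi_{\delta p_j i_j p_{j+1}}$ with the three disjuncts of $\psi_{\delta p_j}$: the first reaches $S_{\delta p_j i_j}=S_\delta\cup\mathcal{C}_{\delta p_j i_j}$ through $Q_{\delta p_j p_{j+1}}\subseteq Q_{\delta p_j}$ and thus reduces to Case~1 or~2; the second gives $\Box Q_{\delta p_j}\wedge\Diamond\Box\overline R_{p_{j+1}}\wedge\Box\Diamond G_{p_{j+1}}$ with $p_{j+1}\notin\{p_0,\dots,p_j\}$, so the third disjunct of $\psi_{\delta p_j}$ holds; and the third already lies inside that of $\psi_{\delta p_j}$ because $Q_{\delta p_j p_{j+1}}\subseteq Q_{\delta p_j}$ and the $i$-index range shrinks.

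\textbf{Main obstacle.} The subtle point is the consistency between the single memoryless strategy $\rho_0$ (defined globally via $\rank{\cdot}$) and the local strategies implicit in the invocations of \REFthm{thm:SingleRabin} and the induction hypothesis, which are themselves rank-minimizing but over different slices of $D(\FR)$. I would argue that because the ranking of \REFdef{def:ranking} uses the lexicographic order of $D$ with the outer fixpoint counters as the most significant digits, minimizing globally refines minimizing locally: whenever Case~2 applies, the minimum rank of $v$ already has the configuration prefix $\delta p_j i_j$, so $\rho_0$ coincides with the strategy extracted from the local safe B\"uchi fixpoint; and whenever Case~3 applies, a rank-decreasing transition either remains in $X^{i_{j+1}}_{\delta p_j i_j p_{j+1}}$ (where the induction hypothesis governs) or moves to a strictly smaller overall configuration, which necessarily corresponds to a smaller $p_j i_j$-prefix and thus to Case~1 or Case~2. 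A secondary technicality is that live-edge vertices in $\Vl$ add disjunctively to $\apre(\cdot,\cdot)$, so in Case~2 and Case~3 I must invoke the fair-adversarial hypothesis on $\pi$ to ensure that live edges are eventually taken whenever their source is visited infinitely often; this is exactly the argument already carried out in the soundness proof of \REFthm{thm:Reachability} and can be lifted verbatim to each local fixpoint.
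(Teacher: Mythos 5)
Your proposal is correct and follows essentially the same route as the paper's proof: reverse induction on $j$ with the base case discharged via \REFthm{thm:Reachability} and \REFthm{thm:SingleRabin}, the induction step organized around the decomposition $S_\delta\cup\mathcal{C}_{\delta p_j i_j}\cup\mathcal{A}_{\delta p_j i_j}$ (the paper's \REFprop{prop:case_split_sound}), excursions into $\mathcal{A}_{\delta p_j i_j}$ neutralized by the constancy of the $i_j$-counter in the lexicographic rank, and the ``stays in $\mathcal{A}$ forever'' case resolved by matching the residual disjuncts of $\psi_{\delta' p_{j+1}}$ against the last line of $\psi_{\delta p_j}$ (the paper's Eq.~\eqref{equ:Psi:outline}). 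Your ``main obstacle'' paragraph correctly identifies exactly the two points the paper's argument hinges on, so the plan is sound.
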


To see why \REFprop{prop:soundness_j} holds, we consider the computation of $X_{\delta p_j}^{i_j+1}$ in \eqref{equ:Xdeltai} and observe that the states in $X_{\delta p_j}^{i_j+1}$ can be clustered based on their rank induced via \REFdef{def:ranking} as follows (see \REFsec{appD:case_split_sound} for a full proof).

 \begin{proposition}\label{prop:case_split_sound}
 Given the premisses of \REFprop{prop:soundness_j}, let \begin{align*}
 \underline{\gamma}&=p_{j+1}0p_{j+2}0\hdots p_k 0
 &&\text{with}\quad p_{j+1}<p_{j+2}<\hdots <p_k, \quad\text{and}\\
 \overline{\gamma}&=p_{j+1}np_{j+2}n\hdots p_k n
  &&\text{with}\quad p_{k}<p_{k-1}<\hdots <p_{j+1}
\end{align*}
 be the \emph{minimal} and \emph{maximal} post-fix, respectively.
 Then, for all $v\in X_{\delta p_j}^{i}$ exactly one of the following cases holds:
 \begin{compactenum}[(a)]
  \item $v\in S_\delta$ and $\rank{v}\leq \delta p_j 0\underline{\gamma}$,
  \item $v\in  Q_{\delta p_j} \cap G_{p_j}\cap\cpre(Y^*_{\delta p_j})$ and $\rank{v}=\delta p_j 1\underline{\gamma}$,
  \item $v\in  Q_{\delta p_j}\cap \apre(Y^*_{\delta p_j},X^{i_j-1}_{\delta p_j})$ and $\rank{v}=\delta p_j i_j\underline{\gamma}$ s.t.\ $i_j>1$, or
  \item $v\in \mathcal{A}_{\delta p_j i_j}$ and there exists $\underline{\gamma}<\gamma'\leq \overline{\gamma}$ s.t.\ $\rank{v}= \delta p_j i_j\gamma'$. 
 \end{compactenum}
 \end{proposition}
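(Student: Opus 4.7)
The plan is to read the case analysis directly off the flattening \eqref{equ:Xdeltai} and then determine the rank of each $v$ by applying Definition~\ref{def:ranking} through a lexicographic comparison on $D$. Unfolding \eqref{equ:Xdeltai} gives $X_{\delta p_j}^{i_j} = S_\delta \cup \mathcal{C}_{\delta p_j i_j} \cup \mathcal{A}_{\delta p_j i_j}$, and $\mathcal{C}_{\delta p_j i_j}$ splits along its two defining disjuncts---the Büchi-like term $Q_{\delta p_j}\cap G_{p_j}\cap \cpre(Y^*_{\delta p_j})$ and the progress term $Q_{\delta p_j}\cap \apre(Y^*_{\delta p_j},X^{i_j-1}_{\delta p_j})$. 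Coverage of cases (a)--(d) is then immediate. To obtain disjointness, I would interpret the cases in priority order: (a) absorbs every $v \in S_\delta$; (b) collects the remaining $v$ lying in the Büchi term; (c) takes $v\notin S_\delta\cup \mathcal{C}_{\delta p_j 1}$ that first enters $\mathcal{C}_{\delta p_j i_j}$ via the $\apre$ term at some iteration $i_j>1$; and (d) collects the residual vertices in $\mathcal{A}_{\delta p_j i_j}\setminus S_{\delta p_j i_j}$.

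Next, I would read off the rank values. For (b) and (c), membership $v\in\mathcal{C}_{\delta p_j i_j}\subseteq S_{\delta p_j i_j}$ puts $\delta p_j i_j\underline{\gamma}\in R(v)$ by Definition~\ref{def:ranking} (with $i_j=1$ in case~(b)); the case exclusions ($v\notin S_\delta$, and in (c) additionally $v\notin\mathcal{C}_{\delta p_j i'}$ for $i'<i_j$) rule out any strictly smaller element of $R(v)$, so $\rank{v}$ equals this value. For (a), $v\in S_\delta$ yields $v\in\mathcal{C}_{p_0 i_0\ldots p_b i_b}$ for some $b<j$, hence the rank element $p_0 i_0\ldots p_b i_b \underline{\gamma}'$ belongs to $R(v)$, where $\underline{\gamma}'$ is the canonical sorted completion; a position-by-position comparison against $\delta p_j 0\underline{\gamma}$ gives $\rank{v}\le \delta p_j 0\underline{\gamma}$. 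For (d), $v\in\mathcal{A}_{\delta p_j i_j}\setminus S_{\delta p_j i_j}$ unfolds via \eqref{equ:Adeltai} into $v\in X^{i_{j+1}}_{\delta p_j i_j p_{j+1}}$ for some $p_{j+1}\in P\setminus\{p_0,\ldots,p_j\}$ and $i_{j+1}>0$; descending recursively through the flattening eventually places $v$ in $\mathcal{C}_{\delta p_j i_j p_{j+1} i_{j+1} \ldots p_m i_m}$ for some deeper configuration, contributing an element of $R(v)$ of the form $\delta p_j i_j \gamma'$ with $\underline{\gamma}<\gamma'\le\overline{\gamma}$ while $\delta p_j i_j\underline{\gamma}\notin R(v)$ (since $v\notin S_{\delta p_j i_j}$). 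Mutual exclusivity then follows from the uniqueness of the minimum in the definition of $\rank{v}$: the four assigned rank intervals ($\le\delta p_j 0\underline{\gamma}$, exactly $\delta p_j 1\underline{\gamma}$, exactly $\delta p_j i_j\underline{\gamma}$ with $i_j>1$, strictly above $\delta p_j i_j\underline{\gamma}$) are pairwise disjoint.

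The main obstacle will be the lexicographic bookkeeping in cases (a) and (d). In (a), I need to argue that the canonical minimal postfix $\underline{\gamma}'$ attached to the shorter prefix $p_0 i_0\ldots p_b i_b$ always produces an element lexicographically $\le \delta p_j 0\underline{\gamma}$; this is subtle because the actual suffix $p_{b+1}i_{b+1}\ldots p_j 0\underline{\gamma}$ in $\delta p_j 0\underline{\gamma}$ uses the permutation indices already fixed by $\delta$, whereas $\underline{\gamma}'$ uses the sorted completion of $P\setminus\{p_0,\ldots,p_b\}$, and the comparison must correctly handle the tie case where $p_{b+1}$ equals the minimum of the remaining indices (there, the next counter $i_{b+1}>0$ decides in favor of $\underline{\gamma}'$, whose counter is $0$). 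In (d), the key technical point is that the fallback to deeper levels terminates with $v\in\mathcal{C}$ at some configuration $\delta p_j i_j p_{j+1}i_{j+1}\ldots p_m i_m$ with $m\le k$ (the base case $j=k$ being vacuous because $\mathcal{A}_{\delta p_k i_k}=\emptyset$), and this grounding is what justifies the upper bound $\overline{\gamma}$ in the conclusion.
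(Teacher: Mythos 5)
Your proposal follows essentially the same route as the paper's proof: the paper factors your inline rank computations for cases (a) and (d) into \REFlem{lem:ranking_app} (whose parts (ii) and (iv) are established by exactly the recursive descent through $\mathcal{A}_{\delta p_j i_j}$ that you describe, grounded at $j=k$ where $\mathcal{A}_{\delta p_k i_k}=\emptyset$), and then pins down the (b)- and (c)-ranks by the same exclusion argument based on the minimality in the definition of $\rank{}$. The two delicate points you isolate, including the lexicographic tie-breaking for case (a), are precisely the content of that lemma, so the approaches coincide.
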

Using \REFprop{prop:case_split_sound} we prove \REFprop{prop:soundness_j} by an induction over $j$. 

\begin{proof}[Proof of \REFprop{prop:soundness_j}]
\smallskip
\noindent\textbf{Base case:}
First, for $j=k$ the last line of \eqref{equ:psi_delta} disappears. Then the proof reduces to \REFthm{thm:Reachability} and \REFthm{thm:SingleRabin} in the following way. First, we fix all fixpoint variables $Y^*_{p_0\hdots p_l}$ and $X^{i_l}_{p_0\hdots p_l}$ for $l<j$ as well as $Y^*_{\delta p_j}$. With this, we see that 
$T:=S_\delta\cup (Q_{\delta p_j} \cap G_{p_j}\cap\cpre(Y^*_{\delta p_j}))$ becomes a fixed set of states and \eqref{equ:Xdeltai} reduces to
\begin{equation*}
 X^{i_j}_{\delta p_j}=T\cup (Q_{\delta p_j} \cap \apre(Y^*_{\delta p_j},X^{i_j-1}_{\delta p_j}))
\end{equation*}
where we know that $X^{i_j}_{\delta p_j}\subseteq Y^*_{\delta p_j}$. Further, it follows form \REFprop{prop:case_split_sound} that for all $X^{i_j}_{\delta p_j}$ the ranking only differs by the $i_j$ count. Hence, we can replace $\rho_0$ in \eqref{equ:strategy} by the simpler strategy $\rho_0$ in \eqref{equ:Reach:rho} that only considers the $i_j$ count as the rank of states in $Y^*_{\delta p_j}=\bigcup_{i_j>0}X^{i_j}_{\delta p_j}$. With this it follows from \REFthm{thm:Reachability} that for any fair adversarial play $\pi$ compliant with $\rho_0$ in \eqref{equ:strategy} and starting in~$X^{i_j}_{\delta p_j}$ for some $i_j\geq 0$ it holds that $Q_{\delta p_j}\mathcal{U} T$. This implies that whenever such a play $\pi$ eventually reaches a state in $S_\delta\subseteq T$ the first line of \eqref{equ:psi_delta} holds. 

Now assume that $\pi$ does not reach a state in $S_\delta\subseteq T$. Then it reaches a state in $Q_{\delta p_j} \cap G_{p_j}\cap\cpre(Y^*_{\delta p_j})$ and therefore has a successor state $v'\in Y^*_{\delta p_j}=\bigcup_{i_j>0}X^{i_j}_{\delta p_j}$. Hence, $v'\in X^{i_j}_{\delta p_j}$ for some $i_j\geq 0$. By repeatedly applying this argument we see that $\pi$ either eventually reaches  a state in $S_\delta\subseteq T$ or it remains infinitely in $\mathcal{C}_{\delta p_j \cdot}$. In the latter case, it follows from \REFthm{thm:SingleRabin} that the second line of \eqref{equ:psi_delta} holds.

\smallskip
\noindent\textbf{Induction step:}
 For the induction step (from \enquote{$j+1$} to \enquote{$j$}) we first analyze the assumption. I.e., we know that for the longer computation prefix $\delta'=\delta p_j i_j$ and any next permutation index $p_{j+1}$ we have that $Y^*_{\delta' p_{j+1}}\subseteq \mathcal{W}_{\delta' p_{j+1}}$ for all $p_{j+1}\in P\setminus\set{p_1,\hdots,p_{j}}$. Now recall that \eqref{equ:Adeltai} implies 
 \begin{equation*}
  \mathcal{A}_{\delta p_j i_j}=\textstyle\bigcup_{p_{j+1}\in P\setminus\set{p_1,\hdots,p_{j}}} Y^*_{\delta' p_{j+1}}
\setminus S_{\delta p_j i_j}
 \end{equation*}
 and therefore, we know that for all $v\in \mathcal{A}_{\delta p_j i_j}$ there exists a $p_{j+1}$ s.t.\ $v\in\mathcal{W}_{\delta' p_{j+1}}$. That is, any fair adversarial play starting in $v$ that is compliant with $\rho_0$ in \eqref{equ:strategy} fulfills \eqref{equ:psi_delta}.

  Therefore, whenever a fair adversarial play $\pi$ starting in $X^{i_j}_{\delta p_j}$ visits a vertex $v\in \mathcal{A}_{\delta p_j i_j}$ (i.e., case (d) holds), we know that $\pi$ could possibly come back to a state 
  $v\in S_{\delta'p_{j+1}}= S_\delta\cup\mathcal{C}_{\delta p_j i_j}$
  (via the first line of $\psi_{\delta' p_{j+1}}$). 
  
  In this case, \REFprop{prop:case_split_sound} ensures that the $i_j$ count of the rank of states always stays constant while the play stays in $\mathcal{A}_{\delta p_j i_j}$. Therefore, one can ignore these finite sequences of (d) vertices in $\pi$ while applying the ranking arguments of \REFthm{thm:Reachability} and \REFthm{thm:SingleRabin}. I.e., we can conclude that in this case either the first or the second line of \eqref{equ:psi_delta} holds for $\pi$.
It remains to show that $\pi$ fulfills the last line of \eqref{equ:psi_delta} if $\pi$ eventually stays within $\mathcal{A}_{\delta p_j i_j}$ forever. First, observe that this is only possible if $S_\delta$ is not visited along $\pi$. Hence, we know that $Q_{\delta p_j}$ holds along $\pi$ until $\mathcal{A}_{\delta p_j i_j}$ is entered and never left. Further, as $\mathcal{A}_{\delta p_j i_j}$ is assumed to be never left after some time $k>0$, we know that from that time onward there exists no $p_{j+1}$ s.t.\ $S_{\delta'p_{j+1}}$ is visited again by $\pi$.  
This implies that for all vertices $\pi(k')$ with $k'>k$ the last two lines of $\psi_{\delta' p_{j+1}}$ (denoted $\psi'_{\delta' p_{j+1}}$) must be true for at lease one $p_{j+1}$. Hence, $\pi$ fulfills the property
\begin{subequations}\label{equ:Psi:outline}
\begin{align}\label{equ:Psi}
  \Psi_{\delta p_j}:=&\Box Q_{\delta p_j} \wedge\Diamond\underbrace{\left(\textstyle\bigvee_{p_{j+1}\in P\setminus\set{p_1,\hdots,p_{j}}}  \psi'_{\delta' p_{j+1}}\right)}_{\Psi'_{\delta p_j}}
\end{align}
With this, it remains to show that $\Psi_{\delta p_j}$ implies that the last line of \eqref{equ:psi_delta} is true for $\pi$. In particular, we can show that both statements are equivalent, i.e.,
  \begin{align}\label{equ:Psi:outline:b}
   \Psi_{\delta p_j}=&\Box Q_{\delta p_j}\wedge\bigvee_{p_{j+1}\in P\setminus\set{p_1,\hdots,p_{j}}} 
\left(\Diamond\Box\overline{R}_{p_{j+1}}\wedge \Box\Diamond G_{p_{j+1}}
 \right)
  \end{align}
Equation \eqref{equ:Psi:outline} is proved in \REFsec{appD:proofPsi}. This conclues the proof.
 \end{subequations} 
\end{proof}

\subsubsection{Completeness}\label{sec:theory:complete}
We now show why the fixpoint algorithm in \eqref{equ:Rabin_all} is complete,  i.e., why $\WlR\subseteq Z^*$ in \REFthm{thm:FPsoundcomplete} holds.  

We also prove completeness by an induction over the nesting of the fixpoints in \eqref{equ:Rabin_all} from inside to outside. In particular, we iteratively consider the fixed-points $Y^*_{\delta p_j}$ and show that $Y^*_{\delta p_j}\subseteq \mathcal{W}_{\delta p_j}$. As $\psi_{\delta p_j}$ simplifies to $\vR$ in \eqref{equ:vR} for $p_j=p_0=0$, we ultimately show that $\WlR\subseteq Z^*$ in \REFthm{thm:FPsoundcomplete}. 
With this insight the proof of the completeness part of \REFthm{thm:FPsoundcomplete} reduces to the following proposition. 

\begin{proposition}\label{prop:completeness_j}
 For all $j\in [0,k]$, computation-prefixes $\delta=p_0i_0\hdots p_{j-1}i_{j-1}$ and next permutation index $p_j\in P\setminus \set{p_0,\hdots,p_{j-1}}$ it holds that $\mathcal{W}_{\delta p_j}\subseteq Y_{\delta p_j}^*$. 
\end{proposition}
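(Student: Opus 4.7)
The plan is to establish $\mathcal{W}_{\delta p_j} \subseteq Y^*_{\delta p_j}$ by downward induction on $j$, from $j=k$ down to $j=0$. At every level, the cleanest route is by contrapositive: from any vertex $v \notin Y^*_{\delta p_j}$ we will construct a \p{1} strategy $\tau_1$ such that every fair-adversarial play from $v$ compliant with $\tau_1$ falsifies $\psi_{\delta p_j}$, giving $v \notin \mathcal{W}_{\delta p_j}$. For the constructions of these \p{1} strategies we reuse, as building blocks, the dual ranking technique used in the completeness proof of \REFthm{thm:Reachability} (\REFapp{app:prop:Reachability}) together with the induction hypothesis as a refuter for the deeper disjuncts.

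For the base case $j=k$ the third disjunct of $\psi_{\delta p_k}$ disappears because $P\setminus\{p_0,\ldots,p_k\} = \emptyset$, so the winning condition reduces to $(Q_{\delta p_k}\,\mathcal{U}\,S_\delta)\vee(\Box Q_{\delta p_k}\wedge\Box\Diamond G_{p_k})$. The corresponding fixpoint $Y^*_{\delta p_k}$ is the union, on the outer $\nu$, of a safe reachability fixpoint (target $S_\delta$, safe set $Q_{\delta p_k}$) and a safe Büchi fixpoint (goal $G_{p_k}$, safe set $Q_{\delta p_k}$). Taking the complement of the fixpoint expression, dualizing the predecessor operators (as in the completeness analysis in \REFapp{app:prop:Reachability}) and combining the resulting two refuting strategies on the trap $V\setminus Y^*_{\delta p_k}$ yields a fair-adversarial refuter for $v \notin Y^*_{\delta p_k}$, establishing the base case.

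For the induction step, assume the statement holds for every longer prefix $\delta' = \delta p_j i_j$ and every $p_{j+1}\in P\setminus\{p_0,\ldots,p_j\}$. Fix $v \notin Y^*_{\delta p_j}$. Because $Y^*_{\delta p_j}$ is a greatest fixpoint, its complement is a \p{1}-trap in which, from every vertex, \p{1} can \emph{either} exit $Q_{\delta p_j}$ without visiting $S_\delta$, \emph{or} stay in $\overline{Y^*_{\delta p_j}}\cap Q_{\delta p_j}$ while simultaneously refuting the $G_{p_j}$-Büchi disjunct and each of the inner Rabin disjuncts. At \p{1} vertices outside $\Vl$ such a move exists by the usual dual $\cpre$ argument; at vertices in $\Vl$ the dual of $\apre$ (exactly as in the completeness proof of \REFthm{thm:Reachability}) provides a move that both keeps the play in the trap and is compatible with the strong-transition-fairness constraint. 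By the induction hypothesis applied to each extension $\delta p_j i_j\,p_{j+1}$, the fact that $v$ sits outside $Y^*_{\delta p_j i_j p_{j+1}}$ yields, for each $p_{j+1}$, a memoryless \p{1} strategy $\tau_1^{p_{j+1}}$ that refutes $\psi_{\delta p_j i_j p_{j+1}}$ from $v$.

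The main obstacle will be the third disjunct of $\psi_{\delta p_j}$: on a play kept forever inside $Q_{\delta p_j}\setminus S_\delta$, \p{1} must refute the Rabin-like sub-condition for \emph{every} choice of $p_{j+1}$, and it is exactly the union over $p_{j+1}$ in the fixpoint that makes the straightforward assignment of one $p_{j+1}$ to the play impossible. The plan is to combine the finitely many memoryless strategies $\{\tau_1^{p_{j+1}}\}_{p_{j+1}}$ into a single finite-memory strategy $\tau_1$ by a round-robin schedule driven by the rank of \REFdef{def:ranking} applied to the complement fixpoint, switching to $\tau_1^{p_{j+1}}$ each time the play reenters a region corresponding to the inner fixpoint indexed by $p_{j+1}$. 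A finite-memory combination suffices because, with $V$ finite, only finitely many indices can be active infinitely often, and on every such index the induction hypothesis guarantees that its sub-condition is violated infinitely often. Once $\tau_1$ is assembled, every fair-adversarial play from $v$ compliant with $\tau_1$ falsifies all three disjuncts of $\psi_{\delta p_j}$, contradicting $v\in\mathcal{W}_{\delta p_j}$ and completing the induction.
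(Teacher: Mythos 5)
Your overall strategy is genuinely different from the paper's: you argue by contrapositive, building a \p{1} spoiling strategy from every vertex outside $Y^*_{\delta p_j}$, whereas the paper argues directly that every \p{0}-winning vertex lands in the fixpoint. The paper's induction step observes that any play satisfying the third disjunct of $\psi_{\delta p_j}$ must, while staying in $Q_{\delta p_j}$, eventually reach a state that is winning for one of the inner conditions; by the induction hypothesis that state lies in $Y^*_{\delta p_j i_j p_{j+1}}$ and hence in $\mathcal{A}_{\delta p_j i_j}$. So $\mathcal{W}_{\delta p_j}$ is contained in the winning region of a single safe-reachability objective $Q_{\delta p_j}\,\mathcal{U}\,T$ with the enlarged target $T:=S_\delta\cup(Q_{\delta p_j}\cap G_{p_j}\cap\cpre(Y^*_{\delta p_j}))\cup\mathcal{A}_{\delta p_j i_j}$, and the completeness parts of \REFthm{thm:Reachability} and \REFthm{thm:SingleRabin} finish the step. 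No spoiling strategy for the disjunction over $p_{j+1}$ ever has to be assembled.

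The gap in your plan is exactly the step you flag as \enquote{the main obstacle} and then resolve by fiat. Refuting $\psi_{\delta p_j}$ on a play trapped in $Q_{\delta p_j}\setminus S_\delta$ means satisfying, on one and the same play, the conjunction over all $p_{j+1}$ of $\Box\Diamond R_{p_{j+1}}\vee\Diamond\Box\overline{G}_{p_{j+1}}$ --- a Streett objective. Your induction hypothesis (invoked via determinacy, which you use silently both here and when extracting each $\tau_1^{p_{j+1}}$) gives one refuting strategy per index, but each of these refutes a tail property of the entire infinite play; a round-robin switch between them does not compose, because the segment produced under $\tau_1^{p_{j+1}}$ can destroy the eventual-safety obligation $\Diamond\Box\overline{G}_{p'}$ that another $\tau_1^{p'}$ was maintaining, and \enquote{its sub-condition is violated infinitely often} is not meaningful for a tail formula. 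Your claim that each $\tau_1^{p_{j+1}}$ is memoryless is also unjustified, since Streett-type refuters require memory in general. Making this combination precise is essentially the full difficulty of strategy construction in Streett games --- it needs an index-appearance-record or Zielonka-tree style bookkeeping rather than a round-robin --- so as written the induction step does not go through.
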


\begin{proof}
The proof  proceeds by a nested induction over $j$ starting with $j=k$.

\smallskip
\noindent\textbf{Base case:}
Recall that for $j=k$ the last line of \eqref{equ:psi_delta} disappears. Hence, for any state $v\in\mathcal{W}_{\delta p_j}$ either the first or the second line of \eqref{equ:psi_delta}  holds. 
Then the proof reduces to \REFthm{thm:Reachability} and \REFthm{thm:SingleRabin} in the following way. 

First, we fix all fixpoint variables $Y^*_{p_0\hdots p_l}$ and $X^{i_l}_{p_0\hdots p_l}$ for $l<j$ as well as $Y^*_{\delta p_j}$. With this, we see that 
$T:=S_\delta\cup (Q_{\delta p_j} \cap G_{p_j}\cap\cpre(Y^*_{\delta p_j}))$ becomes a fixed set of states and \eqref{equ:Xdeltai} reduces to
\begin{equation*}
 Y^*_{\delta p_j}=Z^*(\tuple{T,Q_{\delta p_j}})
\end{equation*}
where $Z^*(\tuple{T,Q})$ is the set of states computed by the fixpoint algorithm in \eqref{equ:Reach:FP}. 

Then it follows from \REFthm{thm:Reachability} that any state $v\in V$ for which there exists a fair adversarial play $\pi$ that is winning for the winning condition $Q_{\delta p_j}\mathcal{U} T$ is contained in $Y^*_{\delta p_j}$. If, indeed the first line of \eqref{equ:psi_delta} holds for $\pi$, this ensures that the claim holds.

Now assume that $Q_{\delta p_j}\mathcal{U} T$ holds for $\pi$ but $S_\delta$ is never reached. Hence, $Q_{\delta p_j}\mathcal{U} (Q_{\delta p_j} \cap G_{p_j}\cap\cpre(Y^*_{\delta p_j}))$ holds for $\pi$. With this, it follows form \REFthm{thm:SingleRabin} that any state $v\in V$ for which there exists a fair adversarial play $\pi$ for which the second line of \eqref{equ:psi_delta} holds is contained in $Y^*_{\delta p_j}$, proving the claim in this case.

\smallskip
\noindent\textbf{Induction Step:}
For the induction from \enquote{$j+1$} to \enquote{$j$} we first analyze the assumption. I.e., we know that for the longer computation prefix $\delta'=\delta p_j$ and any next permutation index $p_{j+1}$ we have that $\mathcal{W}_{\delta' p_{j+1}}\subseteq Y^*_{\delta'p_{j+1}}$.
Further, observe that $\Psi'_{\delta p_j}\subseteq \bigcup_{p_{j+1}\in P\setminus\set{p_1,\hdots,p_{j}}}\mathcal{W}_{\delta' p_{j+1}}\setminus S_{\delta p_j i_j}$ by construction. 
We therefore have 
\begin{equation*}
 \Psi'_{\delta p_j}\subseteq \bigcup_{p_{j+1}\in P\setminus\set{p_1,\hdots,p_{j}}}Y^*_{\delta'p_{j+1}}\setminus S_{\delta p_j i_j}=\mathcal{A}_{\delta p_j i_j}.
\end{equation*}
With this observation, we see that any fair adversarial play $\pi$ which fulfills the last line of \eqref{equ:psi_delta} also fulfills the weaker condition $Q_{\delta p_j}\mathcal{U} \mathcal{A}_{\delta p_j i_j}$. Therefore, the claim follows from the same reasoning as in the base case by re-defining $T$ to 
$T:=S_\delta\cup (Q_{\delta p_j} \cap G_{p_j}\cap\cpre(Y^*_{\delta p_j})) \cup\mathcal{A}_{\delta p_j i_j}$. 
\end{proof}

\subsubsection{Additional Lemmas and Proofs}\label{appD:additional}
In this section we provide additional lemmas and proofs to support the proof of 
\REFthm{thm:FPsoundcomplete} and \REFthm{thm:strategy}.

\subsubsection{Proof of \REFprop{prop:case_split_sound}}\label{appD:case_split_sound}

\begin{lemma}\label{lem:ranking_app}
Given the premisses of \REFprop{prop:case_split_sound}, it holds for all $v\in X_{\delta p_j}^{i_j}$ that
 \begin{compactenum}[(i)]
 \item $v\in S_\delta$ iff $\rank{v}\leq \delta p_j 0\underline{\gamma}$
 \item $v\in X_{\delta p_j}^{i_j}$ iff $\rank{v}\leq \delta p_j i_j \overline{\gamma}$
\item $v\in Y_{\delta p_j}^*$ iff $\rank{v}\leq \delta p_j n\overline{\gamma}$ 
 \item $v\in \mathcal{A}_{\delta p_j i_j}$ iff there exists $\underline{\gamma}<\gamma'\leq\overline{\gamma}$ s.t.\ $\rank{v}= \delta p_j i_j\gamma'$
\end{compactenum}
\end{lemma}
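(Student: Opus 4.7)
The plan is to prove all four items by a combined induction over the nesting level $j$ (from $k$ down to $0$), treating the four claims as a single induction hypothesis since they are mutually dependent: claims (ii)--(iv) at level $j$ invoke (i)--(iv) at level $j+1$ through the set $\mathcal{A}_{\delta p_j i_j}$, which itself is built out of inner fixpoint variables $X^{i_{j+1}}_{\delta p_j i_j p_{j+1}}$. The key structural observation is that the lexicographic order on the configuration domain $D$ (alternating permutation indices $p_a$ and counters $i_a$) is engineered to mirror the computation order of the nested fixpoint in \eqref{equ:Rabin_all}: iterations at shallower levels or at earlier permutation indices always produce strictly smaller ranks than iterations at deeper levels or later indices, so that set membership and rank bounds are two sides of the same coin. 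I would also record a routine monotonicity lemma stating that $c \leq c'$ in $D$ (lexicographically) implies $S_c \subseteq S_{c'}$ under the convention $S_{\delta p_j 0} = S_\delta$, which justifies reading off set inclusions directly from rank comparisons.

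For claim (i), the forward direction is immediate from Definition~\ref{def:ranking}: if $v\in S_\delta$, some $b<j$ and counter $i_b$ satisfy $v\in \mathcal{C}_{p_0 i_0 \hdots p_b i_b}$, so $R(v)$ contains the configuration $p_0 i_0 \hdots p_b i_b \underline{\gamma}'$ which is lexicographically at most $\delta p_j 0 \underline{\gamma}$ because they either agree through position $2b+1$ (and the former then has a minimal remaining permutation index followed by zero counters, while the latter still has further entries in $\delta$) or they diverge earlier at a position where the former is smaller. The reverse direction uses that $\rank{v} \leq \delta p_j 0 \underline{\gamma}$ forces some finite rank $\delta'' p''_{j''} i''_{j''}\underline{\gamma}''$ witnessing $v\in S_{\delta'' p''_{j''} i''_{j''}}$; by the monotonicity lemma this set is contained in $S_\delta$. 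Claim (ii) then follows by induction on $i_j$ using the flattening $X_{\delta p_j}^{i_j} = S_\delta \cup \mathcal{C}_{\delta p_j i_j} \cup \mathcal{A}_{\delta p_j i_j}$ of Proposition~\ref{prop:flattening}: claim (i) handles $S_\delta$, Definition~\ref{def:ranking} directly assigns rank $\delta p_j i_j \underline{\gamma}$ to newly added members of $\mathcal{C}_{\delta p_j i_j}$, and the (already-established) claim (iv) handles $\mathcal{A}_{\delta p_j i_j}$.

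Claim (iii) reduces to (ii) with $i_j = n$, using that $n$ exceeds the total number of iterations so that $Y_{\delta p_j}^* = X_{\delta p_j}^n$ and that $\overline{\gamma}$ is the maximum post-fix. Claim (iv) is the most delicate and is where I expect the main obstacle. By definition of $\mathcal{A}_{\delta p_j i_j}$, a vertex $v$ lies in it iff $v \in X^{i_{j+1}}_{\delta p_j i_j p_{j+1}} \setminus S_{\delta p_j i_j}$ for some choice of $p_{j+1} \in P\setminus\{p_1,\hdots,p_j\}$ and $i_{j+1}>0$; the inductive hypothesis (ii) at level $j+1$ pins the minimal such rank assignment to the form $\delta p_j i_j p_{j+1} i_{j+1} \gamma''$. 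The subtle part is the strict inequality $\underline{\gamma} < p_{j+1} i_{j+1} \gamma''$: this combines the exclusion $v \notin S_{\delta p_j i_j}$ (ruling out rank $\leq \delta p_j i_j \underline{\gamma}$, via the converse direction of (i) at the next level down) with the observation that either $p_{j+1} > p^\star_{j+1}$ (the smallest remaining permutation index appearing in $\underline{\gamma}$) or $p_{j+1}=p^\star_{j+1}$ and $i_{j+1}>0$, either of which breaks the tie upward at an early position. The main obstacle throughout is the careful bookkeeping of the lexicographic comparison between configurations of different effective lengths and with two interleaved entry types; formulating a clean combined induction at each level, together with the monotonicity lemma for $S_\cdot$, is what lets the argument avoid drowning in case analysis.
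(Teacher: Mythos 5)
Your proposal is correct and follows essentially the same route as the paper's proof: both proceed by induction over the nesting depth $j$ (from $k$ downward), read ranks off directly from \REFdef{def:ranking} and the flattening of \REFprop{prop:flattening}, and obtain (iii) and (iv) by sandwiching $\rank{v}$ between $\delta p_j i_j\underline{\gamma}$ and $\delta p_j i_j\overline{\gamma}$ using that these are the lexicographically least and greatest configurations with that prefix. The only differences are organizational: you make the monotonicity of $c\mapsto S_c$ an explicit auxiliary lemma and handle $\mathcal{A}_{\delta p_j i_j}$ inside claim (ii) via claim (iv), whereas the paper leaves the monotonicity implicit, proves (ii) directly from the union $X_{\delta p_j}^{i_j}=\bigcup_{p_{j+1}}\bigcup_{i_{j+1}>0}X^{i_{j+1}}_{\delta p_j i_j p_{j+1}}$ with a worst-case bound $p_{j+1}=\max$, $i_{j+1}=n$, and only then derives (iv) from (i) and (ii).
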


\begin{proof}[Proof of \REFlem{lem:ranking_app}] We prove all claims separately.\\
\begin{inparaenum}[(i)]
\item It immediately follows from \REFdef{def:ranking} (i) that $\delta p_j 0\underline{\gamma}\in R(v)$ iff $v\in S_\delta$. If it is the minimal element in $R(v)$ then $\rank{v}=\delta p_j 0\underline{\gamma}$, if not, there exists a smaller element in $R(v)$, and then $\rank{v}<\delta p_j 0\underline{\gamma}$ from the definition of $\rank{}$.\\
\item First, observe, that for $j=k$ it follows from \eqref{equ:Xdeltai} that $X_{\delta p_k}^{i_k}=S_\delta p_k i_k$ and therefore from (i) that $v\in X_{\delta p_k}^{i_k}$ iff $\rank{v}\leq \delta p_k i_k$. Now we do an induction, assuming that for any $p_{j+1}\in P\setminus\set{p_0,\hdots,p_j}$ and $0<i_{j+1}\leq n$ it holds that $v\in X_{\delta p_{j+1}}^{i_{j+1}}$ iff $\rank{v}\leq \delta' p_{j+1} i_{j+1} \overline{\gamma'}$ (where $\delta'$ goes up to index $j$ and $\gamma'$ starts only at index $j+2$. Now recall that 
\begin{equation*}
   X_{\delta p_j}^{i_j}= \bigcup_{p_{j+1}\in P\setminus\set{p_0,\hdots,p_j}} Y^{*}_{\delta p_{j+1}}
  =\bigcup_{p_{j+1}\in P\setminus\set{p_0,\hdots,p_j}} \bigcup_{i_{j+1}>0} X^{i_{j+1}}_{\delta p_j i_j p_{j+1}}.
\end{equation*}
Hence, $v\in X_{\delta p_j}^{i_j}$ iff there exists $p_{j+1}\in P\setminus\set{p_0,\hdots,p_j}$ and $0<i_{j+1}\leq n$ s.t.\ $v\in X^{i_{j+1}}_{\delta p_j i_j p_{j+1}}$. Now we know that for any choice of $p_{j+1}$ and $i_{j+1}$ we have $\rank{v}\leq \delta' p_j i_j p_{j+1} i_{j+1} \overline{\gamma'}$. Now the worst case, in terms of the lexicographic ordering over $D$ is that $p_{j+1}=\max(P\setminus\set{p_0,\hdots,p_j})$ and $i_{j+1}=n$. Hence, we know that $\rank{v}\leq \delta p_j i_j  \overline{\gamma}$. \\
\item As $Y^{*}_{\delta p_{j}}=\bigcup_{i_{j}>0} X^{i_{j}}_{\delta p_j}$ it follows that there exists  $0<i_{j}\leq n$ s.t.\ $v\in X^{i_{j}}_{\delta p_j}$ and (from (ii)) therefore $\rank{v}\leq \delta p_j i_j \overline{\gamma}$. Again, the worst case is $i_j=n$, giving $\rank{v}\leq \delta p_j n\overline{\gamma}$.\\
\item It follows from \eqref{equ:Xdeltai} that $v\in \mathcal{A}_{\delta p_j i_j}$ iff $v\in X_{\delta p_j}^{i_j}\setminus S_{\delta p_j i_j}$. Hence, it follows from (i) and (ii) that $\rank{v}> \delta p_j 0\underline{\gamma}$ and $\rank{v}\leq \delta p_j i_j \overline{\gamma}$ which is true iff there exists $\underline{\gamma}<\gamma'\leq\overline{\gamma}$ s.t.\ $\rank{v}= \delta p_j i_j\gamma'$, which proves the statement.
\end{inparaenum}
\end{proof}

Given these properties of the ranking function, we are ready to prove the suggested case split in \REFprop{prop:case_split_sound}.

\begin{proof}[Proof of \REFprop{prop:case_split_sound}]
We call a vertex $v\in V$ that fulfills cases ($\alpha$) in either \REFlem{lem:ranking_app} or \REFprop{prop:case_split_sound} an ($\alpha$)-vertex.
First, observe that cases (i) and (iv) in \REFlem{lem:ranking_app} coincide with cases (a) and (d), respectively, in \REFprop{prop:case_split_sound}. Further, recall that $X^1_{\delta p_j}=\emptyset$. Therefore, $X^1_{\delta p_j}$ only contains (a)-,(b)- and (d)-vertices, as $\apre(\cdot,\emptyset)=\emptyset$. Now we know from (ii) that for any $v\in X^1_{\delta p_j}$ we have $\rank{v}\leq \delta p_j 1 \overline{\gamma}$. Now excluding the rankings for (a)- and (d)-vertices we obtain that (b)-vertices must have rank $\rank{v}\leq \delta p_j 1 \underline{\gamma}$. Similarly, for every $i_j>1$ we know that $X^{i_j}_{\delta p_j}$ contains (a)-, (b)-, (c)- and (d)- vertices. Now excluding (a)-, (b)- and (d)- vertices yields $\rank{v}\leq \delta p_j i_j \underline{\gamma}$ for all (c)-vertices.
\end{proof}

\subsubsection{Proof of \eqref{equ:Psi:outline}}\label{appD:proofPsi}

Given the notation in \REFsec{sec:theory:sound} we prove that the equality in \eqref{equ:Psi:outline} holds.

First recall that  
\begin{align}
& \Psi'_{\delta' p_{j+1}}:= \left(
\begin{array}{rl}
&\Box Q_{\delta' p_{j+1}} \wedge \Box\Diamond G_{p_{j+1}}\\
\vee &\Box  Q_{\delta' p_{j+1}}  \wedge \left(\bigvee_{i\in \Pt{j+1}}\left(\Diamond\Box\overline{R}_{i}\wedge \Box\Diamond G_{i}\right)\right) 
\end{array}
\right),
\label{equ:psiprime_jm1}
\end{align}
where $\Pt{j+1}:=P\setminus\set{p_1,\hdots,p_{j+1}}$.

For the insertion of \eqref{equ:psiprime_jm1} into \eqref{equ:Psi} we have the following observations. First, observe that $\Diamond(B\vee C)=\Diamond B\vee \Diamond C$, i.e., we can distribute the eventuality operator preceding $\Psi'_{\delta' p_{j+1}}$ over both lines. Second, we can re-order the preceding disjunction over $p_{j+1}$ in \eqref{equ:Psi} and the disjunction between the two lines of \eqref{equ:psiprime_jm1}. 
This yields to the following condition
\begin{align}
 \Psi_{\delta p_j}=&\textstyle\Box Q_{\delta p_j}\wedge
\left(
\bigvee_{p_{j+1}\in \Pt{j}} (\Diamond \lambda_1)
\vee\bigvee_{p_{j+1}\in \Pt{j}} (\Diamond \lambda_2)
\right)\nonumber \\
=& \underbrace{\textstyle\left( \Box Q_{\delta p_j} \wedge \bigvee_{p_{j+1}\in \Pt{j}} (\Diamond \lambda_1) \right)}_{=:\Psi_1}
\vee \underbrace{\textstyle\left( \Box Q_{\delta p_j} \wedge \bigvee_{p_{j+1}\in \Pt{j}} (\Diamond \lambda_2)\right)}_{=:\Psi_2}, \label{equ:derive:a}
\end{align}
where $\lambda_i$ denotes the $i$-th line of the conjunction in \eqref{equ:psiprime_jm1}.

Now let us investigate the terms $\Psi_1$ and $\Psi_2$ in \eqref{equ:derive:a} separately. For $\Psi_1$, observe that  $\Diamond\Box\Diamond A=\Box\Diamond A$ and $\Diamond(\Box A\wedge \Box B)=\Diamond\Box A\wedge \Diamond\Box B$. Further we have $Q_{\delta' p_{j+1}}=Q_{\delta p_j}\wedge \overline{R}_{j+1}\subseteq Q_{\delta p_j}$ and hence
\begin{align*}
 \Psi_1=&\Box Q_{\delta p_j}\wedge\bigvee_{p_{j+1}\in \Pt{j}} 
\left(\Diamond\Box(Q_{\delta p_j}\wedge\overline{R}_{p_{j+1}})\wedge \Box\Diamond G_{p_{j+1}}
 \right)
\end{align*}
By using the equality $\Diamond\Box(A\wedge B)=\Diamond\Box A\wedge \Diamond\Box B$ and the fact that $Q_{\delta p_j}$ is independent of the choice of $p_{j+1}$ we get
\begin{align}
 \Psi_1=&\Box Q_{\delta p_j}\wedge\Diamond\Box Q_{\delta p_j}\wedge\bigvee_{p_{j+1}\in \Pt{j}} 
\left(\Diamond\Box\overline{R}_{p_{j+1}}\wedge \Box\Diamond G_{p_{j+1}}
 \right)\notag\\
 =&\Box Q_{\delta p_j}\wedge\bigvee_{p_{j+1}\in \Pt{j}} 
\left(\Diamond\Box\overline{R}_{p_{j+1}}\wedge \Box\Diamond G_{p_{j+1}}
 \right).\label{equ:derive_line1}
\end{align}

To analyze $\Psi_2$ in \eqref{equ:derive:a}, recall that the eventuality operator $\Diamond$ distributes over disjunctions. 
We can therefore move the inner disjunction over $i$ outside and get
\begin{align*}
 \Psi_2=&\Box Q_{\delta p_j}\wedge
 \bigvee_{p_{j+1}\in \Pt{j}} \left(\bigvee_{i\in \Pt{j+1}}\left[\Diamond\left(\Box Q_{\delta' p_{j+1}} \wedge \left(\Diamond\Box\overline{R}_{i}\wedge \Box\Diamond G_{i}\right)\right) \right]\right)
\end{align*}
Now observe that $\left(\Diamond\Box\overline{R}_{i}\wedge \Box\Diamond G_{i}\right)=\Diamond\left(\Box\overline{R}_{i}\wedge \Box\Diamond G_{i}\right)$ and 
$\Diamond(\Box A\wedge \Diamond B)=\Diamond\Box A\wedge \Diamond B$. Additionally using $Q_{\delta' p_{j+1}}=Q_{\delta p_j}\wedge \overline{R}_{p_{j+1}}\subseteq Q_{\delta p_j}$  we get 

\begin{align*}
 \Psi_2=&\Box Q_{\delta p_j}\wedge
 \bigvee_{p_{j+1}\in \Pt{j}} \left(\bigvee_{i\in \Pt{j+1}}\left[\Diamond\Box (Q_{\delta p_j}\wedge \overline{R}_{p_{j+1}})\wedge \left(\Diamond\Box\overline{R}_{i}\wedge \Box\Diamond G_{i}\right)\right]\right)
\end{align*}

Now we can do the same trick as in the simplification of $\Psi$ (see \eqref{equ:derive_line1}) to remove the $Q_{\delta p_j}$ term inside the disjunction and get
\begin{align}\label{equ:derive:b}
 \Psi_2=&\Box Q_{\delta p_j}\wedge
 \bigvee_{p_{j+1}\in \Pt{j}} \left(\bigvee_{i\in \Pt{j+1}}\left[\Diamond\Box \overline{R}_{p_{j+1}}\wedge \left(\Diamond\Box\overline{R}_{i}\wedge \Box\Diamond G_{i}\right)\right]\right)
\end{align}

To see how we can simplify \eqref{equ:derive:b}, let us assume that the set $\Pt{j}$ contains three elements, e.g., $\set{a,b,c}$. Then we can expand \eqref{equ:derive:b} to
\begin{align*}
 &\Diamond\Box \overline{R}_{a}\wedge \left(\Diamond\Box\overline{R}_{b}\wedge \Box\Diamond G_{b}\right)\\
  \vee~&\Diamond\Box \overline{R}_{a}\wedge \left(\Diamond\Box\overline{R}_{c}\wedge \Box\Diamond G_{c}\right)\\
  \vee~&\Diamond\Box \overline{R}_{b}\wedge \left(\Diamond\Box\overline{R}_{a}\wedge \Box\Diamond G_{a}\right)\\
  \vee~&\Diamond\Box \overline{R}_{b}\wedge \left(\Diamond\Box\overline{R}_{c}\wedge \Box\Diamond G_{c}\right)\\ 
 \vee~&\Diamond\Box \overline{R}_{c}\wedge \left(\Diamond\Box\overline{R}_{b}\wedge \Box\Diamond G_{b}\right)\\
  \vee~&\Diamond\Box \overline{R}_{c}\wedge \left(\Diamond\Box\overline{R}_{a}\wedge \Box\Diamond G_{a}\right)
\end{align*}

Now, we can re-order terms and get
\begin{align*}
 &\left(\Diamond\Box\overline{R}_{b}\wedge \Box\Diamond G_{b}\right) \wedge \left(\Diamond\Box\overline{R}_{a}\vee \Diamond\Box\overline{R}_{c}\right)\\
  \vee~&\left(\Diamond\Box\overline{R}_{c}\wedge \Box\Diamond G_{c}\right) \wedge \left(\Diamond\Box\overline{R}_{a}\vee \Diamond\Box\overline{R}_{b}\right)\\
  \vee~&\left(\Diamond\Box\overline{R}_{a}\wedge \Box\Diamond G_{a}\right) \wedge \left(\Diamond\Box\overline{R}_{b}\vee \Diamond\Box\overline{R}_{c}\right)
\end{align*}

Generalizing this observation, we get the following formula equivalent to \eqref{equ:derive:b}
\begin{align}\label{equ:derive_line2}
 \Psi_2=\Box Q_{\delta p_j}\wedge\bigvee_{p_{j+1}\in \Pt{j}}\left(\left(\Diamond\Box\overline{R}_{p_{j+1}}\wedge \Box\Diamond G_{p_{j+1}})\right)
\wedge \bigvee_{j\in \Pt{j+1}}\Diamond\Box \overline{R}_{j}\right)
\end{align}

Now recall that $A\wedge B \Rightarrow A$ for any choice of $A$ and $B$. With this one can verify that $\Psi_2\Rightarrow\Psi_1$ as the term after the disjuction over $p_{j+1}$ in \eqref{equ:derive_line2} implies the term after the disjuction over $p_{j+1}$ in \eqref{equ:derive_line1}. 
Hence, 
the set of states which fulfill $\Psi_1$ in \eqref{equ:derive_line1} is always larger then the set of states which fulfill $\Psi_2$ \eqref{equ:derive_line2}). As both terms are connected by a conjunction in \eqref{equ:derive:a}, we can ignore $\Psi_2$ in \eqref{equ:derive:a} and obtain
\begin{align}\label{equ:propertyequ}
 \Psi_{\delta p_j}=\Psi_1=\Box Q_{\delta p_j}\wedge\bigvee_{p_{j+1}\in \Pt{j}} 
\left(\Diamond\Box\overline{R}_{p_{j+1}}\wedge \Box\Diamond G_{p_{j+1}}
 \right).
\end{align}
This concludes the proof of \eqref{equ:Psi:outline} as \eqref{equ:propertyequ} coincides with \eqref{equ:Psi:outline:b}.


\subsection{Additional Proofs for \REFsec{sec:SimpleRabinGames}}

\subsubsection{Fair Adversarial Rabin Chain Games}\label{app:RabinChain}

\begin{theorem*}[\REFthm{thm:RabinC_all} restated for convenience]
Let $\Gl =\tup{\game, \El}$ be a game graph with live edges and 
$\FR$ be a Rabin condition over $\game$ with $k$ pairs for which the chain condition \eqref{equ:RCprop_1} holds. 
Further, let 
   \begin{subequations}\label{p:equ:RabinC_all}
   \begin{align}\label{p:equ:Rabin_all_FP}
 \textstyle Z^*\coloneqq\nu Y_{0}.~\mu X_{0}.~\nu Y_{k}.~\mu X_{k}.~\nu Y_{k-1}.~\hdots\mu X_{1}.~\bigcup_{j=0}^k \Ct_{j},
\end{align}
 \begin{align*}
\text{where}~~\Ct_{j}\coloneqq
\overline{R}_j\cap
 \left[ 
 \left( G_{j}\cap \cpre(Y_{j})\right)
 \cup \apre(Y_{j},X_{j})
 \right]
\end{align*}
with $G_{p_0}\coloneqq\emptyset$ and $R_{p_0}\coloneqq\emptyset$. 

Then $Z^*$ is equivalent to the winning region $\WlR$ of $\p{0}$ in the fair adversarial game over~$\Gl$ for the winning condition $\vR$ in \eqref{equ:vR}. Moreover, the fixpoint algorithm runs in $O(n^{k+2})$ symbolic steps, and a memoryless winning strategy for $\p{0}$ can be extracted from it.
 \end{subequations}
 \end{theorem*}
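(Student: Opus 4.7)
The plan is to derive \REFthm{thm:RabinC_all} as a specialization of the general Rabin fixpoint \eqref{equ:Rabin_all} under the chain assumption \eqref{equ:RCprop_1}. Since \REFthm{thm:FPsoundcomplete} already guarantees that \eqref{equ:Rabin_all} computes $\WlR$ and yields a memoryless $\p{0}$ winning strategy, it suffices to show that under the chain condition the union over all permutations $p_1\cdots p_k\in\Pi(P)$ in \eqref{equ:Rabin_all_C} collapses to the single decreasing permutation $p_j=k-j+1$, whereupon \eqref{equ:Rabin_all} becomes, up to renaming of fixpoint variables, exactly \eqref{equ:Rabin_all_FP}. The winning strategy and its memoryless nature are then inherited directly from \REFthm{thm:FPsoundcomplete}.

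The starting point is the simplification of the prefix intersection in \eqref{equ:Rabin_all_Cpj}. The chain condition gives $\overline{R}_1\subseteq\overline{R}_2\subseteq\cdots\subseteq\overline{R}_k$, and, using $\overline{R}_0=V$, this yields $\bigcap_{i=0}^{j}\overline{R}_{p_i}=\overline{R}_{m_j}$ with $m_j:=\min\{p_1,\ldots,p_j\}$. For the decreasing permutation $p_j=k-j+1$ we have $m_j=p_j$, so $\mathcal{C}_{p_j}$ in \eqref{equ:Rabin_all_Cpj} reduces exactly to $\Ct_{p_j}$ as defined in \eqref{equ:RabinC_all_Cpj}; hence the contribution of the decreasing permutation to \eqref{equ:Rabin_all_C} matches \eqref{equ:Rabin_all_FP} verbatim, giving one direction of the desired equality.

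For the reverse direction I would show that every other permutation $\sigma$ contributes a subset of the decreasing-permutation fixpoint. Whenever $\sigma$ is non-decreasing at position $j$, i.e.\ $\sigma(i)<\sigma(j)$ for some $i<j$, we have $m_j<p_j$, and so $\overline{R}_{m_j}\subsetneq\overline{R}_{p_j}$; combined with the monotonicity $G_{p_j}\subseteq G_{p'_j}$ whenever $p_j\geq p'_j$, this yields a pointwise containment of $\sigma$'s inner body inside the body induced by the decreasing permutation. The collapse of the outer union to a single permutation is then obtained by a nested application of Lemma~\ref{lem:contain}(iii)--(iv): at each level $\bigcup_{p_j\in P\setminus\{p_1,\ldots,p_{j-1}\}}$, the choice $p_j=\max(P\setminus\{p_1,\ldots,p_{j-1}\})$ dominates all others in precisely the sense the lemma requires.

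The main obstacle is the bookkeeping when matching fixpoint binders $Y_{p_j},X_{p_j}$ across differently ordered permutations, since a change in the outer choice reorders the binder stack. I would handle this by induction on $k$ (the base case $k=1$ is trivial as $|\Pi(\{1\})|=1$), peeling off the outermost union $\bigcup_{p_1}$ and invoking Lemma~\ref{lem:contain} with $f$ instantiated by the body for $p_1=k$ and $g$ by the body for any other $p_1$; the induction hypothesis then handles the inner $k-1$ nesting levels, after noting that once $p_1=k$ is fixed the remaining indices $\{1,\ldots,k-1\}$ still satisfy the chain condition. Finally, the $O(n^{k+2})$ bound follows by applying the acceleration analysis of \REFapp{app:acceleration} to the resulting fixpoint: it has the same alternation depth as \eqref{equ:Rabin_all} but only a single permutation, removing the $k!$ factor.
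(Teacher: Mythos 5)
Your overall strategy is the same as the paper's: reduce to \REFthm{thm:FPsoundcomplete} by showing that, under \eqref{equ:RCprop_1}, the union over permutations in \eqref{equ:Rabin_all} collapses to the single decreasing permutation $p_j=k-j+1$, and your treatment of the prefix intersection ($\bigcap_{i\leq j}\overline{R}_{p_i}=\overline{R}_{\min\{p_1,\ldots,p_j\}}$) and of the decreasing permutation itself is correct. The gap is in the absorption of the other permutations. You claim a \emph{pointwise} containment of $\sigma$'s body in the body of the decreasing permutation and then invoke \REFlem{lem:contain}\,(iii)--(iv), which require $g(X,Y)\subseteq f(X,Y)$ for all $X,Y$. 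That pointwise comparison fails: if you match the term at nesting position $j$ of $\sigma$ against the term at position $j$ of the decreasing permutation, the chain condition orders the $\overline{R}$'s and the $G$'s in \emph{opposite} directions, so e.g.\ for $k=2$ and $\sigma=12$ the outermost term involves $G_1\supseteq G_2$ and is not contained in $\overline{R}_2\cap[(G_2\cap\cpre(Y_2))\cup\apre(Y_2,X_2)]$. The term of $\sigma$ involving $G_1$ is instead \emph{equal} to the term $\Ct_1$ sitting at a deeper nesting level of the decreasing permutation, so no single application of \REFlem{lem:contain}\,(iii)--(iv) at the outermost union can establish the containment.

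The paper's proof (\REFprop{prop:RC_inductive}) repairs exactly this by a two-stage collapse. First, \emph{within} a fixed permutation, whenever $p_j<p_{j+1}$ one has $\mathcal{C}_{p_{j+1}}(Y,X)\subseteq\mathcal{C}_{p_j}(Y,X)$ pointwise — here the containment does hold because the prefix intersection is unchanged ($\overline{R}_{p_j}\cap\overline{R}_{p_{j+1}}=\overline{R}_{p_j}$) and only $G_{p_{j+1}}\subseteq G_{p_j}$ varies — so \REFlem{lem:RC_a} (via \REFlem{lem:contain}\,(iii)) absorbs every increasing run into its first element, reducing each permutation to a strictly decreasing subsequence whose $\Ct$-terms are history-independent. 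Second, each such reduced fixpoint is then a literal sub-expression (a subset of the binders and of the identical $\Ct$-terms) of the fixpoint for the maximal decreasing permutation, and the containment follows from plain monotonicity, i.e.\ \REFlem{lem:contain}\,(i)--(ii), not (iii)--(iv) (this is \REFlem{lem:RC_d}). Your induction on $k$ peeling off the outermost union is compatible with this scheme, but as written the key containment step is unsupported; you need to insert the intra-permutation collapse before any cross-permutation comparison. The complexity claim at the end is fine.
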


In this section we prove \REFthm{thm:RabinC_all}. That is, we prove that for Rabin chain conditions, the fixpoint computing $Z^*$ in \eqref{equ:Rabin_all} simplifies to the one in \eqref{p:equ:RabinC_all}. This is formalized in the next proposition.
\begin{proposition}\label{prop:RCfixpoint}
Given the premisses of \REFthm{thm:RabinC_all} let $Z^*$ be the fixed-point of the $\mu$-calculus expression of \eqref{equ:Rabin_all} and $\Zt^*$ the fixed-point of \eqref{p:equ:RabinC_all}. Then $Z^*=\Zt^*$.
 \end{proposition}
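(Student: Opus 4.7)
I would prove $Z^*=\tilde Z^*$ by two inclusions, following the overall structure of the Rabin proof in \REFapp{appD} but exploiting the chain condition \eqref{equ:RCprop_1} to collapse the outer permutation-based union down to its reversed-permutation branch.

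\textbf{The easy inclusion $\tilde Z^*\subseteq Z^*$.} The reversed permutation $\sigma^*\coloneqq(k,k{-}1,\ldots,1)$ is one of the permutations ranged over by the nested outer unions of \eqref{equ:Rabin_all}. The chain condition gives $\overline R_1\subseteq\cdots\subseteq\overline R_k$, so for $\sigma^*$ and every position $j\in[1;k]$,
\[
\bigcap_{i=0}^{j}\overline R_{p_i}\;=\;V\cap \overline R_k\cap\cdots\cap \overline R_{k-j+1}\;=\;\overline R_{k-j+1}\;=\;\overline R_{p_j},
\]
using $\overline R_{p_0}=\overline R_0=V$. Hence $\mathcal C_{p_j}$ of \eqref{equ:Rabin_all_Cpj} collapses to $\Ct_{p_j}$ of \eqref{p:equ:RabinC_all}, so the $\sigma^*$-branch of \eqref{equ:Rabin_all} is syntactically the expression \eqref{p:equ:Rabin_all_FP}. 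Monotonicity of the outer unions then yields $\tilde Z^*\subseteq Z^*$.

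\textbf{The hard inclusion $Z^*\subseteq\tilde Z^*$.} I argue that for every permutation $\sigma$ of $P$, its nested-fixpoint contribution $F_\sigma$ to \eqref{equ:Rabin_all} satisfies $F_\sigma\subseteq F_{\sigma^*}=\tilde Z^*$, by induction on the number of inversions of $\sigma$ relative to $\sigma^*$. Since $\sigma^*$ is strictly decreasing, any $\sigma\neq\sigma^*$ admits adjacent positions $j,j+1$ with $p_j<p_{j+1}$. Let $\sigma'$ be the permutation obtained from $\sigma$ by swapping those entries, and write $a=p_j<b=p_{j+1}$ and $M=\bigcap_{i=0}^{j-1}\overline R_{p_i}$. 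The chain identity $\overline R_a\cap\overline R_b=\overline R_a$ forces the mask $\bigcap_{i=0}^{\ell}\overline R_{p_i}$ to be the same in $\sigma$ and $\sigma'$ at every position $\ell\geq j+1$, and the nested sub-expression beyond position $j+1$ (the union over permutations of the remaining indices, with all their $\mathcal C$-terms) is literally the same in both. So $F_\sigma$ and $F_{\sigma'}$ differ only by (i) the mask at position $j$ growing from $M\cap\overline R_a$ to $M\cap\overline R_b$, which enlarges the outer $\mathcal C$-term, and (ii) the nesting order of the fixpoint blocks $\nu Y_a\mu X_a$ and $\nu Y_b\mu X_b$ being reversed. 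Monotonicity handles (i), and \REFlem{lem:contain}(iii)--(iv), applied at the two blocks, handles (ii) by equating both orderings to a common single-variable $\nu\mu$-fixpoint.

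\textbf{Main obstacle.} The crux is supplying the pointwise containment required by \REFlem{lem:contain} at the swap step. After relabeling, the two sub-expressions differ only in which of the two functionals $T_a(Y,X)\coloneqq M\cap\overline R_a\cap[(G_a\cap\cpre(Y))\cup\apre(Y,X)]$ and $T_b(Y,X)\coloneqq M\cap\overline R_b\cap[(G_b\cap\cpre(Y))\cup\apre(Y,X)]$ is attached to the outer variable pair, but $T_a$ and $T_b$ do not compare pointwise: the chain pulls $\overline R_a\subseteq\overline R_b$ while $G_a\supseteq G_b$. I would resolve this by splitting $V$ into $V\cap\overline R_a$ and $V\cap R_a$. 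On the first piece, $G_a\supseteq G_b$ gives $T_b|_{\overline R_a}\subseteq T_a$, so \REFlem{lem:contain}(iii)--(iv) apply locally; on the second piece, $T_a$ is empty and any contribution from $T_b$ reduces to the $\nu\mu$-fixpoint over $T_b$ alone, which is in turn dominated by the outer-position $T_b$-term of $F_{\sigma'}$. A case analysis in the spirit of the proof of \REFlem{lem:contain} then lets the swap go through, yielding $F_\sigma\subseteq F_{\sigma'}$. Iterating the adjacent-swap step over all inversions of $\sigma$ reduces it to $\sigma^*$ and gives $F_\sigma\subseteq F_{\sigma^*}=\tilde Z^*$; taking the outer union over $\sigma$ concludes $Z^*\subseteq\tilde Z^*$.
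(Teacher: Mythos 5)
Your overall architecture (two inclusions, with the reversed permutation $k,k{-}1,\ldots,1$ as the distinguished branch) matches the paper's, and your easy inclusion $\Zt^*\subseteq Z^*$ is fine: along that branch the accumulated masks collapse to $\overline{R}_{p_j}$, and discarding the other union branches inside a monotone fixpoint only shrinks the result. The gap is in the swap step. Your step (ii) asserts that \REFlem{lem:contain}~(iii)--(iv) equates \emph{both} orderings of the two blocks to a common single-variable $\nu\mu$-fixpoint, but that lemma needs a pointwise containment between the two functionals, and the containment only goes one way here. In $F_\sigma$ (the ascent $a<b$ on the outside) the \emph{inner} term carries the already-intersected mask $M\cap\overline{R}_a\cap\overline{R}_b=M\cap\overline{R}_a$, i.e.\ the same mask as the outer term, and $G_b\subseteq G_a$ makes it pointwise smaller, so $F_\sigma$ does collapse to the single block for $a$ (this is \REFlem{lem:RC_a}). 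In $F_{\sigma'}$ (the descent $b>a$ on the outside) the outer term has mask $M\cap\overline{R}_b$ and goal $G_b$ while the inner has mask $M\cap\overline{R}_a$ and goal $G_a$; neither dominates the other, no collapse is available, and indeed none should be, since the target formula \eqref{p:equ:RabinC_all} retains all $k$ blocks. Your attempted rescue by splitting $V$ into $V\cap\overline{R}_a$ and $V\cap R_a$ does not work either: the functionals are built from $\cpre$ and $\apre$, whose value at a vertex depends on successors that may lie in the other piece, so neither the operators nor their nested fixpoints localize to a partition of the vertex set, and there is no ``local'' version of \REFlem{lem:contain} to invoke.

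The repair is to compare the right pair of terms: the pointwise relation that drives the argument is between the outer and \emph{inner} terms of $F_\sigma$ itself (not between the outer terms of $F_\sigma$ and $F_{\sigma'}$). First absorb the inner $b$-block of $F_\sigma$ into the outer $a$-block via \REFlem{lem:contain}~(iii)--(iv); then observe that re-introducing a $b$-block at the outer position of $F_{\sigma'}$, with its larger mask, is a pure block-addition step that only enlarges the fixpoint, by \REFlem{lem:contain}~(i)--(ii). This gives $F_\sigma\subseteq F_{\sigma'}$ without ever needing $T_a$ and $T_b$ to be comparable. Organized globally instead of one transposition at a time, this is exactly the paper's route: collapse every maximal run $q_j<q_{j+1}<\cdots<q_l$ of the permutation to its first element (\REFlem{lem:RC_a}--\REFlem{lem:RC_c}), obtaining a strictly decreasing reduced sequence, and then note that its fixpoint is dominated by that of the full decreasing sequence because the latter contains all of its blocks and terms (\REFlem{lem:RC_d}). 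One further technicality you would still need to address: \REFlem{lem:contain} has no trailing ``rest'' term, so it cannot be applied verbatim at an interior position of the nesting; the paper handles this through the induction over permutation prefixes in \REFprop{prop:RC_inductive}, and your transposition argument would need analogous bookkeeping.
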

If \REFprop{prop:RCfixpoint} holds, we immediately see that \REFthm{thm:RabinC_all} directly follows from \REFthm{thm:FPsoundcomplete}. It therefore remains to prove \REFprop{prop:RCfixpoint}.

Similar to the soundness and completeness proof for \REFthm{thm:FPsoundcomplete} we prove \REFprop{prop:RCfixpoint} by an induction over the nesting of fixpoints in \eqref{equ:Rabin_all} form inside to outside. Here, however we do not need to explicitly refer to counters $i_j$ as in \REFprop{thm:RabinC_all}. Hence, we can look at permutation prefixes instead of configuration prefixes. We have the following proposition.

 \begin{proposition}\label{prop:RC_inductive}
Let $P$ be the index set of the Rabin chain condition $\FR$ in \REFthm{thm:RabinC_all}. Further, for any $j\in[0;k]$ let $\delta:=p_0 p_1\hdots p_{j-1}$ be a \emph{permutation prefix},  $\Pt{\delta}:=P\setminus \set{p_0,\hdots, p_{j-1}}$ the reduced index set and $q_0:=p_j\in \Pt{\delta}$ the current permutation index. Further, define\footnote{Observe that $\delta p_j=p_0\hdots p_{j-1} p_j$ is itself a permutation prefix.}
\begin{subequations}\label{equ:Zstarj:a}
 \begin{align}
 \textstyle Z^*_{\delta p_j}\coloneqq&\nu Y_{q_0}.~\mu X_{q_0}.\notag\\
 &\textstyle \qquad\bigcup_{q_1\in \Pt{\delta p_j}} \nu Y_{q_1}.~\mu X_{q_1}.~\notag\\
&\qquad\vdots\notag\\
 &\qquad\qquad\textstyle \bigcup_{q_n\in \Pt{\delta p_j}\setminus\set{q_1,\hdots, q_{n-1}}}\nu Y_{q_n}.~\mu X_{q_n}.~
S_\delta\cup\left[\bigcup_{\ell=0}^n \mathcal{C}_{\delta q_\ell}\right]\label{equ:Zstarj:a:FP}
\end{align}
where $n\coloneqq k-j$,
\begin{align}\label{equ:Zstarj:a:C}
  \mathcal{C}_{\delta q_j}\coloneqq
Q_\delta\cap\bigcap_{i=0}^{\ell} \overline{R}_{q_i}
\cap
 \left[ 
 \left( G_{q_\ell}\cap \cpre(Y_{q_\ell})\right)
 \cup \left(\apre(Y_{q_\ell},X_{q_\ell})\right)
 \right],
\end{align}
\end{subequations}
$Q_\delta\coloneqq\bigcap_{i=0}^{j} \overline{R}_{p_i}$ and $S_{p_0\hdots p_{j-1}}\coloneqq\bigcup_{b=0}^{j-1}\mathcal{C}_{p_0\hdots p_b}$.

Then it holds that 
\begin{subequations}\label{equ:Zstarj:b}
\begin{align}
 Z^*_{\delta p_j}=&\nu Y_{r_0}.~\mu X_{r_0}.~\nu Y_{r_1}.~\mu X_{r_1}.~\hdots\nu Y_{r_n}.~\mu X_{r_n}.
 ~\textstyle S_\delta\cup\left[\bigcup_{\ell=0}^n \Ct_{\delta r_\ell}\right],
 \label{equ:Zstarj:b:FP}
\end{align}
where
\begin{align}\label{equ:Zstarj:b:C}
\Ct_{\delta r_\ell}:=
 Q_{\delta p_j}\cap\overline{R}_{r_\ell}\cap\left[ 
 \left( G_{r_\ell}\cap \cpre(Y_{r_\ell})\right)
 \cup \left(\apre(Y_{r_\ell},X_{r_\ell})\right)
 \right]
\end{align}
\end{subequations}
with $r_i\in \Pt{\delta p_j}$ for all $i\in[1;n]$ such that $r_1>r_2>\hdots >r_n$ and $r_0=q_0=p_j$.
\end{proposition}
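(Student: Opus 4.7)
The plan is to prove \REFprop{prop:RC_inductive} by induction on $n = k - j$, collapsing the nested unions of \eqref{equ:Zstarj:a:FP} from the innermost level outward until only the strictly decreasing permutation remains.

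\textbf{Base case and inductive setup.} When $n=0$ the index set $\Pt{\delta p_j}$ is empty, so \eqref{equ:Zstarj:a:FP} reduces to $\nu Y_{p_j}\mu X_{p_j}\,[S_\delta\cup \mathcal{C}_{\delta p_j}]$, and the right-hand side \eqref{equ:Zstarj:b:FP} does the same; using $Q_{\delta p_j}=Q_\delta\cap\overline{R}_{p_j}$ one checks that $\Ct_{\delta r_0}$ agrees with $\mathcal{C}_{\delta p_j}$, so both sides coincide. For the inductive step I fix $q_1\in\Pt{\delta p_j}$ and observe that the block inside the $q_1$-summand of \eqref{equ:Zstarj:a:FP} is itself an instance of $Z^\ast_{(\delta p_j)q_1}$ of depth $n-1$. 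The induction hypothesis rewrites it as a strictly decreasing nested fixpoint $F(q_1)$ whose outermost variable is $Y_{q_1}$ and whose subsequent variables range over $\Pt{\delta p_j}\setminus\{q_1\}$ in decreasing order. Substituting this back yields
\begin{equation*}
Z^\ast_{\delta p_j} \;=\; \nu Y_{p_j}\,\mu X_{p_j}\,\Bigl[S_{\delta p_j}\,\cup\,\textstyle\bigcup_{q_1\in \Pt{\delta p_j}} F(q_1)\Bigr].
\end{equation*}

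\textbf{Collapsing the union to a single summand.} The remaining task is to show that this union reduces to the single term $F(r^\ast)$ with $r^\ast \coloneqq \max \Pt{\delta p_j}$; the sequence $r_0=p_j,\;r_1=r^\ast,\;r_2,\ldots,r_n$ so obtained is exactly the decreasing enumeration appearing in \eqref{equ:Zstarj:b:FP}. Since one direction is trivial ($F(r^\ast)$ is one of the summands of the union), the task reduces to proving $F(q_1)\subseteq F(r^\ast)$ for every $q_1 \in \Pt{\delta p_j}\setminus\{r^\ast\}$. I plan to do this by a bubble-sort-style argument: transform the permutation defining $F(q_1)$ into the permutation defining $F(r^\ast)$ through a finite sequence of adjacent transpositions of the form $(a,b)\mapsto(b,a)$ with $a<b$, and at each transposition appeal to \REFlem{lem:contain}(iii)--(iv) to conclude that the swap preserves (indeed enlarges) the fixpoint.

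\textbf{Main obstacle.} The delicate step is to verify, for each adjacent swap, the containment between the two functionals required by \REFlem{lem:contain}(iii)--(iv). This is where the chain condition \eqref{equ:RCprop_1} is essential: because $\overline{R}_a\subseteq\overline{R}_b$ and $G_a\supseteq G_b$ whenever $a<b$, the prefix factor $\bigcap_{i\leq \ell}\overline{R}_{q_i}$ in every $\mathcal{C}$-term equals $\overline{R}_{\min_i q_i}$ and is therefore invariant under an adjacent swap, while the $G$-factor together with the $\cpre$/$\apre$-arguments reshuffle in a way that can be matched through the variable correspondence $Y_a\leftrightarrow Y_b,\;X_a\leftrightarrow X_b$. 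The hard part is organising these correspondences at each swap so that \REFlem{lem:contain} applies cleanly, and then verifying that an adjacent swap of indices genuinely fits the $g\subseteq f$ hypothesis of that lemma after one accounts for the change in which index contributes the outermost $G$-factor. Once this is done, iterating the swaps gives $F(q_1)\subseteq F(r^\ast)$ for every $q_1$; combining this inclusion with the outer $\nu Y_{p_j}\mu X_{p_j}$-fixpoint and the identification of $\mathcal{C}$- and $\Ct$-terms from the base case then yields \eqref{equ:Zstarj:b:FP}.
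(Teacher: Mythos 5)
Your overall strategy---induction on the nesting depth, rewriting each inner block in canonical decreasing form via the induction hypothesis, and then collapsing the union over the first index to the single summand starting with $\max \Pt{\delta p_j}$---is the same as the paper's, which inducts on $j$ from $k$ down to $0$ and collapses the union via \REFlem{lem:RC_d}. The base case and the reduction of the inductive step to the inclusion $F(q_1)\subseteq F(r^\ast)$ are both fine.

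The gap sits exactly at the step you flag as the main obstacle, and it is real: a single adjacent transposition $(a,b)\mapsto(b,a)$ with $a<b$ is \emph{not} an instance of \REFlem{lem:contain}(iii)--(iv). That lemma needs the same two functionals $f,g$ with $g\subseteq f$ to appear in both orders, but here the functional attached to the larger index $b$ changes across the swap: in the order $(a,b)$ its restriction factor is $\overline{R}_{\min(\cdot,a,b)}=\overline{R}_{\min(\cdot,a)}$, so the $b$-term is pointwise contained in the $a$-term (since $G_b\subseteq G_a$) and the pair collapses by (iii)/(iv); in the order $(b,a)$ its restriction factor becomes $\overline{R}_{\min(\cdot,b)}\supseteq\overline{R}_{\min(\cdot,a)}$, and the two terms are then incomparable ($\overline{R}_a\subseteq\overline{R}_b$ but $G_a\supseteq G_b$), so the $g\subseteq f$ hypothesis fails and (iii)/(iv) says nothing about the post-swap configuration. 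In particular your claim that the prefix factor is invariant under the swap is false for the swapped pair itself. The repair is to split each swap into two half-steps: the increasing order $(a,b)$ \emph{equals} the fixpoint with the $b$-block deleted, by (iii)/(iv); and the decreasing order $(b,a)$ \emph{contains} that deleted fixpoint, by the monotone-addition parts (i)/(ii) of \REFlem{lem:contain}. Chaining the two gives the inclusion you need, but it is an inclusion obtained from both halves of the lemma, not an equality from (iii)/(iv) alone. This two-half decomposition is precisely what the paper packages as \REFlem{lem:RC_a} (collapsing increasing runs) together with \REFlem{lem:RC_d} (containment of the reduced sequence in the maximal decreasing one); once you adopt it, your transposition-by-transposition organization and the paper's run-by-run organization are interchangeable.
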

 
It should be noted that \REFprop{prop:RC_inductive} needs to hold for any choice of $j$ and $\delta$. 
Further, we have slightly abused notation by not specifying the values of the fixpoint parameters used within $S_\delta$. 
This is, however, not relevant for the proof of \REFprop{prop:RC_inductive} and we should interpret $S_\delta$ as a term computed by an arbitrary choice of the involved fixpoint parameters. 

Now, it should be obvious that for the choice $j=0$ we get $\delta=\varepsilon$ and $S_\delta=\emptyset$. Further, we see that in this case, we have $\Pt{\delta p_0}=P$ which implies that $Z^*_{p_0}$ in \eqref{equ:Zstarj:a} coincides with $Z^*$ in \eqref{equ:Rabin_all}. Further, as $\Pt{\delta p_0}=P$ we must have $r_1=k$, $r_2=k-1$, $\hdots$, $r_k=1$ and $r_0=p_0=0$ to fulfill the requirements on $r$. Further $Q_{p_0}=\overline{R_0}=Q$. Therefore $Z^*_{p_0}$ in \eqref{equ:Zstarj:b} coincides with $Z^*$ in \eqref{p:equ:RabinC_all} in this case. Hence, proving \REFprop{prop:RC_inductive} for any $j$ (including $j=0$), immediately proves \REFprop{prop:RCfixpoint}.  
 
In the remainder of this section we prove \REFprop{prop:RC_inductive} by an induction over $j$, starting with $j=k$ as the base case. Now observe that for $j=k$ we have $\Pt{\delta p_j}=\emptyset$ and hence both \eqref{equ:Zstarj:a} and \eqref{equ:Zstarj:b} reduce to a two-nested fixpoint over the variables $Y_{q_0},~X_{q_0}$ and $Y_{r_0},~X_{r_0}$, respectively, where $r_0=q_0=p_k$ by definition. Further, we see that $\mathcal{C}_{\delta q_0}=\Ct_{\delta r_0}$ by definition, which immediately proves the claim of \REFprop{prop:RC_inductive} for the base case.

In the remainder of this section we prove the induction step from \enquote{$j$} to \enquote{$j-1$} in a series of definitions and lemmas. 
 
 \begin{definition}
 Let $\tilde{P}\subseteq \mathbb{N}$ be a set of $n$ indices and  $\beta=q_1\hdots q_n$ with $q_i\in \tilde{P}$ and $q_i\neq q_j$ for all $j\neq i$ a full permutation sequence of the elements from $\tilde{P}$. For $1\leq j\leq l\leq n$ we call $\beta_{jl}=q_jq_{j+1}\hdots q_j$ a \emph{maximal decreasing sub-sequence} of $\beta$ if (i) $q_j<q_{j+1}<\hdots<q_l$, (ii) $q_{j-1}>q_j$ or $j=1$, and (iii) $q_l>q_{l+1}$ or $l=n$. 
\end{definition}

We see that, by definition, the first maximally decreasing sub-sequences of a permutation sequence $\beta$ starts with $q_1$. Intuitively, decreasing sub-sequences allow to immediately utilize the properties in \eqref{equ:RCprop_1} to simplify the fixpoint expression.

\begin{lemma}\label{lem:RC_a}
 Let $\delta$, $\Pt{\delta}$ and $q_0=p_j$ as in \REFprop{prop:RC_inductive}, $\beta=q_1\hdots q_n$ a full permutation sequence of $\Pt{\delta p_j}$ and $\beta_{jl}=q_jq_{j+1}\hdots q_j$ a \emph{maximal decreasing sub-sequence} of $\beta$.
 Then 
  \begin{align}
    &\textstyle \nu Y_{q_j}.~\mu X_{q_j}.~\hdots\nu Y_{q_l}.~\mu X_{q_l}.~ \bigcup_{i=j}^l \mathcal{C}_{\delta q_i}
    =  \nu Y_{q_j}.~\mu X_{q_j}.~\mathcal{C}_{\delta q_j}
   \end{align}
\end{lemma}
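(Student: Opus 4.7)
The plan is to reduce the claim to an iterated application of Lemma~\ref{lem:contain}(iii), driven by the pointwise set containment $\mathcal{C}_{\delta q_i}(X,Y) \subseteq \mathcal{C}_{\delta q_j}(X,Y)$ that the Rabin chain property forces along the sub-sequence. First I would establish this containment: since $q_j<q_{j+1}<\hdots<q_l$, the chain condition \eqref{equ:RCprop_1} gives $\overline{R}_{q_j}\subseteq\overline{R}_{q_m}$ and $G_{q_m}\subseteq G_{q_j}$ for every $m\in[j+1,l]$. Consequently $\bigcap_{m=0}^{i}\overline{R}_{q_m}=\bigcap_{m=0}^{j}\overline{R}_{q_m}$ for every $i\in[j,l]$ (the extra factors $\overline{R}_{q_{j+1}},\dots,\overline{R}_{q_i}$ are all supersets of $\overline{R}_{q_j}$, which is already in the intersection), while $G_{q_i}\subseteq G_{q_j}$. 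Plugging these equalities into the definition \eqref{equ:Zstarj:a:C} of $\mathcal{C}_{\delta q_i}$ and using monotonicity of $\cap$ with respect to the bracketed term shows $\mathcal{C}_{\delta q_i}(X,Y)\subseteq \mathcal{C}_{\delta q_j}(X,Y)$ for every $X,Y\subseteq V$.

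Second, I would perform an induction on $l-j$, collapsing one inner fixpoint level at a time. The base case $l=j$ is trivial. For the induction step, consider the innermost two levels and view the surrounding, outermost variables as free parameters so that the term $K:=\bigcup_{i=j}^{l-2}\mathcal{C}_{\delta q_i}$ is a constant relative to the sub-expression in $Y_{q_{l-1}},X_{q_{l-1}},Y_{q_l},X_{q_l}$. Setting
\[
f(X,Y):=K\cup \mathcal{C}_{\delta q_{l-1}}(X,Y),\qquad g(X,Y):=\mathcal{C}_{\delta q_l}(X,Y),
\]
both functions are monotone in $X$ and $Y$ (since $\cpre$, $\apre$ and the constant intersections/unions are), and the pointwise containment from Step~1 yields $g(X,Y)\subseteq \mathcal{C}_{\delta q_{l-1}}(X,Y)\subseteq f(X,Y)$. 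Lemma~\ref{lem:contain}(iii) then produces the equality
\[
\nu Y_{q_{l-1}}.\mu X_{q_{l-1}}.\nu Y_{q_l}.\mu X_{q_l}.\bigcup_{i=j}^{l}\mathcal{C}_{\delta q_i}
\;=\;
\nu Y_{q_{l-1}}.\mu X_{q_{l-1}}.\bigcup_{i=j}^{l-1}\mathcal{C}_{\delta q_i}.
\]
Substituting this equality back into the full expression shortens the nested fixpoint by one level without altering the result, so the induction hypothesis applies.

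Iterating this procedure eliminates the variables $Y_{q_l},X_{q_l}$, then $Y_{q_{l-1}},X_{q_{l-1}}$, and so on, until only the block $\nu Y_{q_j}.\mu X_{q_j}.\mathcal{C}_{\delta q_j}$ remains, which is the claimed identity. I expect the only delicate point to be the bookkeeping of which variables are ``free'' at each step of the induction: Lemma~\ref{lem:contain} is stated for exactly two nested fixpoint levels, so at every stage one must verify that the outer bound variables (and the term $K$, which carries their values) appear only as parameters in $f$ and $g$, preserving the monotonicity required by the lemma. Once this is observed, each collapse step is a direct application of Lemma~\ref{lem:contain}(iii), and the remainder is routine.
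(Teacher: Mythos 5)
Your proposal is correct and follows essentially the same route as the paper's proof: both derive the pointwise containment of the $\mathcal{C}$-terms from the chain condition \eqref{equ:RCprop_1} (the added $\overline{R}_{q_m}$ factors are absorbed and the $G$-sets shrink) and then invoke Lemma~\ref{lem:contain}(iii) repeatedly to collapse the nested fixpoints. Your explicit induction on $l-j$, treating the outer variables and the term $K$ as parameters, is a more careful rendering of the step the paper compresses into ``applying this argument to all $i\in[j;l]$.''
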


\begin{proof}
Let $\alpha:=q_0\hdots q_{j-1}$ and observe that 
\begin{align*}
 \mathcal{C}_{\delta q_j}&=Q_{\delta\alpha} \cap \left[\left( \overline{R}_{j}\cap G_{q_j}\cap \cpre(Y_{q_j})\right)\cup \left(\overline{R}_{j}\cap \apre(Y_{q_j},X_{q_j})\right)\right]\\
  \mathcal{C}_{\delta q_{j+1}}&=Q_{\delta\alpha} \cap \left[\left(\overline{R}_{j}\cap \overline{R}_{j+1}\cap G_{q_{j+1}}\cap \cpre(Y_{q_j})\right)
  \cup \left(\overline{R}_{j}\cap \overline{R}_{j+1}\cap \apre(Y_{q_j},X_{q_j})\right)\right]\\
  &=Q_{\delta\alpha} \cap \left[\left(\overline{R}_{j}\cap G_{q_{j+1}}\cap \cpre(Y_{q_j})\right)
  \cup \left(\overline{R}_{j}\cap \apre(Y_{q_j},X_{q_j})\right)\right],
\end{align*}
where the simplification of $\mathcal{C}_{\delta q_{j+1}}$ follows from $\overline{R}_{j}\subseteq \overline{R}_{j+1}$ (see \eqref{equ:RCprop_1}).
So $\mathcal{C}_{\delta q_j}$ and $\mathcal{C}_{\delta q_{j+1}}$ really only differ by the $G_{q_j}$ (resp. $G_{q_{j+1}}$) term in the first term of the disjunct. As $G_{q_j}\supseteq G_{q_{j+1}}$ (see \eqref{equ:RCprop_1}) and all terms in the first part of the disjunct are intersected, we see that $\mathcal{C}_{\delta q_j}\supseteq\mathcal{C}_{\delta q_{j+1}}$. With this it follows from case (iii) in \REFlem{lem:contain} that 
\begin{align*}
 &\textstyle \nu Y_{q_j}.~\mu X_{q_j}.\nu Y_{q_{j+1}}.~\mu X_{q_{j+1}}.~ \mathcal{C}_{\delta q_{j}}\cup\mathcal{C}_{\delta q_{j+1}}
 =  \nu Y_{q_j}.~\mu X_{q_j}.~\mathcal{C}_{\delta q_j}.
\end{align*}
Applying this argument to all $i\in[j;l]$ proves the claim.
\end{proof}

\begin{definition}
 We say that a permutation sequence $\beta$ has \emph{chain index} $m$ if it contains $m$ \emph{maximal decreasing} sub-sequences. For $\beta=q_1\hdots q_n$ with chain index $m$ we define its reduction~$\beta_\downarrow$ as $\beta_\downarrow:=r_1...r_m$ such that $r_m=q_j$ if $\beta_{jl}$ is the $m$'th maximally decreasing sub-sequence of $\beta$. 
\end{definition}
  
  \begin{lemma}\label{lem:RC_b}
    Let $\delta$, $\Pt{\delta}$ and $q_0=p_j$ as in \REFprop{prop:RC_inductive}, $\beta=q_1\hdots q_n$ a full permutation sequence of $\Pt{\delta p_j}$ with chain index $m$ and $\beta_\downarrow:=r_1...r_m$. Then 
   \begin{align}
   &\nu Y_{q_0}.~\mu X_{q_0}.~\nu Y_{q_1}.~\mu X_{q_1}.~\hdots\nu Y_{q_n}.~\mu X_{q_n}\bigcup_{j=0}^n \mathcal{C}_{\delta q_j}\notag\\
    &\qquad=\nu Y_{r_0}.~\mu X_{r_0}.~\nu Y_{r_1}.~\mu X_{r_1}.~\hdots\nu Y_{r_m}.~\mu X_{r_m} \bigcup_{l=0}^m \mathcal{C}_{\delta q_l}
   \end{align}
   where $q_0=r_0=p_j$.
  \end{lemma}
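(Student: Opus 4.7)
The plan is to apply Lemma~\ref{lem:RC_a} iteratively, peeling off one maximally decreasing subsequence of $\beta$ at a time, working from the innermost fixpoint variables outward. First I would decompose $\beta = q_1 \hdots q_n$ into its $m$ maximally decreasing subsequences $\beta_{j_1 l_1}, \beta_{j_2 l_2}, \hdots, \beta_{j_m l_m}$, with $j_1 = 1$, $j_{a+1} = l_a + 1$, and $l_m = n$; by definition $r_a = q_{j_a}$ for every $a \in [1;m]$.

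The crucial observation is that each $\mathcal{C}_{\delta q_i}$ depends only on the fixpoint variables $Y_{q_i}$ and $X_{q_i}$, so $\mathcal{C}_{\delta q_i}$ is constant with respect to every other fixpoint variable appearing in the expression. When we focus on the innermost block, the union $\bigcup_{i=0}^n \mathcal{C}_{\delta q_i}$ can therefore be split as
\[
K_{j_m} \cup \textstyle\bigcup_{i=j_m}^{n} \mathcal{C}_{\delta q_i}, \quad\text{where } K_{j_m} := \textstyle\bigcup_{i < j_m}\mathcal{C}_{\delta q_i}
\]
is constant with respect to all of $Y_{q_{j_m}}, X_{q_{j_m}}, \hdots, Y_{q_n}, X_{q_n}$. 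A minor generalization of Lemma~\ref{lem:RC_a} (allowing an additive constant $K$ inside the fixpoint) then collapses the innermost block to $\nu Y_{r_m}.~\mu X_{r_m}.(K_{j_m} \cup \mathcal{C}_{\delta r_m})$. This generalization is immediate from the proof of Lemma~\ref{lem:RC_a}, because the key containment $\mathcal{C}_{\delta q_{j+1}} \subseteq \mathcal{C}_{\delta q_j}$ established there lifts to $K \cup \mathcal{C}_{\delta q_{j+1}} \subseteq K \cup \mathcal{C}_{\delta q_j}$, and Lemma~\ref{lem:contain}(iii) applies verbatim to $f(X,Y) := K \cup \mathcal{C}_{\delta q_j}$ and $g(X,Y) := K \cup \mathcal{C}_{\delta q_{j+1}}$ since both sides remain monotone and the required pointwise inclusion is preserved under adding the same constant on both sides.

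After the innermost block has been collapsed, the next-innermost maximally decreasing subsequence $\beta_{j_{m-1} l_{m-1}}$ becomes the new innermost block of remaining fixpoint variables. The already-collapsed term $\mathcal{C}_{\delta r_m}$ now depends only on $Y_{r_m}, X_{r_m}$, which lie inside the current inner fixpoint scope, and so does not constrain any outer variable; meanwhile the $\mathcal{C}_{\delta q_i}$ for $i < j_{m-1}$ still act as constants. The same collapsing argument therefore applies and reduces the block to $\nu Y_{r_{m-1}}.~\mu X_{r_{m-1}}.(K_{j_{m-1}} \cup \mathcal{C}_{\delta r_{m-1}} \cup \mathcal{C}_{\delta r_m})$, and iterating this procedure $m$ times, from the innermost subsequence outward, yields exactly the right-hand side of the claim.

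The main obstacle I expect is formalizing the inductive peeling step cleanly. Two subtleties need attention: first, one must rigorously justify the constant-absorbing generalization of Lemma~\ref{lem:RC_a} (as sketched above, this should be a short but careful extension of Lemma~\ref{lem:contain}(iii)); second, after each collapse, the outer permutation indices in the remaining expression no longer coincide with the full sequence $q_1\hdots q_n$, so bookkeeping is required to verify that the ordering $r_1 > r_2 > \hdots > r_m$ stipulated in Proposition~\ref{prop:RC_inductive} indeed matches the order in which subsequences are peeled off. The latter follows directly from the definition of maximally decreasing subsequences, since every subsequence starts at a local maximum of $\beta$ and these local maxima strictly decrease precisely because the subsequences are taken to be maximally decreasing and distinct.
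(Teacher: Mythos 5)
Your core argument is the same as the paper's: decompose $\beta$ into its maximally decreasing subsequences and collapse each one by an iterated application of \REFlem{lem:RC_a} via \REFlem{lem:contain}(iii). Your explicit treatment of the additive-constant issue (the terms $\mathcal{C}_{\delta q_i}$ lying outside the block currently being collapsed) is in fact more careful than the paper's own proof, which silently applies \REFlem{lem:RC_a} inside a larger union; observing that \REFlem{lem:contain}(iii) survives adding the same constant $K$ to both $f$ and $g$, and peeling from the innermost subsequence outward so that the leftover terms really are constants with respect to the variables being eliminated, is exactly the right fix. One point the paper's proof emphasizes that you omit is the bookkeeping for the $Q$-prefixes: the collapsed head term still carries $\bigcap_{i=0}^{j_a}\overline{R}_{q_i}$ over the \emph{original} prefix, and the chain condition \eqref{equ:RCprop_1} is used to identify this with $\bigcap_{b=0}^{a}\overline{R}_{r_b}$; this is harmless for the present lemma but is what the later reductions in \REFlem{lem:RC_c} and \REFprop{prop:RC_inductive} rely on.

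Your final sentence, however, is false: the heads of successive maximally decreasing subsequences do \emph{not} strictly decrease in general. For $\beta=3\,1\,4\,2$ the maximally decreasing subsequences are $3$, then $1\,4$, then $2$, so $\beta_\downarrow=3\,1\,2$, which is not ordered. Fortunately \REFlem{lem:RC_b} makes no ordering claim about $\beta_\downarrow$ whatsoever --- the requirement $r_1>r_2>\hdots>r_m$ belongs to \REFprop{prop:RC_inductive} and is recovered only in \REFlem{lem:RC_c} by iterating the reduction $(\cdot)_\downarrow$ until it stabilizes at $\beta_\Downarrow$. Nothing in your peeling argument depends on that claim, so you should simply delete it rather than try to justify it.
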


  \begin{proof}
First, observe that by construction we always have $r_1=q_1$. Hence, $Q_{\delta\alpha}$ in the proof of \REFlem{lem:RC_a} reduces to $Q_{\delta q_1}$ in this case. Further, consider $r_2=q_j$  and observe that in this case $Q_{\delta\alpha}=Q_\delta\cap\bigcap_{i=0}^{j-1}\overline{R}_{q_i}=Q_{\delta q_0}\cap\overline{R}_{q_{1}}=Q_{\delta p_j}\cap\overline{R}_{r_{1}}$ as $q_1\hdots q_{j-1}$ is a maximal decreasing sub-sequence by construction. Iteratively re-applying this argument along with \REFlem{lem:RC_a} for every $l\in[1,m]$ therefore proves the claim. 
\end{proof}

Now observe that we can re-apply \REFlem{lem:RC_b} to $\beta_\downarrow$ and reduce it even more. That means, $\beta_\downarrow$ could now again have maximal decreasing sub-sequences and we therefore can reduce it to $(\beta_\downarrow)_\downarrow$. This might again be reducible and so forth. We therefore define the \emph{maximal reduced permutation sequence} $\beta_{\Downarrow}=(((\beta_\downarrow)_\downarrow)\hdots)_\downarrow=r_1\hdots r_n$ such that $r_1>r_2>\hdots r_n$, i.e. the chain index of $\beta_{\Downarrow}$ is equivalent to its length. With this, we have the following result.

\begin{lemma}\label{lem:RC_c}
Let $\delta$, $\Pt{\delta}$ and $q_0=p_j$ as in \REFprop{prop:RC_inductive}, $\beta=q_1\hdots q_n$ a full permutation sequence of $\Pt{\delta p_j}$ and $\beta_\Downarrow:=r_1...r_m$ its maximal reduced permutation sequence. Then 
   \begin{align}
   &\nu Y_{q_0}.~\mu X_{q_0}.~\nu Y_{q_1}.~\mu X_{q_1}.~\hdots\nu Y_{q_n}.~\mu X_{q_n}\bigcup_{j=0}^n \mathcal{C}_{\delta q_j}\notag\\
    &\qquad=\nu Y_{r_0}.~\mu X_{r_0}.~\nu Y_{r_1}.~\mu X_{r_1}.~\hdots\nu Y_{r_m}.~\mu X_{r_m}\bigcup_{l=0}^m \Ct_{\delta q_l}
   \end{align}
\end{lemma}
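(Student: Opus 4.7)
The plan is to obtain Lemma~\ref{lem:RC_c} by iterating Lemma~\ref{lem:RC_b}. Define the reduction chain $\beta^{(0)} := \beta$ and $\beta^{(i+1)} := (\beta^{(i)})_\downarrow$. First, I would observe that each reduction step strictly shortens the sequence unless the sequence is already monotone decreasing, i.e., unless its chain index already equals its length. Since the length is a non-negative integer that strictly decreases at every nontrivial step, the chain stabilizes after finitely many iterations, say at step $N$, with $\beta^{(N)}$ monotone decreasing. By the definition of $\beta_\Downarrow$ given just before the lemma statement, $\beta^{(N)} = \beta_\Downarrow = r_1 \hdots r_m$ with $r_1 > r_2 > \hdots > r_m$.

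Next, I would apply Lemma~\ref{lem:RC_b} repeatedly, once at each level $i \to i+1$, to get a chain of equalities
\begin{equation*}
F(\beta^{(0)}) \;=\; F(\beta^{(1)}) \;=\; \hdots \;=\; F(\beta^{(N)}) \;=\; F(\beta_\Downarrow),
\end{equation*}
where $F(\gamma)$ denotes the nested fixpoint expression associated with a permutation sequence $\gamma$ in the form of the right-hand side of Lemma~\ref{lem:RC_b}. A subtle point that I would need to address is that Lemma~\ref{lem:RC_b} is literally stated for a \emph{full} permutation sequence of $\Pt{\delta p_j}$, whereas $\beta^{(i)}$ for $i \geq 1$ is only a subsequence. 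Inspecting the proof of Lemma~\ref{lem:RC_b}, however, the argument rests solely on Lemma~\ref{lem:RC_a} together with the chain condition \eqref{equ:RCprop_1}, and never on the fact that all of $\Pt{\delta p_j}$ is used. Hence the same proof applies verbatim to any sequence of pairwise distinct indices from $P$, which is enough to iterate.

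The remaining bookkeeping concerns how the safety intersection $Q_{\delta\alpha}$ and the accumulated prefix $S_\delta$ evolve along the reduction. At iteration $i \to i+1$, each newly formed $\Ct$-term associated with $r_l$ inherits the conjunction $Q_{\delta p_j} \cap \overline{R}_{r_{l-1}}$ coming from the decreasing block it represents, exactly as in the proof of Lemma~\ref{lem:RC_b} where $Q_{\delta\alpha}$ is shown to collapse to $Q_{\delta p_j} \cap \overline{R}_{r_1}$ via the chain condition $\overline{R}_{q_j} \subseteq \overline{R}_{q_{j+1}}$ and $G_{q_j} \supseteq G_{q_{j+1}}$ within a maximally decreasing block. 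I expect the main (mild) obstacle to be precisely this bookkeeping: carefully tracking through several layers of reduction that the $\Ct$-terms emerging at step $i+1$ are the ones prescribed by the statement of Lemma~\ref{lem:RC_b} applied to $\beta^{(i)}$, so that the iteration composes without mismatch. Once this is verified, the chain of equalities above concludes the proof.
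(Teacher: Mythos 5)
Your proposal is correct and follows essentially the same route as the paper's proof: iterate Lemma~\ref{lem:RC_b} along the reduction chain until $\beta_\Downarrow$ is reached, then use the chain condition \eqref{equ:RCprop_1} (via $r_1>r_2>\hdots>r_m$, so $\overline{R}_{r_{l-1}}\cap\overline{R}_{r_l}=\overline{R}_{r_l}$) to collapse the accumulated intersections and turn each $\mathcal{C}_{\delta r_l}$ into $\Ct_{\delta r_l}$. You are in fact somewhat more careful than the paper, which silently applies Lemma~\ref{lem:RC_b} to the non-full subsequences $\beta^{(i)}$ for $i\geq 1$ without remarking, as you do, that its proof never uses fullness.
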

\begin{proof}
 It follows from the definition of $\beta_\Downarrow$ and repeatably applying \REFlem{lem:RC_b} that 
 \begin{align*}
  &\nu Y_{q_0}.~\mu X_{q_0}.~\nu Y_{q_1}.~\mu X_{q_1}.~\hdots\nu Y_{q_n}.~\mu X_{q_n}\bigcup_{j=0}^n \mathcal{C}_{\delta q_j}\\
    &\qquad=\nu Y_{r_0}.~\mu X_{r_0}.~\nu Y_{r_1}.~\mu X_{r_1}.~\hdots\nu Y_{r_m}.~\mu X_{r_m}\bigcup_{l=0}^m\mathcal{C}_{\delta r_l}
 \end{align*}
 Now we have by definition that $r_0=q_0$ and $r_1=q_1$ and therefore $\mathcal{C}_{\delta r_0}=\Ct_{\delta r_0}$ and $\mathcal{C}_{\delta r_1}=\Ct_{\delta r_1}$ by definition. Now recall that $r_1>r_2$, hence $\overline{R}_{r_1}\cap\overline{R}_{r_2}=\overline{R}_{r_2}$. Iteratively applying this argument gives $\mathcal{C}_{\delta r_l}=\Ct_{\delta r_l}$ for all $l\in[1,n]$, what proves the claim.
\end{proof}

Note that the only full permutation sequence of $\Pt{\delta p_j}$ with \emph{chain index} $n$ is the one where $q_1>q_2>\hdots>q_n$, giving $\beta_\downarrow=\beta_\Downarrow=\beta$. Hence, the sequence $r_1\hdots r_n$ used in \eqref{equ:Zstarj:b} is actually the \emph{maximal permutation sequence} of $\Pt{\delta p_j}$. We see that all other full permutation sequences~$\gamma$ of $\Pt{\delta p_j}$ have \emph{chain index} $m$ such that $1\leq m<n$. As the $\Ct$ terms in \eqref{equ:RabinC_all_Cpj} do not depend on the history of permutation sequences from $\Pt{\delta p_j}$, we see that any term constructed for a non-maximal permutation sequence is contained in the term constructed for the maximal permutation sequence. This is formalized in the next lemma.

\begin{lemma}\label{lem:RC_d}
Let $\delta$, $\Pt{\delta}$ and $q_0=p_j$ as in \REFprop{prop:RC_inductive} and let $\beta=r_1...r_n$ be the maximal permutation sequence of $\Pt{\delta p_j}$, that its $\beta=\beta_\Downarrow$. Further, let $\gamma\neq\beta$ be a full permutation sequence of $\Pt{\delta p_j}$ such that $\gamma_\Downarrow=s_1\hdots s_m$ with $m<n$. Then 
\begin{align}
 &\nu Y_{r_1}.~\mu X_{r_1}.~\hdots\nu Y_{r_n}.~\mu X_{r_n}\bigcup_{l=1}^n\Ct_{\delta r_l}\\
 &\quad\quad\subseteq\nu Y_{s_1}.~\mu X_{s_1}.~\hdots\nu Y_{s_m}.~\mu X_{s_m}\bigcup_{l=1}^m\Ct_{\delta s_l}
 \label{equ:proofRC:contain}
\end{align}
\end{lemma}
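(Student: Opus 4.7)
The plan is an induction on the length difference $d := n-m$. The base case $d=0$ is trivial: the two sequences coincide as strictly decreasing enumerations of the same set. For the inductive step, I would pick any $r^* \in \{r_1,\dots,r_n\}\setminus\{s_1,\dots,s_m\}$ and insert it into the strictly decreasing $s$-sequence at the unique position $k$ that preserves strict decrease, producing $s'_1 > \cdots > s'_{m+1}$ with $s'_k = r^*$. It then suffices to establish the single-insertion inclusion
\[
\nu Y_{s'_1}.\mu X_{s'_1}.\ldots\nu Y_{s'_{m+1}}.\mu X_{s'_{m+1}}. \bigcup_{l=1}^{m+1}\Ct_{\delta s'_l} \;\subseteq\; \nu Y_{s_1}.\mu X_{s_1}.\ldots\nu Y_{s_m}.\mu X_{s_m}. \bigcup_{l=1}^{m} \Ct_{\delta s_l},
\]
since iterating this $d$ times (starting from the $s$-sequence and re-inserting the missing indices one by one, in any order) yields the full claim.

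For the single-insertion step, the chain condition~\eqref{equ:RCprop_1} applied to $s'_{k-1} > r^* > s'_{k+1}$ (with boundary conventions when $k=1$ or $k=m+1$) gives $\overline{R}_{s'_{k-1}} \supseteq \overline{R}_{r^*} \supseteq \overline{R}_{s'_{k+1}}$ and dually $G_{s'_{k-1}} \subseteq G_{r^*} \subseteq G_{s'_{k+1}}$. Using Lemmas~\ref{lem:AsubsetC} and~\ref{lem:Ared} to rewrite the $\apre$-terms as $\cpre$-terms at the fixpoint values (where the $\nu$- and $\mu$-variables coincide), I would argue that the additional disjunct $\Ct_{\delta r^*}(Y_{r^*},X_{r^*})$ is subsumed by its neighbours once the outer variables have converged, so that Lemma~\ref{lem:contain}(iii,iv) collapses the inserted $\nu Y_{r^*}.\mu X_{r^*}$ alternation without shrinking the outer fixpoint.

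The hard part will be precisely this subsumption at the outer-fixpoint level. Since adding an extra union term together with an extra $\nu\mu$ block can in general only enlarge a monotone fixpoint (cf.\ Lemma~\ref{lem:contain}(i,ii)), the stated inclusion depends critically on the chain absorption cancelling that gain. A technical subtlety is that $\Ct_{\delta r^*}$ and $\Ct_{\delta s'_{k\pm 1}}$ do not satisfy a direct pointwise containment, because their $\overline{R}$- and $G$-constraints move in opposite directions along the chain; the absorption must therefore be established \emph{at} the outer fixpoint by a Tarski post-fixed-point argument, relating $Y_{r^*}^*$ to $Y_{s'_{k\pm 1}}^*$ through the shared $Q_{\delta p_j}$-safety context. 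The warm-start technique of \cite{long1994improved} used elsewhere in the paper (notably in the proof of Lemma~\ref{lem:FPsimpleSingleRabin}) is the intended formal vehicle for aligning these inner fixpoint values while performing the collapse.
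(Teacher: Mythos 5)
Your proposal attacks the wrong direction of the inclusion, and in that direction the claim cannot be established. Note first that the $\subseteq$ displayed in \eqref{equ:proofRC:contain} is inconsistent with the way the lemma is invoked in the proof of \REFprop{prop:RC_inductive}: there it is used to conclude $Z^*_{\delta p_j}\subseteq Z^*_{\delta\tilde p_j}$, where $\tilde p_j$ yields the \emph{maximal} sequence, i.e., the fixpoint over the reduced non-maximal sequence ($m$ blocks) is contained in the fixpoint over the maximal sequence ($n$ blocks) --- the reverse of what is displayed. That reverse inclusion is the intended content, and it is exactly the fact your own aside already supplies: because the $\Ct$-terms, unlike the $\mathcal{C}$-terms of \eqref{equ:Rabin_all_Cpj}, do not depend on the permutation history, each $\Ct_{\delta s_i}$ coincides with $\Ct_{\delta r_j}$ for the unique $j$ with $s_i=r_j$, so the $m$-block expression is literally a sub-expression (fewer $\nu\mu$ blocks, a subset of the union terms) of the $n$-block expression, and \REFlem{lem:contain}~(i)--(ii) --- the very monotonicity fact you cite as the obstacle --- immediately gives containment of the smaller fixpoint in the larger one. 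No chain condition is needed for this step at all.

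In the direction you pursue, the ``single-insertion absorption'' is unavailable. The subtlety you flag is not a technicality but the reason the step fails: since $r^*$ sits strictly between its neighbours in the chain, $\overline R_{r^*}$ and $G_{r^*}$ move in opposite directions, so $\Ct_{\delta r^*}$ is contained in neither neighbour, \REFlem{lem:contain}~(iii)--(iv) do not apply, and no Tarski or warm-start argument at the outer fixpoint can recover a containment that fails already in value. A concrete witness: for $m=1$ the right-hand side of \eqref{equ:proofRC:contain} is (by \REFthm{thm:SingleRabin}) the fair adversarial safe B\"uchi winning region for the single pair $\tuple{G_{s_1},R_{s_1}}$ inside $Q_{\delta p_j}$, which is in general a strict subset of the left-hand side, the winning region for the entire chain restricted to $\Pt{\delta p_j}$. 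So the inductive step you need is false; prove the reverse inclusion instead, for which your insertion-by-insertion skeleton collapses into the one-line application of \REFlem{lem:contain}~(i)--(ii) sketched above.
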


\begin{proof}
As $\beta$ is a full permutation sequence of $\Pt{\delta p_j}$ we know that for any $i\in[1;m]$ there exists one $j\in[1;n]$ such that $s_i=r_j$. Further, as $\Ct$ does not depend on the history of the permutation sequence $\beta$ and $\gamma$ we see that  $\Ct_{\delta s_i}=\Ct_{\delta r_j}$ in this case. As $m<n$ we see that the first line of \eqref{equ:proofRC:contain} contains the fixpoint variables and $\Ct$ terms of the second line of \eqref{equ:proofRC:contain}. We can therefore apply \REFlem{lem:contain} (i) and (ii) which immediately proves the claim. 
\end{proof}

Using this result, we are finally ready to prove the induction step of \REFprop{prop:RC_inductive}.

\begin{proof}[Proof of \REFprop{prop:RC_inductive}]
Recall that \REFprop{prop:RC_inductive} trivially holds for $j=k$ which constitutes the base case of an induction over $j$. Now let us prove the induction step. Hence, let us assume that \REFprop{prop:RC_inductive} holds for $j$.
Now consider \enquote{$j-1$}, i.e., consider the permutation prefix $\delta'=p_0\hdots p_{j-2}$ and pick any $p_{j-1}\in P_{\delta'}$. By the induction hypothesis, we know that \REFprop{prop:RC_inductive} holds for $\delta=p_0\hdots p_{j-1}$ and any choice of $p_j\in\Pt{\delta}$. That is, $Z^*_{\delta p_j}$ can be computed using \eqref{equ:Zstarj:b}. With this, the fixpoint algorithm in \eqref{equ:Zstarj:a} for $\delta'$ and $p_{j-1}$ simplifies to
 \begin{align*}
 \textstyle Z^*_{\delta'p_{j-1}}=Z^*_{\delta}=&\textstyle \nu Y_{p_{j-1}}.~\mu X_{p_{j-1}}.\bigcup_{p_j\in \Pt{\delta}} Z^*_{\delta p_j}.
\end{align*}
Here, for any choice $p_j\in \Pt{\delta}$, the term $Z^*_{\delta p_j}$ is given by \eqref{equ:Zstarj:b} where $r_0=p_j$ and $\beta_{p_j}=r_1\hdots r_n$ being the \emph{maximal permutation sequence} of $\Pt{\delta p_j}$. Now observe that for $j>0$ and any choice of $p_j$ we see that $\gamma=r_0\hdots r_n$ is actually a permutation sequence of $\Pt{\delta}$, but not necessarily the maximal one. However, observe that the maximal permutation sequence $\beta$ of $\Pt{\delta}$ (that is $\beta=\beta_{\Downarrow}$) is actually defined by $\beta=\tilde{p}_j\beta_{\tilde{p}_j}$ for $\tilde{p}_j:=\max(\Pt{\delta})$. With this, we can apply \REFlem{lem:RC_d} to see that $Z^*_{\delta p_j}\subseteq Z^*_{\delta \tilde{p}_j}$ for all $p_j\in\Pt{\delta}$. With this we obtain
 \begin{align*}
  \textstyle Z^*_{\delta'p_{j-1}}=Z^*_{\delta}=&\textstyle \nu Y_{p_{j-1}}.~\mu X_{p_{j-1}}.~ Z^*_{\delta \tilde{p}_j}.
 \end{align*}
One can now verify that this allows us to choose $r_0=p_{j-1}$, $r_1=\tilde{p}_j$ and $r_2\hdots r_{n+1}=\beta_{\tilde{p}_j}$ and have $r_1>r_2>\hdots r_{n+1}$. Hence, $Z^*_{\delta'p_{j-1}}$ can be written in the form of \eqref{equ:Zstarj:b}, which proves the statement. 
\end{proof}

\subsubsection{Fair Adversarial Parity Games}\label{app:Parity}
We now consider a \emph{parity} winning condition with a set $\FP=\set{C_1,C_2,\hdots\allowbreak C_{2k}}$, where each $C_i\subseteq V$ is the set of vertices of $\game$ with color $i$. 
 Further, $\FP$ partition's the set of vertices, i.e., $\bigcup_{i\in[1,2k]}C_i=V$ and $C_i\cap C_j=\emptyset$ for all $i,j\in[0,2k-1]$ such that $i\neq j$.

 \begin{theorem*}[\REFthm{thm:Parity_all} restated for convenience]
Let $\Gl =\tup{\game, \El}$ be a game graph with live edges and 
$\FP$ be a parity condition over $\game$ with $2k$ colors. 
Further, let 
\begin{align}\label{p:equ:Parity_live}
 \textstyle Z^*\coloneqq &\nu Y_{2k}.~\mu X_{2k-1}.\hdots\nu Y_{2}.~\mu X_{1}.\\
  &~\cup(C_{2k}\cap \cpre(Y_{2k}))\cup ((C_1\cup\hdots\cup C_{2k-1})\cap \apre(Y_{2k},X_{2k-1}))\notag\\
   &~\cup~\hdots\notag\\
   &~\cup(C_{4}\cap \cpre(Y_{4}))\cup ((C_1\cup C_2\cup C_{3})\cap \apre(Y_4,X_{3}))\notag\\
 &~\cup (C_2\cap \cpre(Y_2))\cup (C_1\cap \apre(Y_2,X_1))\notag
\end{align}
Then $Z^*$ is equivalent to the winning region $\WlR$ of $\p{0}$ in the fair adversarial game over~$\Gl$ for the winning condition $\vP$ in \eqref{equ:vP}. Moreover, the fixpoint algorithm runs in $O(n^{k+1})$ symbolic steps, and a memoryless winning strategy for $\p{0}$ can be extracted from it.
\end{theorem*}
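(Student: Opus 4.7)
\textbf{Proof Plan for \REFthm{thm:Parity_all}.} The strategy is to derive \eqref{equ:Parity_live} from the already-proven Rabin chain fixpoint of \REFthm{thm:RabinC_all}, so that correctness, memoryless strategy extraction, and complexity are inherited. First I would encode the parity condition as a Rabin chain with $k$ pairs via \eqref{equ:Fcond_}, setting $G_j = F_{2j}$ and $R_j = F_{2j+1}$ (with $F_{2k+1}=\emptyset$), and observe that the chain property \eqref{equ:RCprop_1} holds by construction. Applying \REFthm{thm:RabinC_all} then yields
\[
\widehat{Z}^*=\nu Y_0.~\mu X_0.~\nu Y_k.~\mu X_k.~\cdots~\nu Y_1.~\mu X_1.~\bigcup_{j=0}^k \Ct_j,
\]
with $\Ct_0=\apre(Y_0,X_0)$ (the artificial pair) and $\Ct_j=\overline{R}_j\cap\bigl[(G_j\cap\cpre(Y_j))\cup\apre(Y_j,X_j)\bigr]$ for $j\geq1$. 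The proof then reduces to an algebraic equivalence $\widehat{Z}^*=Z^*$ where $Z^*$ denotes the fixpoint in \eqref{equ:Parity_live}.

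The next step exploits that the colors $C_1,\ldots,C_{2k}$ partition $V$. Since $\overline{R}_j=C_1\cup\cdots\cup C_{2j}$ and $\overline{R}_j\cap G_j=C_{2j}$, we can rewrite
\[
\Ct_j=\bigl(C_{2j}\cap\cpre(Y_j)\bigr)\cup\bigl((C_1\cup\cdots\cup C_{2j})\cap\apre(Y_j,X_j)\bigr).
\]
Because $X_j\subseteq Y_j$ is maintained throughout the iteration of the $\mu$-fixpoint nested inside the $\nu Y_j$, \REFlem{lem:AsubsetC} gives $\apre(Y_j,X_j)\subseteq\cpre(Y_j)$, so the $C_{2j}$-portion of the $\apre$-term is absorbed into the $\cpre(Y_j)$-term; this yields the simplified conjunct appearing in \eqref{equ:Parity_live}, namely
\[
\Ct_j=\bigl(C_{2j}\cap\cpre(Y_j)\bigr)\cup\bigl((C_1\cup\cdots\cup C_{2j-1})\cap\apre(Y_j,X_j)\bigr).
\]
A re-indexing $Y_j\mapsto Y_{2j}$ and $X_j\mapsto X_{2j-1}$ then converts the union of these conjuncts into precisely the inner body of \eqref{equ:Parity_live}.

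The hardest step will be \textbf{collapsing the artificial outermost $\nu Y_0.~\mu X_0$ layer}, so that the resulting formula has alternation depth $k$ rather than $k+1$. The key observation is that for $j=k$ we have $R_k=\emptyset$, so the simplified $\Ct_k$ contains $\apre(Y_k,X_k)$ unconditionally, and hence dominates $\Ct_0=\apre(Y_0,X_0)$ in the precise sense of \REFlem{lem:contain}(iv), with $g(X,Y):=\apre(Y,X)$ and $f(X,Y):=(C_{2k}\cap\cpre(Y))\cup\apre(Y,X)$. I plan to extend \REFlem{lem:contain}(iv) to tolerate additional monotone terms that depend on deeper fixpoint variables $Y_{k-1},X_{k-1},\ldots,Y_1,X_1$ (but neither on $Y_0,X_0$ nor on $Y_k,X_k$ in a way that blocks the collapse). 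The extension argument follows the same warm-start principle of \cite{long1994improved} used in the proof of \REFlem{lem:contain}: iterating the outer $\nu Y_0.~\mu X_0$ beyond its first round cannot add any state, because everything contributed through $\apre(Y_0,X_0)$ has already been captured by $\Ct_k$ in the inner $\nu Y_k$.

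Finally, the complexity bound follows from counting fixpoint variables: after collapsing, \eqref{equ:Parity_live} has alternation depth $k$ with $2k$ bound variables, and so the accelerated computation scheme of \REFapp{app:acceleration} runs in $O(n^{k+1})$ symbolic steps, matching the classical parity bound. A memoryless winning strategy is inherited by transferring the ranking-based Rabin chain construction from the proof of \REFthm{thm:FPsoundcomplete} through the re-indexing above. I expect the algebraic collapse in the third step to require the most care, because the inner fixpoints create terms that mix the variables in non-trivial ways; the rest of the argument is bookkeeping with the partition identities and \REFlem{lem:AsubsetC}.
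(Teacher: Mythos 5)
Your plan follows the paper's proof essentially verbatim: encode the parity condition as a Rabin chain via \eqref{equ:Fcond_}, instantiate \REFthm{thm:RabinC_all}, use the fact that the colors partition $V$ to rewrite each $\Ct_j$, collapse the artificial outermost $\nu Y_0.\mu X_0$ layer via \REFlem{lem:contain}(iv), absorb the $C_{2j}\cap\apre(Y_j,X_j)$ portion into $C_{2j}\cap\cpre(Y_j)$ via \REFlem{lem:AsubsetC} and \REFlem{lem:Ared}, and re-index. The one (easily repaired) wrinkle is your ordering: you perform the $C_{2j}$-absorption \emph{before} the outer-layer collapse, after which the hypothesis $g(X,Y)\subseteq f(X,Y)$ for \emph{all} $X,Y\subseteq V$ required by \REFlem{lem:contain}(iv) no longer holds literally (it needs $X\subseteq Y$), whereas the paper collapses first---while the full $(C_1\cup\cdots\cup C_{2k})\cap\apre(\cdot,\cdot)$ term is still present identically in both $g$ and $f$---so the containment is unconditional.
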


\begin{proof}
 A parity winning condition $\FP$ with $2k$ colors corresponds to the Rabin chain winning condition 
\begin{align}\label{equ:Fcond}
 &\set{\tuple{F_{2},F_{3}},\hdots,\tuple{F_{2k},\emptyset}}\quad\text{s.t.}\quad F_{i}:=\bigcup_{j=i}^{2k} C_{j},
\end{align}
which has $k$ pairs.
Translating the Rabin chain condition induced by $\FP$ in \eqref{equ:Fcond} into a Rabin condition as in \REFthm{thm:FPsoundcomplete} we get the tuple $\FR = \set{\tuple{G_1,R_1},\hdots,\tuple{G_k,R_k}}$ such that
\begin{subequations}\label{equ:RCprop_2}
 \begin{align}
 R_i= &F_{2i+1}=\textstyle\bigcup_{j=2i+1}^{2k} C_j\label{equ:RCprop_2:a}\\
 \overline{R}_i= &\textstyle\bigcup_{j=1}^{2i}C_j\label{equ:RCprop_2:b}\\
 G_i= & F_{2i}=\textstyle\bigcup_{j=2i}^{2k} C_j\label{equ:RCprop_2:c}\\
 \overline{R}_i\cap G_i= & C_{2i}\label{equ:RCprop_2:d}
\end{align}
\end{subequations}
 
First, observe that $R_0=G_0=\emptyset$ have been artificially introduced, and result in $\Ct_0=\apre(Y_0,X_0)$. 
Further, as we have assumed that $\FP$ is such that $\bigcup_{i\in[1,2k]}C_i=V$, we can equivalently write
\begin{align*}
 \Ct_0=&\left(\bigcup_{j=1}^{2k}C_j\right) \cup \apre(Y_0,X_0)
 = ((C_1\cup\hdots\cup C_{2k})\cap \apre(Y_0,X_0))
\end{align*}
For $j>0$, by using \eqref{equ:RCprop_2} we observe that the definition of $\Ct_j$ in \eqref{equ:RabinC_all_Cpj} can be written as
 \begin{align*}
\Ct_{j}=&
 \left(C_{2j}\cap \cpre(Y_{j})\right)\cup \left(\textstyle\left(\bigcup_{l=1}^{2j}C_l\right)\cap \apre(Y_{j},X_{j})\right)\\
 =&\left(C_{2j}\cap \cpre(Y_{j})\right)
 \cup\left(C_{1}\cap \apre(Y_{j},X_{j})\right)
 \cup\hdots\cup\left(C_{2j}\cap \apre(Y_{j},X_{j})\right).
\end{align*}
With this, we obtain the following fixpoint equation
   \begin{align}\label{equ:proof:RtoParity:a}
 \textstyle Z^*:=&\nu Y_{0}.~\mu X_{0}.~\nu Y_{k}.~\mu X_{k}.\hdots\nu Y_{1}.~\mu X_{1}.\\
 &~((C_1\cup\hdots\cup C_{2k})\cap \apre(Y_0,X_0))\notag\\
  &~\cup(C_{2k}\cap \cpre(Y_{k}))\cup ((C_1\cup\hdots\cup C_{2k})\cap \apre(Y_k,X_k))\notag\\
   &~\cup~\hdots\notag\\
 &~\cup (C_2\cap \cpre(Y_1))\cup ((C_1\cup C_2)\cap \apre(Y_1,X_1))\notag
\end{align}

Now consider \REFlem{lem:contain} and let us define
\begin{align*}
 g(X_0,Y_0):=&((C_1\cup\hdots\cup C_{2k})\cap \apre(Y_0,X_0))\\
 f(X_k,Y_k):=&\left(C_{2k}\cap \cpre(Y_{k})\right)\cup((C_1\cup\hdots\cup C_{2k})\cap \apre(Y_k,X_k)).
\end{align*}
It is immediately obvious that $g(X,Y)\subseteq f(X,Y)$ for all $X$ and $Y$. We can therefore apply \REFlem{lem:contain} (iv) and observe that the computation remains unchanged if we remove the fixpoint variables $X_0$ and $Y_0$. 

Now changing subscripts of iteration variables gives the following FP equation. 
   \begin{align}
 \textstyle Z^*:=&\nu Y_{2k}.~\mu X_{2k-1}.\hdots\nu Y_{2}.~\mu X_{1}.\\
  &~\cup(C_{2k}\cap \cpre(Y_{2k}))\cup ((C_1\cup\hdots\cup C_{2k})\cap \apre(Y_{2k},X_{2k-1}))\notag\\
   &~\cup~\hdots\notag\\
 &~\cup (C_2\cap \cpre(Y_2))\cup ((C_1\cup C_2)\cap \apre(Y_2,X_1))\notag
\end{align}

Now we recall from \REFlem{lem:AsubsetC} and \REFlem{lem:Ared} that for all $j$ such that $k\geq j\geq 1$ 
we have 
\begin{align*}
 (C_{2j}\cap \cpre(Y_j))\cup (C_{2j} \cap \apre(Y_j,X_j))=(C_{2j}\cap \cpre(Y_j)). 
\end{align*}

This yields
\begin{align}\label{equ:proof:RtoParity:b}
 \textstyle Z^*:=&\nu Y_{2k}.~\mu X_{2k-1}.\hdots\nu Y_{2}.~\mu X_{1}.\\
  &~\cup(C_{2k}\cap \cpre(Y_{2k}))\cup ((C_1\cup\hdots\cup C_{2k-1})\cap \apre(Y_{2k},X_{2k-1}))\notag\\
   &~\cup~\hdots\notag\\
 &~\cup (C_2\cap \cpre(Y_2))\cup (C_1\cap \apre(Y_2,X_1))\notag
\end{align}

\end{proof}

\begin{remark}
 For the reduction of \enquote{normal} Rabin chain games to parity games we would need to further simplify \eqref{equ:proof:RtoParity:b} for the special case where all  $\apre(Y,X)$  are substituted by with $\cpre(Y)$. In this case, however, we observe that in any valid iteration it always holds that $X_{i+1}\subseteq Y_i$ for all even $i$ and $X_{j+2}\subseteq X_j$ for all odd $j$. We can therefore remove all terms for particular colors that have already appeared in inner fixpoint computations. Doing this yields the normal fixpoint for parity games presented in \eqref{equ:Parity_normal}. For fair-adversarial parity games, this simplification is not possible due to the dependence of $\apre$ on both $Y$ and $X$. 
\end{remark}

\subsubsection{Fair Adversarial Generalized Co-Büchi Games}\label{app:CoBuechi}

\begin{theorem*}[\REFthm{thm:GCB} restated for convenience]
Let $\Gl =\tup{\game, \El}$ be a game graph with live edges and 
$\mathcal{A}$ be a generalized Co-B\"{u}chi winning condition $\game$ with $r$ pairs. 
Further, let 
\begin{align}\label{p:equ:GCB}
  \textstyle Z^*\coloneqq&\nu Y_{0}.~\mu X_{0}.~\bigcup_{a\in [1;r]} \nu Y_{a}.~
\apre(Y_{0},X_{0})\cup (A_a\cap \cpre(Y_a)).
\end{align}
Then $Z^*$ is equivalent to the winning region $\WlR$ of $\p{0}$ in the fair adversarial game over~$\Gl$ for the winning condition $\varphi$ in \eqref{equ:vGCB}. Moreover, the fixpoint algorithm runs in $O(rn^2)$ symbolic steps, and a memoryless winning strategy for $\p{0}$ can be extracted from it.
\end{theorem*}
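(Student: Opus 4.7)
The plan is to derive \REFthm{thm:GCB} from \REFthm{thm:FPsoundcomplete} via the Rabin encoding in \eqref{equ:Rtilde} ($R_j = \overline{A}_j$, $G_j = V$ for $j \in [1;r]$, with the artificial $R_0 = G_0 = \emptyset$), and then to simplify the Rabin fixpoint \eqref{equ:Rabin_all} syntactically until it matches \eqref{p:equ:GCB}. Under this encoding, \REFthm{thm:FPsoundcomplete} already identifies $\WlR$ with the fixpoint $Z^*$ of \eqref{equ:Rabin_all}, so the task reduces to showing that $Z^*$ equals the right-hand side of \eqref{p:equ:GCB}.

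First I would simplify each $\mathcal{C}_{p_j}$. For $j=0$, $\mathcal{C}_{p_0} = \apre(Y_{p_0},X_{p_0})$ since $G_{p_0}=\emptyset$. For $j \geq 1$, $G_{p_j}=V$ collapses the first disjunct to $\cpre(Y_{p_j})$; then \REFlem{lem:AsubsetC} applies (because within the $\nu Y_{p_j}$ the inner $\mu X_{p_j}$ starts at $\emptyset$ and remains $\subseteq Y_{p_j}$ by the standard diagonal containment in nested $\nu\mu$-fixpoints), absorbing the $\apre$ term to yield
\[
\mathcal{C}_{p_j} \;=\; A_{p_1} \cap \cdots \cap A_{p_j} \cap \cpre(Y_{p_j}) \qquad (j \geq 1).
\]
Since $X_{p_j}$ no longer appears in the formula for $j \geq 1$, the inner $\mu X_{p_j}$ binders act on a constant function and can be removed.

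The core step is a \emph{collapse lemma}: for every fixed $p_1 \in P$ and every set $S$ independent of the inner variables,
\[
\nu Y_{p_1}.\, \bigcup_{p_2} \nu Y_{p_2}.\cdots\bigcup_{p_k}\nu Y_{p_k}.\left[S \cup \bigcup_{j=1}^k \mathcal{C}_{p_j}(Y_{p_j})\right] \;=\; \nu Y.\,[S \cup A_{p_1} \cap \cpre(Y)].
\]
The essential observation is the \emph{functional} inclusion $\mathcal{C}_{p_j}(Y) \subseteq A_{p_1} \cap \cpre(Y)$ for every $j \geq 1$ and every $Y$. For the $\supseteq$ direction, I would set $W' := \nu Y.[S \cup A_{p_1}\cap\cpre(Y)]$ and $Y_{p_j} := W'$ for every $j$; the bracketed formula then evaluates to $S \cup \bigcup_j \mathcal{C}_{p_j}(W') = S \cup A_{p_1}\cap\cpre(W') = W'$ by absorption, exhibiting $W'$ as a simultaneous post-fixed point at every nested level for any chosen permutation witness, hence $W'$ lies in every inner union and ultimately in the LHS. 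For $\subseteq$, Knaster-Tarski diagonalization for nested $\nu$-fixpoints reduces the LHS (handling the inner unions level-by-level via monotonicity) to $\nu Y.[S \cup \bigcup_j \mathcal{C}_{p_j}(Y)]$, which equals $\nu Y.[S \cup A_{p_1}\cap\cpre(Y)]$ by the same functional absorption.

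Applying the collapse lemma at $S = \mathcal{C}_{p_0} = \apre(Y_{p_0},X_{p_0})$ and taking the outer union over $p_1 \in [1;r]$ (relabelled as $a$, with $p_0$ relabelled to $0$) produces exactly \eqref{p:equ:GCB}. A memoryless winning strategy is inherited from the strategy construction of \REFthm{thm:FPsoundcomplete} composed with the encoding \eqref{equ:Rtilde}; the $O(rn^2)$ bound follows since the outer $\nu\mu$-alternation costs $O(n^2)$ and each of the $r$ inner $\nu$-fixpoints converges in $O(n)$ iterations. The main obstacle will be the collapse lemma: the unions $\bigcup_{p_j}$ are interleaved with the $\nu$-binders and $\bigcup_i \nu Y.f_i(Y) \neq \nu Y. \bigcup_i f_i(Y)$ in general, so the absorption $\mathcal{C}_{p_j} \subseteq \mathcal{C}_{p_1}$ as monotone functions must be deployed carefully to certify that every inner permutation choice already witnesses the single collapsed fixpoint on the right.
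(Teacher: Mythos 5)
Your proposal is correct and follows the same skeleton as the paper's proof in \REFapp{app:CoBuechi}: encode the generalized co-B\"uchi condition as the Rabin condition \eqref{equ:Rtilde}, use $G_{p_j}=V$ together with \REFlem{lem:AsubsetC} (and the invariant $X_{p_j}\subseteq Y_{p_j}$ during evaluation) to absorb the $\apre$ terms into $\cpre(Y_{p_j})$, observe the containment of the resulting $\mathcal{C}$-terms along any permutation prefix, and collapse the nested fixpoint down to the two terms for $j=0$ and $j=1$. The one place where you diverge is the justification of the collapse: the paper invokes its general \REFlem{lem:contain}~(iii) (a two-level $\nu\mu$-absorption lemma, applied iteratively from the inside out), whereas you prove a bespoke collapse lemma directly, exhibiting $W'=\nu Y.[S\cup A_{p_1}\cap\cpre(Y)]$ as a simultaneous post-fixed point at every nesting level for the $\supseteq$ direction and using the functional inclusion $\mathcal{C}_{p_j}(Y)\subseteq A_{p_1}\cap\cpre(Y)$ together with the fact that inner fixpoint values are contained in outer ones for the $\subseteq$ direction. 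Your version is somewhat more self-contained for the multi-level structure with interleaved unions -- you correctly flag that $\bigcup_i\nu Y.f_i(Y)\neq\nu Y.\bigcup_i f_i(Y)$ in general, a subtlety the paper's appeal to \REFlem{lem:contain} leaves implicit in the iteration -- while the paper's route reuses a lemma it needs elsewhere (for Rabin chain, parity, and GR(1)) and so gets the collapse essentially for free. Both arguments are sound and yield \eqref{p:equ:GCB}, the complexity bound, and the memoryless strategy via \REFthm{thm:FPsoundcomplete}.
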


In this section we prove \REFthm{thm:GCB}. That is, we prove that for generalized Co-Büchi conditions, the fixpoint computing $Z^*$ in \eqref{equ:Rabin_all} simplifies to the one in \eqref{p:equ:GCB}. This is formalized in the next proposition.

\begin{proposition}\label{prop:Co-Buechi}
  Let $\FR = \set{\tuple{G_1,R_1},\hdots,\tuple{G_k,R_k}}$ be a Rabin condition such that \eqref{equ:Rtilde} holds. Further let $Z^*$ be the fixed-point of the $\mu$-calculus formula \eqref{equ:Rabin_all} and $\Zt^*$ the fixed-point of \eqref{p:equ:GCB}. Then $Z^*=\Zt^*$.
 \end{proposition}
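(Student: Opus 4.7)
The strategy is to start from the general Rabin fixpoint \eqref{equ:Rabin_all}, substitute $R_j = \overline{A}_j$ and $G_j = V$ as prescribed by \eqref{equ:Rtilde}, and then perform two successive simplifications that progressively collapse the nested fixpoint down to the three-level shape of \eqref{p:equ:GCB}. After substitution, a direct computation (using $\overline{R}_{p_0} = V$) gives $\mathcal{C}_{p_0} = \apre(Y_0, X_0)$ and, for every $j \geq 1$,
$$\mathcal{C}_{p_j} = Q_{p_j} \cap \bigl[\cpre(Y_{p_j}) \cup \apre(Y_{p_j}, X_{p_j})\bigr], \qquad Q_{p_j} = A_{p_1} \cap \cdots \cap A_{p_j}.$$

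The first simplification eliminates each inner $\apre$-term by invoking Lemma~\ref{lem:AsubsetC}. At every stage of the inner iteration of $\mu X_{p_j}$, the current iterate $X_{p_j}^k$ is contained in the currently fixed value of $Y_{p_j}$: monotonicity of the functional makes the outer $\nu Y_{p_j}$-iteration decrease from $V$, and $X_{p_j}^k \subseteq X_{p_j}^* = Y_{p_j}^{\text{next}} \subseteq Y_{p_j}^{\text{curr}}$. Lemma~\ref{lem:AsubsetC} then yields $\cpre(Y_{p_j}) \cup \apre(Y_{p_j}, X_{p_j}) = \cpre(Y_{p_j})$, so that $\mathcal{C}_{p_j}$ reduces to $Q_{p_j} \cap \cpre(Y_{p_j})$ for every $j \geq 1$. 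In particular, these terms no longer depend on $X_{p_j}$, which makes every $\mu X_{p_j}$-fixpoint trivial and allows its removal.

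The second simplification collapses the remaining nested $\nu$-chain $\nu Y_{p_1}.\nu Y_{p_2}.\ldots\nu Y_{p_k}$ down to a single $\nu Y_{p_1}$. Treated as functions of a common $Y$, $\mathcal{C}_{p_{j+1}}(Y) = Q_{p_{j+1}} \cap \cpre(Y)$ is pointwise contained in $\mathcal{C}_{p_j}(Y) = Q_{p_j} \cap \cpre(Y)$, since $Q_{p_{j+1}} = Q_{p_j} \cap A_{p_{j+1}} \subseteq Q_{p_j}$. An iterated application of (the pure-$\nu$ analogue of) Lemma~\ref{lem:contain}(iii) peels off $\nu Y_{p_k}$, then $\nu Y_{p_{k-1}}$, and so on down to $\nu Y_{p_2}$, leaving only $\nu Y_{p_1}$. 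The term $\mathcal{C}_{p_0} = \apre(Y_0, X_0)$ involves only the outer variables, so it is carried unchanged through all the inner collapses and ends up as a disjunct inside the surviving $\nu Y_{p_1}$. Renaming $p_1$ to $a$ turns the outer disjunction $\bigcup_{p_1 \in P}$ into $\bigcup_{a \in [1;r]}$, recovering exactly \eqref{p:equ:GCB}.

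The main obstacle is making the iterated collapse rigorous, since Lemma~\ref{lem:contain}(iii) is stated for a specific two-level $\nu\mu\nu\mu$ pattern while I need it for a chain of $k-1$ pure $\nu$'s sharing a common disjunctive context (namely the outer term $\apre(Y_0,X_0)$ and the already-established outer $\mathcal{C}_{p_1}(Y_{p_1})$). I will handle this by an induction on the number of inner $\nu$'s to be absorbed: the induction step amounts to showing that, with all outer variables frozen, $\nu Y_{p_{j+1}}.\bigl[S \cup (Q_{p_{j+1}} \cap \cpre(Y_{p_{j+1}}))\bigr] \subseteq \nu Y_{p_j}.\bigl[S \cup (Q_{p_j} \cap \cpre(Y_{p_j}))\bigr]$, which follows from the pointwise containment $\mathcal{C}_{p_{j+1}}(Y) \subseteq \mathcal{C}_{p_j}(Y)$ together with monotonicity of greatest fixpoints and the argument scheme of Lemma~\ref{lem:contain}(iii). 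Combining the two simplifications then gives $Z^* = \Zt^*$, establishing Proposition~\ref{prop:Co-Buechi}.
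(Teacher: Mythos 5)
Your proof is correct and follows essentially the same route as the paper's: both substitute \eqref{equ:Rtilde} into \eqref{equ:Rabin_all}, use \REFlem{lem:AsubsetC} (via $X_{p_j}\subseteq Y_{p_j}$) to reduce each $\mathcal{C}_{p_j}$ with $j\geq 1$ to $Q_{p_j}\cap\cpre(Y_{p_j})$, observe the containment $\mathcal{C}_{p_{j+1}}\subseteq\mathcal{C}_{p_j}$, and invoke \REFlem{lem:contain}(iii) to absorb all but the $j=0$ and $j=1$ terms of each permutation. The only (cosmetic) difference is that you first discard the now-trivial $\mu X_{p_j}$ fixpoints and then collapse a pure $\nu$-chain, whereas the paper applies \REFlem{lem:contain}(iii) directly in its $\nu\mu\nu\mu$ form.
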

 
\begin{proof}
 Now consider the flattening of \eqref{equ:Rabin_all} in \eqref{equ:It_pj} for $\widetilde{\FR}$. Then we see that for all $j>0$ we have 
\begin{align*}
 \mathcal{C}_{\delta p_j i_j}&:=\left(Q_{\delta p_j}
 \cap \cpre(Y_{\delta p_j}^*)\right)
 \cup \left(Q_{\delta p_j}
 \cap \apre(Y_{\delta p_j}^*,X_{\delta p_j}^{i_j-1})\right)\\
 &=Q_{\delta p_j}
 \cap \left(\cpre(Y_{\delta p_j}^*)\cup\apre(Y_{\delta p_j}^*,X_{\delta p_j}^{i_j-1})\right)
\end{align*}
and we always have $X_{\delta p_j}^{i_j-1}\subseteq Y_{\delta p_j}^*$. With this, it follows from \REFlem{lem:AsubsetC} that 
\begin{align}\label{equ:proof:C_gcb}
  \mathcal{C}_{\delta p_j i_j}&=Q_{\delta p_j}
 \cap \cpre(Y_{\delta p_j}^*)
\end{align}
for all $\delta$, $p_j$ and $i_j$ with $j>0$.

Now observe that for $\delta'=\delta p_j i_j$ and all $p_{j+1}\in P\setminus\set{p_0,\hdots,p_j}$ we have 
\begin{equation*}
 Q_{\delta' p_{j+1}}= Q_{\delta p_{j}}\cap \overline{R}_{p_{j+1}}\subseteq Q_{\delta p_{j}}.
\end{equation*}
It further follows from the structure of the fixpoint in \eqref{equ:Rabin_all} that
\begin{align*}
 Y^*_{\delta p_j}&=\bigcup_{i_j>0} X_{\delta p_j}^{i_j
 }=\bigcup_{i_j>0}\bigcup_{p_{j+1}\in P\setminus{p_0,\hdots,p_j}}Y^*_{\delta' p_{j+1}}
\end{align*}
and therefore 
\begin{equation*}
 Y^*_{\delta' p_{j+1}}\subseteq Y^*_{\delta p_j}.
\end{equation*}
With this we get
\begin{equation*}
 \mathcal{C}_{\delta' p_{j+1} i_{j+1}}\subseteq\mathcal{C}_{\delta p_{j} i_j}
\end{equation*}
for all $\delta$, $p_j$ and $i_j$ with $j>0$.
Then it follows from \REFlem{lem:contain} (iii) that for every permutation sequence $\delta=p_0p_1\hdots p_k$ the union over all $\mathcal{C}'s$ terms simplifies to two terms, one for $j=0$ and one for $j=1$. Using this insight, we see that for the particular Rabin condition $\widetilde{\FR}$ the fixpoint algorithm in \eqref{equ:Rabin_all} simplifies to 
\begin{align}\label{equ:GR1_simple_1}
 \nu Y_{0}.~\mu X_{0}.~\bigcup_{p_1\in P} \nu Y_{p_1}.~ \mu X_{p_1}.~
\mathcal{C}_{p_0}\cup \mathcal{C}_{p_1}.
\end{align}
Now recalling that $\mathcal{C}_{p_1}$ simplifies to $A_a\cap \cpre(Y_a)$ for $a=p_1$ (see \eqref{equ:proof:C_gcb}) if \eqref{equ:Rtilde} holds, and that $\mathcal{C}_{p_0}=\apre(Y_0,X_0)$ as $R_0=Q_0=\emptyset$, we see that \eqref{equ:GR1_simple_1} coincides with \eqref{p:equ:GCB}.
\end{proof}


\subsection{Additional Proofs for \REFsec{sec:GenRabinGames}}\label{app:GenRabinGames}

\subsubsection{Proof of \REFthm{thm:sGB}}\label{app:proof:sGB}

\begin{theorem*}[\REFthm{thm:sGB} restated for convenience]
Let $\Gl =\tup{\game, \El}$ be a game graph with live edges and $\tuple{\Fc,Q}$ with $\mathcal{F}=\set{\ps{1}F,\hdots,\ps{s}F}$ a safe generalized Büchi winning condition.
Further, let 
\begin{align}\label{p:equ:sGB:FP}
  \textstyle Z^*\coloneqq&\nu Y.\bigcap_{b\in [1;s]} \mu \ps{b}X.~
  Q\cap 
  \left[
(\ps{b}F\cap  \cpre(Y))\cup \apre(Y,\ps{b}X)
\right].
\end{align}
Then $Z^*$ is equivalent to the winning region $\WlR$ of $\p{0}$ in the fair adversarial game over $\Gl$ for the winning condition $\varphi$ in \eqref{equ:sGB:psi}. 
Moreover, the fixpoint algorithm runs in $O(sn^2)$ symbolic steps, and a finite-memory winning strategy for $\p{0}$ can be extracted from it.
\end{theorem*}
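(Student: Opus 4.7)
The plan is to reduce Theorem~\ref{thm:sGB} to \REFthm{thm:SingleRabin} applied separately to each of the $s$ generalized B\"uchi goals, in the same spirit that \REFthm{thm:SingleRabin} was itself reduced to \REFthm{thm:Reachability}. Let $Y^m$ denote the $m$-th iterate of the outer $\nu$-loop (with $Y^0=V$), and for each $b\in[1;s]$ let $\ps{b}X^{(m),*}$ denote the inner least fixpoint computed with outer variable $Y=Y^m$; by construction $Y^{m+1}=\bigcap_{b\in[1;s]}\ps{b}X^{(m),*}$, and at termination $Y^*=\bigcap_b \ps{b}X^{(*),*}$.

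The main obstacle, and the first step of the proof, is the key identity $Y^*=\ps{b}X^{(*),*}$ for every $b\in[1;s]$, which says that the outer intersection is redundant once the $\nu$-iteration has stabilized. The inclusion $Y^*\subseteq \ps{b}X^{(*),*}$ is immediate. For the reverse, I would use the defining equation $\ps{b}X^{(*),*}=Q\cap[(\ps{b}F\cap\cpre(Y^*))\cup\apre(Y^*,\ps{b}X^{(*),*})]$ together with \REFlem{lem:Ared} and monotonicity of $\apre$ in its first argument, arguing that at stabilization $Y^*$ itself is a post-fixpoint of each inner operator; Knaster--Tarski then yields $\ps{b}X^{(*),*}\subseteq Y^*$. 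This monotonicity bookkeeping between the nested $\nu$ and $\mu$ is the technical heart of the argument and the place where care is needed.

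With the identity in hand, soundness ($Z^*\subseteq\WlR$) follows by a round-robin construction. For each $b$, \REFthm{thm:SingleRabin} applied to $\tuple{\ps{b}F,Q}$ (with the outer variable instantiated to $Y^*$) yields a memoryless $\sigma_b$ on $\ps{b}X^{(*),*}=Y^*$ which, under any fair adversarial opponent, keeps the play in $Q$ and guarantees infinitely many visits to $\ps{b}F\cap\cpre(Y^*)$. Define a finite-memory $\p{0}$-strategy $\rho_0$ with memory $[1;s]$: in mode $b$ play $\sigma_b$ and, upon each visit to $\ps{b}F\cap\cpre(Y^*)$, pick a successor in $Y^*$ (possible by the $\cpre(Y^*)$ guard) and switch to mode $(b\bmod s)+1$. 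A straightforward induction over the mode switches shows that every fair adversarial play compliant with $\rho_0$ satisfies $\Box Q\wedge\bigwedge_{b}\Box\Diamond\,\ps{b}F$.

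For completeness ($\WlR\subseteq Z^*$), I would show that $\WlR$ is a post-fixpoint of the monotone map
\[
g(Y)\coloneqq \bigcap_{b\in[1;s]}\mu\,\ps{b}X.~Q\cap\left[(\ps{b}F\cap\cpre(Y))\cup\apre(Y,\ps{b}X)\right],
\]
whereupon the maximality of the $\nu$-fixpoint gives $\WlR\subseteq Z^*$. Fix $v\in\WlR$ and any $b$; a winning $\p{0}$-strategy from $v$ realizes in particular $Q\,\mathcal{U}\,(\ps{b}F\cap\cpre(\WlR))$, since after each visit to $\ps{b}F$ the continuation is itself winning, forcing the next state back into $\WlR$. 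Invoking the completeness half of \REFthm{thm:Reachability} (as used to prove \REFthm{thm:SingleRabin}) places $v$ in the $b$-th inner $\mu$-fixpoint computed with $Y=\WlR$; intersecting over all $b$ yields $\WlR\subseteq g(\WlR)$. The $s$-mode strategy explains the finite-memory claim, and the $O(sn^2)$ bound follows from $O(n)$ outer iterations each containing $s$ inner $\mu$-fixpoints of length $O(n)$.
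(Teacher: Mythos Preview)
Your approach is correct and hinges on exactly the same key identity as the paper, namely $Y^*=\ps{b}X^{(*),*}$ for every $b$ (the paper's \REFlem{lem:ZeqZt:a}), and the same round-robin strategy construction. The routes differ in packaging: the paper introduces an auxiliary fixpoint $\Zt^*$ with an extra layer $\nu\,\ps{b}\Yt$ per $b$, proves $Z^*=\Zt^*$ via two lemmas and a comparison argument, and then invokes \REFthm{thm:SingleRabin} verbatim on each summand of $\Zt^*$. You bypass $\Zt^*$ entirely and work directly with $Z^*$, which is more economical; the paper's detour buys a more literal match with the shape of \REFthm{thm:SingleRabin}.

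Two places where your sketch is slightly imprecise but easily repaired. First, for $\ps{b}X^{(*),*}\subseteq Y^*$ via Knaster--Tarski you need $f_b(Y^*)\subseteq Y^*$, which after \REFlem{lem:Ared} becomes $Q\cap\cpre(Y^*)\subseteq Y^*$; this is not ``monotonicity of $\apre$ in its first argument'' but rather follows from the defining equation of each $\ps{b'}X^{(*),*}$: since $Y^*\subseteq\ps{b'}X^{(*),*}$, \REFlem{lem:Ared} gives $\apre(Y^*,\ps{b'}X^{(*),*})=\cpre(\ps{b'}X^{(*),*})$, hence $\ps{b'}X^{(*),*}\supseteq Q\cap\cpre(\ps{b'}X^{(*),*})\supseteq Q\cap\cpre(Y^*)$ for every $b'$, and intersecting over $b'$ yields the needed inclusion. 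Second, in completeness, \REFthm{thm:Reachability} places $v$ in $\nu Y'.\,\mu X.\,T_b\cup(Q\cap\apre(Y',X))$, not directly in $h_b(\WlR)=\mu X.\,T_b\cup(Q\cap\apre(\WlR,X))$; you close the gap by observing that this reachability winning set is in fact equal to $\WlR$ (any state that can safely reach $T_b\subseteq\cpre(\WlR)$ can then enter $\WlR$ and win the full objective), whence $h_b(\WlR)=\WlR$ by the fixpoint equation at convergence.
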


Our goal is to prove \REFthm{thm:sGB} by a reduction to \REFthm{thm:SingleRabin} and \REFthm{thm:Reachability}. We therefore first show that a similar construction of an extended fixpoint $\Zt$ as in \eqref{equ:Bcomp} within the proof of \REFthm{thm:SingleRabin} also works for the generalized case. This is formalized in the following proposition.

\begin{proposition}\label{prop:ZeqZt}
 Given the premises of \REFthm{thm:sGB}, let
 \begin{subequations}\label{equ:GBcomp}
   \begin{align}\label{equ:GBcomp:a}
  \textstyle Z^*\coloneqq&\nu Y.~\bigcap_{b\in [1;s]}\mu \ps{b}X.~
 Q\cap 
  \left[(\ps{b}F\cap  \cpre(Y))\cup \apre(Y,\ps{b}X)\right]
\end{align}
and 
  \begin{align}\label{equ:GBcomp:b}
  \textstyle \Zt^*\coloneqq&\nu \Yt.~\bigcap_{b\in [1;s]} \nu \ps{b}\Yt.~\mu \ps{b}\Xt.~
 Q\cap 
  \left[(\ps{b}F\cap  \cpre(\Yt))\cup \apre(\ps{b}\Yt,\ps{b}\Xt)\right].
\end{align}
\end{subequations}
Then $\Zt^*=Z^*$.
 \end{proposition}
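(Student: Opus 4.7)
Our plan is to adapt the proof of Lemma~\ref{lem:FPsimpleSingleRabin} from the single-Büchi setting to the generalized-Büchi setting of \REFprop{prop:ZeqZt}. Indeed, in the single-index case ($s=1$), the proposition collapses immediately onto Lemma~\ref{lem:FPsimpleSingleRabin} after renaming fixpoint variables (the intermediate $\ps{b}\Yt$ plays the role of the intermediate $Y$ there). The real work is extending this to $s > 1$, and we will establish both inclusions $Z^* \subseteq \Zt^*$ and $\Zt^* \subseteq Z^*$ separately.

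For $Z^* \subseteq \Zt^*$, we will apply the Knaster--Tarski characterization of greatest fixpoints twice. Since $\Zt^*$ is the outermost $\nu\Yt$, it suffices to show that $Z^*$ is a pre-fixpoint of the outer operator, and by the same principle applied to each middle $\nu\ps{b}\Yt$, this reduces (for each $b \in [1;s]$) to
\[
Z^* \;\subseteq\; \mu\ps{b}\Xt.~Q\cap\left[(\ps{b}F\cap\cpre(Z^*))\cup\apre(Z^*,\ps{b}\Xt)\right].
\]
The right-hand side is precisely the inner $\mu$-fixpoint $\ps{b}X^*$ appearing in the $Z^*$ formula evaluated at $Y = Z^*$. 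Since $Z^*$ is by definition the intersection $\bigcap_b \ps{b}X^*$, we have $Z^* \subseteq \ps{b}X^*$ for every $b$, which yields the claim.

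For $\Zt^* \subseteq Z^*$, we will mimic the intermediate-formula construction from the proof of Lemma~\ref{lem:FPsimpleSingleRabin} by defining
\[
\Zc^* := \nu\Yt.~\bigcap_{b\in[1;s]}\nu\ps{b}\Yt.~\mu\ps{b}\Xt.~Q\cap\left[(\ps{b}F\cap\cpre(\Yt))\cup\apre(\Yt,\ps{b}\Xt)\cup\apre(\ps{b}\Yt,\ps{b}\Xt)\right],
\]
which unifies the first arguments of $\apre$ appearing in $Z^*$ and $\Zt^*$. Monotonicity of all involved operators gives $\Zt^* \subseteq \Zc^*$. We will then argue $\Zc^* = Z^*$ via an analogue of \REFlem{lem:contain}(iv) applied pointwise for each $b$: the extra disjunct $\apre(\ps{b}\Yt,\ps{b}\Xt)$ is subsumed once the middle $\nu\ps{b}\Yt$ is warm-started \cite{long1994improved} with the current value of the outer $\nu\Yt$, after which the middle $\nu$ collapses and, upon renaming, the remaining formula coincides with $Z^*$.

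The main obstacle is carrying out the warm-starting step uniformly across the intersection $\bigcap_b$. In the single-$b$ setting of Lemma~\ref{lem:FPsimpleSingleRabin}, the technique of \citet{long1994improved} shows that initializing the inner $\nu$ with the previously stabilized value leaves the overall fixpoint unchanged; for $s > 1$, the $s$ middle $\nu$-fixpoints must be warm-started in parallel. This is expected to be sound because, for any fixed $\Yt$, the middle fixpoints for different $b$ do not interact, so Long's lemma applies pointwise in $b$; the outer iteration over $\Yt$ then converges by the standard monotonicity argument, and the intersection structure is preserved through the collapse.
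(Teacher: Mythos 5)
Your first inclusion, $Z^*\subseteq\Zt^*$, is correct and is in fact slightly cleaner than what the paper does: two applications of the Knaster--Tarski coinduction principle reduce it to $Z^*\subseteq\ps{b}X^*$ for each $b$, which is immediate from $Z^*=\bigcap_{b}\ps{b}X^*$. (The paper reaches the same inclusion by a contradiction argument on the iteration indices of the outer $\nu$-variables.)

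The gap is in the converse inclusion, precisely at the warm-starting step you flag as the ``main obstacle.'' The acceleration result of \citet{long1994improved} licenses initializing a $\nu$-variable with its \emph{own} previously computed fixpoint value for a dominating configuration of outer indices, because that value is guaranteed to \emph{over}-approximate the fixpoint now being computed, and a downward $\nu$-iteration started from an over-approximation is sound. In your $\Zc^*$, at outer iteration $m$ you propose to initialize $\ps{b}\Yt$ with $\Yt^m=\bigcap_{b'}\ps{b'}\Yt^{(m-1)*}$. For $s=1$ this coincides with $\ps{1}\Yt^{(m-1)*}$ and the argument of \REFlem{lem:FPsimpleSingleRabin} goes through; for $s>1$ the intersection is in general a strict subset of $\ps{b}\Yt^{(m-1)*}$, i.e., a potential \emph{under}-approximation of the middle fixpoint $\ps{b}\Yt^{m*}$, and starting a greatest-fixpoint iteration from below its target can converge to a strictly smaller set. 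So ``Long's lemma applies pointwise in $b$'' is exactly where the intersection structure is \emph{not} preserved. What you actually need is that at convergence all $s$ middle fixpoints coincide with their intersection, $\ps{b}\Yt^*=\Zt^*$ for every $b$; this is the paper's \REFlem{lem:ZeqZt:b}, and its proof is not a syntactic fixpoint manipulation. It invokes \REFthm{thm:Reachability} to read each block $\nu\ps{b}\Yt.\mu\ps{b}\Xt.(\cdot)$ as the winning region of a fair adversarial safe reachability game with target $\ps{b}F\cap\cpre(\Zt^*)$, and observes that from that target \p{0} can force the play into $\Zt^*\subseteq\ps{b'}\Yt^*$, whence $\ps{b}\Yt^*\subseteq\ps{b'}\Yt^*$ for all $b,b'$. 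Some game-semantic input of this kind has to enter your proof of $\Zc^*\subseteq Z^*$; without it the middle $\nu$-fixpoints do not collapse onto the outer one.
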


However, as in \eqref{equ:GBcomp} a conjunction is used to update $Y$, the proof is not as straight forward as for \eqref{equ:Bcomp}. 
We first show for both equations \eqref{equ:GBcomp:a} and \eqref{equ:GBcomp:b} that, upon termination, we have $Y^*=\ps{b}X^*$ for all $b\in [1;s]$. Both claims are formalized in \REFlem{lem:ZeqZt:a} and \REFlem{lem:ZeqZt:b}, respectively.

 \begin{lemma}\label{lem:ZeqZt:a}
 Given the premises of \REFprop{prop:ZeqZt}, let $\ps{b}X^i$ be the set computed in the $i$-th iteration over the fixpoint variable $\ps{b}X$ in \eqref{equ:GBcomp:a} during the last iteration over $Y$, i.e., $Y=Z^*$ already. Further, we define $\ps{b}X^0=\emptyset$ and $\ps{b}X^*:=\bigcup_{i>0} \ps{b}X^i$. 
  Then it holds that $Z^*=\ps{b}X^*$ for all $b\in [1;s]$.
 \end{lemma}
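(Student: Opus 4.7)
The plan is to prove the two inclusions $Z^* \subseteq \ps{b}X^*$ and $\ps{b}X^* \subseteq Z^*$ separately. The first inclusion is immediate from the outer fixpoint equation: upon termination we have $Y^* = \bigcap_{b'\in[1;s]} \ps{b'}X^*$, so $Z^* = Y^*$ is contained in each individual $\ps{b'}X^*$.

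For the reverse inclusion, introduce $h_{b'}(Y,X) \coloneqq Q\cap[(\ps{b'}F\cap\cpre(Y))\cup\apre(Y,X)]$ and $G(Y) \coloneqq \bigcap_{b'\in[1;s]}\mu X.\,h_{b'}(Y,X)$, so that $Z^* = \nu Y.\,G(Y)$. The plan is to show that $\ps{b}X^*$ is itself a post-fixpoint of $G$, i.e., $\ps{b}X^* \subseteq G(\ps{b}X^*)$; by the Knaster--Tarski characterization of greatest fixpoints, this automatically yields $\ps{b}X^* \subseteq Z^*$. Concretely, it suffices to establish $\ps{b}X^* \subseteq X'_{b'} \coloneqq \mu X.\,h_{b'}(\ps{b}X^*,X)$ for every $b'\in[1;s]$.

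The diagonal case $b'=b$ follows directly from monotonicity of $h_b$ in its first argument combined with the already established $Y^* \subseteq \ps{b}X^*$. The off-diagonal case $b'\neq b$ is the main obstacle, and is handled by induction on the iteration index $i$ of the least-fixpoint approximant $\ps{b}X^i$. The base step uses $\ps{b}X^1 = Q\cap\ps{b}F\cap\cpre(Y^*)$ together with the crucial chain $Y^* \subseteq \ps{b'}X^* \subseteq X'_{b'}$ (the second inclusion follows by monotonicity of $h_{b'}$ in its first argument from $Y^* \subseteq \ps{b}X^*$); these inclusions give $\cpre(Y^*)\subseteq\cpre(X'_{b'})\subseteq\apre(\ps{b}X^*,X'_{b'})$, and since $X'_{b'}$ is a fixpoint of $h_{b'}(\ps{b}X^*,\cdot)$, every state in $\ps{b}X^1$ is absorbed into $X'_{b'}$. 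The inductive step combines the same absorption with monotonicity of $\apre$ in both arguments to conclude $\ps{b}X^{i+1}\subseteq X'_{b'}$.

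The difficulty in the $b'\neq b$ case comes from reconciling two different inner reachability computations: the fixpoint defining $\ps{b}X^*$ uses $Y^*$ as its safe region and $\ps{b}F$ as its target, whereas $X'_{b'}$ uses the enlarged safe region $\ps{b}X^*$ and the different target $\ps{b'}F$. The outer fixpoint equation is precisely what bridges the two views by guaranteeing $Y^* \subseteq X'_{b'}$, which promotes the $\cpre(Y^*)$-witnesses used to add states to $\ps{b}X^*$ into valid $\apre(\ps{b}X^*, X'_{b'})$-steps that re-inject those same states into $X'_{b'}$.
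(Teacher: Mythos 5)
Your proof is correct, and your forward inclusion $Z^*\subseteq\ps{b}X^*$ coincides with the paper's (both read it off from $Z^*=\bigcap_{b'}\ps{b'}X^*$ at termination). For the reverse inclusion you take a genuinely different route. The paper stays at the level of the terminated inner fixpoint: it first records the equation $\ps{b}X^*=(\ps{b}F\cap\cpre(Z^*))\cup\apre(Z^*,\ps{b}X^*)$, uses $Z^*\subseteq\ps{b}X^*$ together with \REFlem{lem:Ared} to collapse it to $\ps{b}X^*=\cpre(\ps{b}X^*)$, intersects over all $b$ to obtain the closure property $\cpre(Z^*)\subseteq Z^*$, and then chases an arbitrary $v\in\ps{b}X^*$ through the ways it can have been added --- via $\ps{b}F\cap\cpre(Z^*)\subseteq Z^*$, via a finite chain $\cpre(\cpre(\cdots\cpre(\ps{b}X^1)\cdots))$ with $\ps{b}X^1\subseteq Z^*$, or via $\elpre(\ps{b}X^*)\cap\eapre_1(Z^*)\subseteq\cpre(Z^*)\subseteq Z^*$. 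You instead lift the argument to the outer functional: you show by induction on the inner approximants that $\ps{b}X^*\subseteq\mu X.\,h_{b'}(\ps{b}X^*,X)$ for \emph{every} $b'$, i.e.\ that $\ps{b}X^*$ is a post-fixpoint of your $G$, and conclude by Knaster--Tarski. The two arguments run on the same engine --- an induction over the inner $\mu$-iteration fuelled by $Y^*\subseteq\ps{b}X^*$ and by $\cpre(T)\subseteq\apre(S,T)$ --- but yours dispenses with \REFlem{lem:Ared} and with the derived closure property entirely, and it replaces the paper's somewhat informal unwinding of the \enquote{finite sequence of $\cpre$'s} (which implicitly re-runs the induction and leans on finiteness of $V$) by a single clean coinduction step. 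What the paper's version buys in exchange is the explicit, reusable identities $\ps{b}X^*=\cpre(\ps{b}X^*)$ and $\cpre(Z^*)\subseteq Z^*$; what yours buys is a tighter, more structural argument whose only off-diagonal ingredient is the bridge $Y^*\subseteq\ps{b'}X^*\subseteq\mu X.\,h_{b'}(\ps{b}X^*,X)$, exactly as you identify.
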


 \begin{proof}
  We fix $Y=Z^*$ and $b\subseteq[1;s]$ and observe from \eqref{equ:GBcomp:a} that 
  \begin{align*}
   \ps{b}X^{0}=(\ps{b}F\cap  \cpre(Z^*))
  \end{align*}
  and therefore
   \begin{align*}
   \ps{b}X^{1}&=\ps{b}X^{0}\cup(\ps{b}F\cap  \cpre(Z^*))\cup \apre(Z^*,\ps{b}X^{0})\\
   &=(\ps{b}F\cap  \cpre(Z^*))\cup\apre(Z^*,\ps{b}X^{0})\supseteq\ps{b}X^{0}
  \end{align*}
  With this, we have in general that 
  \begin{align*}
  \ps{b}X^{i+1}= &\ps{b}X^{i} \cup (\ps{b}F\cap  \cpre(Z^*))\cup \apre(Z^*,\ps{b}X^{i})\\
  =&(\ps{b}F\cap  \cpre(Z^*))\cup \apre(Z^*,\ps{b}X^{i})
  \end{align*}
  which implies $\ps{b}X^{i+1}\supseteq\ps{b}X^{i}$. Hence, $\ps{b}X^*:=\bigcup_{i\in[0,i_{max}]} \ps{b}X^i\allowbreak=\ps{b}X^{i_{max}}$, and therefore, in particular
  \begin{align}\label{equ:proof:lem:ZeqZt:a:1}
  \ps{b}X^*= (\ps{b}F\cap  \cpre(Z^*))\cup \apre(Z^*,\ps{b}X^*).
  \end{align}
  By recalling that $Z^*=\bigcap_{b}\ps{b}X^*$ we see that $Z^*\subseteq\ps{b}X^*$. 
  
  For the inverse direction, we use the observation $Z^*\subseteq\ps{b}X^*$ together with \REFlem{lem:Ared} to see that $\apre(Z^*,\ps{b}X^*)=\cpre(\ps{b}X^*)$. With this $(\ps{b}F\cap  \cpre(Z^*))\subseteq\cpre(Z^*)\subseteq\cpre(\ps{b}X^*)=\apre(Z^*,\ps{b}X^*)$ and hence \eqref{equ:proof:lem:ZeqZt:a:1} reduces to
   \begin{align*}
  \ps{b}X^*=\cpre(\ps{b}X^*)\supseteq\cpre(Z^*).
  \end{align*}
  As the last equality holds for all $b\subseteq[1;s]$ we see that
  \begin{align}\label{equ:proof:lem:ZeqZt:a:2}
   Z^*=\bigcap_{b}\ps{b}X^*=\bigcap_{b}\cpre(\ps{b}X^*)\supseteq\cpre(Z^*).
  \end{align}
  
  We can now use \eqref{equ:proof:lem:ZeqZt:a:2} to proof that $Z^*\supseteq\ps{b}X^*$ also holds. To show this, we pick a vertex $v\in\ps{b}X^*$ and prove that $v\in Z^*$. To that end, observe that either (i) $v\in (\ps{b}F\cap  \cpre(Z^*))\subseteq\cpre(Z^*)\subseteq Z^*$ which immediately proves the statement, or (ii)  $v\in\apre(Z^*,\ps{b}X^*)$. If (ii) holds we again have two cases. Either (a) $v\in \cpre(\ps{b}X^*)$ which implies that there exists a finite sequence $\cpre(\cpre(\hdots\cpre(\ps{b}X^1)\hdots))$ where $\ps{b}X^1=\ps{b}F\cap  \cpre(Z^*)\subseteq\cpre(Z^*)\subseteq Z^*$ and therefore $v\in \cpre(\cpre(\hdots\cpre(Z^*)\hdots))\subseteq Z^*$. Finally we could have (b) that $v\in \epre_l(\ps{b}X^*)\cap\eapre_1(Z^*)\subseteq\eapre_1(Z^*)\subseteq\cpre(Z^*)\subseteq Z^*$, which again proves the statement.
 \end{proof}

 \begin{lemma}\label{lem:ZeqZt:b}
 Given the premises of \REFprop{prop:ZeqZt}, let $\ps{b}Y^i$ be the set computed in the $i$-th iteration over the fixpoint variable $\ps{b}Y$ in \eqref{equ:GBcomp:b} during the last iteration over $Y$, i.e., $Y=\Zt^*$ already. Further, we define $\ps{b}Y^0=V$ and $\ps{b}Y^*:=\bigcap_{i>0} \ps{b}Y^i$. 
  Then it holds that $\Zt^*=\ps{b}Y^*$ for all $b\in [1;s]$.
  \end{lemma}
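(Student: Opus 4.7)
The plan is to establish both inclusions $\Zt^* \subseteq \ps{b}Y^*$ and $\ps{b}Y^* \subseteq \Zt^*$ for every $b \in [1;s]$, closely mirroring the proof of Lemma~\ref{lem:ZeqZt:a}. The first inclusion follows immediately from the outer $\nu$-fixpoint equation $\Zt^* = \bigcap_{b' \in [1;s]} \ps{b'}Y^*$, so all the real work lies in the reverse direction.

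For that reverse direction I would first unroll the inner $\nu$-$\mu$ fixpoint. Since the inner $\nu \ps{b}\Yt$ has reached its greatest fixpoint with the outer parameter set to $\Zt^*$, we have $\ps{b}Y^* = \mu \ps{b}\Xt.~Q \cap \left[(\ps{b}F \cap \cpre(\Zt^*)) \cup \apre(\ps{b}Y^*, \ps{b}\Xt)\right]$. Flattening this $\mu$ into the ascending chain $\ps{b}\Xt^0 = \emptyset$ and $\ps{b}\Xt^{j+1} = Q \cap [(\ps{b}F \cap \cpre(\Zt^*)) \cup \apre(\ps{b}Y^*, \ps{b}\Xt^j)]$ with limit $\ps{b}Y^* = \bigcup_{j\geq 0} \ps{b}\Xt^j$, and evaluating at the limit, yields $\ps{b}Y^* = Q \cap [(\ps{b}F \cap \cpre(\Zt^*)) \cup \apre(\ps{b}Y^*, \ps{b}Y^*)]$. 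Since $\ps{b}Y^* \subseteq \ps{b}Y^*$ trivially, Lemma~\ref{lem:Ared} collapses $\apre(\ps{b}Y^*, \ps{b}Y^*)$ to $\cpre(\ps{b}Y^*)$; combined with $\Zt^* \subseteq \ps{b}Y^*$ and hence $\cpre(\Zt^*) \subseteq \cpre(\ps{b}Y^*)$, the $\ps{b}F \cap \cpre(\Zt^*)$ disjunct is absorbed and we obtain $\ps{b}Y^* = Q \cap \cpre(\ps{b}Y^*)$. Intersecting over all $b$ and using $\cpre(\Zt^*) \subseteq \bigcap_{b} \cpre(\ps{b}Y^*)$ by monotonicity yields the closure property $Q \cap \cpre(\Zt^*) \subseteq \Zt^*$, the direct analogue of \eqref{equ:proof:lem:ZeqZt:a:2}.

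With this closure in hand I would prove $\ps{b}\Xt^j \subseteq \Zt^*$ by induction on $j$. The base case $\ps{b}\Xt^1 \subseteq Q \cap \ps{b}F \cap \cpre(\Zt^*) \subseteq \Zt^*$ is immediate. For the inductive step, any $v \in \ps{b}\Xt^{j+1}$ is either in the base term (handled as above) or in $\apre(\ps{b}Y^*, \ps{b}\Xt^j)$. The $\cpre(\ps{b}\Xt^j)$ sub-branch yields $v \in Q \cap \cpre(\Zt^*) \subseteq \Zt^*$ via the induction hypothesis and the closure; the $\elpre(\ps{b}\Xt^j) \cap \eapre_1(\ps{b}Y^*)$ sub-branch is the live-edge analogue of sub-case~(b) in the proof of Lemma~\ref{lem:ZeqZt:a}. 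Taking the union over $j$ yields $\ps{b}Y^* \subseteq \Zt^*$.

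The main obstacle I anticipate is precisely this live-edge sub-case: unlike in Lemma~\ref{lem:ZeqZt:a}, here the $\eapre_1$ is taken over the (potentially larger) set $\ps{b}Y^*$ rather than over $\Zt^*$, so the clean chain $\eapre_1(\Zt^*) \subseteq \cpre(\Zt^*) \subseteq \Zt^*$ is not directly available. I plan to resolve this by combining the inclusion $\eapre_1(\ps{b}Y^*) \subseteq \cpre(\ps{b}Y^*)$ with the symmetric fixpoint equation $\ps{b'}Y^* = Q \cap \cpre(\ps{b'}Y^*)$ for every $b'$, carrying the induction in parallel across all $b$, so that a live-edge successor in $\ps{b}\Xt^j \subseteq \Zt^*$ forces $v$ into each $\ps{b'}Y^*$ by symmetry, hence into $\bigcap_{b'}\ps{b'}Y^* = \Zt^*$. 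It is this outer intersection over $b$, rather than any further game-theoretic reasoning, that ultimately tightens $\eapre_1(\ps{b}Y^*)$ down to $\eapre_1(\Zt^*)$, and that is the technically most delicate step of the proof.
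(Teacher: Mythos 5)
Your setup is sound up to and including the closure property: the inclusion $\Zt^*\subseteq\ps{b}Y^*$, the identity $\ps{b}Y^*=Q\cap\cpre(\ps{b}Y^*)$ obtained via \REFlem{lem:Ared}, and the derived fact $Q\cap\cpre(\Zt^*)\subseteq\Zt^*$ are all correct, as are the base case and the $\cpre$-branch of your induction. The gap sits exactly where you anticipate it, and the repair you sketch does not close it. In the live-edge branch you have a $\p{1}$ vertex $v$ with $\El(v)\cap\ps{b}\Xt^j\neq\emptyset$ and $E(v)\subseteq\ps{b}Y^*$; to place $v$ in $\ps{b'}Y^*$ via the corresponding branch of the $b'$-fixpoint you need $E(v)\subseteq\ps{b'}Y^*$, i.e.\ membership in $\eapre_1(\ps{b'}Y^*)$, and neither symmetry nor the outer intersection over $b$ supplies this: you only know $E(v)\subseteq\ps{b}Y^*$, and the inclusion $\ps{b}Y^*\subseteq\ps{b'}Y^*$ is precisely the statement under proof. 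Worse, the successors in $E(v)$ are constrained by the $\nu$-variable $\ps{b}\Yt$, not by the $\mu$-variable, so they may enter $\ps{b}Y^*$ at arbitrarily high $\mu$-iteration indices and are out of reach of an induction hypothesis on $j$; the same obstruction shows that $\Zt^*$ is not a pre-fixed point of the map $X\mapsto Q\cap\left[(\ps{b}F\cap\cpre(\Zt^*))\cup\apre(\ps{b}Y^*,X)\right]$ on purely syntactic grounds, so the least-fixed-point principle cannot be applied to $\Zt^*$ directly either. This circularity is the reason the paper abandons the algebraic style of \REFlem{lem:ZeqZt:a} at exactly this lemma.

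The paper instead argues semantically through \REFthm{thm:Reachability}: $\ps{b}Y^*$ is exactly the winning region of the fair adversarial safe reachability game with target $\ps{b}T=Q\cap\ps{b}F\cap\cpre(\Zt^*)$ and safe set $Q$. From any $v\in\ps{b}Y^*$, $\p{0}$ can force a visit to $\ps{b}T$ while staying in $Q$; from $\ps{b}T\subseteq\cpre(\Zt^*)$ she can force one more step into $\Zt^*\subseteq\ps{b'}Y^*$; and from there she wins the $b'$-reachability game. Concatenating these strategies shows that $v$ wins the $b'$-game, so by the completeness direction of \REFthm{thm:Reachability} we get $v\in\ps{b'}Y^*$; hence all the sets $\ps{b}Y^*$ coincide and equal their intersection $\Zt^*$. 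It is this game-theoretic detour that legitimately discharges the $\eapre_1(\ps{b}Y^*)$ constraint: \enquote{all successors remain in $\ps{b}Y^*$} is exactly the invariant the reachability strategy maintains until the target is reached, and fairness guarantees the target is eventually reached. If you want to rescue your route, you would have to replace the induction on $j$ by the rank-based argument from the proof of \REFthm{thm:Reachability} applied to plays rather than to sets --- at which point you have reconstructed the paper's proof.
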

  
  \begin{proof}
  Recall that $\Zt^*=\bigcap_{b}\ps{b}Y^*$ from the structure of the fixpoint algorithm in \eqref{equ:GBcomp:b}. To prove $\Zt^*=\ps{b}Y^*$ for all $b\in [1;s]$ it therefore suffices to show that $\ps{b}Y^*=\ps{b'}Y^*$ for any two $b,b'\in[1;s]$ s.t.\ $b\neq b'$.
  
  Towards this goal, recall from \REFthm{thm:Reachability} that $\ps{b}Y^*$ is exactly the set of states from which player $0$ can win a fair adversarial reachability game with target $\ps{b}T:=\ps{b}F\cap  \cpre(\Zt^*)$. However, every state $v\in\ps{b}T$ allows player $0$ to force the game to a state $v'\in \Zt^*=\bigcap_{b'}\ps{b'}Y^*$. Therefore, by definition player $0$ has a strategy to reach a state $v'\in\ps{b'}Y^*$ from any state $v\in\ps{b}Y^*$ for any $b'\in[1;s]$ s.t.\ $b\neq b'$. As, however $\ps{b'}Y^*$ is defined as the winning region of player $0$ w.r.t.\ the goal set $\ps{b'}T:=\ps{b'}F\cap  \cpre(\Zt^*)$, we know that there actually exists a player $0$ strategy to drive the game from any $v\in\ps{b}Y^*$ to $\ps{b'}T$, and therefore, by definition $\ps{b}Y^*\subseteq\ps{b'}Y^*$. As this inclusion holds mutually for all $b,b'\in[1;s]$ s.t.\ $b\neq b'$ we have that $\ps{b}Y^*=\ps{b'}Y^*$. With this, it immediately follows that $\Zt^*=\ps{b}Y^*$ for all $b\in [1;s]$.
\end{proof}

With \REFlem{lem:ZeqZt:a} and \REFlem{lem:ZeqZt:b} in place, it remains to show that the retained fixpoints are indeed equivalent, which is achieved by the following lemma.

\begin{lemma}
 Given the premises of \REFprop{prop:ZeqZt} it holds that
 \begin{compactitem}[(i)]
  \item $Z^*\not\subset \Zt^*$, and
  \item $\Zt^*\not\subset Z^*$
 \end{compactitem}

\end{lemma}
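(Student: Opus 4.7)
The plan is to establish the two non-strict-inclusion claims by in fact proving the stronger statement $Z^*=\Zt^*$, which yields both $Z^*\not\subset \Zt^*$ and $\Zt^*\not\subset Z^*$ at once. I will obtain the equality by verifying the two inclusions $\Zt^*\subseteq Z^*$ and $Z^*\subseteq \Zt^*$ separately, leveraging \REFlem{lem:ZeqZt:a} and \REFlem{lem:ZeqZt:b} together with the Knaster--Tarski characterisation of $\mu$ and $\nu$ as the smallest pre-fixpoint and the largest post-fixpoint, respectively. Throughout, write $f_b(Y,X)\coloneqq Q\cap[(\ps{b}F\cap\cpre(Y))\cup\apre(Y,X)]$ for the body of \eqref{equ:GBcomp:a} and $g_b(\Yt,\ps{b}\Yt,\ps{b}\Xt)\coloneqq Q\cap[(\ps{b}F\cap\cpre(\Yt))\cup\apre(\ps{b}\Yt,\ps{b}\Xt)]$ for the body of \eqref{equ:GBcomp:b}; the crucial observation is the syntactic identity $g_b(Z,Z,X)=f_b(Z,X)$ obtained by collapsing the two $\nu$-variables.

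For $\Zt^*\subseteq Z^*$, I would start from \REFlem{lem:ZeqZt:b}, which gives $\Zt^*=\ps{b}Y^*$ for every $b\in[1;s]$. By definition of $\ps{b}Y^*$ and substituting $\Yt=\ps{b}\Yt=\Zt^*$, the inner $\mu$-fixpoint in \eqref{equ:GBcomp:b} becomes $\Zt^*=\mu\ps{b}\Xt.\,g_b(\Zt^*,\Zt^*,\ps{b}\Xt)=\mu\ps{b}\Xt.\,f_b(\Zt^*,\ps{b}\Xt)$. Intersecting over $b$ yields $\Zt^*=\bigcap_{b\in[1;s]}\mu\ps{b}X.\,f_b(\Zt^*,\ps{b}X)$, so $\Zt^*$ is a (post-)fixpoint of the outer operator $Y\mapsto\bigcap_{b}\mu\ps{b}X.\,f_b(Y,\ps{b}X)$ that defines $Z^*$. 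Since $Z^*$ is by definition the greatest such post-fixpoint, $\Zt^*\subseteq Z^*$.

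For $Z^*\subseteq\Zt^*$, I would invoke \REFlem{lem:ZeqZt:a} to get $Z^*=\ps{b}X^*=\mu\ps{b}X.\,f_b(Z^*,\ps{b}X)$ for every $b$. Now fix $\Yt=Z^*$ in \eqref{equ:GBcomp:b} and consider the inner greatest fixpoint $\nu\ps{b}\Yt.\,\mu\ps{b}\Xt.\,g_b(Z^*,\ps{b}\Yt,\ps{b}\Xt)$. Plugging $\ps{b}\Yt=Z^*$ into this operator gives $\mu\ps{b}\Xt.\,g_b(Z^*,Z^*,\ps{b}\Xt)=\mu\ps{b}\Xt.\,f_b(Z^*,\ps{b}\Xt)=Z^*$, so $Z^*$ is a post-fixpoint of the $\ps{b}\Yt$-operator and therefore $Z^*\subseteq\nu\ps{b}\Yt.\,\mu\ps{b}\Xt.\,g_b(Z^*,\ps{b}\Yt,\ps{b}\Xt)$. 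Intersecting over $b$ makes $Z^*$ a post-fixpoint of the outermost $\Yt$-operator in \eqref{equ:GBcomp:b}, which by the $\nu\Yt$-characterisation gives $Z^*\subseteq\Zt^*$.

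The only delicate point is the syntactic collapse $g_b(Z,Z,\cdot)=f_b(Z,\cdot)$ underlying both steps; this holds because the $\cpre$-term in $g_b$ is driven by $\Yt$ while its $\apre$-term is driven by $\ps{b}\Yt$, and upon identifying $\Yt=\ps{b}\Yt$ the two expressions coincide literally (no appeal to \REFlem{lem:AsubsetC} or \REFlem{lem:Ared} is needed in this specific degeneracy, since no simplification of $\apre$ is performed). Once this identification is made explicit, both inclusions follow purely from the Knaster--Tarski extremal characterisations, so I expect no substantive obstacle; the main care is in keeping track of whether each intermediate witness is a pre-fixpoint (for the $\Zt^*\subseteq Z^*$ direction, coming out of a $\mu$) or a post-fixpoint (for the $Z^*\subseteq\Zt^*$ direction, going into a $\nu$).
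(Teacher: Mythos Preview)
Your proof is correct and takes a genuinely different route from the paper. The paper argues each part by contradiction via the iterative approximants: for (i) it assumes $Z^*\subset\Zt^*$, picks the first iterate $Y^i$ of \eqref{equ:GBcomp:a} with $Y^i\supseteq\Zt^*$ but $Y^{i+1}\not\supseteq\Zt^*$, and then uses monotonicity together with the identity $\Zt^*=\mu\,\ps{b}X.\,Q\cap[(\ps{b}F\cap\cpre(\Zt^*))\cup\apre(\Zt^*,\ps{b}X)]$ (implicitly relying on \REFlem{lem:ZeqZt:b}) to conclude $\ps{b}X^{i*}\supseteq\Zt^*$, contradicting the choice of $i$; part (ii) is symmetric, tracking iterates of \eqref{equ:GBcomp:b} and using \REFlem{lem:ZeqZt:a}. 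You instead prove the full equality $Z^*=\Zt^*$ directly by exhibiting each side as a post-fixpoint of the outer $\nu$-operator defining the other and invoking the Knaster--Tarski description of $\nu$ as the greatest post-fixpoint. The syntactic collapse $g_b(Z,Z,\cdot)=f_b(Z,\cdot)$ you isolate is exactly what the paper uses implicitly when it rewrites $\Zt^*$ as a $\mu$-fixpoint over $\ps{b}X$ with both outer variables set to $\Zt^*$. Your argument is cleaner in that it avoids any reasoning about intermediate iterates and delivers the equality in one stroke (which is in fact stronger than the two non-strict-inclusion claims, since those alone would not exclude incomparability); the paper's argument, by contrast, stays closer to the operational picture of how the iterations actually converge.
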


\begin{proof}
 We show both claims by contradiction.\\
 \begin{inparaitem}[$\blacktriangleright$]
 \item \textit{(i)} Assume $Z^*\subset \Zt^*$. As $Y^0=V$ and $Z^*=Y^k$ for some $k>0$ this implies that there exists an $i>0$ s.t. $Y^i\supseteq \Zt^*\supset Y^{i+1}$.
As $Y^{i+1}= \bigcap_{b} \ps{b}X^{i*}$, this implies the existence of a $b\in[1;s]$  s.t.\ $\Zt^*\supset \ps{b}X^{i*}$, where
 \begin{equation*}
  \ps{b}X^{i*}=\mu \ps{b}X. Q\cap \left[(\ps{b}F\cap  \cpre(Y^i))\cup \apre(Y^i,\ps{b}X)\right]
 \end{equation*}
 On the other hand,
  \begin{equation*}
   \Zt^*=\ps{b}\Yt^{**}=\ps{b}X^{***}=\mu \ps{b}X. Q\cap \left[(\ps{b}F\cap  \cpre(\Zt^*))\cup \apre(\Zt^*,\ps{b}X)\right]
  \end{equation*}
  As $Y^i\supseteq \Zt^*$ it follows from monotonicity of all involved functions that $\ps{b}X^{i*}\supseteq \ps{b}X^{***}$ which yields a contradiction.\\
 \item \textit{(ii)} 
 Now we assume $\Zt^*\subset Z^*$. As $\Yt^0=V$ and $\Zt^*=\Yt^k$ for some $k>0$ this implies that there exists an $i>0$ s.t. $\Yt^i\supseteq Z^*\supset \Yt^{i+1}$. 

As $Y^{i+1}= \bigcap_{b} \ps{b}Y^{i*}$, this implies the existence of $b\in[1;s]$  s.t.\ $Z^*\supset  \ps{b}Y^{i*}$. We recall that 
 \begin{equation*}
  \ps{b}Y^{i*}=\nu \ps{b}Y. \mu \ps{b}X. Q\cap \left[(\ps{b}F\cap  \cpre(\Yt^i))\cup \apre(\ps{b}Y^i,\ps{b}X)\right]
 \end{equation*}
  Now observe that $\ps{b}Y^{i0}=V\supseteq Z^*$. Hence, for $Z^*\supset  \ps{b}Y^{i*}$ to be true there must exists a $j$ s.t.\ $Y^{ij}\supseteq Z^*\supset Y^{ij+1}$, where 
   \begin{equation*}
 \ps{b}Y^{ij+1}= \ps{b}X^{ij*}=\mu \ps{b}X. Q\cap \left[(\ps{b}F\cap  \cpre(\Yt^i))\cup \apre(\ps{b}Y^{ij},\ps{b}X)\right].
 \end{equation*}
 Now it is however easy to see that it follows from monotonicity again that we have $Y^{ij}\supseteq \Zt^*$
 whenever $Y^{ij}\supseteq Z^*$, which yields the intended contradiction.
  \end{inparaitem}
\end{proof}

Using \REFprop{prop:ZeqZt} we know that \eqref{equ:GBcomp:a} and \eqref{equ:GBcomp:b} compute the same set. Hence, we can use \eqref{equ:GBcomp:b} instead of \eqref{p:equ:sGB:FP} to prove \REFthm{thm:sGB}. This allows us to simply reduce the proof of \REFthm{thm:sGB} to \REFthm{thm:SingleRabin} and \REFthm{thm:Reachability} as formalized below.
 
 \begin{proof}[Proof of \REFthm{thm:sGB}] 
\noindent\textbf{Soundness \& Completeness:}
Let us define $Z^*(\tuple{T,Q})$ to be the set of states computed by the fixpoint algorithm in \eqref{equ:Reach:FP}. Then it follows from \eqref{equ:GBcomp:b} that 
   \begin{equation*}
   \Zt^*=\nu Y.~\bigcap_{b\in [1;s]} Z^*(\tuple{Q\cap \ps{b}F \cap \cpre(Y),Q}).
 \end{equation*}
 In particular, it follows from \REFlem{lem:ZeqZt:b} that 
 \begin{equation*}
  \Zt^*=Z^*(\tuple{Q\cap \ps{b}F \cap \cpre(\Zt^*),Q})~\forall b\in [1;s].
 \end{equation*}
Now let us define $\ps{b}\mathcal{W}$ to be the fair adversarial winning state set for 
 \begin{equation*}
  \ps{b}\psi=\Box Q\wedge \Box\Diamond\ps{b}F.
 \end{equation*}
 With this, it follows from \REFthm{thm:SingleRabin} that $\Zt^*=\ps{b}\mathcal{W}$ for all $b\in [1;s]$. Therefore, we obviously have $\bigcap_{b\in[1;s]}\ps{b}\mathcal{W}=\Zt^*$. Now let $\mathcal{W}$ be the fair adversarial winning set w.r.t.\ 
  \begin{equation*}
  \psi=\Box Q\wedge \bigwedge_{b\in[1;s]}\Box\Diamond(\ps{b}F).
 \end{equation*}
 (compare \eqref{equ:vgR}). Then we always have $\mathcal{W}\subseteq\bigcap_{b\in[1;s]}\ps{b}\mathcal{W}$ which immediately implies $\mathcal{W}\subseteq \Zt^*$. However, as $\ps{a}\mathcal{W}=\ps{b}\mathcal{W}$ for all $a,b\in[1;s]$, we know that $\psi$ holds for all $v\in \Zt^*$, hence $Z^*\subseteq\mathcal{W}$.

\smallskip
\noindent\textbf{Strategy construction:}
We can define a rank function for every $b$ as in \eqref{equ:Reach:rho} within the proof of \REFthm{thm:Reachability} (see \REFapp{app:prop:Reachability}), i.e., 
 \begin{equation}
  \ps{b}\rank{v}=i\quad\text{iff}\quad v\in\ps{b}X^{i}\setminus\ps{b}X^{i-1}.
 \end{equation} 
 Then, we have a different strategy, $\ps{b}\rho_0$, which is defined via \eqref{equ:Reach:rho} (see \REFapp{app:prop:Reachability}) using the corresponding $\ps{b}\rank{}$ function. With this, we define a new strategy $\rho$ which circles through all possible goal sets in a pre-defined order. That is 
 \begin{align}\label{equ:strategy:gR}
  \rho_0(v,b)=
  \begin{cases}
   \ps{b}\rho_0(v) & v\notin \ps{b}F\\
   \ps{b^+}\rho_0(v) & v\in \ps{b}F
  \end{cases}
 \end{align}
where $b^+=b+1$ if $b<s$ and $b^+=1$ if $b=s$. 

The strategy in \eqref{equ:strategy:gR} is obviously winning for $\psi$ in \eqref{equ:vgR} as every $\ps{b}\rho_0$ is a winning strategy for~$\ps{b}\psi$ (from \REFthm{thm:SingleRabin}) and upon reaching $\ps{b}F$ we know that the respective state $v$ is also contained in $\cpre(\Zt^*)$ where $\Zt^*=\ps{b^+}Y^*$. Now it follows from the definition of $\cpre$ that $\cpre(\ps{b^+}Y^*)\subseteq \ps{b^+}Y^*$, hence, allowing to apply $\ps{b^+}\rho_0$ upon reaching $\ps{b}F$.
\end{proof}

\subsubsection{Proof for \REFthm{thm:GenRabin}}\label{app:GenRabinproof}

\begin{theorem*}[\REFthm{thm:GenRabin} restated for convenience]
Let $\Gl =\tup{\game, \El}$ be a game graph with live edges and 
$\FgR$ be a generalized Rabin condition over $\game$ with index set $P=[1;k]$. 
Further, let
 \begin{subequations}\allowdisplaybreaks
 \label{p:equ:GenRabin_all}
   \begin{align}
    Z^*:=&\nu Y_{0}.~\mu X_{0}.~\notag\\
         &\bigcup_{p_1\in P}  \nu Y_{p_1}.~\bigcap_{l_1\in[1;m_{p_1}]}\mu \ps{l_1}X_{p_1}.~\\
&\qquad\ddots\notag\\
 &\qquad\bigcup_{p_k\in P\setminus\set{p_1,\hdots, p_{k-1}}}\hspace{-0.3cm}\nu Y_{p_k}.~\bigcap_{l_k\in[1;m_{p_k}]}\hspace{-0.3cm}\mu \ps{l_k}X_{p_k}.~
 \bigcup_{j=0}^k \ps{l_j}\mathcal{C}_{p_j},\notag
\end{align}
where
 \begin{align*}
 \ps{l_j}\mathcal{C}_{p_j} :=
\left(\bigcap_{i=0}^{j} \overline{R}_{p_i}\right)\cap
 \left[ 
 \left( \ps{l_j}G_{p_j}\cap \cpre(Y_{p_j})\right)
 \cup \apre(Y_{p_j},\ps{l_j}X_{p_j})
 \right]
\end{align*}
 \end{subequations}
with $p_0=0$, $G_{p_0}\coloneqq\set{\emptyset}$ and $R_{p_0}\coloneqq\emptyset$. 
Then $Z^*$ is equivalent to the winning region $\WlR$ of $\p{0}$ in the fair adversarial game over $\Gl$ for the winning condition $\vR$ in \eqref{equ:vgR}. Moreover, the fixpoint algorithm runs in $O(n^{k+2}k! m_1 \ldots m_k)$ symbolic steps, and a finite-memory winning strategy for $\p{0}$ can be extracted from it.
\end{theorem*}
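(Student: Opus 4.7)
The plan is to follow the induction-on-fixpoint-nesting structure used for \REFthm{thm:FPsoundcomplete} in \REFapp{appD}, and to substitute \REFthm{thm:sGB} (safe generalized B\"uchi) for \REFthm{thm:SingleRabin} (safe B\"uchi) everywhere a single B\"uchi goal was used. Concretely, I would first flatten \eqref{equ:GenRabin_all} in analogy with \REFprop{prop:flattening}, extending a configuration prefix $\delta = p_0 i_0 \dots p_{j-1} i_{j-1}$ to record not only the Rabin pair index $p_j$ and its outer $\mu$-iteration counter $i_j$, but also the fact that the inner layer is now a conjunction of $m_{p_j}$ fixpoints $\mu\,\ps{l_j}X_{p_j}$ for $l_j \in [1;m_{p_j}]$. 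By \REFprop{prop:ZeqZt}, each such conjunction can equivalently be written with an additional outer $\nu$-variable per goal $\ps{l_j}G_{p_j}$, and \REFlem{lem:ZeqZt:b} guarantees that at the fixpoint all the $\ps{l_j}X_{p_j}^{*}$ sets coincide with the enclosing $Y_{p_j}^{*}$. This reduces the per-Rabin-pair step to a safe generalized B\"uchi subproblem over the goal set $\mathcal{G}_{p_j}$ and safe set $Q_{p_j} = \bigcap_{i\le j}\overline{R}_{p_i}$, exactly as in \REFthm{thm:sGB}.

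Then I would lift the local winning condition used in the Rabin proof (\eqref{equ:psi_delta}) to the generalized Rabin analogue
\begin{align*}
\psi_{\delta p_j} \coloneqq
  &\,\big(Q_{\delta p_j}\,\mathcal{U}\,S_\delta\big) \\
  \vee\; &\,\big(\Box Q_{\delta p_j} \wedge \textstyle\bigwedge_{l\in[1;m_{p_j}]} \Box\Diamond\,\ps{l}G_{p_j}\big) \\
  \vee\; &\,\Box Q_{\delta p_j} \wedge \textstyle\bigvee_{i\in P\setminus\{p_0,\dots,p_j\}}\big(\Diamond\Box\overline{R}_{i} \wedge \bigwedge_{l\in[1;m_i]}\Box\Diamond\,\ps{l}G_{i}\big),
\end{align*}
and show, exactly as in \REFprop{prop:soundness_j} and \REFprop{prop:completeness_j}, that $Y_{\delta p_j}^{*}$ coincides with the fair adversarial winning set of $\psi_{\delta p_j}$ by induction on $j$, starting from $j=k$ (where the third line is empty and the statement reduces to \REFthm{thm:sGB}) and working outward. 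For $p_j = p_0 = 0$ we have $Q = V$, $S_\varepsilon = \emptyset$, and the second line vanishes (since $\Gc_{p_0} = \{\emptyset\}$), so $\psi_{p_0}$ reduces to the generalized Rabin condition $\vgR$ in \eqref{equ:vgR}.

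For strategy extraction, I would combine the memoryless rank-based $\p{0}$ strategy of \REFthm{thm:FPsoundcomplete} (selecting successors of minimal configuration rank) with the finite-memory goal-cycling strategy of \eqref{equ:strategy:gR} used in \REFthm{thm:sGB}: the rank determines the currently relevant Rabin pair $p_j$ and inner target $\ps{l_j}G_{p_j}$, while a finite memory counter cycles through $l_j \in [1;m_{p_j}]$ whenever the current $\ps{l_j}G_{p_j}$ is visited. The required memory is bounded by $\max_j m_{p_j}$ per layer; stacking the $k$ layers gives an overall memory of size $m_1 + \dots + m_k$. Complexity-wise, the accelerated fixpoint bound of $O(n^{k+2} k!)$ from \REFapp{app:acceleration} accrues an extra factor of $m_{p_j}$ for each of the $k$ nested conjunctions, yielding $O(n^{k+2} k!\, m_1 \cdots m_k)$ symbolic steps.

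The main obstacle, as in the proof of \REFthm{thm:sGB}, is the interaction between the conjunction $\bigcap_{l_j}\mu\,\ps{l_j}X_{p_j}$ and the outer variable $Y_{p_j}$ used inside $\apre(Y_{p_j}, \ps{l_j}X_{p_j})$: the argument that all $\ps{l_j}X_{p_j}^{*}$ collapse to a common set (so that the fair adversarial reachability/safe-reachability reasoning of \REFthm{thm:Reachability} and \REFthm{thm:SingleRabin} still applies at each inner layer) must be re-established inside each permutation branch of \eqref{equ:GenRabin_all}. This is exactly what \REFprop{prop:ZeqZt} provides, but one has to check that it remains valid when $Q$ is replaced by the non-trivial safe set $Q_{p_j}$ determined by the ambient permutation prefix, and that it composes correctly across the nested outer $\nu Y_{p_j}$ and outermost $\nu Y_0\,\mu X_0$; once this is verified, the remainder of the argument ports over from \REFapp{appD} with only notational changes.
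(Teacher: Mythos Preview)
Your proposal is correct and follows essentially the same route as the paper: extend the configuration-prefix flattening of \REFprop{prop:flattening} by an additional goal-chain prefix $\phi=\ell_0\ldots\ell_{j-1}$, lift the local winning condition $\psi_{\delta p_j}$ to carry the conjunction $\bigwedge_{l}\Box\Diamond\,\ps{l}G_{p_j}$, reuse the inductive soundness/completeness argument of \REFapp{appD} with \REFthm{thm:sGB} in place of \REFthm{thm:SingleRabin}, and stack the per-goal rank strategies into a goal-cycling finite-memory strategy. One small slip: the memory needed to record the current goal index at each of the $k$ nested layers is the product $m_1\cdots m_k$ (a tuple $(\ell_1,\ldots,\ell_k)$), not the sum $m_1+\cdots+m_k$; this does not affect the theorem, which only asserts finite memory.
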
	

We show how the proof of \REFthm{thm:FPsoundcomplete} in \REFapp{appD} needs to be adapted in order to prove the generalized version of \REFthm{thm:FPsoundcomplete}, namely \REFthm{thm:GenRabin}, instead.

\smallskip
\noindent\textbf{Strategy Construction:}
Similar to the finite-memory strategy constructed for generalized Büchi games in \REFapp{app:proof:sGB}, the strategy for 
generalized Rabin games needs to remember the index of all the goal sets currently \enquote{chased} for each permutation index up to $p_j$. To formalize this, we define the set of full goal chain sequences for a given generalized Rabin specification $\FgR$ by 
\begin{align}
 \Phi(\FgR):=\SetComp{\ell_0\ell_1\hdots \ell_k}{\ell_0=1,~\ell_j\in[0;m_j]}.
\end{align}
If $\FgR$ is clear from the context we simply write $\Phi$. Given a  goal chain prefix $\phi:=\ell_0\ell_1\hdots \ell_{j-1}$ we can now construct a ranking for each such prefix, using the flattening of \eqref{p:equ:GenRabin_all} instead of \eqref{equ:Rabin_all}. This yields the following proposition which follows from \REFprop{prop:flattening} by simply annotating all terms with the goal chain prefix $\phi$.

\begin{proposition}\label{prop:GenR:flattening}
 Let $\delta=p_0i_0\hdots p_{j-1} i_{j-1}$ be a \emph{configuration prefix}, $\phi:=\ell_0\ell_1\hdots \ell_{j-1}$ a \emph{goal chain prefix}, $p_j\in P\setminus\set{p_1,\hdots,p_{j-1}}$ the next permutation index, $\ell_j\in[1;m_{p_j}]$ the next goal set and $i_{j}>0$ a counter for $p_j$. Then the flattening of \eqref{p:equ:GenRabin_all} for this configuration and goal prefix is given by
\begin{subequations}\label{equ:GenR:It_pj}
\begin{align}\label{equ:GenR:Xdeltai}
 \ps{\phi \ell_j}X_{\delt p_j}^{i_j}=
 &
 \underbrace{
 \ps{\phi}S_{\delta}\cup
\ps{\ell_j}\mathcal{C}_{\delta p_j i_j}
 }_{\ps{\phi \ell_j}S_{\delta p_j i_j}}
 \cup\ps{\phi \ell_j}\mathcal{A}_{\delta p_j i_j}
\end{align}
where 
 \begin{align}
&Q_{p_0\hdots p_a}:=
\bigcap_{b=0}^{a} \overline{R}_{p_b},\label{equ:GenR:Qdelta}\\ 
&\ps{\ell_a}\mathcal{C}_{\delta p_a i_a}:=\left(Q_{\delta p_a}
 \cap \ps{\ell_a}G_{p_a}\cap \cpre(Y_{\delta p_a}^*)\right)
 \cup \left(Q_{\delta p_a}
 \cap \apre(Y_{\delta p_a}^*,\ps{\ell_a}X_{\delta p_a}^{i_a-1})\right)\notag\\
 &\ps{\ell_0\hdots \ell_a}S_{p_0i_0\hdots p_{a}i_{a}}:=\bigcup_{b=0}^{a} \ps{\ell_b}\mathcal{C}_{p_0i_0\hdots p_b i_b},\label{equ:GenR:Sdelta}\\
&\ps{\phi \ell_i}A_{\delta p_j i_j}:=\textstyle\bigcup_{p_{j+1}\in P\setminus\set{p_1,\hdots,p_{j}}} \left(\textstyle\bigcap_{\ell_{j+1}\in[1;m_{p_{j+1}}]} \left(\textstyle\bigcup_{i_{j+1}>0} 
\left(\ps{\phi \ell_j \ell_{j+1}}X^{i_{j+1}}_{\delta p_j i_j p_{j+1}}
\setminus \ps{\phi \ell_i}S_{\delta p_j i_j}
\right)\right)\right).
\label{equ:GenR:Adeltai}
\end{align}
\end{subequations}
\end{proposition}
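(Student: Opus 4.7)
The plan is to prove \REFprop{prop:GenR:flattening} by a direct syntactic unrolling of the nested fixpoint in \eqref{p:equ:GenRabin_all}, in close analogy with \REFprop{prop:flattening} for the ordinary Rabin case. Given the configuration prefix $\delta=p_0i_0\hdots p_{j-1}i_{j-1}$ and goal chain prefix $\phi=\ell_0\hdots\ell_{j-1}$, I would fix every outer fixpoint variable $Y_{p_{j'}}$ and $\ps{\ell_{j'}}X_{p_{j'}}$ with $j'<j$, together with $Y_{p_j}$, at the values dictated by $\delta$ and $\phi$. Then $\ps{\ell_j}X_{\delta p_j}^{i_j}$ is, by definition, the $i_j$-th Kleene iterate of the innermost $\mu$-recurrence at level $j$ under these fixed outer values, starting from $\ps{\ell_j}X_{\delta p_j}^{0}=\emptyset$.

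First I would partition the body $\bigcup_{j'=0}^{k}\ps{\ell_{j'}}\mathcal{C}_{p_{j'}}$ of \eqref{equ:GenRabin_all_Cpj} into three blocks according to whether the summation index satisfies $j'<j$, $j'=j$, or $j'>j$. The $j'<j$ terms depend only on already-fixed outer variables, so their union is exactly $\ps{\phi}S_\delta$ by \eqref{equ:GenR:Sdelta}. The $j'=j$ term evaluates to $\ps{\ell_j}\mathcal{C}_{\delta p_j i_j}$ directly from the definition of $\ps{\ell_j}\mathcal{C}_{p_j}$, after substituting $Y_{p_j}$ by its fixed value $Y_{\delta p_j}^*$ and $\ps{\ell_j}X_{p_j}$ by the previous iterate $\ps{\ell_j}X_{\delta p_j}^{i_j-1}$.

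The bulk of the work is to show that the $j'>j$ block, which is buried inside the remaining nested expression $\bigcup_{p_{j+1}}\nu Y_{p_{j+1}}\bigcap_{\ell_{j+1}}\mu \ps{\ell_{j+1}}X_{p_{j+1}}\hdots$, contributes precisely $\ps{\phi\ell_j}\mathcal{A}_{\delta p_j i_j}$ as written in \eqref{equ:GenR:Adeltai}. I would argue this by expanding each quantifier from the outside in: the leading $\bigcup_{p_{j+1}}$ gives the outer union over permutation indices, the $\nu Y_{p_{j+1}}$ realises the corresponding greatest fixpoint (which, when flattened one level deeper, reintroduces the inherited outer piece $\ps{\phi\ell_j}S_{\delta p_j i_j}$, explaining the set subtraction in \eqref{equ:GenR:Adeltai}), the $\bigcap_{\ell_{j+1}}$ becomes the intersection over goal choices, and the innermost $\mu\ps{\ell_{j+1}}X_{p_{j+1}}$ is expressed as $\bigcup_{i_{j+1}>0}\ps{\phi\ell_j\ell_{j+1}}X_{\delta p_j i_j p_{j+1}}^{i_{j+1}}$ by the Kleene characterisation of least fixpoints on the finite lattice $2^V$. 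Combining blocks (a), (b) and (c) by distributivity of union then yields \eqref{equ:GenR:Xdeltai} exactly.

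The main obstacle is making the placement of the intersection $\bigcap_{\ell_{j+1}}$ inside the union $\bigcup_{p_{j+1}}$ in \eqref{equ:GenR:Adeltai} fully rigorous, since intersection does not distribute over union in general. The point is that this is not a distributivity step but a recursive application of the same flattening claim at level $j+1$: once the value of the level-$j+1$ fixpoint is, for a fixed outer choice of $p_{j+1}$, already written as an intersection over $\ell_{j+1}$ of the iterates $\ps{\phi\ell_j\ell_{j+1}}X_{\delta p_j i_j p_{j+1}}^{i_{j+1}}$, that intersection sits naturally inside the outer union over $p_{j+1}$. The rest of the argument is bookkeeping. I would therefore proceed by finite induction on $k-j$, with base case $j=k$ (where $\mathcal{A}$ is empty and the claim is immediate) and inductive step carried out as above. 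No appeal to the semantic pre-operator identities (\REFlem{lem:AsubsetC}, \REFlem{lem:Ared}) is needed, since the claim is a purely syntactic unfolding of the $\mu$-calculus expression.
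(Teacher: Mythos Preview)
Your proposal is correct and follows essentially the same approach as the paper. The paper in fact omits the proof entirely, stating only that ``this flattening follows directly from the structure of the fixpoint algorithm in \eqref{p:equ:GenRabin_all} and the definition of $\ps{l_j}\mathcal{C}_{p_j}$ in \eqref{equ:GenRabin_all_Cpj}'' (just as it does for the non-generalized \REFprop{prop:flattening}); your plan simply spells out the bookkeeping that the paper leaves implicit, including the induction on $k-j$ and the correct observation that the placement of $\bigcap_{\ell_{j+1}}$ inside $\bigcup_{p_{j+1}}$ is a direct transcription of the quantifier structure rather than a distributivity step.
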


Again we see that this flattening follows directly from the structure of the fixpoint algorithm in \eqref{p:equ:GenRabin_all} and the definition of $\ps{l_j}\mathcal{C}_{p_j}$ in \eqref{equ:GenRabin_all_Cpj}. Using the flattening of \eqref{p:equ:GenRabin_all} in \eqref{equ:GenR:It_pj} we can define a ranking function for each goal chain prefix $\phi$ identical to  \REFdef{def:ranking}. That is, given the premises of \REFprop{prop:GenR:flattening}, we define $\ps{\phi \ell_j}R:V\rightarrow 2^{\Dt}$ s.t.\
 \begin{inparaenum}[(i)]
  \item $\infty \in \ps{\phi \ell_j}R(v)$ for all $v\in V$, and
  \item $\delta p_j i_j \underline{\gamma}\in \ps{\phi \ell_j}R(v)$ iff $v\in \ps{\phi \ell_j}S_{\delta p_j i_j}$.
 \end{inparaenum}
The ranking function $\ps{\phi}\rank{}:V\rightarrow D$ is then again defined as in \REFdef{def:ranking} s.t. $\ps{\phi}\rank{}: v \mapsto \min\set{\ps{\phi}R(v)}$. Similarly, we can define a memoryless winning strategy for every fixed goal sequence $\phi$ as in \eqref{equ:strategy}. That is, 
 \begin{align}
 \ps{\phi}\rho_0(v):=\min_{(v,w)\in E} (\ps{\phi}\rank{w}).
\end{align}
Now, similar to the proof of \REFthm{thm:sGB} (see \REFsec{thm:sGB}) we can \enquote{stack} these memoryless winning strategies to define a new strategy with finite memory which circles through all possible goal sets in a pre-defined order. That is
 \begin{align}\label{equ:strategy:GenRabin}
  \rho_0(v,\phi \ell_j):=
  \begin{cases}
   \ps{\phi \ell_j}\rho_0(v) & v\notin \ps{\ell_j}F\\
   \ps{\phi \ell_j^+}\rho_0(v) & v\in \ps{\ell_j}F
  \end{cases}
 \end{align}
where $\ell_j^+:=\ell_j+1$ if $\ell_j<m_{p_j}$ and $\ell_j^+:=1$ if $\ell_j=m_{p_j}$. 

Using this goal chain dependent ranking function, the proof of soundness and completeness of \eqref{p:equ:GenRabin_all} along with the proof that $\rho_0$ in \eqref{equ:strategy:GenRabin} is indeed a winning strategy for player $0$ in the fair adversarial generalized Rabin game, follows exactly the same lines as the proof in \REFapp{appD}. That is, 
 we iteratively consider instances of the flattening in \eqref{equ:GenR:It_pj}, starting with $j=k$ as the base case, and doing an induction from \enquote{$j+1$} to \enquote{$j$}.
To this end, we consider a \emph{generalized} local winning condition which refers not only to the current configuration-prefix $\delta=p_0i_0\hdots p_{j-1}i_{j-1}$ but also to the current goal chain prefix $\phi:=\ell_0\hdots \ell_{j-1}$. Hence, \eqref{equ:psi_delta} gets modified to 
\begin{align}
\ps{\phi}\psi_{\delta p_j}:=
\left(
\begin{array}{rl}
&Q_{\delta p_j}\mathcal{U} \ps{\phi}S_{\delta}\\
\vee &\Box Q_{\delta p_j}\wedge \bigwedge_{\ell_j\in[1;m_{p_j}]}\Box\Diamond \ps{\ell_j}G_{p_j}\\
\vee &\Box Q_{\delta p_j} \wedge \left(\displaystyle\bigvee_{i\in \Pt{j}}\left(\Diamond\Box\overline{R}_{i}\wedge \bigwedge_{b\in[1;m_{i}]}\Box\Diamond \ps{b}G_{i}\right)\right) 
\end{array}
\right)
\end{align}
where $\Pt{j}=P\setminus\set{p_0,\hdots,p_{j}}$.
With this, it becomes obvious that the proof of soundness, completeness and the winning strategy for \REFthm{thm:GenRabin} follows exactly the same reasoning as in \REFapp{appD} while additionally using \REFthm{thm:sGB} to reason about the conjunction over goal sets. 

The only remaining part to be shown concerns the last line of $\ps{\phi}\psi_{\delta p_j}$. For this, we recall from \REFapp{sec:theory:sound} that the induction step from \enquote{$j+1$} to \enquote{$j$} relies on the fact that 
\begin{align}\label{equ:GR:Psi}
  \ps{\phi \ell_j}\Psi_{\delta p_j}:=&\Box Q_{\delta p_j} \wedge\Diamond\left(\textstyle\bigvee_{p_{j+1}\in P\setminus\set{p_1,\hdots,p_{j}}}  \ps{\phi'}\psi'_{\delta' p_{j+1}}\right)
\end{align}
is indeed equivalent to the last line of $\ps{\phi}\psi_{\delta p_j}$, where $\ps{\phi'}\psi'_{\delta' p_{j+1}}$ denotes the last two lines of $\ps{\phi'}\psi_{\delta' p_{j+1}}$ with $\phi':=\phi \ell_j$ and $\delta':=\delta p_j$.

For (non-generalized) Rabin games this equivalence is proved in \REFapp{appD:proofPsi}. It can be seen by inspection within this proof, that using a conjunction over goal sets instead of a single goal set within the second and third line of $\ps{\phi}\psi_{\delta p_j}$ does not change any step in the derivation. Therefore, the same derivation can be used in the generalized case and is therefore omitted. This concludes the proof of \REFthm{thm:GenRabin}.

\subsubsection{Proof of \REFthm{thm:GR1}}\label{app:GRoneproof}

\begin{theorem*}[\REFthm{thm:GR1} restated for convenience]
Let $\Gl =\tup{\game, \El}$ be a game graph with live edges and $(\mathcal{A},\Fc)$ a GR(1) winning condition.
Further, let 
\begin{align*}
Z^*=&\nu Y_k.~\bigcap_{b\in[1;s]}\mu \ps{b}X_k.~\bigcup_{a\in [1;r]} \nu Y_{a}.~
(F_b\cap\cpre(Y_k))\cup \apre(Y_k,\ps{b}X_k)\cup (\overline{A}_a\cap \cpre(Y_a)).\notag
\end{align*}
Then $Z^*$ is equivalent to the winning region $\WlR$ of $\p{0}$ in the fair adversarial game over $\Gl$ for the winning condition $\varphi$ in \eqref{equ:vGRo}. Moreover, the fixpoint algorithm runs in $O(n^2r s)$ symbolic steps, and a finite-memory winning strategy for $\p{0}$ can be extracted from it.
\end{theorem*}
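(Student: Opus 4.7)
The plan is to derive \eqref{equ:GR1:FP} directly from the generalized Rabin fixpoint \eqref{equ:GenRabin_all} of \REFthm{thm:GenRabin}, by exploiting the very rigid structure of the GR(1)-to-generalized-Rabin translation given in \eqref{equ:GR1toRabin}. Correctness of the winning-region characterization, the $O(n^{k+2} k! m_1 \hdots m_k)$ bound (which here specializes to $O(n^2 r s)$ since $k = r+1$, $m_{p_j} = 1$ for $j \le r$, $m_k = s$, and a large number of fixpoint variables collapse), and the extraction of a finite-memory strategy then all come for free from \REFthm{thm:GenRabin} once the syntactic reduction is established.

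First I would instantiate \eqref{equ:GenRabin_all} with the $k = r+1$ pairs produced by \eqref{equ:GR1toRabin}: for each $j \in [1;r]$ the ``assumption pair'' has $R_j = A_j$, $\Gc_j = \{V\}$, and $m_j = 1$, while the ``guarantee pair'' at index $k$ has $R_k = \emptyset$, $\Gc_k = \Fc$, and $m_k = s$. For any assumption index $a \in [1;r]$ occurring in a permutation $p_1\hdots p_k$ of $P$, the corresponding term $\ps{1}\mathcal{C}_{\delta a}$ reads $Q_{\delta a} \cap [(V \cap \cpre(Y_a)) \cup \apre(Y_a, \ps{1}X_a)]$; since $\ps{1}X_a \subseteq Y_a$ always holds, \REFlem{lem:AsubsetC} collapses this to $Q_{\delta a} \cap \cpre(Y_a)$, which is independent of the inner $\mu$-variable $\ps{1}X_a$. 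This is exactly the simplification exploited in the proof of \REFthm{thm:GCB} (\REFprop{prop:Co-Buechi}), so, along every permutation, the $\mu\ps{1}X_a$ loops can be dropped.

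Next I would eliminate redundant permutations using the (partial) chain structure of \eqref{equ:GR1toRabin}: for every $a \in [1;r]$ and every $b \in [1;s]$ we have $R_a = A_a \supseteq \emptyset = R_k$ and $\ps{1}G_a = V \supseteq \ps{b}F = \ps{b}G_k$. Mimicking the inclusion argument of \REFlem{lem:RC_a}, whenever an assumption index appears strictly before $k$ in a permutation, the resulting $\mathcal{C}$-terms are pointwise (in $b$) dominated by those of the permutation obtained by moving $k$ to the front; hence, by \REFlem{lem:contain}(iv), only permutations of the shape $k\,p_2\,p_3\,\hdots\,p_k$ (with $p_2,\hdots,p_k$ a permutation of $[1;r]$) survive. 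Moreover, among the remaining $r!$ orderings of the assumption tail, the assumption $\mathcal{C}$-terms no longer depend on any ``earlier'' assumption $X$-variable (as just observed), so a second application of \REFlem{lem:contain}(iii), exactly as in the proof of \REFthm{thm:GCB}, collapses the nested union-chain to a flat $\bigcup_{a \in [1;r]}\nu Y_a$. Finally, the outermost artificial pair $p_0$ contributes only $\apre(Y_0,X_0)$, which is absorbed by the $\apre(Y_k,\ps{b}X_k)$ terms of the guarantee pair (using \REFlem{lem:contain}(iv) once more), eliminating the $\nu Y_0. \mu X_0$ layer. What remains after these simplifications is exactly \eqref{equ:GR1:FP}, with $\nu Y_k$ coming from the guarantee pair's $\nu Y_{p_1}$, the conjunction $\bigcap_{b\in[1;s]} \mu\ps{b}X_k$ coming from its generalized goal $\Fc$, and the three disjuncts being, respectively, the guarantee progress term $F_b \cap \cpre(Y_k)$, the almost-sure progress term $\apre(Y_k,\ps{b}X_k)$, and the collapsed assumption terms $\overline{A}_a \cap \cpre(Y_a)$.

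The main obstacle I anticipate is the second step, the chain-based elimination of permutations, because the chain simplification of \REFapp{app:RabinChain} was stated for plain Rabin pairs with single goal sets, whereas here the guarantee pair has a generalized goal set while the assumption pairs have the singleton trivial goal. The delicate point is that the inclusion $\ps{b}\mathcal{C}_{\delta' k} \subseteq \ps{1}\mathcal{C}_{\delta a}$ must be established \emph{for every $b \in [1;s]$ simultaneously}, so that the subsequent conjunction $\bigcap_{b \in [1;s]} \mu\ps{b}X_k$ (inherited from \REFthm{thm:GenRabin}) can be placed outside the union over $a$. Once this pointwise inclusion is checked, however, the rest is bookkeeping, and the memoryless assumption tail combined with the finite-memory guarantee strategy (cycling through $b \in [1;s]$ as in \eqref{equ:strategy:gR}) directly yields the claimed finite-memory winning strategy, precisely as in \citet{piterman2006synthesis}.
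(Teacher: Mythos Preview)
Your proposal is correct and follows essentially the same approach as the paper: both derive \eqref{equ:GR1:FP} from the generalized Rabin fixpoint \eqref{equ:GenRabin_all} via the translation \eqref{equ:GR1toRabin}, using the Co-B\"uchi simplification (\REFlem{lem:AsubsetC}, as in the proof of \REFthm{thm:GCB}) for the assumption indices, the chain-based permutation elimination (as in \REFapp{app:RabinChain}) to place the guarantee index $k$ first, and \REFlem{lem:contain} to drop the outer $\nu Y_0.\mu X_0$ layer; the paper merely orders these reductions slightly differently (first collapsing the assumption sub-sequences $\gamma',\gamma''$ on either side of $k$ to single indices, then handling the remaining $q_1 k$ prefix via the chain argument, rather than moving $k$ to the front first and then collapsing the tail). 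One small correction: the $O(n^2 r s)$ bound does not ``come for free'' by specializing the generalized Rabin bound $O(n^{k+2}k!\,m_1\cdots m_k)$ (which, with $k=r+1$, is far larger) but by a direct count of symbolic steps on the simplified three-level fixpoint \eqref{equ:GR1:FP} after all the redundant fixpoint variables have been eliminated.
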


Within this section we proof \REFthm{thm:GR1}. That is, we prove that for GR(1) winning conditions, the fixpoint computing $Z^*$ in \eqref{p:equ:GenRabin_all} simplifies to the one in \eqref{equ:GR1:FP}. This is formalized in the next proposition.

\begin{proposition}\label{prop:GR1proof}
  Let $\FgR$ be a generalized Rabin condition with $k$ pairs s.t.\ \eqref{equ:GR1toRabin} holds for $r:=k-1$. Further let $Z^*$ be the fixed-point of the $\mu$-calculus formula \eqref{p:equ:GenRabin_all} and $\tilde{Z}^*$ be the fixed-point of \eqref{equ:GR1:FP}. Then $Z^*=\tilde{Z}^*$.
 \end{proposition}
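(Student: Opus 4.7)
The plan is to derive the GR(1) fixpoint \eqref{equ:GR1:FP} from the generalized Rabin fixpoint \eqref{p:equ:GenRabin_all} by two successive syntactic simplifications that reuse the machinery already developed for the Rabin chain reduction (\REFthm{thm:RabinC_all}) and the generalized co-B\"uchi reduction (\REFthm{thm:GCB}) behind \REFthm{thm:GenRabin}. The encoding \eqref{equ:GR1toRabin} induces a partial chain: $R_k = \emptyset \subseteq R_j = A_j$ and every $\ps{b}G_k = F_b$ is contained in $\ps{1}G_j = V$, for all $j \in [1;r]$ and $b \in [1;s]$, whereas the pairs in $[1;r]$ are pairwise incomparable. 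This shape is what drives the two simplifications below.

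First I would restrict the outer union over $p_1 \in P$ to the single branch $p_1 = k$. Mirroring \REFlem{lem:RC_a} and \REFlem{lem:RC_b}, whenever an index $j \in [1;r]$ appears immediately before $k$ in a permutation sequence, the corresponding two-level block of nested $\nu\mu$-fixpoints collapses so that $k$ migrates to the left; iterating this exchange and then using \REFlem{lem:RC_d}-style containment (every non-maximal permutation is subsumed by one in which $k$ is first) eliminates every permutation in which $k$ is not the first index.

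Second I would collapse the residual nested fixpoints over $[1;r]$. With $p_1 = k$ fixed, every remaining $p_j$ has $m_{p_j} = 1$ and $\ps{1}G_{p_j} = V$, so \REFlem{lem:AsubsetC} reduces the disjunct $(V \cap \cpre(Y_{p_j})) \cup \apre(Y_{p_j}, X_{p_j})$ inside $\ps{1}\mathcal{C}_{p_j}$ to $\cpre(Y_{p_j})$; this eliminates the dependence on $X_{p_j}$ and renders the $\mu X_{p_j}$ level vacuous. The residual $\mathcal{C}$-terms are then monotonically ordered by inclusion along the permutation (the intersection of $\overline{R}$-sets shrinks as we move inward, and inner $Y$-variables are always contained in outer ones), so repeated application of \REFlem{lem:contain}(iii), exactly as in the proof of \REFthm{thm:GCB}, absorbs the nested $\nu Y_{p_j}$ blocks for $j \geq 3$ into the outermost one and collapses the chain of unions into a single $\bigcup_{a \in [1;r]} \nu Y_a.\,(\overline{A}_a \cap \cpre(Y_a))$. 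Finally, the artificial outer pair contributes only $\apre(Y_0, X_0)$, which is dominated by the $\apre(Y_k, \ps{b}X_k)$ term living inside each $\ps{b}\mathcal{C}_k$ in the sense required by \REFlem{lem:contain}(iv), so the outer $\nu Y_0 \mu X_0$ block can be stripped; renaming the surviving variables produces \eqref{equ:GR1:FP} verbatim.

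The main obstacle is the first step: the chain exchange lemmas in \REFapp{app:RabinChain} are stated for single-goal Rabin pairs, and one has to re-derive them for the generalized pair $(\Gc_k, R_k)$ with $|\Gc_k| = s$, verifying that the $\bigcap_{b \in [1;s]} \mu \ps{b}X_k$ conjunction introduced by $\Gc_k = \mathcal{F}$ does not break the collapse and that the surviving block is exactly the $\ps{b}\mathcal{C}_k$ dictated by \eqref{equ:GenRabin_all_Cpj}. Once this generalized exchange step is in place, the second simplification and the outer-pair stripping are direct adaptations of previously established lemmas.
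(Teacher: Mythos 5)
Your proposal is correct and assembles exactly the same ingredients as the paper's proof---the co-B\"uchi collapse behind \REFthm{thm:GCB}, the chain lemmas of \REFapp{app:RabinChain}, and the \REFlem{lem:contain}-based stripping of the artificial outer pair---but applies the two main reductions in the opposite order: the paper first collapses the co-B\"uchi sub-blocks on either side of the $k$-th pair, so that every permutation reduces to the form $0q_1kq_2$, and only then invokes the (generalized) chain argument to promote $k$ to the front, whereas you promote $k$ first and collapse the $[1;r]$-tail afterwards. Both orders go through. One caveat on your step 1: \REFlem{lem:RC_a} does not make $k$ ``migrate to the left''---applied to a decreasing subsequence $q\,k$ it deletes the $k$-block outright; what actually keeps the $k$-first branches alive is the \REFlem{lem:RC_d}-style domination you invoke in the parenthetical, namely that reducible permutations retain strictly fewer $\mathcal{C}$-terms and are therefore subsumed by the irreducible, $k$-first ones (using that $\overline{R}_k=V$ leaves the $Q$-intersections unaffected, so the surviving terms really do coincide). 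You also correctly isolate the one genuine piece of technical debt, which the paper itself only gestures at: the chain lemmas are stated for single-goal pairs and must be re-checked for $(\Gc_k,R_k)$ with its $\bigcap_{b}\mu\,\ps{b}X_k$ conjunction.
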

 If \REFprop{prop:GR1proof} holds, we immediately see that \REFthm{thm:GR1} directly follows from \REFthm{thm:GenRabin}. It therefore remains to prove that  \REFprop{prop:GR1proof} holds.
 
 \begin{proof}
  First, consider an arbitrary permutation sequence $\delta=p_0\hdots p_k$. Then we know that there exists exactly one $j>0$ s.t.\ $p_j=k$ and all other indices come from the set $[1;r]$. We can therefore define $\gamma'=p_1\hdots p_{j+1}$ and $\gamma''=p_{j+1}\hdots p_k$ s.t.\ $p_i\in[1;r]$ for all $i\neq j$. We note that $\gamma'=\varepsilon$ if $j=1$ and $\gamma''=\varepsilon$ if $j=k$. With this we have $\delta=p_0\gamma' p_j \gamma''$.
  
  By inspecting \eqref{equ:GR1toRabin} we see that the first $r$ pairs of the generalized Rabin condition induced by the GR(1) specification actually form a Generalized Co-Büchi condition (compare \eqref{equ:Rtilde} in \REFsec{sec:SimpleRabinGames}). Hence, given a permutation sequence $\delta=p_0\gamma' p_j \gamma''$ we can use the same reasoning as in the proof of \REFthm{thm:GCB} in \REFapp{app:CoBuechi} to see that
   \begin{align}\label{equ:proof:GR1:chain}
    &\mathcal{C}_{p_1}\supseteq\hdots\supseteq\mathcal{C}_{p_{j-1}}~\text{and}~
    \mathcal{C}_{p_{j+1}}\supseteq\hdots\supseteq\mathcal{C}_{p_k}.
   \end{align}
  
  Now recall from the proof of \REFthm{thm:RabinC_all} in \REFapp{app:RabinChain} that these inclusions allow to recursively apply \REFlem{lem:contain} to delete all $\mathcal{C}$ terms which are included in either $\mathcal{C}_{p_1}$ or $\mathcal{C}_{p_{j+1}}$ along with the fixpoint variables used within these terms (compare \REFlem{lem:RC_a} where now $\gamma'$ and $\gamma''$ are interpreted as decreasing sub-sequences). Applying these simplifications to \eqref{p:equ:GenRabin_all} (in exactly the same manner as these simplifications where applied to \eqref{equ:Rabin_all} in the proof of \REFthm{thm:RabinC_all}) results in a simpler fixpoint algorithm where all permutation sequences have the form $\delta=0q_1kq_2$ with $q_1\neq q_2$ and $q_1,q_2\in[1;r]$ (here $q_1$ and $q_2$ correspond to $p_1$ and $p_{j+1}$ in \eqref{equ:proof:GR1:chain}, and $k$ corresponds to $p_j$). 
  
  Now we can inspect \eqref{equ:GR1toRabin} again to see that $R_{i}\supseteq R_{k}$ and $G_{i}\supseteq \ps{b}G_{p_j}$ for all $i\in[1;r]$ and $b\in[1;s]$. This can be understood as a \enquote{generalized Rabin chain condition} (compare \eqref{equ:RCprop_1} in \REFsec{sec:SimpleRabinGames}). Hence, we can apply \REFlem{lem:RC_a} one more time, now to the \enquote{decreasing sub-sequence} $q_1k$ within every permutation sequence. Again, utilizing this argument iteratively in \eqref{p:equ:GenRabin_all} yields a simpler fixpoint algorithm which only contains permutation sequences $\delta=0ka$ with $a\in[1;r]$. This proves that $Z^*$ is equivalent to the set 
   \begin{align*}
 &\nu Y_{0}.~\mu X_{0}.~\nu Y_k.\bigcap_{b\in [1;s]}\mu \ps{b}X_{0}.\bigcup_{a\in [1;r]} \nu Y_{a}.~ \mu X_{a}.
\quad\mathcal{C}_{p_0}\cup \ps{b}\mathcal{C}_{k}\cup \mathcal{C}_{a}.
\end{align*}
 
 Now inserting the simplifications for terms from the generalized Co-Büchi part (see \eqref{equ:proof:C_gcb} in \REFapp{app:CoBuechi}) and using $R_0=G_0=\emptyset$, we obtain 
    \begin{align*}
 &\nu Y_{0}.~\mu X_{0}.~\nu Y_k.~\bigcap_{b\in [1;s]}\mu \ps{b}X_{0}.~\bigcup_{a\in [1;r]} \nu Y_{a}.~\\
&\quad\apre(Y_{0},X_{0})
\cup (\ps{b}F\cap\cpre(Y_k))\cup \apre(Y_k,\ps{b}X_k)
\cup (\overline{A}_a\cap \cpre(Y_a)).
\end{align*}
Now we can apply \REFlem{lem:contain} (iii) again to remove the first occurrence of the $\apre$ term to obtain the same expression as in \eqref{equ:GR1:FP}. This concludes the proof.
 \end{proof}


\subsection{Additional Proofs for \REFsec{sec:stochastic}}\label{app:stoch}

\subsubsection{Preliminaries}
\smallskip\noindent
\textbf{$\onehalf$-player game:}
A special case of $\twohalf$-player game graphs is a \emph{Markov Decision Process} (MDP) or \emph{$\onehalf$-player game}, which is obtained by assuming that every $\p{0}$ vertex in $V_0 $ has only one outgoing edge.\footnote{Alternatively, we could also define $\onehalf$-player game graphs by restricting the outgoing edges from the $\p{1}$ vertices; our choice is actually tailored for the content of the rest of the section.}
Analogously to the $\twohalf$-player games, for a given $\onehalf$-player game graph $\game$, we use the notation $P_{v^0}^{\rho_1}(\game\models \varphi)$ to denote the probability of occurrence of the event $\game\models \varphi$ when the runs initiate at $v^0$ and when $\p{1}$ uses the strategy $\rho_1$.

\smallskip\noindent
\textbf{Role of end components in $\onehalf$-player game:}
Limiting behaviors in a $\onehalf$-player game can be characterized using the structure of the underlying game graph.
We summarize one key technical argument in the following.

Let $\game = \tuple{V,V_0,V_1,V_r,E}$ be a $\onehalf$-player game graph.
A set of vertices $U\subseteq V$ is called \emph{closed} if (1) for every $v\in U\cap V_r$, $E(v)\subseteq U$, and (2) for every $v\in U\cap (V_0\cup V_1)$, $E(v)\cap U\neq \emptyset$.
A closed set of vertices $U$ induces a \emph{subgame graph} $(V',V_0',V_1',V_r',E')$, denoted by $\game\downarrow U$, which is itself a $\onehalf$-player game graph and is defined as follows:
\begin{itemize}
	\item $V'=U$,
	\item $V_0'=U\cap V_0$,
	\item $V_1'=U\cap V_1$,
	\item $V_r'=U\cap V_r$, and
	\item $E' = E\cap (U\times U)$.
\end{itemize}
A set of vertices $U\subset V$ of a $\onehalf$-player game graph $\game$ is an \emph{end component} if (a) $U$ is closed, and (b) the subgame graph $\game\downarrow U$ is strongly connected.

Denote the set of all end components of $\game$ by $\mathcal E\subset 2^V$. 
The next lemma states that under every strategy $\rho_1$ (being memoryless or not) of $\p{1}$ in the $\onehalf$-player game, the set of states visited infinitely often along a play is an end component with probability one. 

\begin{lemma}\cite[Thmeorem~3.2]{de1997formal}
For every $\onehalf$-player game graph, for every vertex $v\in V$, and every $\p{1}$ strategy $\rho_1$, 
\begin{equation}
	P_v^{\rho_1}\left( \game\models \bigvee_{U\in \mathcal{E}} \left(\lozenge\square U \wedge \bigwedge_{u\in U} \square\lozenge u\right)\right) = 1.
\end{equation}
\end{lemma}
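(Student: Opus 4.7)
The plan is to fix an arbitrary \p{1} strategy $\rho_1$ and starting vertex $v$, and show that under $P_v^{\rho_1}$, the set $\mathrm{Inf}(\pi)\subseteq V$ of vertices visited infinitely often along a random play $\pi$ almost surely satisfies both conditions defining an end component: (a) the induced subgame $\game\downarrow \mathrm{Inf}(\pi)$ is strongly connected, and (b) $\mathrm{Inf}(\pi)$ is closed, i.e., $E(u)\subseteq \mathrm{Inf}(\pi)$ for every $u\in \mathrm{Inf}(\pi)\cap V_r$ and $E(u)\cap \mathrm{Inf}(\pi)\neq\emptyset$ for every $u\in \mathrm{Inf}(\pi)\cap(V_0\cup V_1)$. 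Since the event $\bigvee_{U\in\mathcal E}(\lozenge\Box U \wedge \bigwedge_{u\in U}\Box\lozenge u)$ is exactly the event ``$\mathrm{Inf}(\pi)\in\mathcal E$,'' establishing (a) and (b) almost surely suffices.

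Both the strong-connectivity part (a) and the non-random case of closedness in (b) are \emph{deterministic}, holding on every infinite play regardless of $\rho_1$ or the random choices. For (a), given any $u,w\in \mathrm{Inf}(\pi)$ there is some finite time $N$ after which $\pi$ visits only vertices in $\mathrm{Inf}(\pi)$; choosing $t>N$ with $\pi(t)=u$ and the next $s>t$ with $\pi(s)=w$, the segment $\pi(t),\pi(t{+}1),\ldots,\pi(s)$ witnesses a path from $u$ to $w$ entirely within $\game\downarrow\mathrm{Inf}(\pi)$. For the $V_0\cup V_1$ case of (b), infinitely many visits to $u$ must use some successor in $E(u)$ infinitely often because $E(u)$ is finite, and any such successor lies in $\mathrm{Inf}(\pi)$.

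The only genuinely probabilistic step is the random-vertex case of closedness: I must show that almost surely, whenever $v\in V_r$ is visited infinitely often, \emph{every} $w\in E(v)$ also lies in $\mathrm{Inf}(\pi)$. Fix such a pair $(v,w)$. Each visit to $v$ selects $w$ as successor with conditional probability $1/|E(v)|$, independently of the history given the current vertex, because the outgoing distribution at random vertices is uniform and does not depend on $\rho_1$. On the event $\{v\in \mathrm{Inf}(\pi)\}$, the sequence of ``$k$-th visit to $v$ is followed by $w$'' is an infinite sequence of events, each with conditional probability bounded below by the positive constant $1/|E(v)|$ with respect to the natural filtration. The conditional (L\'evy) form of the second Borel–Cantelli lemma then implies that infinitely many of these events occur almost surely on $\{v\in\mathrm{Inf}(\pi)\}$, so $w\in\mathrm{Inf}(\pi)$ almost surely. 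A union bound over the finitely many pairs $(v,w)\in V_r\times V$ with $w\in E(v)$ concludes (b) at random vertices.

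The main obstacle is purely in the measure-theoretic bookkeeping for the last step: one must justify applying the conditional Borel–Cantelli lemma to events that depend on a history-dependent strategy $\rho_1$. This is handled by viewing the $k$-th visit to $v$ as a stopping time under the cylinder-set construction of $P_v^{\rho_1}$ and observing that the random transition taken at that visit is, by definition of the $\twohalf$-player semantics restricted to $\onehalf$-player graphs, conditionally uniform on $E(v)$ and independent of the past given the current state. Once this independence is in place, the Borel–Cantelli argument is standard and the proof is complete.
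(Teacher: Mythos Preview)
Your proof is correct and follows the standard argument for this result. Note, however, that the paper does not actually prove this lemma: it is quoted from the literature (de~Alfaro's thesis, Theorem~3.2) and used as a black box. So there is no ``paper's own proof'' to compare against; your write-up supplies exactly the classical proof that the cited reference contains, with the deterministic strong-connectivity and finite-successor arguments followed by the conditional Borel--Cantelli step at random vertices.
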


This lemma implies the following corollary, which is motivated by similar claim for Rabin winning conditions in the literature \cite{DBLP:conf/icalp/ChatterjeeAH05}.
 
\begin{corollary}
\label{cor:good end component}
For a given $\onehalf$-player game, for a given vertex $v\in V$, and for a given $\p{1}$ strategy $\rho_1$, a generalized Rabin condition $\FgR = \set{\tuple{\Gc_1,R_1},\hdots,\tuple{\Gc_k,R_k}}$ is satisfied almost surely if and only if 
for every end component $U$ reachable from $v^0$, there is a $j\in\set{1,2,\ldots,k}$ such that $U\cap R_j = \emptyset$ and for every $l\in [1;m_j]$, $U\cap \ps{l}G_j\ne \emptyset$.
\end{corollary}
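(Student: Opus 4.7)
The plan is to derive the corollary directly from the end-component lemma stated just above it, by analyzing the almost-sure behaviour of plays in the Markov chain induced by $\rho_1$ and matching it against the structural requirement of the generalized Rabin condition $\varphi$ in \eqref{equ:vgR}. The end-component lemma tells us that under any $\p{1}$ strategy, almost every play has a limit set that is exactly some end component $U$ of $\game$, with \emph{every} vertex of $U$ being visited infinitely often. Since the sample space of plays decomposes (up to a null set) into the disjoint events ``$U$ is the limit end component,'' indexed by end components $U$ reachable with positive probability from $v^0$ under $\rho_1$, it suffices to characterize when $\varphi$ holds almost surely on each such event.

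For the ``$\Leftarrow$'' direction, I would fix an end component $U$ reachable under $\rho_1$ and assume the indicated witness index $j$ exists, i.e.\ $U\cap R_j=\emptyset$ and $U\cap {}^{l}G_j\neq\emptyset$ for all $l\in[1;m_j]$. On the event that the limit set is $U$: (i) the play eventually stays in $U$, hence eventually in $\overline{R_j}$, so $\Diamond\Box\overline{R_j}$ holds; and (ii) because the end-component lemma gives that every $u\in U$ is visited infinitely often, in particular each ${}^{l}G_j$ is visited infinitely often, so $\Box\Diamond{}^{l}G_j$ holds for every $l$. Thus the $j$-th generalized Rabin disjunct in \eqref{equ:vgR} is satisfied on this event, and summing over all reachable $U$ gives $P_{v^0}^{\rho_1}(\game\models\varphi)=1$.

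For the ``$\Rightarrow$'' direction I would argue by contrapositive. Assume some end component $U$ reachable from $v^0$ (under $\rho_1$) fails the condition, so for every $j\in[1;k]$ either $U\cap R_j\neq\emptyset$ or there exists an index $l_j$ with $U\cap{}^{l_j}G_j=\emptyset$. Reachability under $\rho_1$ combined with the end-component lemma applied to the Markov chain induced by $\rho_1$ shows that the event ``the limit set equals $U$'' has strictly positive probability: once a sample path enters $U$ it has positive probability of having its limit set be exactly $U$, because end components are closed under random transitions and $\rho_1$'s choices restricted to $U$ keep plays inside $U$ with positive probability (this is precisely the content of the end-component lemma restricted to the sub-chain over $U$). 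On this positive-probability event, for every $j$ the $j$-th Rabin pair fails: if $U\cap R_j\neq\emptyset$ then $R_j$ is visited infinitely often and $\Diamond\Box\overline{R_j}$ is violated; otherwise $U\cap{}^{l_j}G_j=\emptyset$ and the play eventually never visits ${}^{l_j}G_j$, violating $\Box\Diamond{}^{l_j}G_j$. Hence $\varphi$ fails with positive probability.

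The main obstacle is the careful probabilistic bookkeeping in the ``$\Rightarrow$'' direction: pinning down the precise meaning of ``reachable from $v^0$'' under the fixed strategy $\rho_1$ and showing that each such $U$ indeed arises as the limit set with positive probability, rather than merely being visited once. This follows from the end-component lemma applied at the level of the (possibly infinite-memory) Markov chain induced by $\rho_1$, together with the observation that the conditional probability of eventually being absorbed into a particular reachable end component $U$ is positive; the rest is then a clean symbolic translation between the recurrence structure of $U$ and the acceptance condition \eqref{equ:vgR}.
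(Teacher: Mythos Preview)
Your approach matches the paper's: the paper does not prove the corollary at all but simply states it as a consequence of the preceding end-component lemma and a citation to \cite{DBLP:conf/icalp/ChatterjeeAH05}. Your $\Leftarrow$ argument is exactly the intended one.

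There is, however, a genuine gap in your $\Rightarrow$ argument. You write that ``once a sample path enters $U$ it has positive probability of having its limit set be exactly $U$, because \ldots\ $\rho_1$'s choices restricted to $U$ keep plays inside $U$ with positive probability.'' This is false for a \emph{fixed} $\rho_1$: closedness of an end component only guarantees that $\p{1}$ \emph{can} stay in $U$ (there is at least one successor in $U$), not that the particular fixed $\rho_1$ does so. If $\rho_1$ deterministically leaves $U$ at some $\p{1}$ vertex, then $U$ is never the limit set under $\rho_1$, even if some vertex of $U$ is reached. The end-component lemma only says the limit set is \emph{some} end component almost surely; it does not let you conclude that a specific reachable $U$ occurs as the limit set with positive probability.

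You correctly flag this as ``the main obstacle,'' but your proposed resolution does not close it. The correct reading of the corollary (and the one actually used in the proof of \REFthm{thm:Reduction}) is that ``reachable from $v^0$'' means ``occurs as the limit set $\Inf(\pi)$ with positive probability under $\rho_1$''; with that reading, the $\Rightarrow$ direction is immediate from your own per-$U$ analysis, since on the positive-probability event $\{\Inf(\pi)=U\}$ every vertex of the bad $U$ recurs and $\varphi$ fails. Alternatively, the corollary could be rephrased to quantify over all $\rho_1$ rather than fixing one, in which case your contrapositive works by \emph{constructing} a $\rho_1$ that traps the play inside the bad $U$.
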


\subsubsection{Proof of \REFthm{thm:Reduction}}

\begin{theorem*}[\REFthm{thm:Reduction} restated for convenience]
 Let $\game$ be a $\twohalf$-player game graph, $\FgR$ be a generalized Rabin condition, $\varphi \subseteq V^\omega$ be the corresponding LTL specification (Eq.~\eqref{equ:vgR}) over the set of vertices $V$ of $\game$, and $\Dr(\game)$ be the reduced two-player game graph.
 Let $\WlR\subseteq \widetilde{V}$ be the set of all the vertices from where $\p{0}$ wins the fair adversarial game over $\Dr(\game)$ for the winning condition $\varphi$, and $\mathcal{W}^{\mathit{a.s.}}$ be the almost sure winning set of $\p{0}$ in the game graph $\game$ for the specification $\varphi$.
 Then, $\WlR = \mathcal{W}^{\mathit{a.s.}}$.
 Moreover, a winning strategy in $\Dr(\game)$ is also a winning strategy in $\game$, and vice versa.
\end{theorem*}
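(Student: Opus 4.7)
}
The plan is to prove the two inclusions $\WlR \subseteq \mathcal{W}^{\mathit{a.s.}}$ and $\mathcal{W}^{\mathit{a.s.}} \subseteq \WlR$ in parallel by fixing an arbitrary $\p{0}$ strategy $\rho_0$ and analyzing what happens in the two residual one-player games (an MDP on the stochastic side, a fair one-player game on the deterministic side) that arise. The bridge between the two residual games will be end components: once $\rho_0$ is fixed, the infinitely-visited set of any play in either game is, or is contained in, an end component of the resulting $\onehalf$-player graph.

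For the inclusion $\WlR \subseteq \mathcal{W}^{\mathit{a.s.}}$, I take a winning strategy $\rho_0$ for $\p{0}$ in the fair adversarial game over $\Dr(\game)$ and use the same $\rho_0$ in $\game$. Fix any $\p{1}$ strategy $\rho_1$ in $\game$. Let $\game_{\rho_0}$ denote the MDP obtained by fixing $\rho_0$. By \cite[Thmeorem~3.2]{de1997formal} (cited just above in the excerpt) the set of states visited infinitely often along a trajectory of $\game_{\rho_0}$ under $\rho_1$ is, with probability one, an end component $U$. Because every random vertex $v \in V_r \cap U$ is closed in $U$ under $E$, and because random vertices become $\p{1}$ vertices in $\Dr(\game)$ with outgoing edges declared live, any such trajectory — viewed as a play in $\Dr(\game)$ — is strongly transition fair (source visited i.o.\ implies every live edge from it is traversed i.o., because its whole outgoing set lies in the recurrent $U$). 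Hence almost every trajectory is a fair adversarial play compliant with $\rho_0$ and some $\p{1}$ strategy in $\Dr(\game)$, and therefore satisfies $\varphi$. This gives $P_v^{\rho_0,\rho_1}(\game \models \varphi) = 1$.

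For the converse $\mathcal{W}^{\mathit{a.s.}} \subseteq \WlR$, I take a $\p{0}$ strategy $\rho_0$ that wins almost surely in $\game$ from $v$ and again use the same $\rho_0$ in $\Dr(\game)$. Let $\pi$ be a fair adversarial play from $v$ that is compliant with $\rho_0$ and some adversarial strategy~$\rho_1'$, and let $U := \mathrm{Inf}(\pi)$. I claim $U$ is an end component of $\game_{\rho_0}$. Closedness of $U$ under random transitions follows exactly from strong transition fairness: for $v \in U \cap V_r$, $v$ is visited infinitely often in $\pi$, so every $(v,v') \in E_r = \El$ is taken infinitely often, forcing $v' \in U$; closedness under the fixed $\rho_0$ at $V_0$-vertices and under $\rho_1'$ at $V_1$-vertices is immediate; strong connectivity of $\game_{\rho_0}\!\downarrow\! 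U$ follows because $\pi$ eventually cycles through all of $U$. By Corollary~\ref{cor:good end component} applied to $\game_{\rho_0}$ (which, since $\rho_0$ is almost-sure winning, has only ``good'' reachable end components), there is $j \in [1;k]$ with $U \cap R_j = \emptyset$ and $U \cap \ps{l}G_j \neq \emptyset$ for every $l \in [1;m_j]$. Hence the suffix of $\pi$ eventually stays in $\overline{R}_j$ and visits every $\ps{l}G_j$ infinitely often, so $\pi \models \varphi$ by \eqref{equ:vgR}.

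The strategy-transfer claim (a winning strategy on one side is a winning strategy on the other) then follows because in both directions I literally reused the same $\rho_0$. The main obstacle is conceptual rather than computational: carefully formalizing the end-component correspondence, in particular justifying that the infinitely-visited set of a \emph{single} fair adversarial play is an end component (not merely contained in one), which is where the hypothesis that \emph{all} live edges out of every infinitely recurrent $V_r$-vertex are taken infinitely often is used in the most essential way. Once this correspondence is nailed down, Corollary~\ref{cor:good end component} does all the remaining combinatorial work, and the complexity/strategy claims stated after the theorem follow immediately from \REFthm{thm:GenRabin}.
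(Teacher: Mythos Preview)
Your argument for $\mathcal{W}^{\mathit{a.s.}} \subseteq \WlR$ is essentially the paper's: both fix an almost-sure winning $\rho_0$, show that $\mathrm{Inf}(\pi)$ for any fair adversarial play $\pi$ is an end component of the resulting $1\sfrac{1}{2}$-player game reachable with positive probability, and then invoke Corollary~\ref{cor:good end component}.

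For $\WlR \subseteq \mathcal{W}^{\mathit{a.s.}}$, however, your route diverges from the paper's and contains a gap. The paper does \emph{not} use end components in this direction; it argues directly that $P_{v^0}^{\rho_0^*,\rho_1}(\game\models\lnot\varphi^\ell)=0$ by a Borel--Cantelli-type computation (if a random vertex $v$ is visited infinitely often, each outgoing edge is taken infinitely often with probability~$1$ because it is selected with fixed positive probability at every visit), and then combines this with $P(\varphi^\ell\to\varphi)=1$ to get $P(\varphi)=1$. Your end-component argument instead tries to deduce strong transition fairness from the closedness clause ``its whole outgoing set lies in the recurrent $U$''. That clause only tells you that every successor $v'$ of $v\in V_r\cap U$ is \emph{visited} infinitely often; it does \emph{not} imply that the \emph{edge} $(v,v')$ is \emph{traversed} infinitely often, since $v'$ may be reached along other paths. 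To close this step you still need the probabilistic argument about random choice at $v$, which is exactly what the paper uses---and once you insert it, the end-component detour in this direction becomes superfluous.
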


	We define the fairness constraint on the random edges of $\game$ as per Eq.~\eqref{equ:vl}:
	\[
	\varphi^\ell := \wedge_{(v,v')\in E_r} \square\lozenge v \rightarrow \square\lozenge (v\wedge \bigcirc v').
	\]
	We first show that $\WlR \subseteq \mathcal{W}^{\mathit{a.s.}}$.
	Consider an arbitrary initial vertex $v^0\in \WlR$ and an arbitrary strategy $\rho_1$ of $\p{1}$ in $\game$.
	Let $\rho_0^*$ be a corresponding winning strategy for $\p{0}$ from $v^0$ for the fair adversarial game over $\Dr(\game)$ for the winning condition $\varphi$.
	By definition, $\rho_0^*$ realizes the specification $\varphi$, whenever the adversary satisfies the strong fairness condition on the live edges in $\Dr(\game)$.
	On the other hand, the live edges in $\Dr(\game)$ are exactly the random edges in $\game$.
	In other words, we already know that if we apply the \emph{same} strategy $\rho_0^*$ to $\game$, then $\inf_{\rho_1\in R_1} P_{v^0}^{\rho_0^*,\rho_1}(\game\models \varphi^\ell \rightarrow \varphi) = 1$.
	
	We first show that the random edges $E_r$ also satisfy the strong fairness condition $\varphi^\ell$ \emph{almost surely}; actually we show that the probability of violation of $\varphi^\ell$ in $\game$ is $0$.
	Consider the following:
	\begin{align*}
	 P_{v^0}^{\rho_0^*,\rho_1}\left(\game\models \lnot\varphi^\ell \right)
		 =& \ P_{v^0}^{\rho_0^*,\rho_1}\left(\game\models \lnot \bigwedge_{(v,v')\in E_r} \square\lozenge v \rightarrow \square\lozenge (v\wedge \bigcirc v') \right)\\
		=& \ P_{v^0}^{\rho_0^*,\rho_1}\left(\game\models \bigvee_{(v,v')\in E_r} \square\lozenge v \wedge \lozenge\square \lnot (v\wedge \bigcirc v') \right)\\
		\leq & \  \sum_{(v,v')\in E_r} P_{v^0}^{\rho_0^*,\rho_1}\left(\game\models \square\lozenge v \wedge \lozenge\square \lnot (v\wedge \bigcirc v') \right).
	\end{align*}
	We show that the right-hand side of the last inequality equals to $0$ by proving that for every $(v,v')\in E_r$,
	\[
	P_{v^0}^{\rho_0^*,\rho_1}\left(\game\models \square\lozenge v \wedge \lozenge\square \lnot(v\wedge \bigcirc v') \right) = 0.
	\]
	Consider any arbitrary $(v,v')\in E_r$ and assume that the probability of taking the edge $(v,v')$ from $v$ is $p_1$.
	Let $\pi$ be a play on $\game$ and $(i_0,i_1,i_2,\ldots)$  be the infinite sequence of time indices when the vertex $v$ is visited.
	For every $i_k$, the probability of \emph{not} visiting $v'$ for the next $l$ time steps $(i_{k+1}+1,\ldots,i_{k+l}+1)$ is given by $(1-p)^l$, which converges to $0$ as $l$ approaches~$\infty$.
	This proves that for every $i_k$, eventually there will be a $v'$ at $(i_k+1)$ with probability~$1$; in other words $v'$ will be visited infinitely often with probability $1$.
	Hence, it follows that $ \sum_{(v,v')\in E_r} P_{v^0}^{\rho_0^*,\rho_1}\left(\game\models \square\lozenge v \wedge \lozenge\square \lnot (v\wedge \bigcirc v') \right) = 0$, which in turn establishes that $P_{v^0}^{\rho_0^*,\rho_1}\left(\game\models \lnot\varphi^\ell \right) = 0$.
	
	Now consider the following derivation:
	\begin{align*}
		P_{v^0}^{\rho_0^*,\rho_1}(\game\models \varphi^\ell \rightarrow \varphi)
		= & \ P_{v^0}^{\rho_0^*,\rho_1}(\game\models \lnot\varphi^\ell \vee \varphi)
		\leq P_{v^0}^{\rho_0^*,\rho_1}(\game\models \lnot\varphi^\ell ) + P_{v^0}^{\rho_0^*,\rho_1}(\game\models  \varphi)\\
		= & \ 0 + P_{v^0}^{\rho_0^*,\rho_1}(\game\models  \varphi)
		= P_{v^0}^{\rho_0^*,\rho_1}(\game\models  \varphi).
	\end{align*}
	Since we know that $P_{v^0}^{\rho_0^*,\rho_1}(\game\models \varphi^\ell \rightarrow \varphi) = 1$, hence it follows that $P_{v^0}^{\rho_0^*,\rho_1}(\game\models  \varphi) = 1$.
	
	Next, we show that $\WlR \supseteq \mathcal{W}^{\mathit{a.s.}}$.
	Consider an arbitrary initial vertex $v^0\in \mathcal{W}^{\mathit{a.s.}}$. Let $\rho_0^*$ be a corresponding almost sure winning strategy for $\p{0}$ from $v^0$ in the $\twohalf$-player game $\game$ with the specification $\varphi$. 
	We show that $\p{0}$ wins the fair adversarial game over  $\Dr(\game)$ for the winning condition $\varphi$ from vertex $v^0$ using the strategy $\rho_0^*$.
	
	Let $\rho_1\in R_1$ be any arbitrary $\p{1}$ strategy in the game $\Dr(\game)$ such that the unique resultant play $\pi = (v^0,v^1,\ldots)$ due to $\rho_0^*$ and $\rho_1$ satisfies the fairness assumption.
	We use the notation $\Inf(\pi)$ to denote the set of infinitely occurring vertices along the play $\pi$, i.e., $\Inf(\pi)\coloneqq \set{w\in V \mid \forall m\in \mathbb{N}_0\;.\;\exists n>m\;.\;v^n=w} $.
	First we show that (i) the set of vertices $\Inf(\pi)$ forms an end component in $\game$, and moreover (ii) there exists a $\p{1}$ strategy $\rho_1'$ in the game~$\game$ such that $P_{v^0}^{\rho_0^*,\rho_1'}(\game \models \Inf(\pi)) > 0$.
	Claim~(i) follows by observing the following:
	\begin{itemize}
		\item For all $v\in \Inf(\pi)\cap V_r$, $V_r(v)\subseteq \Inf(\pi)$, as otherwise in $\Dr(\game)$ there would be a vertex in $\El(v)$ and outside $\Inf(\pi)$ which would be visited infinitely many times due to infinitely many visits to $v$.
		\item For every $v\in \Inf(\pi)\cap (V_0\cup V_1)$, $E(v) \neq \emptyset$, as otherwise in $\Dr(\game)$ the play $\pi$ would reach a dead-end.
		\item The subgame graph $\game\downarrow \Inf(\pi)$ is strongly connected, as otherwise in $\Dr(\game)$ there would be two vertices $u,v\in \Inf(\pi)$ so that $v$ would not be reachable from $u$, contradicting the assumption that both $u$ and $v$ are visited infinitely often by $\pi$.
	\end{itemize}
	Claim~(ii) follows by defining a strategy $\rho_1'\equiv \rho_1$ on $\game$.
	Now observe that for every edge $(v,v')$ chosen by $\p{1}$ from a vertex $v\in \dom(\El)$ in $\Dr(\game)$, there exists a corresponding positive probability edge $(v,v')$ in $\game$.
	Since $\Inf(\pi)$ is entered by $\pi$ after finite time steps, hence the Claim~(ii) follows.
	
	Now, from Cor.~\ref{cor:good end component} it follows that there is a $j\in\set{1,2,\ldots,k}$ such that $\Inf(\pi)\cap R_j = \emptyset$ and for every $l\in \set{1,\ldots,m_j}$, $\Inf(\pi)\cap \ps{l}G_j\ne \emptyset$.
	Thus the play $\pi$ satisfies the generalized Rabin condition $\FgR$.
	Since this holds for any arbitrary $\p{1}$ strategy, hence $\WlR \supseteq \mathcal{W}^{\mathit{a.s.}}$ and $\rho^*$ is the corresponding winning strategy for $\p{0}$.


\section{The Accelerated Fixpoint Algorithm}
\label{app:acceleration}

Consider the fixpoint algorithm in \eqref{equ:Rabin_all}. In the correctness proof of \REFthm{thm:FPsoundcomplete} discussed in \REFapp{appD}, we have been remembering so called configuration prefixes $\delta=p_0i_0\hdots p_{j-1}i_{j-1}$ for some $j\leq k$ for every fixpoint variable $X$ (see Eq.~\eqref{equ:RabinConfDomain}). We denoted by $X_{\delta p_j}^{i_j}$ the set of states computed in the $i_j$'th iteration of the fixpoint computation over $X_{p_j}$ after the fixpoint over $Y_{p_j}$ has already terminated within the $i_{j-1}$th iteration over $X_{p_{j-1}}$ after the fixed-point over $Y_{p_{j-1}}$ has terminated in the $i_{j-2}$th iteration over $X_{p_{j-2}}$ and so forth. 

In order to describe the accelerated implementation of \eqref{equ:Rabin_all}, we do not assume that the fixpoints over $Y$-variables have already terminated, but additionally remember their counters~$m$. This leads to configuration prefixes $\delta=p_0m_0i_0\hdots p_{j-1}m_{j-1}i_{j-1}$ and lets us define that $X_{\delta p_j}^{m_ji_j}$ is the set of states computed in the $i_j$th iteration of the fixpoint computation over $X_{p_j}$ during the $m_j$th iteration over $Y_{p_j}$, computing the set $Y_{\delta p_j}^{m_j}$ and so forth.

Given two configuration prefixes $\delta=p_0m_0i_0\hdots p_{j-1}m_{j-1}i_{j-1}$ and \\
$\delta'=p'_0m'_0i'_0\hdots p'_{j-1}m'_{j-1}i'_{j-1}$ we define $\delta<_m\delta'$ if $p_0\hdots p_{j-1}=p'_0\hdots p'_{j-1}$ and $m_0\hdots m_{j-1}<m'_0\hdots m'_{j-1}$ (using the induced lexicographic order) and $i_0\hdots i_{j-1}=i'_0\hdots i'_{j-1}$. We define $\delta<_i\delta'$ similarly.

Now \citet{PitermanPnueli_RabinStreett_2006} showed, based on a result of \citet{long1994improved}, that for every configuration prefix $\delta=p_0m_0i_0\hdots p_{j-1}m_{j-1}i_{j-1}$ the computation of $Y_{\delta p_j}^{0}$ can start from the \emph{minimal} set $Y_{\delta' p_j}^{m_j}$ (instead of the entire set of vertices $V$) such that
$\delta'p_jm_j<_m\delta p_j 0$. Dually, for every configuration prefix $\delta=p_0m_0i_0\hdots p_{j-1}m_{j-1}i_{j-1}$ the computation of $X_{\delta p_j}^{m_j0}$ can start from the \emph{maximal} set $X_{\delta' p_j}^{m_ji_j}$ (instead of the empty set) such that
$\delta'p_jm_ji_j<_i\delta p_jm_j0$.

Further, we see that for the innermost fixpoint, i.e.\ when $j=k$, it follows that for every computation prefix $\delta$ , there can be at most $n$ iterations over both $Y_{p_k}$ and $X_{p_k}$, where $n$ is the total number of vertices. I.e., $n$ different sets $Y_{\delta p_k}^{m_k}$ and $X_{\delta p_k}^{m_k i_k}$ have to be freshly computed for each $\delta p_k$ and $\delta p_k m_k$ respectively. We see that there are $\mathcal{O}(n^{k+1}k!)$ different such permutation sequences. As the computation of the innermost fixpoint dominates the computation time, it is shown by \citet{long1994improved} that this results in an overall worst-case computation time of $\mathcal{O}(n^{(k+1)+1}k!)=\mathcal{O}(n^{k+2}k!)$ (where $n$ is the total number of vertices and $k$ is the number of Rabin pairs).

Unfortunately, the memory requirement of this acceleration algorithm is enormous. To see this, observe that in order to warm-start the computation of $Y_{\delta p_j}^{0}$ with $\delta=p_0m_0i_0\hdots p_{j-1}m_{j-1}i_{j-1}$ we need to store the current minimal set w.r.t.\ the $m$-prefix for every combination of $p$- and $i$-prefixes that can occur in $\delta$, which are $\mathcal{O}(n^{k+1}k!)$ many. Similarly, to warm-start the computation of $X_{\delta p_j}^{m_ji_j}$ we need to store the current minimal set w.r.t. the $i$-prefix for every combination of $p$- and $m$-prefixes that can occur in $\delta$. This means that the memory required by the algorithm is $\mathcal{O}(n^{k+1}k!)$, which is prohibitively large for large values of $n$ and $k$.

We implemented a \emph{space-bounded} version of the acceleration algorithm, where for any given parameter $M$ (chosen by the user), we stored only up to $M$ values for each counter. 
Whenever the values of all the counters are less than $M$, we use the regular acceleration algorithm as outlined above.
Otherwise, if any of the counters exceeds $M$, then we fall back to the regular initialization procedure of fixpoint algorithms, i.e.\ depending on whether it is an~$Y$ or an $X$ variable, initialize it with $V$ or $\emptyset$ respectively.
As a result, the memory requirement of our accelerated fixpoint algorithm is given by $\mathcal{O}(M^{k+1}k!)$.
This space-bounded acceleration algorithm made our implementation much faster and yet practically feasible, as has been demonstrated in \REFsec{sec:experiments}.


\section{Supplementary Results for the Experiments}
\label{app:experiments}

\renewcommand{\arraystretch}{0.8}

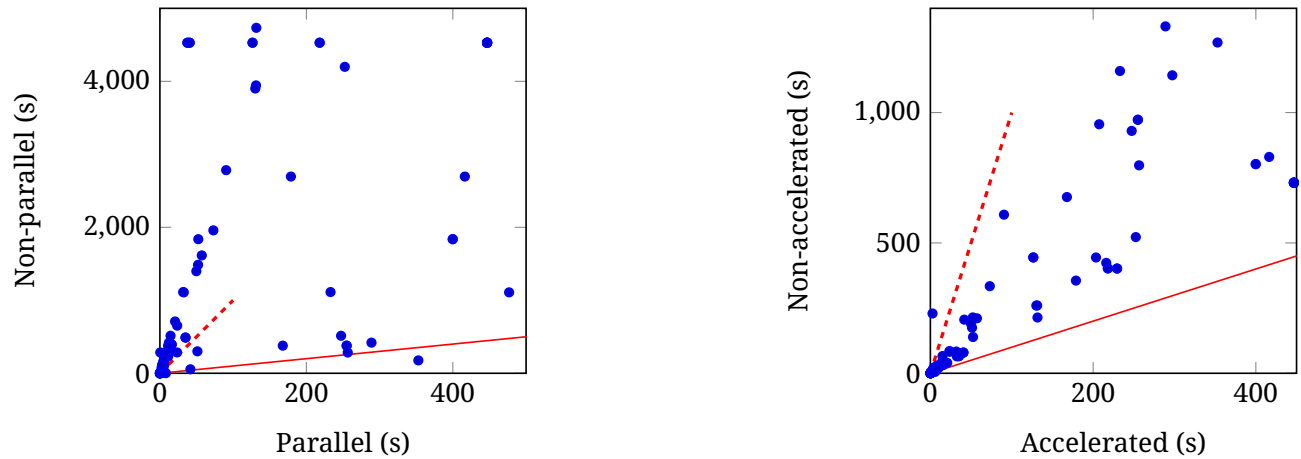
\begin{figure}[H]
	\begin{tikzpicture}[scale=0.6]
\begin{axis}[xmin=0,ymin=0,xmax=500,ymax=5000,xlabel={Parallel\ (\si{\second})}, ylabel={Non-parallel\ (\si{\second})},
					width=5\ThCScm,height=5\ThCScm,mark size=1.25\ThCSpt]


\draw[color=red]		(0,0)	--	(900,900);
\draw[color=red, dashed,  thick] 	(0,0)	--	(100,1000);

\addplot+[only marks]	table[x=acc_+_par,y=acc_+_npar] {DATA/par_vs_npar.txt};

\end{axis}

\end{tikzpicture}\qquad\qquad
	\begin{tikzpicture}[scale=0.6]
\begin{axis}[xmin=0,ymin=0,xmax=450,ymax=1400,xlabel={Accelerated\ (\si{\second})}, ylabel={Non-accelerated\ (\si{\second})},
					width=5\ThCScm,height=5\ThCScm,mark size=1.25\ThCSpt]


\draw[color=red]		(0,0)	--	(900,900);
\draw[color=red, dashed,  thick] 	(0,0)	--	(100,1000);

\addplot+[only marks]	table[x=acc_+_par,y=nacc_+_par] {DATA/accl_vs_naccl.txt};

\end{axis}

\end{tikzpicture}
	\caption{
		Zoomed-in version of \REFfig{fig:BCG scatter plot}.
		(Left) Comparison between the computation times for the non-parallel (1 worker thread) and parallel (48 worker threads) version of \fairsyn, with acceleration being enabled in both cases.
		(Right) Comparison between the computation times for the non-accelerated and the accelerated version of \fairsyn, with parallelization being enabled in both cases.
		(Both) The points on the solid red line represent the same computation time. The points on the dashed red line represent an order of magnitude improvement.
	}
	\label{fig:BCG scatter plot zoomed-in}
\end{figure}

\begin{cfigure}[H]
	\begin{tikzpicture}[scale=0.6]
	\begin{axis}[
		xmin=0,ymin=0,xmax=15,ymax=500,
		xtick={0,5,10,15},
		ymode=log,
		xlabel={$M$},
	    ylabel={Comp.\ time (\si{\second})},
	  	width=0.45\textwidth,height=3\ThCScm,
	    mark options={scale=0.25}
	    ]
	 	 \addplot table [x=M,y=c1] {DATA/vary_mem_rp_1.txt};
		\addplot table [x=M,y=c2] {DATA/vary_mem_rp_1.txt};
		\addplot table [x=M,y=c3] {DATA/vary_mem_rp_1.txt};
		\addplot table [x=M,y=c4] {DATA/vary_mem_rp_1.txt};
		\addplot table [x=M,y=c5] {DATA/vary_mem_rp_1.txt};
		\addplot table [x=M,y=c6] {DATA/vary_mem_rp_1.txt};
		\addplot table [x=M,y=c7] {DATA/vary_mem_rp_1.txt};
		\addplot table [x=M,y=c8] {DATA/vary_mem_rp_1.txt};
		\addplot table [x=M,y=c9] {DATA/vary_mem_rp_1.txt};
		\addplot table [x=M,y=c10] {DATA/vary_mem_rp_1.txt};
	\end{axis}
	
	\end{tikzpicture}
	\begin{tikzpicture}[scale=0.6]
	
	\begin{axis}[
		xmin=2,ymin=0,xmax=15,ymax=500,
		xtick={2,5,10,15},
		ymode=log,
		xlabel={$M$},
	    ylabel={Init.\ time (\si{\second})},
	    width=0.45\textwidth,height=3\ThCScm,
	    mark options={scale=0.25}
	    ]
	 	 \addplot table [x=M,y=c1] {DATA/vary_mem_init_rp_1.txt};
		\addplot table [x=M,y=c2] {DATA/vary_mem_init_rp_1.txt};
		\addplot table [x=M,y=c3] {DATA/vary_mem_init_rp_1.txt};
		\addplot table [x=M,y=c4] {DATA/vary_mem_init_rp_1.txt};
		\addplot table [x=M,y=c5] {DATA/vary_mem_init_rp_1.txt};
		\addplot table [x=M,y=c6] {DATA/vary_mem_init_rp_1.txt};
		\addplot table [x=M,y=c7] {DATA/vary_mem_init_rp_1.txt};
		\addplot table [x=M,y=c8] {DATA/vary_mem_init_rp_1.txt};
		\addplot table [x=M,y=c9] {DATA/vary_mem_init_rp_1.txt};
		\addplot table [x=M,y=c10] {DATA/vary_mem_init_rp_1.txt};
	\end{axis}
	
	\end{tikzpicture}
	\caption{(Left) Effect of variation of the acceleration parameter $M$ on the total computation time (parallelization being enabled) for the VLTS benchmark examples with $1$ Rabin pair.
		(Right) Effect of variation of the acceleration parameter $M$ on the initialization time for the VLTS benchmark examples with $1$ Rabin pair.
		The computation time (Y-axis) in both the plots are shown in the logarithmic scale.}
	\label{fig:BCG trend plots rp 1}
\end{cfigure}

\begin{table}[H]
	\begin{tabular}{
		|>{\centering\arraybackslash}m{3cm}|
		>{\centering\arraybackslash}m{3cm}|
		>{\centering\arraybackslash}m{3cm}|
		>{\centering\arraybackslash}m{3cm}|}
		\hline
		Number of Vertices	&	Number of Transitions	&	Number of Live Edges 	&	Number of BDD Variables	\\
		\hline
	\tablenum[group-separator={,},table-format=7.0]{289}	&	\tablenum[group-separator={,},table-format=7.0]{1224}	&	\tablenum[group-separator={,},table-format=7.0]{17}	&	9	\\
\tablenum[group-separator={,},table-format=7.0]{289}	&	\tablenum[group-separator={,},table-format=7.0]{1224}	&	\tablenum[group-separator={,},table-format=7.0]{25}	&	9	\\
\tablenum[group-separator={,},table-format=7.0]{289}	&	\tablenum[group-separator={,},table-format=7.0]{1224}	&	\tablenum[group-separator={,},table-format=7.0]{13}	&	9	\\
\tablenum[group-separator={,},table-format=7.0]{1952}	&	\tablenum[group-separator={,},table-format=7.0]{2387}	&	\tablenum[group-separator={,},table-format=7.0]{1}	&	11	\\
\tablenum[group-separator={,},table-format=7.0]{1952}	&	\tablenum[group-separator={,},table-format=7.0]{2387}	&	\tablenum[group-separator={,},table-format=7.0]{5}	&	11	\\
\tablenum[group-separator={,},table-format=7.0]{1952}	&	\tablenum[group-separator={,},table-format=7.0]{2387}	&	\tablenum[group-separator={,},table-format=7.0]{25}	&	11	\\
\tablenum[group-separator={,},table-format=7.0]{1183}	&	\tablenum[group-separator={,},table-format=7.0]{4464}	&	\tablenum[group-separator={,},table-format=7.0]{16}	&	11	\\
\tablenum[group-separator={,},table-format=7.0]{1183}	&	\tablenum[group-separator={,},table-format=7.0]{4464}	&	\tablenum[group-separator={,},table-format=7.0]{49}	&	11	\\
\tablenum[group-separator={,},table-format=7.0]{1183}	&	\tablenum[group-separator={,},table-format=7.0]{4464}	&	\tablenum[group-separator={,},table-format=7.0]{9}	&	11	\\
\tablenum[group-separator={,},table-format=7.0]{3995}	&	\tablenum[group-separator={,},table-format=7.0]{14552}	&	\tablenum[group-separator={,},table-format=7.0]{39}	&	12	\\
\tablenum[group-separator={,},table-format=7.0]{3995}	&	\tablenum[group-separator={,},table-format=7.0]{14552}	&	\tablenum[group-separator={,},table-format=7.0]{139}	&	12	\\
\tablenum[group-separator={,},table-format=7.0]{3995}	&	\tablenum[group-separator={,},table-format=7.0]{14552}	&	\tablenum[group-separator={,},table-format=7.0]{153}	&	12	\\
\tablenum[group-separator={,},table-format=7.0]{5121}	&	\tablenum[group-separator={,},table-format=7.0]{9392}	&	\tablenum[group-separator={,},table-format=7.0]{1}	&	13	\\
\tablenum[group-separator={,},table-format=7.0]{5121}	&	\tablenum[group-separator={,},table-format=7.0]{9392}	&	\tablenum[group-separator={,},table-format=7.0]{54}	&	13	\\
\tablenum[group-separator={,},table-format=7.0]{5121}	&	\tablenum[group-separator={,},table-format=7.0]{9392}	&	\tablenum[group-separator={,},table-format=7.0]{73}	&	13	\\
\tablenum[group-separator={,},table-format=7.0]{8879}	&	\tablenum[group-separator={,},table-format=7.0]{24411}	&	\tablenum[group-separator={,},table-format=7.0]{473}	&	14	\\
\tablenum[group-separator={,},table-format=7.0]{8879}	&	\tablenum[group-separator={,},table-format=7.0]{24411}	&	\tablenum[group-separator={,},table-format=7.0]{397}	&	14	\\
\tablenum[group-separator={,},table-format=7.0]{7119}	&	\tablenum[group-separator={,},table-format=7.0]{38424}	&	\tablenum[group-separator={,},table-format=7.0]{626}	&	14	\\
\tablenum[group-separator={,},table-format=7.0]{7119}	&	\tablenum[group-separator={,},table-format=7.0]{38424}	&	\tablenum[group-separator={,},table-format=7.0]{835}	&	14	\\
\tablenum[group-separator={,},table-format=7.0]{7119}	&	\tablenum[group-separator={,},table-format=7.0]{38424}	&	\tablenum[group-separator={,},table-format=7.0]{597}	&	14	\\
\tablenum[group-separator={,},table-format=7.0]{10849}	&	\tablenum[group-separator={,},table-format=7.0]{56156}	&	\tablenum[group-separator={,},table-format=7.0]{241}	&	14	\\
\tablenum[group-separator={,},table-format=7.0]{10849}	&	\tablenum[group-separator={,},table-format=7.0]{56156}	&	\tablenum[group-separator={,},table-format=7.0]{482}	&	14	\\
\tablenum[group-separator={,},table-format=7.0]{18746}	&	\tablenum[group-separator={,},table-format=7.0]{73043}	&	\tablenum[group-separator={,},table-format=7.0]{1585}	&	15	\\
\tablenum[group-separator={,},table-format=7.0]{18746}	&	\tablenum[group-separator={,},table-format=7.0]{73043}	&	\tablenum[group-separator={,},table-format=7.0]{1729}	&	15	\\
\tablenum[group-separator={,},table-format=7.0]{18746}	&	\tablenum[group-separator={,},table-format=7.0]{73043}	&	\tablenum[group-separator={,},table-format=7.0]{575}	&	15	\\
\tablenum[group-separator={,},table-format=7.0]{25216}	&	\tablenum[group-separator={,},table-format=7.0]{25216}	&	\tablenum[group-separator={,},table-format=7.0]{137}	&	15	\\
\tablenum[group-separator={,},table-format=7.0]{25216}	&	\tablenum[group-separator={,},table-format=7.0]{25216}	&	\tablenum[group-separator={,},table-format=7.0]{595}	&	15	\\
\tablenum[group-separator={,},table-format=7.0]{25216}	&	\tablenum[group-separator={,},table-format=7.0]{25216}	&	\tablenum[group-separator={,},table-format=7.0]{373}	&	15	\\
\tablenum[group-separator={,},table-format=7.0]{40006}	&	\tablenum[group-separator={,},table-format=7.0]{60007}	&	\tablenum[group-separator={,},table-format=7.0]{1130}	&	16	\\
\tablenum[group-separator={,},table-format=7.0]{40006}	&	\tablenum[group-separator={,},table-format=7.0]{60007}	&	\tablenum[group-separator={,},table-format=7.0]{865}	&	16	\\
\tablenum[group-separator={,},table-format=7.0]{52268}	&	\tablenum[group-separator={,},table-format=7.0]{292823}	&	\tablenum[group-separator={,},table-format=7.0]{107}	&	16	\\
\tablenum[group-separator={,},table-format=7.0]{52268}	&	\tablenum[group-separator={,},table-format=7.0]{292823}	&	\tablenum[group-separator={,},table-format=7.0]{3254}	&	16	\\
\tablenum[group-separator={,},table-format=7.0]{65537}	&	\tablenum[group-separator={,},table-format=7.0]{524293}	&	\tablenum[group-separator={,},table-format=7.0]{13727}	&	17	\\
\tablenum[group-separator={,},table-format=7.0]{65537}	&	\tablenum[group-separator={,},table-format=7.0]{524293}	&	\tablenum[group-separator={,},table-format=7.0]{25229}	&	17	\\
\tablenum[group-separator={,},table-format=7.0]{66929}	&	\tablenum[group-separator={,},table-format=7.0]{569322}	&	\tablenum[group-separator={,},table-format=7.0]{23290}	&	17	\\
\tablenum[group-separator={,},table-format=7.0]{66929}	&	\tablenum[group-separator={,},table-format=7.0]{569322}	&	\tablenum[group-separator={,},table-format=7.0]{13698}	&	17	\\
\tablenum[group-separator={,},table-format=7.0]{69753}	&	\tablenum[group-separator={,},table-format=7.0]{359575}	&	\tablenum[group-separator={,},table-format=7.0]{11071}	&	17	\\
\tablenum[group-separator={,},table-format=7.0]{69753}	&	\tablenum[group-separator={,},table-format=7.0]{359575}	&	\tablenum[group-separator={,},table-format=7.0]{5058}	&	17	\\
\tablenum[group-separator={,},table-format=7.0]{83435}	&	\tablenum[group-separator={,},table-format=7.0]{259488}	&	\tablenum[group-separator={,},table-format=7.0]{1682}	&	17	\\
\tablenum[group-separator={,},table-format=7.0]{83435}	&	\tablenum[group-separator={,},table-format=7.0]{259488}	&	\tablenum[group-separator={,},table-format=7.0]{2707}	&	17	\\
\tablenum[group-separator={,},table-format=7.0]{96878}	&	\tablenum[group-separator={,},table-format=7.0]{282880}	&	\tablenum[group-separator={,},table-format=7.0]{6225}	&	18	\\
\tablenum[group-separator={,},table-format=7.0]{96878}	&	\tablenum[group-separator={,},table-format=7.0]{282880}	&	\tablenum[group-separator={,},table-format=7.0]{585}	&	18	\\
		\hline
	\end{tabular}
	\caption{Details of the fair adversarial Rabin games randomly generated  from the VLTS benchmark suite. Continued to \REFtab{tab:vlts benchmark data 2}.}
	\label{tab:vlts benchmark data 1}
\end{table}

\begin{table}[H]
	\begin{tabular}{
		|>{\centering\arraybackslash}m{3cm}|
		>{\centering\arraybackslash}m{3cm}|
		>{\centering\arraybackslash}m{3cm}|
		>{\centering\arraybackslash}m{3cm}|}
		\hline
		Number of Vertices	&	Number of Transitions	&	Number of Live Edges 	&	Number of BDD Variables	\\
		\hline
		\tablenum[group-separator={,},table-format=7.0]{116456}	&	\tablenum[group-separator={,},table-format=7.0]{364596}	&	\tablenum[group-separator={,},table-format=7.0]{8316}	&	17	\\
\tablenum[group-separator={,},table-format=7.0]{116456}	&	\tablenum[group-separator={,},table-format=7.0]{364596}	&	\tablenum[group-separator={,},table-format=7.0]{7774}	&	17	\\
		\tablenum[group-separator={,},table-format=7.0]{142471}	&	\tablenum[group-separator={,},table-format=7.0]{925429}	&	\tablenum[group-separator={,},table-format=7.0]{19259}	&	18	\\
\tablenum[group-separator={,},table-format=7.0]{142471}	&	\tablenum[group-separator={,},table-format=7.0]{925429}	&	\tablenum[group-separator={,},table-format=7.0]{3304}	&	18	\\
		\tablenum[group-separator={,},table-format=7.0]{164865}	&	\tablenum[group-separator={,},table-format=7.0]{1619200}	&	\tablenum[group-separator={,},table-format=7.0]{13407}	&	18	\\
		\tablenum[group-separator={,},table-format=7.0]{164865}	&	\tablenum[group-separator={,},table-format=7.0]{1619200}	&	\tablenum[group-separator={,},table-format=7.0]{24868}	&	18	\\
		\tablenum[group-separator={,},table-format=7.0]{166463}	&	\tablenum[group-separator={,},table-format=7.0]{518976}	&	\tablenum[group-separator={,},table-format=7.0]{13633}	&	18	\\
		\tablenum[group-separator={,},table-format=7.0]{166463}	&	\tablenum[group-separator={,},table-format=7.0]{518976}	&	\tablenum[group-separator={,},table-format=7.0]{4155}	&	18	\\
		\tablenum[group-separator={,},table-format=7.0]{214140}	&	\tablenum[group-separator={,},table-format=7.0]{683205}	&	\tablenum[group-separator={,},table-format=7.0]{13588}	&	18	\\
		\tablenum[group-separator={,},table-format=7.0]{214140}	&	\tablenum[group-separator={,},table-format=7.0]{683205}	&	\tablenum[group-separator={,},table-format=7.0]{12113}	&	18	\\
		\tablenum[group-separator={,},table-format=7.0]{371804}	&	\tablenum[group-separator={,},table-format=7.0]{641565}	&	\tablenum[group-separator={,},table-format=7.0]{3413}	&	19	\\
		\tablenum[group-separator={,},table-format=7.0]{371804}	&	\tablenum[group-separator={,},table-format=7.0]{641565}	&	\tablenum[group-separator={,},table-format=7.0]{12151}	&	19	\\
		\tablenum[group-separator={,},table-format=7.0]{386496}	&	\tablenum[group-separator={,},table-format=7.0]{1171870}	&	\tablenum[group-separator={,},table-format=7.0]{26247}	&	19	\\
		\tablenum[group-separator={,},table-format=7.0]{386496}	&	\tablenum[group-separator={,},table-format=7.0]{1171870}	&	\tablenum[group-separator={,},table-format=7.0]{17823}	&	19	\\
		\tablenum[group-separator={,},table-format=7.0]{566639}	&	\tablenum[group-separator={,},table-format=7.0]{3984160}	&	\tablenum[group-separator={,},table-format=7.0]{7109}	&	20	\\
		\tablenum[group-separator={,},table-format=7.0]{566639}	&	\tablenum[group-separator={,},table-format=7.0]{3984160}	&	\tablenum[group-separator={,},table-format=7.0]{42757}	&	20	\\
		\hline
	\end{tabular}
	\caption{Continued from \REFtab{tab:vlts benchmark data 1}. Details of the fair adversarial Rabin games randomly generated  from the VLTS benchmark suite.}
	\label{tab:vlts benchmark data 2}
\end{table}

\begin{table}[H]
	\begin{tabular}{
		|>{\centering\arraybackslash}m{2cm}|
		>{\centering\arraybackslash}m{1.9cm}|
		>{\centering\arraybackslash}m{3cm}|
		>{\centering\arraybackslash}m{3cm}|
		>{\centering\arraybackslash}m{3cm}|
		>{\centering\arraybackslash}m{2.1cm}|
		>{\centering\arraybackslash}m{2.1cm}|}
		\hline
		Broadcast Queue Capacity	&	Output Queue Capacity	&	Number of Vertices	&	Number of Transitions	&	Number of Live Edges	&	Number of BDD Variables	&	Time (seconds)	\\
		\hline
1	&	1	&	\tablenum[group-separator={,},table-format=10.0]{5307840}	&	\tablenum[group-separator={,},table-format=10.0]{10135300}	&	\tablenum[group-separator={,},table-format=10.0]{5124100}	&	25	&	\tablenum[table-format=5.2]{7.37}	\\
2	&	1	&	\tablenum[group-separator={,},table-format=10.0]{21231400}	&	\tablenum[group-separator={,},table-format=10.0]{40541200}	&	\tablenum[group-separator={,},table-format=10.0]{20496400}	&	27	&	\tablenum[table-format=5.2]{24.90}	\\
3	&	1	&	\tablenum[group-separator={,},table-format=10.0]{21414100}	&	\tablenum[group-separator={,},table-format=10.0]{42080300}	&	\tablenum[group-separator={,},table-format=10.0]{21265900}	&	27	&	\tablenum[table-format=5.2]{28.97}	\\
1	&	2	&	\tablenum[group-separator={,},table-format=10.0]{21340800}	&	\tablenum[group-separator={,},table-format=10.0]{40879100}	&	\tablenum[group-separator={,},table-format=10.0]{20834300}	&	27	&	\tablenum[table-format=5.2]{38.25}	\\
1	&	3	&	\tablenum[group-separator={,},table-format=10.0]{21559400}	&	\tablenum[group-separator={,},table-format=10.0]{42756100}	&	\tablenum[group-separator={,},table-format=10.0]{21772800}	&	27	&	\tablenum[table-format=5.2]{51.55}	\\
4	&	1	&	\tablenum[group-separator={,},table-format=10.0]{84925400}	&	\tablenum[group-separator={,},table-format=10.0]{162165000}	&	\tablenum[group-separator={,},table-format=10.0]{81985500}	&	29	&	\tablenum[table-format=5.2]{57.70}	\\
5	&	1	&	\tablenum[group-separator={,},table-format=10.0]{85295700}	&	\tablenum[group-separator={,},table-format=10.0]{165243000}	&	\tablenum[group-separator={,},table-format=10.0]{83524600}	&	29	&	\tablenum[table-format=5.2]{65.01}	\\
6	&	1	&	\tablenum[group-separator={,},table-format=10.0]{85656300}	&	\tablenum[group-separator={,},table-format=10.0]{168321000}	&	\tablenum[group-separator={,},table-format=10.0]{85063700}	&	29	&	\tablenum[table-format=5.2]{73.19}	\\
7	&	1	&	\tablenum[group-separator={,},table-format=10.0]{86007400}	&	\tablenum[group-separator={,},table-format=10.0]{171399000}	&	\tablenum[group-separator={,},table-format=10.0]{86602800}	&	29	&	\tablenum[table-format=5.2]{77.97}	\\
1	&	4	&	\tablenum[group-separator={,},table-format=10.0]{85363200}	&	\tablenum[group-separator={,},table-format=10.0]{163516000}	&	\tablenum[group-separator={,},table-format=10.0]{83337200}	&	29	&	\tablenum[table-format=5.2]{92.56}	\\
1	&	5	&	\tablenum[group-separator={,},table-format=10.0]{85808000}	&	\tablenum[group-separator={,},table-format=10.0]{167270000}	&	\tablenum[group-separator={,},table-format=10.0]{85214200}	&	29	&	\tablenum[table-format=5.2]{113.18}	\\
2	&	2	&	\tablenum[group-separator={,},table-format=10.0]{85363200}	&	\tablenum[group-separator={,},table-format=10.0]{163516000}	&	\tablenum[group-separator={,},table-format=10.0]{83337200}	&	29	&	\tablenum[table-format=5.2]{133.20}	\\
1	&	6	&	\tablenum[group-separator={,},table-format=10.0]{86237400}	&	\tablenum[group-separator={,},table-format=10.0]{171024000}	&	\tablenum[group-separator={,},table-format=10.0]{87091200}	&	29	&	\tablenum[table-format=5.2]{135.67}	\\
3	&	2	&	\tablenum[group-separator={,},table-format=10.0]{86061400}	&	\tablenum[group-separator={,},table-format=10.0]{169673000}	&	\tablenum[group-separator={,},table-format=10.0]{86415400}	&	29	&	\tablenum[table-format=5.2]{144.27}	\\
1	&	7	&	\tablenum[group-separator={,},table-format=10.0]{86651500}	&	\tablenum[group-separator={,},table-format=10.0]{174778000}	&	\tablenum[group-separator={,},table-format=10.0]{88968200}	&	29	&	\tablenum[table-format=5.2]{145.76}	\\
8	&	1	&	\tablenum[group-separator={,},table-format=10.0]{339702000}	&	\tablenum[group-separator={,},table-format=10.0]{648659000}	&	\tablenum[group-separator={,},table-format=10.0]{327942000}	&	31	&	\tablenum[table-format=5.2]{149.68}	\\
2	&	3	&	\tablenum[group-separator={,},table-format=10.0]{86237400}	&	\tablenum[group-separator={,},table-format=10.0]{171024000}	&	\tablenum[group-separator={,},table-format=10.0]{87091200}	&	29	&	\tablenum[table-format=5.2]{163.62}	\\
9	&	1	&	\tablenum[group-separator={,},table-format=10.0]{340447000}	&	\tablenum[group-separator={,},table-format=10.0]{654815000}	&	\tablenum[group-separator={,},table-format=10.0]{331020000}	&	31	&	\tablenum[table-format=5.2]{174.29}	\\
10	&	1	&	\tablenum[group-separator={,},table-format=10.0]{341183000}	&	\tablenum[group-separator={,},table-format=10.0]{660972000}	&	\tablenum[group-separator={,},table-format=10.0]{334098000}	&	31	&	\tablenum[table-format=5.2]{197.02}	\\
3	&	3	&	\tablenum[group-separator={,},table-format=10.0]{86870100}	&	\tablenum[group-separator={,},table-format=10.0]{177181000}	&	\tablenum[group-separator={,},table-format=10.0]{90169300}	&	29	&	\tablenum[table-format=5.2]{203.15}	\\
1	&	8	&	\tablenum[group-separator={,},table-format=10.0]{341453000}	&	\tablenum[group-separator={,},table-format=10.0]{654066000}	&	\tablenum[group-separator={,},table-format=10.0]{333349000}	&	31	&	\tablenum[table-format=5.2]{248.38}	\\
1	&	9	&	\tablenum[group-separator={,},table-format=10.0]{342350000}	&	\tablenum[group-separator={,},table-format=10.0]{661574000}	&	\tablenum[group-separator={,},table-format=10.0]{337103000}	&	31	&	\tablenum[table-format=5.2]{283.85}	\\
1	&	10	&	\tablenum[group-separator={,},table-format=10.0]{343232000}	&	\tablenum[group-separator={,},table-format=10.0]{669082000}	&	\tablenum[group-separator={,},table-format=10.0]{340857000}	&	31	&	\tablenum[table-format=5.2]{331.78}	\\
7	&	2	&	\tablenum[group-separator={,},table-format=10.0]{345587000}	&	\tablenum[group-separator={,},table-format=10.0]{691003000}	&	\tablenum[group-separator={,},table-format=10.0]{351818000}	&	31	&	\tablenum[table-format=5.2]{567.26}	\\
4	&	2	&	\tablenum[group-separator={,},table-format=10.0]{341453000}	&	\tablenum[group-separator={,},table-format=10.0]{654066000}	&	\tablenum[group-separator={,},table-format=10.0]{333349000}	&	31	&	\tablenum[table-format=5.2]{710.78}	\\
2	&	4	&	\tablenum[group-separator={,},table-format=10.0]{341453000}	&	\tablenum[group-separator={,},table-format=10.0]{654066000}	&	\tablenum[group-separator={,},table-format=10.0]{333349000}	&	31	&	\tablenum[table-format=5.2]{806.74}	\\
5	&	2	&	\tablenum[group-separator={,},table-format=10.0]{342868000}	&	\tablenum[group-separator={,},table-format=10.0]{666378000}	&	\tablenum[group-separator={,},table-format=10.0]{339505000}	&	31	&	\tablenum[table-format=5.2]{852.37}	\\
6	&	2	&	\tablenum[group-separator={,},table-format=10.0]{344246000}	&	\tablenum[group-separator={,},table-format=10.0]{678691000}	&	\tablenum[group-separator={,},table-format=10.0]{345661000}	&	31	&	\tablenum[table-format=5.2]{936.04}	\\
2	&	5	&	\tablenum[group-separator={,},table-format=10.0]{343232000}	&	\tablenum[group-separator={,},table-format=10.0]{669082000}	&	\tablenum[group-separator={,},table-format=10.0]{340857000}	&	31	&	\tablenum[table-format=5.2]{1034.57}	\\
4	&	3	&	\tablenum[group-separator={,},table-format=10.0]{344950000}	&	\tablenum[group-separator={,},table-format=10.0]{684098000}	&	\tablenum[group-separator={,},table-format=10.0]{348365000}	&	31	&	\tablenum[table-format=5.2]{1071.52}	\\
2	&	7	&	\tablenum[group-separator={,},table-format=10.0]{346606000}	&	\tablenum[group-separator={,},table-format=10.0]{699113000}	&	\tablenum[group-separator={,},table-format=10.0]{355873000}	&	31	&	\tablenum[table-format=5.2]{1111.64}	\\
7	&	3	&	\tablenum[group-separator={,},table-format=10.0]{348693000}	&	\tablenum[group-separator={,},table-format=10.0]{721035000}	&	\tablenum[group-separator={,},table-format=10.0]{366834000}	&	31	&	\tablenum[table-format=5.2]{1312.88}	\\
2	&	6	&	\tablenum[group-separator={,},table-format=10.0]{344950000}	&	\tablenum[group-separator={,},table-format=10.0]{684098000}	&	\tablenum[group-separator={,},table-format=10.0]{348365000}	&	31	&	\tablenum[table-format=5.2]{1336.35}	\\
5	&	3	&	\tablenum[group-separator={,},table-format=10.0]{346233000}	&	\tablenum[group-separator={,},table-format=10.0]{696410000}	&	\tablenum[group-separator={,},table-format=10.0]{354521000}	&	31	&	\tablenum[table-format=5.2]{1351.31}	\\
3	&	4	&	\tablenum[group-separator={,},table-format=10.0]{344246000}	&	\tablenum[group-separator={,},table-format=10.0]{678691000}	&	\tablenum[group-separator={,},table-format=10.0]{345661000}	&	31	&	\tablenum[table-format=5.2]{1632.63}	\\
6	&	3	&	\tablenum[group-separator={,},table-format=10.0]{347480000}	&	\tablenum[group-separator={,},table-format=10.0]{708723000}	&	\tablenum[group-separator={,},table-format=10.0]{360677000}	&	31	&	\tablenum[table-format=5.2]{1667.54}	\\
8	&	2	&	\tablenum[group-separator={,},table-format=10.0]{1365810000}	&	\tablenum[group-separator={,},table-format=10.0]{2616260000}	&	\tablenum[group-separator={,},table-format=10.0]{1333400000}	&	33	&	\tablenum[table-format=5.2]{2478.13}	\\
9	&	2	&	\tablenum[group-separator={,},table-format=10.0]{1368660000}	&	\tablenum[group-separator={,},table-format=10.0]{2640890000}	&	\tablenum[group-separator={,},table-format=10.0]{1345710000}	&	33	&	\tablenum[table-format=5.2]{2783.77}	\\
		\hline
	\end{tabular}
	\caption{Experimental evaluation for the code-aware resource management case study (extended table).}
	\label{tab:resource management extended}
\end{table}

	
\end{document}
